\documentclass[aps,pra,11pt,amsmath,amssymb,tightenlines,superscriptaddress,notitlepage,nofootinbib,onecolumn]{revtex4-2}
\pdfoutput=1

\usepackage{graphicx}
\usepackage[dvipsnames,svgnames]{xcolor}
\usepackage{bm,bbm}%
\usepackage{braket}
\usepackage{amsthm,amsmath,amssymb,amsbsy}
\usepackage{enumerate}
\usepackage[shortlabels]{enumitem}
\usepackage{booktabs,multirow,moresize}
\usepackage{mathtools,mathrsfs,bbding}
\usepackage{microtype}

\let\counterwithin\relax
\usepackage{chngcntr}

\usepackage[T1]{fontenc}
\usepackage[utf8]{inputenc}

\definecolor{darkred}{rgb}{0.57,0,0.12}
\usepackage[colorlinks=true, linkcolor=MidnightBlue, urlcolor=MidnightBlue, citecolor=MidnightBlue]{hyperref}

\usepackage{cleveref}

\let\nc\newcommand

\let\mathscr\relax
\usepackage{newpxtext,newpxmath}
\usepackage[cal=cm]{mathalfa}

\hypersetup{
    bookmarksnumbered=true, 
    breaklinks=true,
    unicode=false, 
    pdfstartview={FitH}, 
    pdftitle={}, 
    pdfnewwindow=true, 
    colorlinks=true, 
    allcolors=MidnightBlue!70!black!70!TealBlue
}

\nc{\note}[1]{{\color{blue!90!black} #1}}
\nc{\noteb}[1]{{\color{red!80!black}\textbf{#1}}}

\DeclareMathOperator{\Tr}{Tr}

\DeclareMathOperator{\NN}{NN}
\DeclareMathOperator{\supp}{supp}

\DeclareMathOperator{\ran}{ran}

\let\Re\relax
\DeclareMathOperator{\Re}{Re}

\nc{\psic}{\psi^{c}}
\nc{\two}[1]{\underline{2^{d-#1}}}

\nc{\Hanc}{\mathcal{H}_{\text{anc}}}
\nc{\psianc}{\psi_{\text{anc}}}
\nc{\lampow}{\lambda^{1/d}}

\nc{\norm}[2]{\left\lVert#1\right\rVert_{#2}}
\nc{\proj}[1]{\ket{#1}\!\bra{#1}}
\nc{\pro}[1]{#1 #1^\dagger}
\nc{\lnorm}[2]{\left\lVert#1\right\rVert_{\ell_{#2}}}

\nc{\VV}{V^{\,\C}_\Sp}

\let\sect\S
\renewcommand{\S}{\mathcal{S}}
\nc{\R}{\mathcal{R}}
\nc{\W}{\mathcal{W}}
\nc{\T}{\mathcal{T}}
\nc{\B}{\mathcal{B}}
\nc{\C}{\mathcal{C}}
\nc{\U}{\mathcal{U}}
\nc{\E}{\mathcal{E}}
\DeclareMathOperator*{\EE}{\mathbb{E}}
\nc{\K}{\mathcal{K}}
\nc{\V}{\mathcal{V}}
\nc{\X}{\mathcal{X}}
\nc{\F}{\mathcal{F}}
\nc{\Z}{\mathcal{Z}}
\nc{\G}{\mathcal{G}}
\nc{\D}{\mathcal{D}}
\newcommand{\LL}{\mathcal{L}}
\nc{\Y}{\mathcal{Y}}

\nc{\M}{\mathcal{M}}
\nc{\N}{\mathcal{N}}
\nc{\I}{\mathcal{I}}
\nc{\Q}{\mathbb{Q}}
\nc{\RR}{\mathbb{R}}
\nc{\CC}{\mathbb{C}}

\nc{\HH}{\mathbb{H}}
\nc{\MM}{\smash{\mathbb{M}}}
\nc{\DD}{\mathbb{D}}
\renewcommand{\NN}{\mathbb{N}}

\nc{\J}{\mathcal{J}}

\nc{\Bures}{\mspace{1mu}}

\nc{\FF}{\mathbb{F}}

\nc{\CPTP}{{\mathrm{CPTP}}}
\nc{\CP}{{\mathrm{CP}}}
\nc{\CPTN}{{\mathrm{CPTN}}}

\nc{\ext}[1]{\operatorname{ext}\left(#1\right)}

\renewcommand{\*}{\textup{*}}
\nc{\<}{\left\langle}
\renewcommand{\>}{\right\rangle}

\let\bar\relax
\nc{\lset}{\left\{\vphantom{\big|}}
\nc{\bar}{\;\middle|\;}
\nc{\rset}{\vphantom{\big|}\right\}}

\nc{\ve}{\varepsilon}

\nc{\id}{\mathbbm{1}}
\nc{\idc}{\mathrm{id}}

\nc{\RHS}{(\mathrm{RHS})}

\DeclareMathOperator{\primal}{Primal}
\DeclareMathOperator{\dual}{Dual}
\DeclareMathOperator{\primald}{Primal\*}

\theoremstyle{plain}

\newtheorem{theorem}{Theorem}

\newtheorem{proposition}[theorem]{Proposition}
\newtheorem{corollary}[theorem]{Corollary}
\newtheorem{lemma}[theorem]{Lemma}

\theoremstyle{definition}

\newtheorem*{remark}{Remark}
\newtheorem{definition}[theorem]{Definition}

\let\oldproofname\proofname
\renewcommand{\proofname}{\rm\bf{\oldproofname}}

\makeatletter
\renewenvironment{proof}[1][\proofname]{\par
\pushQED{\qed}%
\normalfont \topsep6\p@\@plus6\p@\relax
\trivlist
\item\relax
{\bfseries  
#1\@addpunct{.}}\hspace\labelsep\ignorespaces 
}{%
\popQED\endtrivlist\@endpefalse
}
\makeatother

\nc{\ketbra}[2]{\ket{#1}\!\bra{#2}}

\nc{\prob}{\mathrm{prob}}

\makeatletter
\let\save@mathaccent\mathaccent
\newcommand*\if@single[3]{%
  \setbox0\hbox{${\mathaccent"0362{#1}}^H$}%
  \setbox2\hbox{${\mathaccent"0362{\kern0pt#1}}^H$}%
  \ifdim\ht0=\ht2 #3\else #2\fi
  }
\newcommand*\rel@kern[1]{\kern#1\dimexpr\macc@kerna}
\newcommand*\widebar[1]{\wide@bar{#1}{0}}
\newcommand*\wide@bar[2]{\if@single{#1}{\wide@bar@{#1}{#2}{1}}{\wide@bar@{#1}{#2}{2}}}
\newcommand*\wide@bar@[3]{%
  \begingroup
  \def\mathaccent##1##2{%
    \let\mathaccent\save@mathaccent
    \if#32 \let\macc@nucleus\first@char \fi
    \setbox\z@\hbox{$\macc@style{\macc@nucleus}_{}$}%
    \setbox\tw@\hbox{$\macc@style{\macc@nucleus}{}_{}$}%
    \dimen@\wd\tw@
    \advance\dimen@-\wd\z@
    \divide\dimen@ 3
    \@tempdima\wd\tw@
    \advance\@tempdima-\scriptspace
    \divide\@tempdima 10
    \advance\dimen@-\@tempdima
    \ifdim\dimen@>\z@ \dimen@0pt\fi
    \rel@kern{0.6}\kern-\dimen@
    \if#31
      \overline{\rel@kern{-0.6}\kern\dimen@\macc@nucleus\rel@kern{0.4}\kern\dimen@}%
      \advance\dimen@0.4\dimexpr\macc@kerna
      \let\final@kern#2%
      \ifdim\dimen@<\z@ \let\final@kern1\fi
      \if\final@kern1 \kern-\dimen@\fi
    \else
      \overline{\rel@kern{-0.6}\kern\dimen@#1}%
    \fi
  }%
  \macc@depth\@ne
  \let\math@bgroup\@empty \let\math@egroup\macc@set@skewchar
  \mathsurround\z@ \frozen@everymath{\mathgroup\macc@group\relax}%
  \macc@set@skewchar\relax
  \let\mathaccentV\macc@nested@a
  \if#31
    \macc@nested@a\relax111{#1}%
  \else
    \def\gobble@till@marker##1\endmarker{}%
    \futurelet\first@char\gobble@till@marker#1\endmarker
    \ifcat\noexpand\first@char A\else
      \def\first@char{}%
    \fi
    \macc@nested@a\relax111{\first@char}%
  \fi
  \endgroup
}
\makeatother

\nc{\wt}{\widetilde}
\renewcommand{\ol}{\widebar}

\nc{\rhos}{\rho'}

\let\epsilon\varepsilon

\usepackage{scalerel,stackengine}
\newcommand\pig[1]{\scalerel*[5pt]{\big#1}{%
  \ensurestackMath{\addstackgap[1.5pt]{\big#1}}}}

\newcommand{\deff}[1]{\textbf{\emph{#1}}}

\nc{\ba}{\begin{equation}\begin{aligned}}
\nc{\ea}{\end{aligned}\end{equation}}

\let\textleq\relax
\let\textgeq\relax
\let\texteq\relax

\newcommand{\texteq}[1]{\stackrel{\mathclap{\mbox{\scriptsize #1}}}{=}}
\newcommand{\textleq}[1]{\stackrel{\mathclap{\mbox{\scriptsize #1}}}{\leq}}

\newcommand{\textgeq}[1]{\stackrel{\mathclap{\mbox{\scriptsize #1}}}{\geq}}

\usepackage[most,breakable]{tcolorbox}
\renewenvironment{boxed}[1][white]%
  {\expandafter\ifstrequal\expandafter{#1}{filled}{\begin{tcolorbox}[colback=MidnightBlue!70!black!70!TealBlue!2!white,colframe=MidnightBlue!70!black!70!TealBlue!30!white,breakable=false,enhanced,left=5.75pt,right=5.75pt,grow sidewards by=10pt]}{\begin{tcolorbox}[colback=white,colframe=gray!15,breakable,enhanced,left=5.75pt,right=5.75pt,grow sidewards by=10pt]}}%
  {\end{tcolorbox}}

\nc{\rrho}{\widetilde{\rho}}

\nc{\abs}[1]{\left|#1\right|}

\usepackage{stackengine}
\nc{\univ}{2\*}

\let\ccc\c
\renewcommand{\c}[1]{ \textrm{\scriptsize \textcolor{white!50!black}{ #1 }}}

\renewcommand{\thesubsection}{\arabic{subsection}}

\newcommand{\redd}[1]{{\color{red!70!black} #1}}

\usepackage{tikz-cd}

\makeatletter
\def\fnum@figure{\textbf{Figure}\nobreakspace\textbf{\thefigure}}
\renewcommand{\p@subsection}{\thesection.}
\renewcommand{\p@subsubsection}{\thesection.\thesubsection.}
\makeatother

\DeclareMathOperator{\opt}{opt}

\usepackage{xstring}
\usepackage{xparse}

\NewDocumentCommand{\s}{m m o}{%
 \lowercase{\def\sdist{#2}}%
 ^{#1,\mspace{2mu}%
 \IfEqCase{\sdist}{%
  {t}{%
    T%
  }%
  {p}{%
    P%
  }%
  {m}{%
    \MM%
  }%
 }%
 \IfNoValueTF{#3}{}{%
    \lowercase{\def\sarr{#3}}%
    ,%
    \IfEqCase{\sarr}{%
      {up}{ \mspace{.5mu}\smash\uparrow }%
      {uparrow}{ \mspace{.5mu}\smash\uparrow }%
      {down}{ \mspace{1mu}\smash\downarrow }%
      {downarrow}{ \mspace{1mu}\smash\downarrow }%
    }%
  }%
 }
 \IfNoValueTF{#3}{}{%
  \mathchoice{\vphantom{^{\uparrow}}}{}{}{}%
 }%
}%
\NewDocumentCommand{\Ds}{m m m}{%
  D_{#1}\s{#2}{#3}%
}
\NewDocumentCommand{\Dmax}{m m}{%
  D_{\max}\s{#1}{#2}%
}
\NewDocumentCommand{\Qs}{m m m}{%
  Q_{#1}\s{#2}{#3}%
}
\NewDocumentCommand{\Qmax}{m m}{%
  Q_{\max}\s{#1}{#2}%
}
\NewDocumentCommand{\Hs}{m m m O{up}}{%
  H_{#1}\s{#2}{#3}[#4]%
}
\NewDocumentCommand{\Hmin}{m m O{up}}{%
  H_{\min}\s{#1}{#2}[#3]%
}

\hypersetup{
    pdftitle={Rethinking quantum smooth entropies: Tight one-shot analysis of quantum privacy amplification}
}

\begin{document}

\title{Rethinking quantum smooth entropies:
\\Tight one-shot analysis of quantum privacy amplification}

\author{Bartosz Regula}
\email{bartosz.regula@gmail.com}
\affiliation{Mathematical Quantum Information RIKEN Hakubi Research Team, RIKEN Pioneering Research Institute (PRI) and RIKEN Center for Quantum Computing (RQC), Wako, Saitama 351-0198, Japan}

\author{Marco Tomamichel}
\email{marco.tomamichel@nus.edu.sg}
\affiliation{Department of Electrical and Computer Engineering, National University of Singapore, Singapore}
\affiliation{Centre for Quantum Technologies, National University of Singapore, Singapore}

\begin{abstract}
We introduce an improved one-shot characterisation of randomness extraction against quantum side information (privacy amplification), strengthening known one-shot bounds and providing a unified derivation of the tightest known asymptotic constraints.  
Our main tool is a new class of smooth conditional entropies defined by lifting classical smooth divergences through measurements.
A key role is played by the measured smooth R\'enyi relative entropy of order 2, which we show to admit an equivalent variational form: it can be understood as allowing for smoothing over not only states, but also non-positive Hermitian operators.
Building on this, we establish a tightened leftover hash lemma, significantly improving over all known smooth min-entropy bounds on extractable randomness and recovering the sharpest classical achievability results. We extend these methods to decoupling, the coherent analogue of privacy amplification, obtaining a corresponding improved one-shot bound.
Relaxing our smooth entropy bounds leads to one-shot achievability results in terms of measured R\'enyi divergences, tightening the bounds of~[Dupuis, \href{https://ieeexplore.ieee.org/document/10232924}{IEEE T-IT 69, 7784 (2023)}] and recovering state-of-the-art asymptotic i.i.d.\ error exponents.
We show an approximate optimality of our results by giving a matching one-shot converse bound up to additive logarithmic terms.
This yields an optimal second-order asymptotic expansion of privacy amplification under trace distance, establishing a significantly tighter one-shot achievability result than previously shown in~[Shen et al., \href{https://ieeexplore.ieee.org/document/10387493}{IEEE T-IT 70, 5077 (2024)}] and proving its optimality for all hash functions.
\end{abstract}

\maketitle


\tableofcontents

\section{Introduction}

The task of privacy amplification, concerned with extracting uniform and secret randomness in the presence of an adversary with side information, is a fundamental step in establishing the security of quantum key distribution~\cite{renner_2005,portmann_2022,pirandola_2020}. 
In cryptographic contexts, privacy amplification is fundamentally connected with the notion of min-entropy $H_{\min}(X|E)$ between a random variable $X$ and the adversary system $E$, which quantifies the probability that the adversary can guess the true value of $X$. Importantly, the knowledge of the source distribution from which randomness is to be extracted is typically limited: it is precisely the min-entropy of the source that is known or can be estimated, and randomness extractors are expected to function optimally while knowing only this single value. Designing universal randomness extractors that connect the min-entropy of any source with its extractable randomness is thus crucial for cryptographic applications~\cite{nisan_1996,bennett_1995}.

A prominent result in privacy amplification is the leftover hash lemma~\cite{bennett_1988,impagliazzo_1989,bennett_1995}, which showed that privacy amplification against classical adversaries can be accomplished through universal hash functions, connecting the achievable deviation from uniform (measured in total variation distance) with the collision entropy $H_2(X|E)$ of the source, which can then be easily bounded by its min-entropy $H_{\min}(X|E)$. 
A seemingly simple but extremely consequential realisation is that one can employ \emph{smoothing}, that is, optimise over all distributions which approximate the source, for improved performance. As formalised by Renner and Wolf~\cite{renner_2004}, this smooth min-entropy $H_{\min}^\ve(X|E)$ tightly characterises privacy amplification at the one-shot level. 
In the asymptotic i.i.d.\ limit, smooth entropies converge to the conditional entropy $H(X|E)$, the asymptotic rate of extractable randomness~\cite{renner_2004}. 
Studying higher-order refinements of such i.i.d.\ bounds, Hayashi then observed that, while $H_{\min}^\ve(X|E)$ does lead to a tight second-order asymptotic expansion of the achievable rates~\cite{hayashi_2016-2,hayashi_2016-3}, smoothing the collision entropy itself as $H_2^\ve(X|E)$ can lead to further improvements in large-deviation analysis~\cite{hayashi_2013,hayashi_2016-2}.

The leftover hash lemma admits a natural generalisation to randomness extraction against quantum side information, where the adversary is not assumed to be classical~\cite{renner_2005,renner_2005-1,tomamichel_2011}. This extension through the quantum collision entropy $\wt{H}_2(X|E)$, later understood to be part of a broader family known as the sandwiched R\'enyi entropies~\cite{muller-lennert_2013},  now underpins the security proofs of quantum key distribution~\cite{portmann_2022,metger_2022,arqand_2025}. 
The idea of smoothing also found fruitful applications in the quantum setting~\cite{tomamichel_2011,tomamichel_2013}. Indeed, generalisations of the leftover hash lemma such as decoupling~\cite{horodecki_2005-1,dupuis_2014} 
led to smooth entropies finding widespread use in quantum information theory more broadly~\cite{dupuis_2014,berta_2011}. 

However, in the analysis of quantum privacy amplification, smoothing encountered some limitations. The one-shot bounds obtained using the standard toolkit of quantum smooth entropies~\cite{renner_2005,tomamichel_2011} did not match the classical results.  Hayashi~\cite{hayashi_2014} also realised that the methods used in his large-deviation analysis do not readily extend to quantum contexts, leading to suboptimal bounds for quantum adversaries. Due to the difficulties in analysing trace distance between quantum states, improved asymptotic error estimates~\cite{hayashi_2015,li_2023} and a tight second-order analysis of privacy amplification~\cite{tomamichel_2013} were only possible under a modified distance criterion, namely the purified distance, departing from the standard choice of trace distance as the operational security measure in cryptography. 
On the one hand, these issues motivated the development of completely different approaches. Two notable examples are the achievability bound of Dupuis~\cite{dupuis_2023} based on sandwiched R\'enyi divergences, which extended the leftover hash lemma through norm interpolation techniques in Schatten spaces, as well as an improved second-order achievability result of Shen, Gao, and Cheng~\cite{shen_2024} that refined Hayashi's techniques based on spectral pinching~\cite{hayashi_2016}. On the other hand, however, the discrepancy with classical results and the lack of a tight one-shot characterisation under trace distance suggested that many aspects of the analysis of quantum privacy amplification could be improved. The work~\cite{dupuis_2023} made the argument that perhaps one-shot results should be approached directly with R\'enyi divergences rather than smooth entropies. Here we instead argue that smoothing does provide perfectly accurate constraints --- we had simply not been using the right quantities for this~task.

\subsection{Summary of results}

We introduce a new approach to the study of quantum privacy amplification through a modified notion of smoothing, leading to the tightest known formulation of the leftover hash lemma and in particular its connection with smooth entropies. 
At the heart of our construction is the observation that the operation of smoothing and the procedure of lifting classical divergences to quantum ones by measurements do not commute (see Figure~\ref{fig:diagram}). 
Reversing the conventional order and defining a class of smooth divergences based on classical smoothing before measuring, we then obtain a new family of measured smooth entropies that we show to improve on prior approaches to randomness extraction with quantum side information.  
The resulting class of smooth quantum divergences exhibits a number of useful properties, including  significantly improved asymptotic scaling and a capability to directly generalise classical properties and inequalities to quantum states. It unifies also other one-shot divergences that found applications in the study of quantum information, such as the hypothesis testing relative entropy~\cite{wang_2012,buscemi_2010} and an information-spectrum variant of smooth max-relative entropy~\cite{datta_2015}.
Our approach also identifies not the conventional choice of sandwiched R\'enyi divergences, but rather the measured (minimal) R\'enyi divergences $D_\alpha^{\MM}$ as being the optimal choice for characterising privacy amplification.

\begin{figure}
\[\begin{tikzcd}[row sep=huge]
D_{\max}(p\|q) \arrow{rr}{\text{smoothing}} 
\arrow[swap,"{ \text{ measuring}}"']{d} 
&[5em]  &[-.5em] \Dmax{\ve}{T}(p\|q) \arrow["{ \begin{subarray}{l}\text{ measuring}\\[5pt]\end{subarray}}"']{d}\\[2.1em]
\begin{subarray}{l}\displaystyle D_{\max}^{\MM}(\rho\|\sigma)\\[1pt]\displaystyle\hspace*{-9pt} = D_{\max}(\rho\|\sigma)\end{subarray} \arrow{r}{\text{smoothing}} & \Dmax{\ve}{T}(\rho\|\sigma) \hspace{5pt} \redd{=\joinrel\neq\joinrel=} \;\; \hspace*{-30pt} & \Dmax{\ve\vphantom{T}}{M} (\rho\|\sigma)
\end{tikzcd}
\]
\caption{Smooth entropies, defined via R\'enyi divergences such as the collision relative entropy $D_2$ or the max-relative entropy $D_{\max}$, underlie the precise description of privacy amplification. When extending their definitions to quantum states, the resulting quantities differ depending on the order in which the operations of smoothing (optimising a divergence over distributions in an $\ve$-ball of trace distance) and measuring (taking the minimal quantum extension of a classical divergence by maximising it over all measurement channels) are applied. In this work we show that it is the class of divergences defined by first smoothing the classical divergence and only then lifting it to quantum states, represented by $\Dmax{\ve}{M}$ in the diagram, that tightly characterises quantum randomness extraction.}
\label{fig:diagram}
\end{figure}
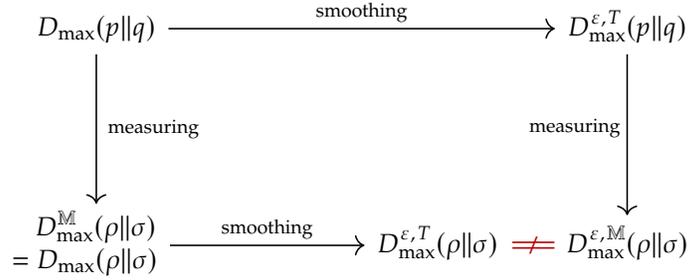

A representative quantity in this class of divergences, and a central element of our technical developments, is the measured smooth collision divergence
\begin{equation}\begin{aligned}
  \Ds{2}{\ve}{M}(\rho\|\sigma) \coloneqq \sup_{\M \in \MM} \Ds{2}{\ve}{T} (\M(\rho)\|\M(\sigma))
\end{aligned}\end{equation}
(see Section~\ref{sec:smollision} for details). 
Here, the optimisation is over all measurement channels $\M$, and $\Ds{2}{\ve}{T} (\M(\rho)\|\M(\sigma))$ denotes the fully classical smooth R\'enyi divergence of order 2 between the probability distributions resulting from the measurement, using the conventional classical smoothing with total variation (trace) distance. We stress that the smoothing here is effectively with respect to the post-measurement statistics, which may seem not directly connected to the distance to the original state $\rho$. However, as one of our main technical contributions, in Theorem~\ref{thm:D2_smoothed_forms} we show that by employing a suitable matrix weighted inner product known as the Bures product~\cite{braunstein_1994,lesniewski_1999,petz_2011} combined with convex analytic techniques, this quantity can be expressed as a quantum smooth divergence --- one that, instead of quantum states, allows for smoothing over more general Hermitian operators:
\begin{equation}\begin{aligned}\label{eq:intro_variational}
  \Ds{2}{\ve}{M}(\rho\|\sigma) &= \inf \lset D_2^{\MM} (R\|\sigma) \bar R = R^\dagger, \;  R \leq \rho, \; \|\rho - R\|_+ \leq \ve \rset,
\end{aligned}\end{equation}
where $\|\cdot\|_+$ denotes the generalised trace distance.
In Section~\ref{sec:smollision} we argue that structural properties of quantum states point towards the necessity to use such decompositions that involve non-positive operators instead of only positive ones, contrasting with the classical case. 
This immediately gives the idea that will underlie our approach to leftover hashing: 
to optimally approximate a quantum state $\rho$, one should do so by a Hermitian operator $R$, rather than a quantum state like in conventional frameworks.

This characterisation of $\Ds{2}{\ve}{M}$ as both a measured-smoothed and a Hermitian-smoothed divergence mirrors a recent finding of~\cite{regula_2025}, where the same property was shown for a variant of the smooth max-relative entropy $D_{\max}(\rho\|\sigma)$, i.e.\ R\'enyi divergence of order~$\infty$. This connection forms an important part of several of our results, and motivates the definition of a more appropriate notion of smooth min-entropy that differs from standard constructions. 
Specifically, for a bipartite state $\rho_{XE}$ we define the measured smooth  collision entropy  and min-entropy  as
\begin{equation}\begin{aligned}
  H_{2}\s{\ve}{M} (X|E)_{\rho} = - \inf_{\sigma_E} \Ds{2}{\ve}{M}(\rho_{XE} \| \id_X \otimes \sigma_E), \quad\; H_{\min}\s{\ve}{M} (X|E)_{\rho} = - \inf_{\sigma_E} \Dmax{\ve}{M}(\rho_{XE} \| \id_X \otimes \sigma_E).
\end{aligned}\end{equation}
We use these quantities to establish both improved achievability (leftover hash) and converse results. Our Theorem~\ref{thm:privacy_amp} and Proposition~\ref{prop:converse} together show that the maximal number of bits of randomness that can be extracted from the given state up to trace distance error $\ve$, denoted $\ell_\ve(\rho_{XE})$, satisfies
\begin{equation}\begin{aligned}
 H_{\min}\s{\ve-\mu}{M} (X|E)_{\rho} - \log \frac{1}{4 \mu^2} &\leq H_{2}\s{\ve-\mu}{M} (X|E)_{\rho} - \log \frac{1}{4 \mu^2}\\
 &\leq \ell_\ve(\rho_{XE})\\
&\leq H_{\min}\s{\ve+\delta}{M} (X|E)_{\rho} + \log\frac{\ve+\delta}{\delta}.
\end{aligned}\end{equation}
This improves on every achievability result in the literature that relied on smooth entropies, including conventional variants of the quantum leftover hash lemma~\cite{renner_2005,tomamichel_2011,anshu_2020}, attempts to extend classical smoothing approaches through pinching~\cite{hayashi_2014}, or the second-order achievability result of~\cite{shen_2024}. The converse shows this to be approximately tight, in particular establishing that $H_{\min}\s{\ve}{M} (X|E)_{\rho}$ tightly determines the second-order expansion of privacy amplification under trace distance (Corollary~\ref{cor:second_order_tight}), proving the optimality of the asymptotic achievability results shown in~\cite{shen_2024}.

The quantitative improvement provided by this result can be realised by comparing it with previous iterations of the leftover hash lemma with quantum side information, which led to bounds of the form~\cite{tomamichel_2011}
\begin{equation}\begin{aligned}
    H_{\min}\s{(\ve-\mu)/2}{P} (X|E)_{\rho} - \log \frac{1}{4\mu^2}  \,\leq\, \ell_\ve(\rho_{XE}) \,\leq\, H_{\min}\s{2\sqrt\ve}{P} (X|E)_{\rho}\,,
\end{aligned}\end{equation}
where $H_{\min}^{\ve,\,P}$ denotes the conventional min-entropy smoothed with purified distance~\cite{tomamichel_2011}. A more recent modified variant of `partial' smoothing can improve the achievability result from $H_{\min}^{(\ve-\mu)/2}$ to $H_{\min}^{\ve-\mu}$~\cite{anshu_2020}. 
Our results, stated in terms of the same purified distance smooth entropy, give tighter constraints as
\begin{equation}\begin{aligned}
   H_{\min}^{\sqrt{\vphantom{b}\ve-\mu},\,P} (X|E)_{\rho} - \log \frac{1}{4 \mu^3} \,\leq\, \ell_\ve(\rho_{XE})\, \leq\, H_{\min}^{\sqrt{\vphantom{b}\ve},\,P} (X|E)_{\rho} + \log\frac{1}{1-\ve}\,.
\end{aligned}\end{equation}
This shows that the prior achievability results indicated an inaccurate scaling --- the size of randomness that can be extracted scales with $H_{\min}^{\sqrt\ve,\,P}$, and not with $H_{\min}^{\ve,\,P}$, and hence much larger amounts of randomness can be proven to be extractable in our approach.  Such scaling in an asymptotic sense follows also from the second-order bounds of~\cite{shen_2024}, but their reliance on spectral pinching means that those bounds would be highly loose in the one-shot and finite-copy regimes, and our result overcomes the need for such relaxations.

By developing one-shot bounds that closely connect the measured smooth collision entropy $H_{2}\s{\ve}{M}$ with measured R\'enyi entropies $H_\alpha^{\MM}$, in Theorem~\ref{thm:renyi_achiev} we show that our result implies a bound on the achievable error of privacy amplification as
\begin{equation}\begin{aligned}
  \ve \leq \exp \left( - \sup_{\alpha \in (1,2]} \frac{\alpha-1}\alpha \Big( H_\alpha^{\MM}(X|E)_\rho - \ell_\ve(\rho_{XE}) \Big)\right).
\end{aligned}\end{equation}
This strengthens an achievability result established in~\cite{dupuis_2023} through complex interpolation techniques, stated there in terms of the looser sandwiched R\'enyi divergences.
Our approach also naturally gives converse bounds on the achievable error exponent (Corollary~\ref{cor:error_exp_general}), recovering classical converse bounds studied in~\cite{hayashi_2013,hayashi_2016-3}.

We further show that our proof approach naturally extends to the more general setting of decoupling~\cite{horodecki_2005-1}, which is a form of coherent and fully quantum randomness extraction that finds use in a wide number of achievability results in quantum information. In Theorem~\ref{thm:decoupling_smollision} we use our techniques to show an analogous achievability bound in terms of the measured smooth R\'enyi divergence of order 2, improving on prior one-shot formulations of decoupling~\cite{dupuis_2014,dupuis_2023,shen_2023}.

In addition to the above, we establish a number of properties of the newly defined quantity $\Ds{2}{\ve}{M}(\rho\|\sigma)$ as well as its connections with other smooth divergences. 
The one-shot inequalities that relate it with the measured smooth max-relative entropy $\Dmax{2}{\ve}{M}$ and the hypothesis testing relative entropy $D^\ve_H$ (Lemmas~\ref{lem:secondorder_equiv} and~\ref{lem:D2_upper_exponent_bound}) find use in many of our derivations, leading to new results also in the study of strong converse exponents of randomness extraction under trace distance (Proposition~\ref{prop:sc_bounds}). 

Overall, our results give new insights into both the one-shot performance and the asymptotic limits of quantum privacy amplification, showing the importance of the measured smooth R\'enyi divergences $\Ds{2}{\ve}{M}$ and $\Dmax{\ve}{M}$ in the analysis of this task. The fact that these previously unconsidered quantities lead to significantly tighter bounds compared to all prior approaches 
points to the family of measured smooth entropies, and in particular the measured smooth min-entropy $H_{\min}\s{\ve}{M}(X|E)_\rho$, as being the most appropriate generalisation of classical smooth min-entropy in the study of randomness extraction. 
We believe that this indicates the need to reconsider approaches to smoothing in quantum information more broadly, and we expect the ideas and techniques developed in this work to lead to improved analysis of many operational problems beyond the settings explicitly considered here.

\section{Preliminaries}

\subsection{Relative entropies and conditional entropies}

Throughout this paper, all operators are assumed to be acting on finite-dimensional Hilbert spaces; for simplicity, we omit the underlying spaces from our notation. Unless otherwise stated, we will restrict our discussion to self-adjoint operators. We use Greek letters ($\rho$, $\sigma$\ldots) to refer to quantum states (positive semidefinite operators of trace one) or, more generally, positive operators that may not necessarily be normalised. We use lowercase Latin letters ($p$, $q$\ldots) to refer to classical probability distributions on some finite alphabet $\X$; with a slight abuse of notation, we often treat classical distributions as diagonal quantum states in some orthonormal basis $\{\ket{x}\}_x$, i.e.\ $p = \sum_x p(x) \proj{x}$.
The functions $\log$ and $\exp$ are taken to the same, but otherwise arbitrary, base. 

For classical distributions $p$ and $q$, the R\'enyi relative entropies of order $\alpha$ are given by
\begin{equation}\begin{aligned}
  D_\alpha(p\|q) \coloneqq \frac{1}{\alpha-1} \log \sum_x p(x)^{\alpha} \, q(x)^{1-\alpha}.
\end{aligned}\end{equation}
Here, $\alpha \in [0,\infty]$, with the cases of $\alpha \in \{0, 1, \infty\}$ understood as the respective limits in $\alpha$. 
The most commonly encountered quantum generalisations of this family are the Petz--R\'enyi relative entropies $\ol{D}_\alpha$~\cite{petz_1986} and the sandwiched R\'enyi relative entropies $\wt{D}_\alpha$~\cite{muller-lennert_2013,wilde_2014}, defined for any positive semidefinite operators $\rho, \sigma$ as\footnote{Negative powers here are taken on the support of an operator, and in particular for rank-deficient $\sigma$ the values can be understood as the limits as $\ve \to 0$ of the divergences evaluated at $\sigma + \ve \id$.  For $\alpha > 1$, the quantities diverge to $+\infty$ whenever $\supp(\rho) \not\subseteq \supp(\sigma)$.}
\begin{equation}\begin{aligned}\label{eq:quantum_Renyi_def}
\ol{D}_\alpha (\rho \| \sigma) &\coloneqq \frac{1}{\alpha-1} \log \Tr \left( \rho^\alpha \sigma^{1-\alpha}\right),  \\
\wt{D}_\alpha (\rho \| \sigma) &\coloneqq \frac{1}{\alpha-1} \log \Tr\! \left[ \left(\sigma^{\frac{1-\alpha}{2\alpha}} \rho \sigma^{\frac{1-\alpha}{2\alpha}}\right)^\alpha \right]. 
\end{aligned}\end{equation}
Another natural way to extend classical quantities is through measurements~\cite{donald_1986,hiai_1991}, with the resulting quantum divergences being the minimal extensions that satisfy data processing~\cite{hiai_2017}.\footnote{We remark that some works refer to the sandwiched R\'enyi divergences $\wt{D}_\alpha$ as minimal, since for $\alpha \geq 1/2$ they are the smallest \emph{additive} extensions of classical R\'enyi divergences. They are however larger than $D^{\MM}_\alpha$ in general.}  The \emph{measured R\'enyi divergences}~\cite{fuchs_1996,matsumoto_2014,berta_2017-1} are then defined as
\begin{align}\label{eq:measured_Renyi_def}
D_\alpha^{\MM}(\rho\|\sigma) \coloneqq \sup_{\M \in \MM} D_{\alpha}\! \left( \M(\rho) \middle\| \M(\sigma) \right),
\end{align}
where we understand each $\M$ to be a quantum-to-classical channel, so that $\M(\rho)$ is a probability distribution whenever $\Tr \rho = 1$. Formally, we can understand this as
\begin{equation}\begin{aligned}
  \MM \coloneqq \bigcup_{n \in \NN} \lset \M : X \mapsto \sum_{i=1}^n \Tr (M_i X)\, \proj{i} \bar M_i \geq 0 \; \forall i,\; \sum_{i=1}^n M_i = \id \rset,
\end{aligned}\end{equation}
with $(M_i)_i$ being the POVM elements of the given measurement. 
The various quantum R\'enyi divergences are in general strictly different: $D_\alpha^{\MM}(\rho\|\sigma) \leq \wt{D}_\alpha(\rho\|\sigma)$ for all $\alpha \in [\frac12,\infty]$, with equality if and only if $\alpha \in \big\{\frac12,\infty\big\}$ or $\rho$ and $\sigma$ commute~\cite{berta_2017-1}, and $\wt{D}_\alpha(\rho\|\sigma) \leq \ol{D}_{\alpha}(\rho\|\sigma)$ for $\alpha \in [0,2]$, with equality if and only if $\alpha=1$ or $\rho$ and $\sigma$ commute~\cite{wilde_2014,datta_2014,berta_2017-1}. We note in particular that $\wt{D}_1(\rho\|\sigma) = \ol{D}_1(\rho\|\sigma) = D(\rho\|\sigma)$ is the standard  quantum relative entropy, while $D_{1}^{\MM}(\rho\|\sigma)$ is strictly smaller for non-commuting states.

We will use the notation $\ol{Q}_\alpha$, $\wt{Q}_\alpha$, and $Q_{\alpha}^{\MM}$ to refer to the trace terms in the definitions of the different R\'enyi divergences, namely
\begin{equation}\begin{aligned}
  \mathbb{Q}_\alpha (\rho \| \sigma) \coloneqq \exp\big[ (\alpha-1)\, \mathbb{D}_\alpha (\rho\|\sigma) \big],
\end{aligned}\end{equation}
with $\DD_\alpha$ standing for one of $\ol{D}_\alpha$, $\wt{D}_\alpha$, or $D_{\alpha}^{\MM}$, and analogously for $\mathbb{Q}_\alpha$.

In the limit $\alpha \to \infty$, the sandwiched R\'enyi divergences give the so-called max-relative entropy~\cite{muller-lennert_2013}, which we will define for any self-adjoint operators $A$ and $B$ as~\cite{datta_2009}
\begin{equation}\begin{aligned}
  D_{\max} (A \| B) \coloneqq \log \inf \lset \lambda \in \RR_+ \bar A \leq \lambda B \rset.
\end{aligned}\end{equation}
It holds in fact that $D_{\max} (\rho \| \sigma) = D_{\max}^{\MM}(\rho\|\sigma)$~\cite{mosonyi_2015}. We will take $Q_{\max}(\rho\|\sigma) = \exp(D_{\max}(\rho\|\sigma))$.

An important notion in this work is that of \emph{smoothing}. This is most often based on either (generalised) \emph{trace distance} $\|\rho-\rho'\|_+$ or the \emph{purified distance} $P(\rho,\rho')$~\cite{tomamichel_2016}. Here, the norm $\|\cdot\|_+$ is defined through
\begin{equation}\begin{aligned}
  \norm{X}{+} &\coloneqq \max_{0 \leq M \leq \id} \big|\! \Tr M X \big| = \frac12 \big|\!\Tr X\big| + \frac12 \|X\|_1 ,
\end{aligned}\end{equation}
with $\norm{X}{1} = \Tr \sqrt{X^\dagger X}$ standing for the Schatten 1-norm. We make note of the fact that $\norm{X}{+} = \frac12 \norm{X}{1}$ whenever $\Tr X = 0$, and in particular $\norm{\rho-\rho'}{+} = \frac12 \norm{\rho-\rho'}{1}$ when $\rho$ and $\rho'$ are both states, reducing to the conventional notion of trace distance between states (a.k.a.\ statistical or total variation distance for classical distributions). More generally, $\norm{X}{+} = \Tr X_+$ whenever $\Tr X \geq 0$, with $X_+$ standing for the positive part of a Hermitian operator $X$. 
The purified distance is defined for any $\rho,\rho' \geq 0$ with $\Tr \rho' \leq \Tr \rho = 1$~as
\begin{equation}\begin{aligned}
  P(\rho, \rho') &\coloneqq \sqrt{1 - F(\rho,\rho')}, \qquad   F(\rho,\rho') \coloneqq \norm{\sqrt{\rho\vphantom{T}}\sqrt{\rho'}}{1}^{\,2} = \left( \Tr\sqrt{\sqrt{\rho}\rho'\sqrt{\rho}} \right)^2.
\end{aligned}\end{equation}

Given any quantum R\'enyi divergence $\DD_\alpha$ (including $D_{\max}$), its smoothed variants based on either trace (T) or purified (P) distance are conventionally defined~as~\cite{renner_2004,renner_2005}
\begin{equation}\begin{aligned}\label{eq:smooth_def}
  \DD_\alpha^{\ve,\,T} (\rho \| \sigma) &\coloneqq \opt \lset \vphantom{\big|} \DD_\alpha(\rho' \| \sigma) \bar \norm{\rho - \rho'}{+} \leq \ve, \; \rho' \geq 0,\; \Tr \rho' \leq 1 \rset\\
  \DD_\alpha^{\ve,\, P} (\rho \| \sigma) &\coloneqq \opt \lset \vphantom{\big|} \DD_\alpha(\rho' \| \sigma) \bar P(\rho, \rho') \leq \ve, \; \rho' \geq 0,\; \Tr \rho' \leq 1 \rset,
\end{aligned}\end{equation}
where
\begin{equation}\begin{aligned}
  \opt \coloneqq \begin{cases} \sup & \text{ for } \alpha \in [0,1) \\ \inf & \text{ for }  \alpha \in (1,\infty]. \end{cases}
\end{aligned}\end{equation}
Note that all definitions used here optimise over subnormalised states ($\Tr \rho' \leq 1$). 

Consider now a bipartite quantum state $\rho_{AE}$. If the system $A$ is classical, we will denote it as $X$ or $Z$. We say that the state is classical--quantum (CQ) if $\rho_{XE} = \sum_x p_X(x) \proj{x} \otimes \rho_{E,x}$ for some probability distribution $p_X$ and a collection of quantum states $\{\rho_{E,x}\}_x$ on the system $E$.

For any R\'enyi divergence $\DD_\alpha$, we define two variants of a conditional entropy:
\begin{equation}\begin{aligned}
  \HH_\alpha^{\uparrow}(X|E)_\rho &\coloneqq - \!\!\inf_{\substack{\sigma_E \geq 0\\\Tr \sigma_E = 1}} \DD_\alpha(\rho_{XE} \| \id_X \otimes \sigma_E),\\
  \HH_\alpha^{\downarrow}(X|E)_\rho &\coloneqq -\, \DD_\alpha(\rho_{XE} \| \id_X \otimes \rho_E).
\end{aligned}\end{equation}
Replacing $\DD_\alpha$ with any of the smooth R\'enyi divergences leads also to the quantities $\HH_\alpha^{\ve,\,\Delta,\,\smash\uparrow}(X|E)_\rho$ and $\HH_\alpha^{\ve,\,\Delta,\,\smash\downarrow}(X|E)_\rho$, with $\Delta$ standing for one of the smoothing variants, $P$ or $T$.

An important special case of a conditional entropy is the \emph{min-entropy}, which corresponds to the choice $\DD_\alpha = D_{\max}$. Its two variants are the average min-entropy $H_{\min}^{\smash\uparrow}(X|E)_\rho$ and the worst-case min-entropy $H_{\min}^{\smash\downarrow}(X|E)_\rho$. 
The \emph{smooth min-entropy} is then defined using one of the smooth variants of $D_{\max}$; in classical literature, smooth min-entropy is typically taken to mean $\Hmin{\ve}{T}(X|E)_\rho$, while in quantum contexts the quantity $\Hmin{\ve}{P}(X|E)_\rho$ is often taken as the definition. Another case that we will encounter is the \emph{collision entropy}, which corresponds to R\'enyi divergence with $\alpha = 2$.

\subsection{Randomness extraction and privacy amplification}

Randomness extraction is concerned with converting imperfect randomness into outputs that are statistically indistinguishable from uniformly random~\cite{nisan_1996}. This task is naturally described in terms of benchmarks that measure the probability of failure: first, the strength of the randomness of a source distribution is quantified in terms of its unpredictability, that is, the probability that its value can be guessed, which is connected to the min-entropy as $p_{\rm guess}(X) = \exp(-H_{\min}^{\smash{\uparrow}}(X))$; second, the performance of an extractor is similarly naturally measured by the probability that its output can be distinguished from uniform, which corresponds to the trace distance.

Formally, a function $h: \X \times \S \to \Z$ is called a strong, seeded ${(k,\epsilon)}$ randomness extractor with seed distribution $\mu_S$ if, for any distribution $p_X$ with $H^{\smash{\uparrow}}_{\min}(X)_p \geq k$, the induced distribution $p^h_{ZS}(z,s) \coloneqq \sum_{x : h(x,s) = z} p_X(x) \mu_S(s)$ satisfies $\big\|p_{ZS}^h - \frac{\id_Z}{|\Z|} \otimes \mu_S\big\|_+ \leq \ve$. The role of the seed $\mu_S$ here (which is often taken to be uniformly random) is to enable one to sample from a family of random hash functions $f_s : \X \to \Z$, corresponding to $f_s(x) = h(x,s)$. However, this seed cannot be assumed to be secret and may be revealed publicly, which necessitates the requirement that the output distribution be (approximately) independent from the seed. Understanding the seed as a family of hash functions $\mathcal{F} \coloneqq \{f_s\}_{s\in \S}$ together with a distribution $\mu_S$, we can equivalently write the definition as the requirement that
\begin{equation}\begin{aligned}
   \ve &\geq \norm{\,p_{ZS}^h - \frac{\id_Z}{|\Z|} \otimes \mu_S}{+} = \frac12 \sum_{s} \sum_{x : f_s(x) = z} \left|  p_X(x) \,\mu_S(s) - \frac{1}{|\Z|} \mu_S(s) \right|
   \\&= \!\EE_{S \sim \mu_S} \norm{\, p^{f_S}_{Z} - \frac{\id_Z}{|\Z|}}{+} 
   =\! \EE_{f \sim \mu_\mathcal{F}} \,\norm{ \,p^{f}_{Z} - \frac{\id_Z}{|\Z|}}{+}.
\end{aligned}\end{equation}
Here we defined $p^f_{Z}(z) \coloneqq \sum_{x : f(x) = z} p_X(x)$ and re-interpreted $f$ as a random variable taking values in $\F$ and distributed according to the distribution $\mu_{\F}$ induced by $\mu_S$. We will hereafter make $\mu_\F$ implicit and simply write $\EE_f$ for the expectation over some family of random functions.

The above definition is further generalised by accounting for side information, that is, an adversary whose system $E$ may be correlated with the initial source distribution. We refer to randomness extraction against side information as \emph{privacy amplification}~\cite{bennett_1988,bennett_1995,renner_2005}, the most general formulation of which does not restrict the adversary to be classical, but allows them to be quantum. Formally, modelling the initial distribution as a classical--quantum state $\rho_{XE} = \sum_x p_X(x) \proj{x} \otimes \rho_{E,x}$, the aim of a $(k,\ve)$~randomness extractor is then to ensure that the output randomness is approximately independent of the side information, in the sense that for any $\rho_{XE}$ with $H^{\smash{\uparrow}}_{\min}(X|E)_\rho \geq k$ we require
\begin{equation}\begin{aligned}
  \EE_{f} \norm{ \rho^f_{ZE} - \frac{\id_Z}{|\Z|} \otimes \rho_E}{+} \leq \ve,
\end{aligned}\end{equation}
where
\begin{equation}\begin{aligned}
  \rho^f_{ZE} \coloneqq \sum_z \proj{z} \otimes \!\!\sum_{x : f(x) = z} p_X(x) \,\rho_{E,x}.
\end{aligned}\end{equation}
We note that in the literature one sometimes encounters formulations that consider the distance $\min_{\sigma_E} \big\| \rho^f_{ZE} - \frac{\id_Z}{|\Z|} \otimes \sigma_E\big\|_+$ rather than fixing $\sigma_E = \rho_E$; however, such security criteria do not satisfy an essential property known as composability~\cite{canetti_2001,portmann_2022}, preventing their use in the study of subroutines of quantum key distribution schemes --- a key application of privacy amplification. Distance criteria based on measures such as accessible information are also not suitable for the analysis of composable security~\cite{konig_2007}.

Both conceptually and operationally, an important aspect of randomness extraction is for it to work universally, i.e.\ to not depend on the source distribution but only on its randomness as measured by the min-entropy. However, to exactly characterise the properties of privacy amplification protocols that we study here, it will be useful to follow standard notation in quantum information theory and define the state-dependent notion of \emph{extractable randomness with trace distance error ${\ve}$} as
\begin{equation}\begin{aligned}\label{eq:ell_def}
    \ell_\ve(\rho_{XE}) &\coloneqq \max \lset \log \left|\Z\right| \bar \EE_f\, \norm{ \rho^f_{ZE} - \frac{\id_Z}{\left|\Z\right|} \otimes \rho_E }{+} \leq \ve \rset\\
    &\hphantom{:}= \max \lset \log \left|\Z\right| \bar \min_{f : \X \to \Z}\, \norm{ \rho^f_{ZE} - \frac{\id_Z}{\left|\Z\right|} \otimes \rho_E }{+} \leq \ve \rset,
\end{aligned}\end{equation}
where the maximisation in the first line is over output alphabet sizes $|\Z|$ as well as all families $\F$ of functions $f: \X \to \Z$ and all distributions $\mu_\F$ thereon. The second line follows since on the one hand the minimum cannot be larger than the expected value, and on the other hand one can always take a singleton family of functions. 

The merit of smooth entropies stems precisely from their connection to extractable randomness. In the case of \emph{classical} side information $Y$, it is known that~\cite{renner_2005-2,renner_2005}
\begin{equation}\begin{aligned}\label{eq:classical_side_info}
 \Hmin{\ve-\mu}{T}(X|Y)_p - \log \frac{1}{4\mu^2} \leq \ell_\ve(p_{XY}) \leq \Hmin{\ve}{T}[down] (X|Y)_p.
\end{aligned}\end{equation}
This shows precisely that knowing only the smooth min-entropy of the source is enough to tightly characterise its extractable randomness. 
In fact, although smooth min-entropy is perhaps the most appealing for such formulations due to its direct connection to guessing probability, other smooth R\'enyi entropies of order $\alpha > 1$ are equivalent to it up to suitable error terms~\cite{renner_2005-2}.
The achievability direction here, based on the leftover hash lemma~\cite{bennett_1988,impagliazzo_1989,bennett_1995} that we will return to in Section~\ref{sec:achievability}, is even more tightly described in terms of the collision entropy $\Hs{2}{\ve-\mu}{T}(X|Y)_p$. 

Extensions of one-shot bounds as in~\eqref{eq:classical_side_info} to quantum side information have been studied since the early works of~\cite{renner_2005,renner_2005-1,tomamichel_2011}. However, as we discussed in the Introduction, their known formulations are not tight. It is the aim of this work to rectify this.

\section{Measured smooth collision divergence}\label{sec:smollision}

\subsection{Motivation: the subtle question of smoothing}

In the classical case, the total variation (trace) distance $\norm{p-q}{+}$ naturally features in the definitions of security criteria for cryptographic tasks, as it exactly quantifies the best probability of successfully distinguishing any two distributions.
This also led to the use of smooth divergences $\Ds{\alpha}{\ve}{T}$, in particular the smooth min-entropy $\Hmin{\ve}{T}$, in such applications.

In the quantum case, depending on the precise setting, the `correct' definition of a statistical distance is somewhat more debatable.
The operational motivation for the trace distance as the highest average probability of distinguishing any two distributions extends to quantum states~\cite{helstrom_1969,holevo_1973}, making it a perfectly well-motivated choice, and indeed one that underlay the first definitions of quantum smooth divergences. 
In some operational tasks in quantum information, however, an important role is played by purifications of states, making the purified distance --- which can be understood as the least trace distance between the purifications of the states~\cite{rastegin_2006,tomamichel_2010} --- a justified and useful alternative. Thanks also to the many desirable properties satisfied by this distance~\cite{tomamichel_2010,tomamichel_2016}, it has found widespread use in quantum information theory. 
We will in fact see in Section~\ref{sec:converse} that also in privacy amplification, there are reasons to give consideration to this distance, at least as a technical tool.

Nevertheless, especially in quantum cryptography, a key role is played by distinguishability through measurements, and the security of cryptographic protocols is intrinsically tied to minimising the probability that they can be distinguished from perfectly secure ones~\cite{portmann_2022,ferradini_2025}. Hence, bounding the trace distance is of primary importance and is conventionally used to define security criteria. 
This has led to difficulties in obtaining tight bounds for tasks such as quantum privacy amplification, as the trace distance is unfortunately not as well behaved for general quantum states as it is for classical distributions~\cite{hayashi_2014,renes_2018,shen_2024}, and thus a tight asymptotic analysis of smooth entropies in quantum information theory has often been limited to the purified distance.

Our investigation begins with a reconsideration of a seemingly basic question: how to properly generalise the notion of trace distance smoothing to quantum states?

One important point to note is that some of the earliest definitions of classical smooth divergences were not actually defined using trace distance per se. The underlying idea can instead be understood as discarding a small mass from a classical probability distribution $p$, that is, decomposing it into two `sub-distributions' as $p = p_1 + p_2$, and then using one of these approximate distributions as a surrogate for $p$. In the particular case of privacy amplification, this idea draws on the approach of~\cite{impagliazzo_1989} and was already present in the early work of Renner and Wolf~\cite{renner_2005-2}, with a lucid exposition given in~\cite[Appendix~I]{yang_2019}: by taking $p_1$ as a sub-distribution with a lower variance than $p$, one can bound the amount of extractable randomness much more tightly than if one were to work with $p$ directly. The reason why this definition is typically not formally distinguished from trace distance smoothing in classical literature is that, when computing the smooth entropies $H^{\ve,\,T}_{\min}$ and $H^{\ve,\,T}_2$, the two notions are actually completely equivalent:  we have for instance that~\cite{yang_2019,renes_2018,abdelhadi_2020}
\begin{equation}\begin{aligned}\label{eq:two_smoothings}
  \Dmax{\ve}{T} (p \| q) &= \inf \lset \,D_{\max}(p'\|q) \bar  p' \geq 0,\; \Tr p' \leq 1,\; \norm{p-p'}{+} \leq \ve \rset\\
  &=  \inf \lset \, D_{\max}(p'\|q) \bar  0 \leq p' \leq p,\; \norm{p - p'}{+} \leq \ve \rset.
\end{aligned}\end{equation}
This follows because the optimal distribution $p'$ can always be taken of the form $p' = p - (p - \gamma q)_+$, i.e.\ the pointwise minimum of $p$ and $\gamma q$, for a suitable choice of a positive number $\gamma$.

A key problem now is that attempting to do something similar for quantum states leads to a \emph{different} notion of smoothing than trace distance. It is not difficult to find counterexamples to a relation like~\eqref{eq:two_smoothings} in the quantum case. The technical reason for this is that the Loewner order, i.e.\ the partial order induced by the positive semidefinite cone, does not form a lattice; due to this, a non-commutative minimum of two positive operators generally cannot be a positive operator, and in particular $\rho - (\rho - \gamma \sigma)_+ \not\geq 0$ (we refer to~\cite{moreland_1999,cheng_2023-1} for discussions). This leads us to an intuitive, if somewhat controversial, idea: instead of 
sub-dis\-tri\-bu\-tions, let us consider smoothing over sub-operators $R \leq \rho$, even if they are not necessarily positive. 
Such a smoothing notion was implicitly considered for the max-relative entropy in~\cite[Appendix~A]{regula_2025}, from which one can deduce that indeed, with an allowance for non-positive operators, the trace-distance smoothing and sub-operator smoothing become equivalent. The quantity defined in this way corresponds exactly to an inverse function of the quantum hockey-stick divergence $E_\gamma(\rho\|\sigma) \coloneqq \Tr ( \rho - \gamma \sigma )_+$~\cite{sharma_2013,hirche_2023}; precisely,
\begin{equation}\begin{aligned}
  \wt{D}^\ve_{\max}(\rho\|\sigma) &\coloneqq \inf \lset \log \gamma \bar \Tr ( \rho - \gamma \sigma )_+ \leq \ve \rset\\
  &\hphantom{:}= \inf \lset \,D_{\max}(R\|\sigma) \bar R = R^\dagger,\; \Tr R \leq 1,\; \norm{\rho - R}{+} \leq \ve \rset\\
  &\hphantom{:}= \inf \lset \,D_{\max}(R\|\sigma) \bar R = R^\dagger,\; R \leq \rho,\; \norm{\rho - R}{+} \leq \ve \rset,
\end{aligned}\end{equation}
with the last two lines shown in~\cite[Lemma~A.1]{regula_2025}. 
The quantity $\wt{D}^\ve_{\max}(\rho\|\sigma)$ here is a divergence that made appearances in quantum information theory under various guises~\cite{nagaoka_2007,datta_2009-1,datta_2015,hirche_2023,nuradha_2024,regula_2025}; it was formalised by Datta and Leditzky~\cite{datta_2015} as a variant of an information spectrum divergence, and many of its properties in connection with the smooth max-relative entropy were later studied  in~\cite{nuradha_2024,regula_2025}. 
This could be generalised to a definition of `Hermitian-smoothed' R\'enyi divergences as
\begin{equation}\begin{aligned}
  \DD_{\alpha}^{\ve,\,\mathrm{Herm}} (\rho \| \sigma) \coloneqq \inf \lset \DD_\alpha(R \| \sigma) \bar R = R^\dagger,\; R \leq \rho,\; \norm{\rho - R}{+} \leq \ve \rset,
\end{aligned}\end{equation}
assuming that the definition of the given quantum R\'enyi divergence $\DD_\alpha$ can be extended to non-positive first arguments, and changing the $\inf$ to a $\sup$ if $\alpha < 1$. 
Classically, due to the relation~\eqref{eq:two_smoothings}, one has in particular $D^{\ve, \, \mathrm{Herm}}_{\max}(p\|q) = \wt{D}^\ve_{\max}(p\|q) = D^{\ve,\,T}_{\max}(p\|q)$. However, the need to consider non-positive operators $R$ in the quantum definition could make such quantities appear somewhat unphysical, putting into question the appropriateness of the resulting notion of smoothing.

Let us then step back and consider yet another, natural way to extend smoothing notions from classical distributions to quantum states: inspired by the definition of measured R\'enyi divergences, we define the \deff{measured smooth R\'enyi divergences} as
\begin{equation}\begin{aligned}\label{eq:measured_smooth_def}
  \Ds{\alpha}{\ve}{M} (\rho \| \sigma) \coloneqq \sup_{\M\in\MM} \Ds{\alpha}{\ve}{T}(\M(\rho)\|\M(\sigma)).
\end{aligned}\end{equation}
To the best of our knowledge, this smoothing notion first appeared in~\cite{regula_2025} where it was applied to the max-relative entropy. That work revealed an especially curious connection: for all quantum states it holds that~\cite[Proposition~3]{regula_2025}
\begin{equation}\begin{aligned}
   \Dmax{\ve}{M} (\rho \| \sigma) = \wt{D}^\ve_{\max}(\rho\|\sigma) = D^{\ve, \, \mathrm{Herm}}_{\max}(\rho\|\sigma),
\end{aligned}\end{equation}
and hence the notions of smoothing through measurements and smoothing through Hermitian sub-operators are actually equivalent for $D_{\max}$. Not only does this mitigate the ostensibly unphysical character of `Hermitian smoothing', it actually points towards this smoothing being a very appropriate choice when focusing on distinguishability through measurements. 
We see in particular that two different ways to extend the classical smoothing with total variation distance lead precisely to this form of quantum smoothing, distinct from conventional trace distance smoothing over quantum states.

Motivated by this insight, we will show that it is precisely this modified smoothing notion --- whether understood as lifting classical smoothing through measurements, or as a generalisation of smoothing through sub-distributions that allows optimisation over Hermitian operators --- that is the most appropriate choice for the analysis of quantum privacy amplification. To establish this, however, we need to extend the above properties beyond the special case of max-relative entropy.

\subsection{Measured collision divergence}

We begin by recalling some known facts about the measured R\'enyi divergence of order 2. 

For any $\sigma > 0$, it can be expressed as
\begin{align}\label{eq:D2_unsmoothed_def}
  D_{2}^{\MM} (\rho\|\sigma) &\coloneqq \sup_{\M \in \MM} D_2(\M(\rho)\|\M(\sigma))\\[-5pt]
  &\hphantom{:}= \log \sup_{0 \leq W \leq \id}  \frac{\left(\Tr W \rho\right)^2}{\Tr W^2 \sigma} \label{eq:D2_unsmoothed_def_var}\\[3pt]
  &\hphantom{:}= \log \,\Tr \rho \, \J_\sigma^{-1}(\rho), \label{eq:D2_unsmoothed_def_bures}
\end{align}
where the second line is a well-known variational form of~\cite{berta_2017-1}, and where $\J_\sigma^{-1}$ in the last line is the inverse of the matrix multiplication superoperator defined as
\begin{equation}\begin{aligned}
  \J_\sigma(X) = \frac12 (\sigma X + X \sigma)
\end{aligned}\end{equation}
for all Hermitian $X$. This last expression in~\eqref{eq:D2_unsmoothed_def_bures} in terms of $\J_\sigma^{-1}$ is a form often used in the study of the quantum (Bures) $\chi^2$ divergence~\cite{braunstein_1994,petz_2011,temme_2015}, which is indeed closely related to the collision divergence: we have $D_2^{\MM}(\rho\|\sigma) = \log ( \chi^2(\rho\|\sigma) + 1)$ whenever $\Tr \rho = \Tr \sigma = 1$. 

To gain some insight about the superoperator $\J_\sigma^{-1}$ itself, we can diagonalise $\sigma$ as $\sigma = \sum_i \sigma_{i} \proj{i}$ and obtain the explicit expressions
\begin{equation}\begin{aligned}
  \J_\sigma^{-1}(X) &= \sum_{i,j} \frac{2}{\sigma_{i} + \sigma_{j}} X_{ij} \ketbra{i}{j}, \quad X_{ij} = \braket{i|X|j}\\
  &= 2 \int_0^\infty \! e^{- t \sigma} X e^{- t \sigma} \mathrm{d}t,
\end{aligned}\end{equation}
where the second form is common in the study of the Lyapunov matrix equation, the solutions to which are expressed precisely through $\J_\sigma^{-1}$.

Observe that $\< X, Y \>_{\sigma} \coloneqq \Tr X \J_\sigma^{-1}(Y)$ defines a Hermitian weighted inner product. This is often called the Bures inner product~\cite{lesniewski_1999,hiai_2012}. It is the very same inner product as the one used to define the commonly encountered variant of quantum Fisher information~\cite{braunstein_1994,sidhu_2020}, and the metric induced by this inner product (the minimal monotone Riemannian metric~\cite{petz_1996,lesniewski_1999,hiai_2012}) is sometimes also called the symmetric logarithmic derivative (SLD) metric. One can then define the corresponding Bures norm $\norm{X}{\Bures\sigma} \coloneqq \sqrt{\Tr X \J_\sigma^{-1}(X)}$, yielding $D_2^{\MM}(\rho\|\sigma) = \log  \norm{\rho}{\Bures\sigma}^2$.

The above definitions can be extended to general $\sigma \geq 0$ by excluding operators that do not lie in the range of $\J_{\sigma}$. More precisely, it is not difficult to see that the equation $\J_{\sigma}(Z) = Y$ has a solution $Z$ if and only if $\Pi_\sigma^\perp Y \Pi_\sigma^\perp = 0$, where $\Pi_\sigma^\perp$ stands for the projection onto $\ker(\sigma)$. 
In full generality, we then define the \deff{Bures seminorm}
\begin{equation}\begin{aligned}
  \norm{X}{\Bures\sigma} &\coloneqq \begin{cases} \displaystyle \sqrt{\Tr X \J_\sigma^{-1}(X)} & \text{ if } X \in \ran(\J_\sigma)\\ \infty & \text{ otherwise}\end{cases}\\
  &\hphantom{:}= \lim_{\ve\to0^+} \norm{X}{\Bures\sigma + \ve \id}
\end{aligned}\end{equation}
with $\J_{\sigma}^{-1}$ denoting the inverse on the range of $\J_\sigma$, the latter being the subspace
\begin{equation}\begin{aligned}
\ran(\J_\sigma) = \lset X \bar X = X^\dagger,\; \Pi_\sigma^\perp X \Pi_\sigma^\perp = 0 \rset.
\end{aligned}\end{equation}
The quantity $\|\cdot\|_{\Bures\sigma}$ takes values in the extended halfline $\RR_+ \!\cup\! \{+\infty\}$ and is a norm whenever $\sigma > 0$. 
This recovers the standard definition of the measured R\'enyi divergence for general quantum states $\rho$ and $\sigma$: we have $D_2^{\MM}(\rho\|\sigma) = \log \|\rho\|_{\Bures\sigma}^2$, which evaluates to infinity if $\supp(\rho) \not\subseteq \supp(\sigma)$. 

Let us then define our main quantity of interest. First, recall that for classical probability distributions, we consider
\begin{equation}\begin{aligned}
  \Ds{2}{\ve}{T} (p\|q) = \inf \lset \log \sum_x p'(x)^2\, q(x)^{-1} \bar p' \geq 0,\; \Tr p' \leq 1,\; \norm{p - p'}{+} \leq \ve \rset.
\end{aligned}\end{equation}
Our protagonist is then the measured smooth extension of this quantity, whose definition we restate for clarity.
\begin{definition}
The \deff{measured smooth collision divergence} $\boldsymbol{D_2^{\ve,\,\MM}}$ 
is given by the supremum of the classical smooth collision divergence $D^{\ve,T\!}_2$ optimised over all quantum measurements. Precisely,
\begin{equation}\begin{aligned}
  \Ds2\ve M (\rho\|\sigma) &\coloneqq \sup_{\M \in \MM} \Ds{2}{\ve}{T} (\M(\rho)\|\M(\sigma)).
\end{aligned}\end{equation}
\end{definition}
We recall also the notation $\Qs2\ve M(\rho\|\sigma) = \exp(\Ds2\ve M(\rho\|\sigma))$.

\subsection{Variational form through Hermitian smoothing}

Our key results will rely on a variational characterisation of the measured smooth collision divergence, showing that it can be understood exactly as the measured R\'enyi divergence smoothed over Hermitian operators. This establishes an equivalence between the smoothing notions discussed earlier, in the sense that $\Ds{2}{\ve}{M}(\rho\|\sigma) = D_2^{\MM,\,\ve,\,\mathrm{Herm}} (\rho\|\sigma)$.

\begin{boxed}[filled]
\begin{theorem}\label{thm:D2_smoothed_forms}
For all quantum states $\rho$, all operators $\sigma \geq 0$, and all $\ve \in [0,1)$, it holds that
\begin{align}
  \Ds{2}{\ve}{M}(\rho\|\sigma) &= \log \inf \lset \norm{R}{\Bures\sigma}^2 \bar R = R^\dagger,\; \norm{\rho - R}{+} \leq \ve,\; R \leq \rho \rset \label{eq:D2_variational_forms_primal}\\
  &= \log \sup_{0 \leq W \leq \id}  \frac{\left(\Tr (W \rho) - \epsilon\right)_+^2}{\Tr W^2 \sigma}, \label{eq:D2_variational_forms_dual}
\end{align}
where $(x)_+ = \max \{ 0, x \}$, and for consistency we understand $a / 0 = \infty \; \forall a > 0$ and $0 / 0 = 0$.\\[-8pt]

Furthermore, $\Ds{2}{\ve}{M}(\rho\|\sigma) < \infty$ if and only if  $\Tr \Pi_\sigma^\perp \rho \leq \ve$, where $\Pi_\sigma^\perp$ denotes the projection onto the kernel of~$\sigma$. Whenever this is the case, there exists an optimal solution $R$ achieving the infimum in~\eqref{eq:D2_variational_forms_primal}.
\end{theorem}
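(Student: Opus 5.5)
The plan is to prove the two equalities separately. First I identify the measured quantity with the dual form \eqref{eq:D2_variational_forms_dual}. Then I identify the dual form with the primal (Hermitian-smoothed) form \eqref{eq:D2_variational_forms_primal} by convex duality in the Bures inner product $\<\cdot,\cdot\>_\sigma$.

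\textbf{Step 1 (classical case).} For distributions $p,q$, I would show
\[
  \Qs{2}{\ve}{T}(p\|q) = \sup_{0\le w\le 1} \frac{\left(\sum_x w(x)p(x)-\ve\right)_+^2}{\sum_x w(x)^2 q(x)} .
\]
The direction ``$\geq$'' is weak duality. For any feasible $p'$, the mass argument behind \eqref{eq:two_smoothings} lets us assume $p'\le p$. Then
\[
  \textstyle\sum_x w p' \;\ge\; \sum_x w p - \ve,
\]
and Cauchy--Schwarz gives $\left(\sum_x wp'\right)^2 \le \sum_x p'^2/q \cdot \sum_x w^2 q$. The direction ``$\leq$'' is either a finite-dimensional Lagrangian duality argument (Slater holds for $\ve>0$; $\ve=0$ is the unsmoothed case), or the explicit optimiser $p'=\min(p,\gamma q)$ paired with $w=\mathbbm{1}[p\ge\gamma q]$-type choices.

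\textbf{Step 2 (measured $=$ dual).} Given a POVM $(M_i)$ and weights $w_i\in[0,1]$, set $W=\sum_i w_iM_i$, so that $0\le W\le\id$ and $\sum_i w_i\Tr M_i\rho=\Tr W\rho$. By operator convexity of $t\mapsto t^2$ (equivalently Kadison--Schwarz for the unital CP map $e_i\mapsto M_i$), $W^2\le\sum_i w_i^2M_i$. Hence
\[
  \textstyle\sum_i w_i^2\Tr M_i\sigma \;\ge\; \Tr W^2\sigma,
\]
so every classical dual value is dominated by the quantum dual value. Conversely, for a given $W$, the projective measurement in the eigenbasis of $W$, with $w_i$ its eigenvalues, attains equality. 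Combined with Step 1, this gives $\Ds{2}{\ve}{M}=$ \eqref{eq:D2_variational_forms_dual}, including the conventions $a/0=\infty$ and $0/0=0$.

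\textbf{Step 3 (primal $=$ dual).} For $R\le\rho$ we have $\|\rho-R\|_+=\Tr(\rho-R)$, so the primal is
\[
  \min \|R\|_{\Bures\sigma}^2 \quad\text{subject to}\quad R\le\rho,\ \ \Tr R\ge 1-\ve .
\]
Weak duality follows as in Step 1. We have $\Tr WR\ge\Tr W\rho-\Tr(\rho-R)\ge\Tr W\rho-\ve$, and
\[
  \Tr WR=\<\J_\sigma(W),R\>_\sigma \le \|R\|_{\Bures\sigma}\,\|\J_\sigma(W)\|_{\Bures\sigma},
  \qquad \|\J_\sigma(W)\|_{\Bures\sigma}^2=\Tr W\J_\sigma(W)=\Tr W^2\sigma .
\]
For strong duality I would first take $\sigma>0$. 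There Lagrange duality with multipliers $Y\ge0$ (for $\rho-R\ge0$) and $\lambda\ge0$ (for the trace constraint) gives stationarity $R\propto\J_\sigma(\lambda\id-Y)$. After rescaling and optimising the scale, the dual becomes a supremum over $A=\lambda\id-Y$ of $(\Tr A\rho-\lambda\ve)_+^2/\Tr A^2\sigma$-type expressions. One must then argue that $A$ can be restricted to $0\le W\le\id$; truncating $A$ to its positive part and normalising by $\lambda$ seems to do this. Slater's condition holds, since $R=(1-\ve/2)\rho-\delta\id$ is strictly feasible. General $\sigma\ge0$ would then follow by the limit $\sigma+\delta\id$, using monotone convergence of both sides. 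I expect this step, particularly the reduction of the Lagrange dual to $0\le W\le\id$ and the limit for singular $\sigma$, to be the main obstacle.

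\textbf{Step 4 (finiteness and attainment).} If $\Tr\Pi_\sigma^\perp\rho>\ve$, then $W=\Pi_\sigma^\perp$ makes the dual infinite, since $\Tr W^2\sigma=0$. If $\Tr\Pi_\sigma^\perp\rho\le\ve$, then $R=\rho-\Pi_\sigma^\perp\rho\Pi_\sigma^\perp$ is primal feasible: $\rho-R\ge0$, its trace is at most $\ve$, and $R\in\ran(\J_\sigma)$, so the value is finite. For attainment, the feasible set intersected with a sublevel set is compact. Indeed, $R\le\rho$ bounds $R$ from above, and $\Pi_\sigma^\perp R\Pi_\sigma^\perp=0$ together with a bound on $\|R\|_{\Bures\sigma}$ controls the remaining blocks from below. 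Lower semicontinuity of the seminorm then yields a minimiser.
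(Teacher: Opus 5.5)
Your skeleton is the paper's. It has the classical identity, the lift to measurements via Kadison's inequality $W^2\le\sum_i w_i^2M_i$ together with the eigenbasis measurement of $W$, and Lagrange duality for the Hermitian-smoothed primal with reduction to the positive part of the multiplier and optimisation of the scale. It also uses the same witnesses $W=\Pi_\sigma^\perp$ and $R=\rho-\Pi_\sigma^\perp\rho\,\Pi_\sigma^\perp$ for the finiteness criterion. Steps~2 and~4 are sound. The problems are in Step~3, where two claims do not hold as written; both are repairable.

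\textbf{Slater's condition at $\ve=0$.} Your Slater point needs $\ve>0$. At $\ve=0$ the constraints $R\le\rho$ and $\Tr R\ge 1$ force $R=\rho$, so no feasible $R$ has $\rho-R>0$. The case $\ve=0$, which the theorem includes, is therefore not covered; you would have to fall back on the known unsmoothed identity $\|\rho\|^2_{\Bures\sigma}=\sup_W(\Tr W\rho)^2/\Tr W^2\sigma$. The paper applies Slater to the other side instead. The Lagrange dual $\sup\{2\Tr B\rho-2t\ve-\Tr B^2\sigma : B\le t\id,\ t\ge 0\}$ is strictly feasible at $t=1$, $B=\frac12\id$ for every $\ve$. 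This yields zero duality gap and attainment of the primal infimum in one stroke, so your compactness argument in Step~4 is not even needed.

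\textbf{The limit $\sigma+\delta\id\to\sigma$.} This passage does not follow from monotone convergence. On the dual side it does, since you are only interchanging two suprema. On the primal side, $\|R\|^2_{\Bures\sigma+\delta\id}$ increases pointwise to $\|R\|^2_{\Bures\sigma}$ as $\delta\downarrow 0$, and you need $\lim_\delta\inf_R=\inf_R\lim_\delta$. Monotonicity only gives $\lim_\delta\inf_R\le\inf_R\lim_\delta$, which is already implied by weak duality; the inequality you actually need (primal $\le$ dual) is the reverse one.

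It can be supplied by compactness. The feasible set $\rho-\{Q\ge 0 : \Tr Q\le\ve\}$ is compact, so take minimisers $R_\delta$ and a convergent subsequence $R_{\delta_n}\to\hat R$. For fixed $\delta'$ and $\delta_n\le\delta'$ one has $\|R_{\delta_n}\|_{\Bures\sigma+\delta'\id}\le\|R_{\delta_n}\|_{\Bures\sigma+\delta_n\id}$. Letting $n\to\infty$ and then $\delta'\to 0$ gives $\|\hat R\|^2_{\Bures\sigma}\le\lim_\delta\inf_R\|R\|^2_{\Bures\sigma+\delta\id}$.

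The paper avoids the limit altogether. When the primal is finite, it restricts to $R\in\ran(\J_\sigma)$, substitutes $R=\J_\sigma(Z)$, and completes the square in the positive semidefinite form $\Tr \J_\sigma(X)Y$. This works verbatim for singular $\sigma$.

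\textbf{Minor point in Step~1.} If you take the explicit route, the dual witness must be the Cauchy--Schwarz equality case $w=p'/(\gamma q)=\min\{1,p/(\gamma q)\}$. The indicator of $\{p>\gamma q\}$ only achieves $\gamma^2 q(\{p>\gamma q\})$ and misses the term $\sum_{x:\,p(x)\le\gamma q(x)}p(x)^2/q(x)$.
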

\end{boxed}

\begin{remark}
One can compare the variational forms of $\Ds{2}{\ve}{M}$ in Theorem~\ref{thm:D2_smoothed_forms} with those of $\Dmax{\ve}{M}$, which can be expressed as~\cite{nuradha_2024,regula_2025}
\begin{equation}\begin{aligned}
  \Dmax{\ve}{M} (\rho \| \sigma) &= \log \inf \lset \lambda \bar R \leq \lambda \sigma,\; R = R^\dagger,\; \norm{\rho - R}{+} \leq \ve,\; R \leq \rho \rset\\
  &= \log \sup_{0 \leq W \leq \id}  \frac{\left(\Tr (W \rho) - \epsilon\right)_+}{\Tr W \sigma}.
\end{aligned}\end{equation}
\end{remark}

We refer to the two optimisation problems appearing in the statement of Theorem~\ref{thm:D2_smoothed_forms} as the primal and the dual, respectively, and denote their optimal values as
\begin{equation}\begin{aligned}
  \primal(\rho,\sigma,\ve) &\coloneqq \inf \lset \norm{R}{\Bures\sigma}^2 \bar R = R^\dagger,\; \norm{\rho - R}{+} \leq \ve,\; R \leq \rho \rset,\\
    \dual(\rho,\sigma,\ve) &\coloneqq \sup_{0 \leq W \leq \id}  \frac{\left(\Tr (W \rho) - \epsilon\right)_+^2}{\Tr W^2 \sigma}.
\end{aligned}\end{equation}
The proof of this result will proceed step by step through a series of lemmas: first to show the equivalence of the two problems, then their equality with $\Ds{2}{\ve}{M}$ in the classical (commuting) case, and finally the extension to all quantum states. Before that, we separately handle degenerate cases.

\begin{lemma}[Diverging case]\label{lem:D2_diverging_case}
If $\Tr \Pi_\sigma^\perp \rho > \ve$, then $\Ds{2}{\ve}{M}(\rho\|\sigma) = \primal(\rho,\sigma,\ve) = \dual(\rho,\sigma,\ve) = \infty$. Otherwise, the quantities are all finite.
\end{lemma}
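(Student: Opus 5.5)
We treat the three quantities separately, each time splitting according to whether $t \coloneqq \Tr \Pi_\sigma^\perp \rho$ exceeds $\ve$.

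\emph{Measured quantity.} Suppose $t>\ve$. We use the two-outcome measurement with POVM $\{\Pi_\sigma^\perp, \id - \Pi_\sigma^\perp\}$. It yields $p = (t, 1-t)$ and $q = (0, \Tr\sigma)$. Any classical $p'$ with $\norm{p-p'}{+}\le\ve$ has $p'(1) \ge t-\ve>0$ while $q(1)=0$, so $D_2(p'\|q)=\infty$. Hence $\Ds{2}{\ve}{M}=\infty$.

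Conversely, suppose $t \le \ve$, and take any measurement $\M$ with effects $M_i$. Choose the classical $p'(i) = \Tr M_i \Pi_\sigma\rho\Pi_\sigma$. This is not exactly a subnormalised sub-distribution of $p$, so instead we pick $p'(i)=\min\{p(i), \Tr M_i\Pi_\sigma\rho\Pi_\sigma\}$, or more simply zero out the outcomes with $q(i)=0$. Setting $p'(i)=p(i)$ if $q(i)>0$ and $0$ otherwise, the discarded mass is $\sum_{i:q(i)=0} p(i)$. When $q(i)=\Tr M_i\sigma=0$ we have $M_i \le c\,\Pi_\sigma^\perp$ in support, i.e.\ $\supp M_i\subseteq\ker\sigma$, so $M_i \le \|M_i\|\Pi_\sigma^\perp$. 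Summing over such $i$ gives $\sum M_i \le \id$ with all these effects supported on $\ker\sigma$, so $\sum_{i:q(i)=0}M_i\le\Pi_\sigma^\perp$. The discarded mass is therefore at most $t\le\ve$.

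This $p'$ alone does not give a uniform bound on $\sum_i p'(i)^2/q(i)$ over all $\M$. For uniformity it is easier to bound via the primal, using data processing, which is the monotonicity of $D_2^{\MM}$ and of $\norm{\cdot}{+}$. The key step is that a primal-feasible $R$ maps to a classical feasible point: $\M(R)\le\M(\rho)$ and $\norm{\M(\rho)-\M(R)}{+}\le\norm{\rho-R}{+}$. The classical smoothing over $p'\ge0$ is at least as good as smoothing over $p'\le p$ with Hermitian (signed) values, and classically these coincide, which is the classical analogue of~\eqref{eq:two_smoothings}. Replacing negative entries by zero only lowers $\sum p'^2/q$, stays below $p$, and decreases the distance. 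Thus $\Ds{2}{\ve}{M}\le\log\primal$, so finiteness of the primal suffices.

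\emph{Primal.} If $t\le\ve$, take $R=\Pi_\sigma\rho\Pi_\sigma$-type candidates. To enforce $R\le\rho$, we use instead $R=\rho-\Pi_\sigma^\perp\rho-\rho\Pi_\sigma^\perp+\Pi_\sigma^\perp\rho\Pi_\sigma^\perp$ minus a correction. The cleanest choice is
\[
R=\rho-(\id+\lambda)\text{-type}\ \text{operator}\ A,\qquad A=\Pi^\perp\rho\Pi^\perp+\Pi^\perp\rho\Pi+\Pi\rho\Pi^\perp+K\,\Pi\rho\Pi^\perp\rho\Pi
\]
with $\Pi=\Pi_\sigma$. Completing a square, $A=\begin{pmatrix} K^{-1}\cdots\end{pmatrix}\ge0$ can be chosen so that $\Pi^\perp R\Pi^\perp=0$. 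By the Schur complement, $\begin{pmatrix}X & B\\ B^\dagger & C\end{pmatrix}\ge0$ holds when $C\ge B^\dagger X^{-1}B$, which lets $A\ge0$ with its $\Pi^\perp$ block equal to $\Pi^\perp\rho\Pi^\perp$. The cost is that $\norm{A}{+}=\Tr A$ picks up the extra term $\Tr(\Pi\rho\Pi^\perp)(\Pi^\perp\rho\Pi^\perp)^{-1}(\Pi^\perp\rho\Pi)$ beyond $t$, which is possibly more than $\ve$. \textbf{This is the main obstacle.} Two remedies are available. We can take $A=\Pi^\perp\rho\Pi^\perp+\mu\Pi\rho\Pi$-adjusted blocks with $\mu$ large, which makes the off-diagonal cost small only if $t<\ve$ strictly. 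Alternatively, we can allow general Hermitian $R$ with $\Pi^\perp R\Pi^\perp=0$ and minimise $\Tr(\rho-R)$ subject to $\rho-R\ge0$ and $\Pi^\perp(\rho-R)\Pi^\perp=\Pi^\perp\rho\Pi^\perp$. This minimum equals $\inf\{\Tr A: A\ge0,\ \Pi^\perp A\Pi^\perp=\Pi^\perp\rho\Pi^\perp\}=t$, attained by $A=\Pi^\perp\rho\Pi^\perp$ itself, since $\Pi\rho\Pi^\perp$ blocks of $R$ are allowed in $\ran(\J_\sigma)$. So we simply take $R=\rho-\Pi_\sigma^\perp\rho\Pi_\sigma^\perp$. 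Then $\rho-R\ge0$, $\norm{\rho-R}{+}=t\le\ve$, and $R\in\ran(\J_\sigma)$, so $\norm{R}{\Bures\sigma}<\infty$. This resolves the obstacle directly.

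Conversely, if $t>\ve$, any feasible $R$ in $\ran(\J_\sigma)$ has $\Tr\Pi^\perp(\rho-R)\Pi^\perp=t$. Since $\rho-R\ge0$, this gives $\norm{\rho-R}{+}\ge t>\ve$, so the primal is $\infty$.

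\emph{Dual.} If $t>\ve$, take $W=\Pi_\sigma^\perp$: the numerator is $(t-\ve)^2>0$ and the denominator is $0$, giving $\infty$. If $t\le\ve$, then for any $W$ write $\Tr W\rho-\ve\le \Tr W\rho-\Tr\Pi^\perp\rho$ and bound it against $\sqrt{\Tr W^2\sigma}$ by weak duality with the feasible $R$ above. For $R\le\rho$ we have $\Tr W\rho-\ve\le\Tr W\rho-\norm{\rho-R}{+}\le\Tr WR=\<\J_\sigma(W),R\>_\sigma$. Cauchy--Schwarz in the Bures inner product, with $\norm{\J_\sigma(W)}{\Bures\sigma}^2=\Tr W\J_\sigma(W)=\Tr W^2\sigma$, then gives $\dual\le\norm{R}{\Bures\sigma}^2<\infty$.
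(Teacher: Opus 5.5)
Your proof is correct. In the divergent case and for finiteness of $\primal$ it is essentially the paper's argument: the test $W=\Pi_\sigma^\perp$, the binary measurement $\{\Pi_\sigma^\perp,\id-\Pi_\sigma^\perp\}$, and $R=\rho-\Pi_\sigma^\perp\rho\Pi_\sigma^\perp$. For finiteness of $\dual$ and of $\Ds{2}{\ve}{M}$ when $\Tr\Pi_\sigma^\perp\rho\le\ve$, you take a different route.

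The paper handles the dual by a limiting argument: divergence would require some $W$ with $\Tr W\sigma=0$ and $\Tr W\rho>\ve$. For the measured quantity it only restricts $p$ to $\supp q$ for each fixed $\M$. As you rightly observe, that shows each term is finite but does not bound the supremum. In the paper that bound only follows afterwards, from $\Ds{2}{\ve}{M}\le\log\dual$ in Theorem~\ref{thm:D2_smoothed_forms}.

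You instead reuse the explicit feasible $R$ twice:
\begin{itemize}
\item weak duality plus Cauchy--Schwarz in the Bures inner product gives $\dual\le\norm{R}{\Bures\sigma}^2$;
\item pushing $R$ through each measurement and clipping negative entries gives $\Ds{2}{\ve}{M}\le\log\norm{R}{\Bures\sigma}^2$.
\end{itemize}
This yields explicit bounds and makes the lemma self-contained. It is also more robust than the limiting argument. Along any sequence with $\Tr W_n^2\sigma\to0$, the numerator $(\Tr W_n\rho-\ve)_+$ also tends to zero when $\Tr\Pi_\sigma^\perp\rho\le\ve$. So it is a quantitative estimate like yours that actually excludes divergence. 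The paper's route is shorter and uses nothing about $\norm{\cdot}{\Bures\sigma}$ beyond its support condition.

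Two things to tidy.

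First, the step $\sum_i r_i^2/q_i\le\norm{R}{\Bures\sigma}^2$, with $r_i=\Tr M_iR$ and $q_i=\Tr M_i\sigma$, is monotonicity of the Bures seminorm at a non-positive $R$. That is not literally data processing of $D_2^{\MM}$ on states, so it needs a short argument:
\begin{itemize}
\item note that $r_i=0$ whenever $q_i=0$, because $\Pi_\sigma^\perp R\Pi_\sigma^\perp=0$;
\item set $X=\sum_i x_iM_i$ with $x_i=r_i/q_i$;
\item use $X^2\le\sum_ix_i^2M_i$, which holds since $\sum_i(x_i\id-X)M_i(x_i\id-X)\ge0$;
\item apply the Cauchy--Schwarz step from your dual argument, $(\Tr XR)^2\le\norm{R}{\Bures\sigma}^2\Tr X^2\sigma$, valid for any Hermitian $X$.
\end{itemize}
With $S=\sum_i r_i^2/q_i$ this gives $S^2\le\norm{R}{\Bures\sigma}^2S$, hence $S\le\norm{R}{\Bures\sigma}^2$.

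Second, the ``main obstacle'' in your primal paragraph is not a real obstacle. For $R=\rho-\Pi_\sigma^\perp\rho\Pi_\sigma^\perp$ one has $\rho-R=\Pi_\sigma^\perp\rho\Pi_\sigma^\perp\ge0$ immediately, because the off-diagonal blocks of $R$ are allowed in $\ran(\J_\sigma)$. Delete the Schur-complement detour.
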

\begin{proof}
Assume first that $\Tr \Pi_\sigma^\perp \rho > \ve$. Consider that the operator $W = \Pi_\sigma^\perp$ is feasible for $\dual(\rho,\sigma,\ve)$ with a diverging feasible optimal value, so $\dual(\rho,\sigma,\ve) = \infty$. 
For the primal problem, assume that there exists a feasible $R$ such that $\norm{R}{\Bures\sigma} < \infty$, i.e.\ $\Pi_\sigma^\perp R \Pi_\sigma^\perp = 0$. As $\norm{\rho-R}{+} \leq \ve$, we must have $\Tr R \geq 1-\ve$, and hence $\Tr \Pi_\sigma R \geq 1 - \ve$ where $\Pi_\sigma = \id - \Pi_\sigma^\perp$. But since $R \leq \rho$, this would mean that $\Tr \Pi_\sigma \rho \geq 1-\ve$, which contradicts the assumption that $\Tr \Pi_\sigma^\perp > \ve$. Therefore, no such feasible $R$ can exist, and hence $\primal(\rho,\sigma,\ve) = \infty$. 
Now, for the measurement channel $\M(\cdot) = \big(\!\Tr( \Pi_\sigma^\perp\,\cdot),\, \Tr(\Pi_\sigma \,\cdot) \big)$, we have $\Ds{2}{\ve}{M}(\rho\|\sigma) \geq \Ds{2}{\ve}{T} (\M(\rho)\|\M(\sigma))$; an analogous argument as we just made for $R$ tells us that there cannot exist a feasible $p'$ with $\norm{\M(\rho)-p'}{+} \leq \ve$ and a finite objective value, yielding $\Ds{2}{\ve}{M}(\rho\|\sigma) = \infty$.

Assume now that $\Tr \Pi_\sigma^\perp \rho \leq \ve$. Then $R = \rho - \Pi_\sigma^\perp \rho \Pi_\sigma^\perp$ is feasible for the primal optimisation problem, and since $\Pi_\sigma^\perp R \Pi_\sigma^\perp = 0$, we have $\primal(\rho,\sigma,\ve) < \infty$. 
The problem $\dual(\rho,\sigma,\ve)$ can only diverge when there exists a feasible $W$ satisfying $\Tr W \sigma = 0$ but $\Tr W \rho > \ve$ (or a feasible sequence converging to such a $W$). However, the constraint $\Tr W \sigma = 0$ means that any such $W \in [0,\id]$ would need to satisfy $W \leq \Pi_\sigma^\perp$, and hence $\Tr W \rho > \ve$ is impossible.
To argue the finiteness of $\Ds{2}{\ve}{M}(\rho\|\sigma)$ itself, consider any measurement channel $\M$ with POVM elements $(M_x)_x$, let $p(x) = \Tr M_x \rho $, $q(x) = \Tr M_x \sigma$, and define $p'$ as the the restriction of $p$ to the support of $q$, that is, to the symbols in the set $S \coloneqq \{ x : \Tr M_x \sigma > 0 \}$. Then $\|p - p'\|_+ = \sum_{x \notin S} \Tr M_x \rho \leq \Tr \Pi_{\sigma}^\perp \rho \leq \ve$, and by construction $p'$ is supported on the support of $q$, so $D_2(p'\|q)$ is finite.
\end{proof}

\begin{lemma}[Strong duality]\label{lem:D2_primaldual}
Assume that $\primal(\rho,\sigma,\ve) < \infty$. Then $\displaystyle \primal(\rho,\sigma,\ve) = \dual(\rho,\sigma,\ve)$, and the infimum in the definition of $\primal(\rho,\sigma,\ve)$ is achieved.
\end{lemma}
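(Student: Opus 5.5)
The plan is to treat $\primal(\rho,\sigma,\ve)$ as a convex quadratic program with affine constraints and to obtain $\dual(\rho,\sigma,\ve)$ as its Lagrange dual after a rescaling. Since $R \leq \rho$ forces $\rho - R \geq 0$, we have $\norm{\rho-R}{+} = \Tr(\rho - R)$. The feasible set is therefore
\[
\lset R = R^\dagger \bar R \leq \rho,\ \Tr R \geq 1-\ve \rset .
\]
The objective $\norm{R}{\Bures\sigma}^2$ is finite exactly on the linear subspace $\ran(\J_\sigma)$, and there it is a positive semidefinite quadratic form. Restricted to that subspace it is positive definite: in the eigenbasis of $\sigma$ each entry $R_{ij}$ with $(i,j)$ not both in the kernel carries the weight $2/(\sigma_i+\sigma_j) > 0$. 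So the objective is coercive and lower semicontinuous on a closed convex set, which is nonempty by the hypothesis $\primal < \infty$. This gives attainment of the infimum.

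\emph{Weak duality.} Fix any Hermitian $W \leq \id$; positivity of $W$ is not needed here. Because $\rho - R \geq 0$, we get $\Tr W(\rho - R) \leq \Tr(\rho-R) \leq \ve$, and hence $\Tr WR \geq \Tr W\rho - \ve$. Writing $\Tr WR = \< \J_\sigma(W), R\>_\sigma$ and applying Cauchy--Schwarz for the Bures inner product gives
\[
\Tr WR \leq \norm{\J_\sigma(W)}{\Bures\sigma}\, \norm{R}{\Bures\sigma}, \qquad \norm{\J_\sigma(W)}{\Bures\sigma}^2 = \Tr W \J_\sigma(W) = \Tr W^2\sigma .
\]
Squaring yields $(\Tr W\rho-\ve)_+^2 \leq \Tr W^2\sigma \,\norm{R}{\Bures\sigma}^2$. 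The degenerate case $\Tr W^2\sigma = 0$ is consistent with the conventions because $R \in \ran(\J_\sigma)$. Hence $\dual \leq \primal$.

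\emph{Strong duality.} I would dualize with multipliers $Y \geq 0$ for $R \leq \rho$ and $\mu \geq 0$ for $\Tr R \geq 1-\ve$. Since the objective is convex quadratic and all constraints are affine, Lagrangian duality holds with zero gap as soon as the program is feasible, with no Slater condition needed (standard for convex QPs). Minimizing the Lagrangian over $R \in \ran(\J_\sigma)$, stationarity gives $2\J_\sigma^{-1}(R) = Z$ with $Z \coloneqq \mu\id - Y \leq \mu \id$, so $R = \tfrac12 \J_\sigma(Z)$ up to the kernel-block component, which the constraint structure has to absorb. The dual function then becomes
\[
\Tr Z\rho - \mu\ve - \tfrac14 \Tr Z^2\sigma .
\]
Setting $Z = t W$ with $W \leq \id$ and $\mu = t$, and maximizing the concave quadratic in $t \geq 0$, gives $(\Tr W\rho - \ve)_+^2/\Tr W^2\sigma$. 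The dual value is thus the supremum of this expression over Hermitian $W \leq \id$.

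\emph{Restricting to $0 \leq W \leq \id$.} Replace $W$ by its positive part $W_+$. Then $\Tr W_+\rho \geq \Tr W\rho$, and $W^2 = W_+^2 + W_-^2$ gives $\Tr W_+^2 \sigma \leq \Tr W^2\sigma$. So the ratio does not decrease, and the supremum over $0 \leq W \leq \id$ equals the supremum over all $W \leq \id$, which is $\dual$.

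The main obstacle is the kernel of $\sigma$. The objective is $+\infty$ off $\ran(\J_\sigma)$, so the Lagrangian minimization must be carried out on that subspace, and the component of $Z$ in the block $\Pi_\sigma^\perp(\cdot)\Pi_\sigma^\perp$ must be handled without destroying the zero-gap argument. My first attempt would be to prove everything for $\sigma + \delta\id$ and pass to the limit $\delta \to 0^+$. This uses the limit formula for the Bures seminorm together with the finiteness from Lemma~\ref{lem:D2_diverging_case}, and requires showing that both $\primal$ and $\dual$ are continuous in $\delta$. For the primal, continuity follows from compactness of a sublevel set on which the minimizers live. For the dual, it follows by monotonicity in $\sigma$ and the fact that $\dual(\rho,\sigma,\ve) < \infty$.
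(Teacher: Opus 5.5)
\textbf{Gap: the zero-duality-gap step.} Several parts of your proposal are correct. Your weak-duality argument (Cauchy--Schwarz in the Bures inner product, valid for every Hermitian $W \leq \id$) works. Your attainment argument (coercivity of $\norm{\cdot}{\Bures\sigma}^2$ on $\ran(\J_\sigma)$ plus lower semicontinuity) also works, and it is a more elementary route to attainment than the paper's. Your Lagrange dual function matches the paper's dual program $\sup\{2\Tr B\rho - 2t\ve - \Tr B^2\sigma : B \leq t\id,\ t \geq 0\}$ under $Z = 2B$, $\mu = 2t$.

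The gap is the step meant to give $\primal \leq \dual$. You assert that zero duality gap holds ``as soon as the program is feasible, with no Slater condition needed (standard for convex QPs)''. That fact is standard only when the inequality constraints are finitely many scalar affine inequalities, i.e.\ a polyhedral feasible set. Here $R \leq \rho$ is a semidefinite constraint: an affine map into the non-polyhedral PSD cone. For such constraints, feasibility alone does not rule out a duality gap. Linear SDPs are the special case of your setting with zero quadratic part, and they can be feasible with a strictly positive gap.

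Nor can you fall back on primal Slater in general. For $\ve = 0$, the only feasible point is $R = \rho$. For singular $\sigma$ with $\supp(\rho) \subseteq \supp(\sigma)$, no $R \in \ran(\J_\sigma)$ satisfies $R < \rho$, because $\Pi_\sigma^\perp(\rho - R)\Pi_\sigma^\perp = 0$.

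\textbf{The missing idea: Slater on the dual side.} The dual program is strictly feasible, e.g.\ $B = \frac12\id$, $t = 1$ (that is, $Z = \id$, $\mu = 2$ in your notation). Its value is finite by your weak duality. Slater's condition applied there gives zero gap together with attainment in $\primal$; this is what the paper does.

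This also makes your $\sigma + \delta\id$ perturbation unnecessary. Parametrising $R = \J_\sigma(Z)$ makes the minimisation over $R \in \ran(\J_\sigma)$ exact. The reason is that $\J_\sigma$ annihilates the kernel block of $Z$, and $\Tr \J_\sigma(Z)\,\J_\sigma^{-1}(\J_\sigma(Z)) = \Tr Z^2\sigma$. So there is no residual kernel component to ``absorb''.

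If you prefer to keep the perturbation route, two extra ingredients are needed:
\begin{enumerate}
\item Primal Slater for $\sigma + \delta\id$. This is available when $\ve > 0$, via $R = \rho - \eta\id$ with small $\eta > 0$.
\item A separate treatment of $\ve = 0$. There $\primal = \norm{\rho}{\Bures\sigma}^2$, and this value is attained in the dual by $W \propto \J_\sigma^{-1}(\rho) \geq 0$, suitably rescaled so that $W \leq \id$.
\end{enumerate}
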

\begin{proof}
Notice first that the constraint $R \leq \rho$ means that $\norm{\rho - R}{+} = \Tr(\rho-R)$. Thus
\begin{equation}\begin{aligned}
\primal(\rho,\sigma,\ve) =  \inf \lset \norm{R}{\Bures\sigma}^2 \bar R \leq \rho,\; \Tr R \geq \Tr \rho-\ve \rset.
\end{aligned}\end{equation}
The above is readily observed to be a convex optimisation problem. Due to the assumption of finite optimal value, we can without loss of generality restrict ourselves to $R$ in the range of $\J_\sigma$. Working in the real vector space of Hermitian matrices with the Hilbert--Schmidt inner product, the corresponding Lagrangian is then
\begin{equation}\begin{aligned}
  \LL(R,A,y) &= \Tr R \J_\sigma^{-1}(R) - \Tr A(\rho - R) - y ( \Tr R - (1-\ve))\\
  &= \Tr R \!\left(\J_\sigma^{-1}(R) + A - y \id\right) - \Tr A\rho + y (1-\ve)
\end{aligned}\end{equation}
with Lagrange multipliers $A \geq 0$, $y \geq 0$. 
To evaluate the Lagrange dual function, we now need to minimise the Lagrangian over all Hermitian $R$. Writing $R = \J_\sigma(Z)$ for some Hermitian $Z$, we have
\begin{equation}\begin{aligned}
  \LL(R,A,y) =  \Tr \J_\sigma(Z) Z + \Tr \J_\sigma(Z)\left( A - y \id\right)  - \Tr A\rho + y (1-\ve).
\end{aligned}\end{equation}
The $Z$-dependent part of this expression can be recognised as a quadratic form with respect to the inner product $\<X,Y\>_{\J_\sigma} \coloneqq \Tr \J_\sigma(X) Y$, noting this to be different from (dual to) the Bures inner product $\<X,Y\>_\sigma$ that defines $\|\cdot\|_{\Bures\sigma}$. Using now the identity $\|a + b\|^2 = \|a\|^2 + \|b\|^2 + 2 \< a ,b \>$ we can write the above as
\begin{equation}\begin{aligned}
  \LL(R,A,y) =  \norm{ Z + \frac12 \left( A - y \id\right) }{\J_\sigma}^2 - \frac14 \norm{ A - y \id }{\J_\sigma}^2  - \Tr A\rho + y (1-\ve),
\end{aligned}\end{equation}
whose minimum in $Z$ is clearly achieved at $Z = -\frac12 (A - y \id)$. Reparametrising as $B \coloneqq -\frac12 (A - y \id)$, we thus have
\begin{equation}\begin{aligned}
  \inf_{R = R^\dagger} \LL(R,A,y) &= \Tr \J_\sigma(B) B  - 2 \Tr \J_\sigma(B) B  + 2 \Tr B \rho - y \Tr \rho + y (1-\ve)\\
  &= - \Tr B^2 \sigma + 2 \Tr B \rho - y \ve
\end{aligned}\end{equation}
where we used the definition of $\J_\sigma$ and the cyclicity of the trace. Making another reparametrisation as $t \coloneqq \frac12 y$, by definition of Lagrange duality we can write the dual of the optimisation problem as
\begin{equation}\begin{aligned}\label{eq:D2smooth_dual}
 \primald(\rho,\sigma,\ve) \coloneqq \sup_{\substack{A \geq 0,\\y \geq 0}} \inf_{R = R^\dagger} \LL(R,A,y) &= \sup \lset 2 \Tr B \rho - 2 t \ve - \Tr B^2 \sigma \bar  B \leq t \id,\, t \geq 0\rset.
\end{aligned}\end{equation}
Since $t = 1$ and $B = \frac{1}{2} \id$ are strictly feasible for this dual problem, it follows by Slater's condition (see e.g.~\cite[Sec.~5.9]{boyd_2004}) that strong duality holds. This immediately gives that $\primald(\rho,\sigma,\ve) = \primal(\rho,\sigma,\ve)$ and that an optimal solution $R$ for $\primal(\rho,\sigma,\ve)$ exists.

Now, observe that for any feasible $B$ for~\eqref{eq:D2smooth_dual}, $B_+$ is also feasible, since $B \leq t \id \iff B_+ \leq t \id$. But choosing $B_+$ can only improve the objective function: due to the positivity of $\rho$ and $\sigma$, we have $\Tr B_+ \rho \geq \Tr B \rho$ and $\Tr B_+^2 \sigma \leq \Tr (B_+^2 + B_-^2) \sigma = \Tr B^2 \sigma$. We can thus restrict the optimisation to $B_+$ without loss of generality. Denoting $W \coloneqq \frac{1}{t} B_+$, we thus obtain
\begin{equation}\begin{aligned}\label{eq:D2smooth_dual2}
  \primal(\rho,\sigma,\ve) = \sup \lset 2 t \Tr W \rho - 2 t \ve - t^2 \Tr W^2 \sigma \bar  0 \leq W \leq \id,\, t \geq 0\rset.
\end{aligned}\end{equation}
Consider now that, for any $a \in \RR$ and $b \in \RR_+$, it holds that
\begin{equation}\begin{aligned}
  \sup_{t\geq 0} \, (2 t a - t^2 b) = \begin{cases} \frac{a^2}{b} & \text{ if }\, a \geq 0, b > 0\\ 0 & \text{ if }\, a < 0 \,\text{ or }\, a = b = 0\\ \infty & \text{ if }\, a > 0, b = 0.\end{cases}
\end{aligned}\end{equation}
We can thus equivalently write the above as
\begin{equation}\begin{aligned}\label{eq:D2smooth_dual3}
  \primal(\rho,\sigma,\ve) &= \sup_{0 \leq W \leq \id} \frac{\left(\Tr (W \rho) - \ve\right)_+^2}{\Tr W^2 \sigma}\\
  &= \dual(\rho,\sigma,\ve)
\end{aligned}\end{equation}
as claimed.
\end{proof}

The next step is to establish the validity of Theorem~\ref{thm:D2_smoothed_forms} in the case of classical probability distributions.

\begin{lemma}[Classical case]\label{lem:D2_classicalcase}
For all probability distributions $p$ and all nonnegative distributions $q$, we have
\begin{equation}\begin{aligned}
  \Ds{2}{\ve}{T}(p\|q) = \log \primal(p,q,\ve).
\end{aligned}\end{equation}
Furthermore, it holds that $\Ds{2}{\ve}{T}(p\|q) = D_2(p' \| q)$, 
where $p'$ is defined as
\begin{equation}\begin{aligned}\label{eq:pdash}
  p'(x) \coloneqq \begin{cases} p(x) & \text{if } p(x) \leq \gamma q(x)\\ \gamma q(x) & \text{if } p(x) > \gamma q(x), \end{cases}
\end{aligned}\end{equation}
with $\gamma$ chosen so that $E_\gamma(p\|q) = \Tr(p - \gamma q)_+ = \ve$, that is, $\log \gamma = \Dmax{\ve}{M}(p\|q) = D_{\max}(p'\|q)$.
\end{lemma}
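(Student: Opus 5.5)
We prove the two inequalities $\Ds{2}{\ve}{T}(p\|q) \geq \log \primal(p,q,\ve)$ and $\Ds{2}{\ve}{T}(p\|q) \leq D_2(p'\|q)$, and then show that $D_2(p'\|q)$ is itself achieved by a dual-feasible $W$, which sandwiches everything. Throughout we treat $p,q$ as diagonal operators. For commuting arguments, $\norm{R}{\Bures q}^2 = \sum_x R(x)^2/q(x)$ whenever $R$ is diagonal. By Lemma~\ref{lem:D2_diverging_case} we may assume $\sum_{x: q(x)=0} p(x) \leq \ve$; otherwise all quantities are infinite.

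\textbf{Step 1: $\log\primal \le \Ds{2}{\ve}{T}$.} Take any $r \ge 0$ with $\Tr r \le 1$ and $\norm{p-r}{+}\le\ve$, and set $R \coloneqq \min(p,r)$ pointwise. Then $R \le p$ and $\Tr(p-R) = \sum_x (p-r)_+(x) \le \norm{p-r}{+} \le \ve$, since $\Tr(p-r) \geq 0$. Moreover $R^2 \le r^2$ pointwise because $0 \le R \le r$. Hence $R$ is primal-feasible with no larger objective, which gives $\primal(p,q,\ve) \le \inf_r \sum_x r(x)^2/q(x) = Q_2^{\ve,T}(p\|q)$.

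\textbf{Step 2: $\Ds{2}{\ve}{T}\le D_2(p'\|q)$.} Choose $\gamma$ with $\Tr(p-\gamma q)_+=\ve$. The map $\gamma\mapsto E_\gamma$ is continuous and nonincreasing, and it tends to $\sum_{q=0}p\le\ve$ as $\gamma\to\infty$. If that limit is strictly below $\ve$, an exact solution exists. In the boundary case we take $\gamma$ large or use a limiting argument; I flag this as a minor technicality. With this $\gamma$, the distribution $p'$ of~\eqref{eq:pdash} satisfies $0\le p'\le p$ and $\Tr(p-p') = E_\gamma(p\|q) = \ve$. So $p'$ is feasible for the classical smoothing, and $\Ds{2}{\ve}{T}(p\|q) \le D_2(p'\|q)$. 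The claimed identity $\log\gamma = D_{\max}(p'\|q) = \Dmax{\ve}{M}(p\|q)$ follows from the same construction and the known $D_{\max}$ characterisation recalled above.

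\textbf{Step 3: $D_2(p'\|q) \le \log \dual(p,q,\ve)$.} This is the main step, and it closes the chain via Lemma~\ref{lem:D2_primaldual}: $\primal=\dual$. We guess the dual optimiser from the KKT conditions of the primal. The multiplier $B$ should be proportional to $\J_q^{-1}(p') = p'/q$, which suggests $W(x) \coloneqq p'(x)/(\gamma q(x))$. Explicitly, $W=p/(\gamma q)$ on $\{p\le\gamma q\}$ and $W=1$ on $\{p>\gamma q\}$, so $0\le W\le 1$; we set $W=0$ where $q=0$, where $p'=0$ as well. Compute
\[
\Tr W p - \ve = \frac{1}{\gamma}\sum_{x:\,p\le\gamma q}\frac{p(x)^2}{q(x)} + \sum_{x:\,p>\gamma q}\big(p(x) - (p(x)-\gamma q(x))\big) = \frac{1}{\gamma}\sum_x \frac{p'(x)^2}{q(x)},
\]
using $\ve=\sum_{p>\gamma q}(p-\gamma q)$ and $\gamma q = p'$ on that set. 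Also $\Tr W^2 q = \gamma^{-2}\sum_x p'(x)^2/q(x)$. Writing $S \coloneqq \sum_x p'(x)^2/q(x)$, the ratio in $\dual$ equals $(S/\gamma)^2/(S/\gamma^2) = S = Q_2(p'\|q)$.

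Combining the three steps, $Q_2(p'\|q)\le\dual=\primal\le Q_2^{\ve,T}(p\|q)\le Q_2(p'\|q)$, so all are equal. The obstacle I expect is Step 3: finding the right $W$. The guess $W \propto p'/q$, obtained from complementary slackness, is what makes the computation close exactly, together with the handling of the degenerate choice of $\gamma$ when $q$ has zeros.
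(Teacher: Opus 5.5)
Your proof is the paper's proof: the same pointwise minimum $\min(p,r)$ to show $\primal(p,q,\ve)\le Q_2^{\ve,T}(p\|q)$, the same sub-distribution $p'$ for the upper bound, and the same dual witness $W$, with strong duality closing the sandwich.

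There is one slip in Step~3, and it falls exactly in the degenerate case you allow. You override $W$ to $0$ on all of $\{q=0\}$. But your displayed computation uses $W=1$ on $\{q=0,\ p>0\}\subseteq\{p>\gamma q\}$, and it has to, because these points contribute their full mass $p(x)$ to $\ve=\sum_{p>\gamma q}(p-\gamma q)$. With $W=0$ there, the numerator becomes $S/\gamma-\sum_{x:q(x)=0}p(x)$. The ratio then falls strictly below $S$ whenever $p$ has mass outside $\supp(q)$, and the sandwich does not close.

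The fix is to take $W=1$ on those points, as the paper does (it sets $W=1$ on all of $S^\perp$). This costs nothing, since $q(x)=0$ removes them from $\Tr W^2 q$. The value of $W$ is genuinely arbitrary only on $\{p=q=0\}$, where $p/(\gamma q)$ is undefined.

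Also, the boundary case for $\gamma$ needs no limiting argument. On a finite alphabet, $E_\gamma$ already attains $\sum_{x:q(x)=0}p(x)$ at the finite value $\gamma=\max_{x:q(x)>0}p(x)/q(x)$. This smallest solution is the one for which $\log\gamma=D_{\max}(p'\|q)$.
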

The latter part of the lemma was previously shown in~\cite[Lemma~2]{yang_2019}, whose clever construction directly inspired our proof approach. The same sub-distribution $p'$ also played an important role in~\cite{hayashi_2013,hayashi_2016-2}.

We will use the notation $\Qs{2}{\ve}{T}(p\|q) = \exp(\Ds{2}{\ve}{T}(p\|q))$. 
\begin{proof}
By definition,
\begin{equation}\begin{aligned}\label{eq:classical_opt}
  Q^{\ve,T}_{2}(p\|q) = \inf \lset \sum_x \frac{p'(x)^2}{q(x)} \bar p' \geq 0,\; \Tr p' \leq 1,\; \norm{p - p'}{+} \leq \ve \rset.
\end{aligned}\end{equation}
Notice that we can, without loss of generality, restrict ourselves to $p' \leq p$; were it the case that $p'(x) > p(x)$ for any $x$, defining $p'' \coloneqq \min \{ p, p' \}$ could only improve the objective function in~\eqref{eq:classical_opt}. This immediately tells us that the feasible set in the optimisation for~\eqref{eq:classical_opt} is a subset of the feasible set for the optimisation of $\primal(p,q,\ve)$, as the latter allows also non-positive distributions. This gives $Q^{\ve,T}_{2}(p\|q) \geq \primal(p,q,\ve)$.

Take now the distribution $p'$ as in~\eqref{eq:pdash}. Denoting $S \coloneqq \{ x : p(x) \leq \gamma q(x) \}$ and $S^\perp \coloneqq \{ x : p(x) > \gamma q(x) \}$, we have
\begin{equation}\begin{aligned}\label{eq:classical_upperbound}
  Q^{\ve,T}_{2}(p\|q) &\leq Q_2(p' \| q) = \sum_{x \in S } \frac{p(x)^2}{q(x)} + \sum_{x \in S^\perp} \gamma^2 q(x).
\end{aligned}\end{equation}
On the other hand, consider the operator $W$ defined as
\begin{equation}\begin{aligned}
  W(x) \coloneqq \begin{cases} \dfrac{p(x)}{\gamma q(x)} & \text{if } x \in S\\ \vphantom{\bigg(}1 & \text{if } x \in S^\perp. \end{cases}
\end{aligned}\end{equation}
By construction, $0 \leq W \leq \id$, so $W$ is a feasible solution for $\dual(p,q,\ve)$. But then, using the strong duality shown in Lemma~\ref{lem:D2_primaldual},
\begin{equation}\begin{aligned}
  Q^{\ve,T}_{2}(p\|q) &\geq \primal(p,q,\ve) \\
  &= \dual(p,q,\ve)\\
  &\geq \frac{\left(\sum_x W(x) p(x) - \ve\right)_+^2}{\sum_x W(x)^2 q(x)}\\
  &= \dfrac{\left(\displaystyle \sum_{x \in S} \dfrac{p(x)^2}{\gamma q(x)} +  \sum_{x \in S^\perp} p(x) - \Tr(p - \gamma q)_+ \right)_+^2}{\displaystyle  \sum_{x \in S} \dfrac{p(x)^2}{\gamma^2 q(x)} + \sum_{x \in S^\perp} q(x)} \\
    &= \dfrac{\left(\displaystyle \sum_{x \in S} \dfrac{p(x)^2}{\gamma q(x)} +  \sum_{x \in S^\perp} \gamma q(x) \right)^2}{\displaystyle  \sum_{x \in S} \dfrac{p(x)^2}{\gamma^2 q(x)} + \sum_{x \in S^\perp} q(x)}\\
    &= Q_2(p' \| q),
\end{aligned}\end{equation}
implying together with~\eqref{eq:classical_upperbound} that all the inequalities must in fact be equalities.
\end{proof}

The final ingredient is a monotonicity property under measurement channels, which we state in a general form.

\begin{lemma}[Data processing]\label{lem:D2_dataproc}
For any positive and trace--non-increasing map $\E$, it holds that
\begin{equation}\begin{aligned}
  \dual(\E(\rho),\E(\sigma),\ve) \leq \dual(\rho,\sigma,\ve).
\end{aligned}\end{equation}
\end{lemma}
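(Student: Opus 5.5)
The plan is to argue directly on the dual form, pulling any feasible test operator back through the adjoint map. Let $\E^\dagger$ be the Hilbert--Schmidt adjoint of $\E$. Positivity of $\E$ gives positivity of $\E^\dagger$. Trace non-increase means $\Tr \E(X) \leq \Tr X$ for all $X \geq 0$, which is equivalent to $\E^\dagger(\id) \leq \id$. Hence for any $0 \leq W \leq \id$ the operator $W' \coloneqq \E^\dagger(W)$ satisfies
\begin{equation}\begin{aligned}
0 \leq W' \leq \E^\dagger(\id) \leq \id ,
\end{aligned}\end{equation}
so $W'$ is feasible for $\dual(\rho,\sigma,\ve)$. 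The numerator is preserved exactly, since $\Tr W \E(\rho) = \Tr \E^\dagger(W)\rho$, and therefore $\left(\Tr (W\E(\rho)) - \ve\right)_+^2 = \left(\Tr (W'\rho) - \ve\right)_+^2$.

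The main obstacle is the denominator. We need the operator inequality
\begin{equation}\begin{aligned}
\E^\dagger(W)^2 \leq \E^\dagger(W^2),
\end{aligned}\end{equation}
because it gives $\Tr W'^2\sigma \leq \Tr \E^\dagger(W^2)\sigma = \Tr W^2\E(\sigma)$. This is a Kadison-type operator convexity inequality for the function $x \mapsto x^2$. It holds for unital positive maps applied to Hermitian elements, and more generally for sub-unital positive maps via Choi's inequality.

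If I want to avoid relying on the general statement, I would handle the sub-unital case directly. Complete $\E^\dagger$ to a unital positive map $\Phi(X) = \E^\dagger(X) + \omega(X)(\id - \E^\dagger(\id))$ for some state $\omega$. Applying Kadison's inequality to $\Phi$ gives $\Phi(W)^2 \leq \Phi(W^2)$.

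I would not try to relate $\E^\dagger(W)^2$ to $\Phi(W)^2$ through operator monotonicity, because squaring is not operator monotone. Instead I would use the more robust route. The map $X \mapsto \E^\dagger(X)$ restricted to the commutative $C^*$-algebra generated by $W$ is completely positive, since positive maps on commutative domains are completely positive. Stinespring then writes $\E^\dagger(W) = V^\dagger \pi(W) V$ with $\|V\|\leq1$. This gives
\begin{equation}\begin{aligned}
\E^\dagger(W)^2 = V^\dagger\pi(W)VV^\dagger\pi(W)V \leq V^\dagger\pi(W)^2V = \E^\dagger(W^2),
\end{aligned}\end{equation}
using $VV^\dagger \leq \id$.

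To conclude, I would combine the two pieces, taking care with the degenerate cases. If $\Tr W^2\E(\sigma) = 0$ while the numerator is positive, then $\Tr W'^2\sigma = 0$ as well, so both sides equal $\infty$ under the conventions $a/0=\infty$ and $0/0=0$. If the numerator vanishes, the left-hand value is $0$, which is trivially bounded. Otherwise the inequality between the ratios follows directly from the equality of numerators and the bound on denominators. Taking the supremum over $W$ then yields the claimed inequality $\dual(\E(\rho),\E(\sigma),\ve) \leq \dual(\rho,\sigma,\ve)$.
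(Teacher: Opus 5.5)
Your proposal is correct and follows essentially the same route as the paper. You pull a feasible $W$ back to $\E^\dagger(W)\in[0,\id]$, note that the numerator is unchanged, and control the denominator via the Kadison-type inequality $\E^\dagger(W)^2 \leq \E^\dagger(W^2)$ for positive sub-unital maps. The paper simply cites this inequality, whereas you prove it yourself (complete positivity on the commutative algebra generated by $W$, plus a Stinespring dilation with $\|V\|\leq 1$) and also spell out the degenerate $a/0$ and $0/0$ cases.
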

\begin{proof}
Let $W$ be any feasible solution in the optimisation for $\dual(\E(\rho),\E(\sigma),\ve)$. As $\E^\dagger$ is a positive, sub-unital map, we have that $\E^\dagger(W) \in [0,\id]$ and hence $\E^\dagger(W)$ is a feasible solution for $\dual(\rho,\sigma,\ve)$. By Kadison's inequality~\cite[Theorem~2.3.2]{bhatia_2007} (cf.~\cite[Lemma~3.5]{hiai_2011}) sub-unitality and positivity of $\E^\dagger$ give $(\E^\dagger(W))^2 \leq \E^\dagger(W^2)$, implying that
\begin{equation}\begin{aligned}
  \dual(\rho,\sigma,\ve) &\geq  \frac{\left(\Tr (\E^\dagger(W) \rho) - \epsilon\right)_+^2}{\Tr \E^\dagger(W)^2 \sigma}\\
  &\geq \frac{\left(\Tr (\E^\dagger(W) \rho) - \epsilon\right)_+^2}{\Tr \E^\dagger(W^2) \sigma}\\
  &= \frac{\left(\Tr (W \E(\rho)) - \epsilon\right)_+^2}{\Tr W^2 \E(\sigma)}.
\end{aligned}\end{equation}
Optimising over all $0 \leq W \leq \id$ yields the statement of the lemma.
\end{proof}

We are finally ready to conclude the proof of the equivalence between $\Ds{2}{\ve}{M}(\rho\|\sigma)$ and the variational programs.

\begin{proof}[Proof of Theorem~\ref{thm:D2_smoothed_forms}]
For any measurement channel $\M$, data processing (Lemma~\ref{lem:D2_dataproc}) and the classical equivalence (Lemma~\ref{lem:D2_classicalcase}) give us
\begin{equation}\begin{aligned}
  \log \dual(\rho,\sigma,\ve) &\geq \log \dual(\M(\rho),\M(\sigma),\ve)\\
  &= \Ds{2}{\ve}{T}(\M(\rho)\|\M(\sigma)).
\end{aligned}\end{equation}
Optimising over measurements, we then get $\log \dual(\rho,\sigma,\ve) \geq \Ds{2}{\ve}{M}(\rho\|\sigma)$.

On the other hand, let $W$ be any operator feasible for the optimisation in $\dual(\rho,\sigma,\ve)$, i.e.\ $0 \leq W \leq \id$. Write it in its spectral decomposition as $W = \sum_i \lambda_i \Pi_i$. Choosing $\M_W$ to be the measurement channel corresponding to the measurement with POVM elements $(\Pi_i)_i$, we have
\begin{equation}\begin{aligned}
  \Ds{2}{\ve}{M}(\rho\|\sigma) &\geq \log \Ds{2}{\ve}{T}(\M_W(\rho)\|\M_W(\sigma))\\
  &= \log \dual(\M_W(\rho),\M_W(\sigma),\ve),
\end{aligned}\end{equation}
where the last line is by the classical case in Lemma~\ref{lem:D2_classicalcase}. Letting now $V$ be a classical (diagonal) operator defined through $V(i) = \lambda_i$, we have $0 \leq V \leq \id$, which means that it is a feasible solution for the dual optimisation. Thus
\begin{equation}\begin{aligned}
  \dual(\M_W(\rho),\M_W(\sigma),\ve) &\geq \frac{\left(\Tr (V M_W(\rho)) - \ve\right)_+^2}{\Tr V^2 M_W(\sigma)}\\
  &= \frac{\left(\Tr (W \rho) - \ve\right)_+^2}{\Tr W^2 \sigma}.
\end{aligned}\end{equation}
and optimising over all $W$ gives $\Ds{2}{\ve}{M}(\rho\|\sigma) \geq \log \dual(\rho,\sigma,\ve)$, and so the quantities must be equal. Together with the strong duality given in Lemma~\ref{lem:D2_primaldual} and the divergent case considered in Lemma~\ref{lem:D2_diverging_case}, this concludes the proof.
\end{proof}

We observe that $\Ds{2}{\ve}{M}$ can be computed as a semidefinite program, which can aid its evaluation in practice.
\begin{boxed}
\begin{corollary}\label{lem:D2_SDP}
For any quantum state $\rho$, any $\sigma \geq 0$, and any $\ve \in [0,1)$, the measured smooth collision divergence equals the optimal value of a semidefinite program. Specifically,
\begin{align}\label{eq:D2_smooth_SDP}
 \Ds{2}{\ve}{M}(\rho\|\sigma) &= \log \sup_{B,C,t} \lset 2 \Tr B \rho - 2 \ve t - \Tr C \sigma \bar B, C \geq 0,\; B \leq t \id,\; \begin{pmatrix}C & B \\ B & \id \end{pmatrix} \geq 0 \rset\\
 &= \log \inf_{Z,T,R} \lset \Tr T \bar T \geq 0,\; Z \in \mathbb{C}^{d \times d},\; \Re(Z) = R,\; R \leq \rho,\; \Tr(\rho-R) \leq \ve, \begin{pmatrix}\sigma & Z \\ Z^\dagger & T \end{pmatrix} \geq 0 \rset,\nonumber
\end{align}
where $\mathbb{C}^{d \times d}$ denotes the set of complex matrices of the same dimension as the states $\rho$ and $\sigma$, and $\Re(Z)=\frac12(Z+Z^\dagger)$.
\end{corollary}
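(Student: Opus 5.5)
Both SDP forms will follow from Theorem~\ref{thm:D2_smoothed_forms} together with the Lagrange dual derived in the proof of Lemma~\ref{lem:D2_primaldual}. There we showed $\primal(\rho,\sigma,\ve) = \primald(\rho,\sigma,\ve)$, where
\[
\primald(\rho,\sigma,\ve) = \sup\lset 2\Tr B\rho - 2t\ve - \Tr B^2\sigma \bar B\leq t\id,\; t\geq 0\rset .
\]
In the degenerate case $\Tr\Pi_\sigma^\perp\rho>\ve$, both sides will be checked to be infinite directly; this is done at the end. Otherwise the theorem gives $\Ds{2}{\ve}{M} = \log\primal$.

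For the first (maximisation) form, the plan is to linearise the quadratic term $\Tr B^2\sigma$ with an auxiliary variable $C\geq B^2$.
\begin{itemize}
\item By a Schur complement, $\begin{pmatrix}C & B\\ B & \id\end{pmatrix}\geq0$ is equivalent to $C\geq B^2$.
\item Since $\sigma\geq0$, we have $\Tr C\sigma\geq\Tr B^2\sigma$, so at the optimum one may take $C=B^2$. The relaxed program therefore has the same value as $\primald$.
\item The additional constraint $B\geq0$ is harmless. Indeed, the proof of Lemma~\ref{lem:D2_primaldual} showed that replacing $B$ by $B_+$ preserves feasibility and does not decrease the objective.
\item $C\geq0$ follows automatically from $C\geq B^2$.
\end{itemize}
This yields the first line of~\eqref{eq:D2_smooth_SDP}.

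For the second (minimisation) form, I would show that the inner infimum over $Z,T$ for fixed feasible $R$ equals $\|R\|_{\Bures\sigma}^2$. The primal constraints on $R$ are handled as in Lemma~\ref{lem:D2_primaldual}: since $R\leq\rho$ gives $\norm{\rho-R}{+}=\Tr(\rho-R)$, the constraint becomes $\Tr(\rho-R)\leq\ve$. Then substituting into~\eqref{eq:D2_variational_forms_primal} finishes the argument. The two inequalities are as follows.
\begin{itemize}
\item \emph{Achievability.} Take $\sigma>0$ first. Choose $Z=\sigma\J_\sigma^{-1}(R)$; then $\Re(Z)=\J_\sigma(\J_\sigma^{-1}(R))=R$. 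Choose $T=Z^\dagger\sigma^{-1}Z=\J_\sigma^{-1}(R)\,\sigma\,\J_\sigma^{-1}(R)$. The block matrix is positive by the Schur complement, and
\[
\Tr T=\Tr \J_\sigma^{-1}(R)\,\sigma\,\J_\sigma^{-1}(R)=\Tr R\,\J_\sigma^{-1}(R)=\|R\|_{\Bures\sigma}^2 ,
\]
using cyclicity and the symmetrisation in $\J_\sigma$.
\item \emph{Lower bound.} For any feasible $(Z,T)$, positivity of the block matrix gives $T\geq Z^\dagger\sigma^{-1}Z$. One then needs
\[
\min\lset \Tr Z^\dagger\sigma^{-1}Z \bar \Re(Z)=R\rset=\|R\|_{\Bures\sigma}^2 .
\]
This is a least-squares problem over the anti-Hermitian part $K$ of $Z=R+K$. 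Its stationarity condition $\sigma^{-1}Z+Z^\dagger\sigma^{-1}=\text{Hermitian multiplier}$ is solved by $Z=\sigma Y$ with $Y=\J_\sigma^{-1}(R)$. Convexity makes this stationary point the global minimiser.
\end{itemize}
The case of singular $\sigma$ is handled by restricting to $\ran(\J_\sigma)$ and using generalised Schur complements (or the limit $\sigma+\delta\id$). In the divergent case, no feasible $R$ lies in the range of $\J_\sigma$. Positivity of the block matrix then forces $\Pi_\sigma^\perp Z=0$, hence $\Pi_\sigma^\perp R\Pi_\sigma^\perp=0$, which is impossible, so both programs are infinite or infeasible, as required.

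The main obstacle is this lower-bound step: justifying that the Schur-complement minimum over $Z$ with fixed Hermitian part is exactly the Bures norm. An alternative is to verify it via SDP duality against the first form: compute the conic dual of the second program and check that it reproduces the first. That confirms the equality without solving the least-squares problem explicitly.
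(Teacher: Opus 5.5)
Your treatment of the first (maximisation) form is the same as the paper's: start from $\primald$ in Lemma~\ref{lem:D2_primaldual}, restrict to $B\geq 0$, relax $\Tr B^2\sigma$ to $\Tr C\sigma$ with $C\geq B^2$, and encode $C\geq B^2$ by a Schur complement.

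For the second form you take a different route. The paper dispatches it in one line as the Lagrange dual of the first SDP. You instead prove directly that, for each fixed feasible $R$, the inner program over $(Z,T)$ evaluates to $\norm{R}{\Bures\sigma}^2$. The minimisation is then literally the Hermitian-smoothing primal of Theorem~\ref{thm:D2_smoothed_forms}, with the Bures seminorm written as an SDP.

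The paper's route is shortest once you accept SDP strong duality for this pair; the first program is strictly feasible, so this holds. Your route is self-contained and explains where the block constraint comes from. It also yields an explicit optimiser $(Z,T)=(\sigma Y,\,Y\sigma Y)$ with $Y=\J_\sigma^{-1}(R)$. The cost is the singular-$\sigma$ bookkeeping: from the generalised Schur complement you need $\Pi_\sigma^\perp Z=0$ and $T\geq Z^\dagger\sigma^{+}Z$, with $\sigma^{+}$ the pseudo-inverse. That bookkeeping does go through.

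Two small repairs are needed.

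First, the stationarity condition you wrote is vacuous, since $\sigma^{-1}Z+Z^\dagger\sigma^{-1}$ is always Hermitian. The correct condition for the minimum over $Z$ with fixed Hermitian part is that $\sigma^{-1}Z$ itself be Hermitian. It is cleaner to skip calculus altogether. With $Y=\J_\sigma^{-1}(R)$, Cauchy--Schwarz gives $\norm{R}{\Bures\sigma}^2=\Tr RY=\Re\Tr ZY\leq\norm{\sigma^{-1/2}Z}{2}\,\norm{\sigma^{1/2}Y}{2}=\sqrt{\Tr Z^\dagger\sigma^{-1}Z}\;\norm{R}{\Bures\sigma}$. This gives the lower bound at once.

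Second, in the divergent case you only show that the minimisation is infeasible; the maximisation form is never shown to be $+\infty$. Either of two arguments closes this:
\begin{itemize}
\item Note that $\primald=\dual$ follows from the purely algebraic reparametrisation in Lemma~\ref{lem:D2_primaldual}, which needs no finiteness assumption.
\item Take $B=s\Pi_\sigma^\perp$, $C=s^2\Pi_\sigma^\perp$, $t=s$. The objective is $2s(\Tr\Pi_\sigma^\perp\rho-\ve)\to\infty$ as $s\to\infty$.
\end{itemize}
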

\end{boxed}
\begin{proof}
Recall that in the course of the proof of Lemma~\ref{lem:D2_primaldual} (Eq.~\eqref{eq:D2smooth_dual}--\eqref{eq:D2smooth_dual2}) we showed that
\begin{equation}\begin{aligned}
 \Ds{2}{\ve}{M}(\rho\|\sigma) &= \log \sup_{B,t} \lset 2 \Tr B \rho - 2 \ve t - \Tr B^2 \sigma \bar  B \geq 0, \, B \leq t \id \rset.
\end{aligned}\end{equation}
Noting that $\Tr B^2 \sigma \leq \Tr C \sigma$ for any $C \geq B^2$, relaxing the above to an optimisation of the function $2 \Tr B \rho - \Tr C \sigma$ over such $B$ and $C$ cannot change the optimal value. By a standard result on Schur complements, $C \geq B^2$ if and only if $\pig(\begin{smallmatrix}C & B \\ B & \id \end{smallmatrix}\pig) \geq 0$~\cite[Thm.~1.3.3, Ex.~1.3.5]{bhatia_2007}, from which the first equality in~\eqref{eq:D2_smooth_SDP} follows. The second line is a rewriting of the corresponding Lagrange dual.
\end{proof}

\subsection{One-shot divergence inequalities}

Before proceeding with applications, we derive a number of inequalities that connect the measured smooth collision divergence with other one-shot quantities encountered in quantum information. The results will find direct applications in the asymptotic study of privacy amplification in the second part of the manuscript. 
Their proofs will also showcase an extremely useful property of the measured smooth divergences: by their very definition, results can be shown for the case of classical distributions and then immediately lifted to quantum states.

We first establish inequalities that tightly connect $\Ds{2}{\ve}{M}$ with the measured smooth max-relative entropy $\Dmax{\ve}{M}$ as well as the \emph{hypothesis testing relative entropy} ${D^\ve_H}$~\cite{wang_2012,buscemi_2010}. The latter quantity, besides finding operational use in many problems, is often useful due to its well-studied properties in settings such as large deviations~\cite{hayashi_2007,audenaert_2008,mosonyi_2015} and second-order asymptotics~\cite{tomamichel_2013,li_2014}. The hypothesis testing relative entropy is  given simply by the best exponent of type II error probability in quantum hypothesis testing, subject to the type I error probability being at most $\ve$:
\begin{equation}\begin{aligned}\label{eq:dh_def}
    D^\ve_H (\rho \| \sigma) \coloneqq - \log \inf \lset \Tr M \sigma \bar 0 \leq M \leq \id,\; \Tr (\id - M) \rho \leq \ve \rset.
\end{aligned}\end{equation}
Although often studied alongside smooth divergences in quantum information, the definition of $D^\ve_H$ does not match standard smooth divergences as given in Eq.~\eqref{eq:smooth_def}, and it may not be clear how to interpret this quantity in the smooth entropic formalism. As a matter of fact, we can show that the hypothesis testing relative entropy naturally fits within the family of measured smooth R\'enyi divergences that we defined in~\eqref{eq:measured_smooth_def}.
\begin{lemma}\label{lem:DH_as_D0}
For all quantum states $\rho$, all $\sigma \geq 0$, and all $\ve \in [0,1)$, the hypothesis testing relative entropy equals the measured smooth R\'enyi divergence of order 0:
\begin{equation}\begin{aligned}
  D^\ve_H (\rho \| \sigma) = \Ds{0}{\ve}{M} (\rho \| \sigma) = \sup_{\M\in\MM} \Ds{0}{\ve}{T}(\M(\rho)\|\M(\sigma)).
\end{aligned}\end{equation}
\end{lemma}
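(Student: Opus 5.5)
We first need the meaning of the classical smooth divergence of order 0. Here $D_0(p'\|q) = -\log \sum_{x \in \supp(p')} q(x)$, and since $\alpha = 0 < 1$ the smoothing is a supremum: $\Ds{0}{\ve}{T}(p\|q)$ is $D_0(p'\|q)$ maximised over subnormalised $p' \geq 0$ with $\norm{p-p'}{+} \leq \ve$. The proof then proceeds in two steps. First we establish the classical identity $\Ds{0}{\ve}{T}(p\|q) = D^\ve_H(p\|q)$. Then we lift it to quantum states using the known fact that $D^\ve_H$ itself equals its measured version, $D^\ve_H(\rho\|\sigma) = \sup_{\M} D^\ve_H(\M(\rho)\|\M(\sigma))$.

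\emph{Classical step.} We prove two inequalities.
\begin{itemize}
\item[($\leq$)] Take a feasible $p'$ and let $S = \supp(p')$. As in the proof of Lemma~\ref{lem:D2_classicalcase}, we may replace $p'$ by $\min\{p,p'\}$, which has no larger support and stays feasible. Then
\begin{equation*}
\sum_{x\notin S} p(x) \leq \Tr(p - p') = \norm{p-p'}{+} \leq \ve,
\end{equation*}
so the indicator test $M = \mathbbm{1}_S$ has type-I error at most $\ve$ and $\Tr Mq = q(S)$. This gives $D_0(p'\|q) = -\log q(S) \leq D^\ve_H(p\|q)$.
\item[($\geq$)] Take any test $0\le M\le \id$ with $\sum_x (1-M(x))p(x) \leq \ve$. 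Set $p'(x) = M(x)p(x)$. Then $p' \leq p$, $\norm{p-p'}{+} \leq \ve$, and $\supp(p') \subseteq \supp M$. The difficulty is that $q(\supp M)$ can exceed $\sum_x M(x) q(x)$ when $M$ is not an indicator.
\end{itemize}

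The $(\geq)$ direction is the main obstacle. It amounts to showing that the classical hypothesis-testing optimum is attained by a deterministic test, up to a limit. The classical Neyman–Pearson optimum is a likelihood-ratio test that is randomised on at most one boundary symbol. This is where the one nonzero fractional value of $M$ would make $q(\supp M)$ exceed $\Tr Mq$.

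We resolve this by splitting that symbol. We may refine the measurement by appending an ancillary fair-coin-type split, i.e.\ a POVM element divided into pieces $t\,M_x$ and $(1-t)M_x$. Such splitting is allowed inside the supremum over $\MM$, since $\MM$ includes arbitrary POVMs with any number of outcomes. After the split, the randomised test becomes deterministic, and the supports match exactly.

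Concretely, we will not do this classically. Instead we work directly with the quantum sup. Given an optimal quantum test $M$ for $D^\ve_H(\rho\|\sigma)$, define the three-outcome measurement with POVM elements
\begin{equation*}
M_1 = M, \qquad M_2 = \id - M, \qquad M_3 = 0.
\end{equation*}
With $p = \M(\rho)$ and $q = \M(\sigma)$, let $p'$ keep only outcome 1. Then
\begin{equation*}
\norm{p-p'}{+} = \Tr(\id-M)\rho \leq \ve, \qquad D_0(p'\|q) = -\log \Tr M\sigma.
\end{equation*}
This gives $\Ds{0}{\ve}{M} \geq D^\ve_H$ directly and avoids the randomisation issue entirely.

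\emph{Converse direction.} For any measurement $\M$, the classical $(\leq)$ argument gives $\Ds{0}{\ve}{T}(\M(\rho)\|\M(\sigma)) \leq D^\ve_H(\M(\rho)\|\M(\sigma))$. Data processing of $D^\ve_H$ then gives $D^\ve_H(\M(\rho)\|\M(\sigma)) \leq D^\ve_H(\rho\|\sigma)$, because a classical test $T$ pulls back to the quantum test $\M^\dagger(T) \in [0,\id]$. Taking the supremum over $\M$ completes the proof.

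Two edge cases need checking. The case $p' = 0$ is infeasible because $\ve<1$. When $q(S) = 0$, both sides are $+\infty$ consistently.
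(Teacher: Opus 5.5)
Your final argument is correct and is essentially the paper's proof: the indicator test on $\supp(p')$ pulled back through $\M^\dagger$ (your classical step plus data processing) gives $\Ds{0}{\ve}{M} \leq D^\ve_H$, and the binary measurement $\{M, \id - M\}$ with $p'$ keeping only the first outcome gives the reverse inequality. Do, however, drop the announced plan of first proving the classical identity $\Ds{0}{\ve}{T}(p\|q) = D^\ve_H(p\|q)$, and the claim that classical optima are attained by deterministic tests. Both are false in general, since deterministic tests cannot emulate randomised ones, as the paper notes right after the lemma; your final argument rightly never uses them.
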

This provides further justification for the definition of measured smooth R\'enyi divergences and unifies $D^\ve_H$, $\Ds{2}{\ve}{M}$, and $\Dmax{\ve}{M}$ in a single class of functions. 
We believe this to be the first direct interpretation of $D^\ve_H$ as part of a family of smooth divergences that includes variants of the conventional smooth R\'enyi entropies, and in particular the max-relative entropy (cf.~\cite{buscemi_2010}).  
Lemma~\ref{lem:DH_as_D0} here serves to demonstrate a conceptual connection but will not be used in subsequent proofs, so we defer its proof to Appendix~\ref{app:DH}. 

We now proceed with the derivation of the one-shot bounds for $\Ds{2}{\ve}{M}$. To connect with $D^\ve_H$, the proofs will make use of the strong quantitative connections between $D^\ve_H$ and $D_{\max}^{\ve,\mathbb{M}}$ that were recently shown in~\cite{regula_2025}. It is in fact shown there that the functions $\Dmax\ve M$ and $D^{1-\ve}_H$ are in a precise sense equivalent to each other, allowing one to understand bounds stated in terms of $\Dmax\ve M$ also in terms of $D^{1-\ve}_H$~\cite[Theorem~4]{regula_2025}.

\begin{boxed}
\begin{lemma}\label{lem:secondorder_equiv}
For any quantum states $\rho$ and $\sigma$, any $\ve \in (0,1)$, and any $\delta \in (0, 1-\ve)$, it holds that
\begin{align}
  &\begin{aligned}\Ds{2}{\ve}{M}(\rho\|\sigma) &\leq \Dmax{\ve}{M}(\rho\|\sigma) - \log\frac{1}{1-\ve} \\&\leq D^{1-\ve}_H(\rho\|\sigma) - \log \frac{1}{\ve (1-\ve)^2}, \end{aligned}\label{eq:bounds1}\\
  &\Ds{2}{\ve}{M}(\rho\|\sigma) \geq D^{1-\ve-\delta}_H(\rho\|\sigma) - \log \frac{1}{\delta^2}
  ,\label{eq:bounds2}\\
  &\Ds{2}{\ve}{M}(\rho\|\sigma) \geq \Dmax{\ve+\delta}{M} (\rho\|\sigma)  - \log \frac{1-\delta}{\delta}. \label{eq:bounds3}
\end{align}
\end{lemma}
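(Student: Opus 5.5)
The plan is to work almost entirely with the dual form~\eqref{eq:D2_variational_forms_dual} of Theorem~\ref{thm:D2_smoothed_forms}, the analogous dual form of $\Dmax{\ve}{M}$ from the Remark, and the definition~\eqref{eq:dh_def} of $D^\ve_H$. Where a direct quantum argument is awkward, I will fall back on the lifting principle: prove the inequality for classical $p,q$ and then take the supremum over measurements. Both $\Ds{2}{\ve}{M}$ and $\Dmax{\ve}{M}$ are defined as such suprema. $D^\ve_H$ is also attained by a two-outcome measurement, so $D^\ve_H(\rho\|\sigma)=\sup_{\M} D^\ve_H(\M(\rho)\|\M(\sigma))$, and any inequality between these quantities that holds pointwise in $\M$ lifts to quantum states.

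For the first inequality in~\eqref{eq:bounds1}, I fix a measurement and use Lemma~\ref{lem:D2_classicalcase}. It says $\Qs{2}{\ve}{T}(p\|q)=Q_2(p'\|q)$, with $p'\le\gamma q$ pointwise and $\log\gamma=\Dmax{\ve}{M}(p\|q)$. Then
\[
\sum_x p'(x)^2/q(x)\le \gamma\sum_x p'(x)=\gamma(1-\ve),
\]
and taking the supremum over $\M$ gives $\Ds{2}{\ve}{M}\le \Dmax{\ve}{M}-\log\frac{1}{1-\ve}$. The second inequality of~\eqref{eq:bounds1} is equivalent to $\Dmax{\ve}{M}(\rho\|\sigma)\le D_H^{1-\ve}(\rho\|\sigma)-\log\frac{1}{\ve(1-\ve)}$. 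This is precisely the kind of bound in~\cite[Theorem~4]{regula_2025}, which I would invoke directly. If it needs rederiving, I would do so classically from the dual form. Take $W$ nearly optimal for $\sup_W(\Tr W p-\ve)/\Tr Wq$ and write $W=\int_0^1 \mathbbm{1}\{W\ge s\}\,\mathrm{d}s$. A layer-cake/Markov argument then selects a level set $M_s$ with $\Tr M_s p\ge \ve$ and $\Tr M_s q$ controlled by $\Tr Wq/(\ve(1-\ve))$-type factors. This level-set selection is where I expect the $1/\ve$ loss to come from.

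For~\eqref{eq:bounds2} a direct quantum argument should suffice. Take $M$ optimal in $D_H^{1-\ve-\delta}(\rho\|\sigma)$, so that $\Tr M\rho\ge \ve+\delta$, and plug $W=M$ into the dual. Since $M^2\le M$,
\[
\Qs{2}{\ve}{M}(\rho\|\sigma)\ \ge\ \frac{(\Tr M\rho-\ve)^2}{\Tr M^2\sigma}\ \ge\ \frac{\delta^2}{\Tr M\sigma},
\]
which is the claim.

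For~\eqref{eq:bounds3} I again use the dual form. Let $W$ be feasible for $\Dmax{\ve+\delta}{M}$, set $a\coloneqq\Tr W\rho-\ve-\delta>0$ and $\lambda\coloneqq a/\Tr W\sigma$. Using $W^2\le W$ gives
\[
\frac{(a+\delta)^2}{\Tr W^2\sigma}\ \ge\ \lambda\,\frac{(a+\delta)^2}{a}.
\]
The remaining task is to show $(a+\delta)^2/a\ge\delta/(1-\delta)$ on the admissible range $0<a\le 1-\ve-\delta$. The crude bound $\min_a (a+\delta)^2/a=4\delta$ only suffices when $\delta\le 3/4$, and this constant-matching is the step I expect to be the main obstacle. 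My first attempt would be to go classical. By Lemma~\ref{lem:D2_classicalcase}, the optimal $W$ for $\Dmax{\ve+\delta}{T}$ is the indicator of $\{p>\gamma' q\}$, possibly with a fractional weight, and I would compare $Q_2(p'\|q)$ for the threshold-$\gamma$ truncation (error $\ve$) with the threshold $\gamma'$ at error $\ve+\delta$. On $\{p>\gamma' q\}$ the truncation keeps mass $\gamma q$ with total excess at least $\delta$, and this should yield $\gamma\ge\gamma'\delta/(1-\delta)$-type estimates. The final step is then to lift the classical bound via the supremum over measurements.
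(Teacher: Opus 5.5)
Your argument for the first inequality in~\eqref{eq:bounds1} is the same as the paper's. Your dual-form proof of~\eqref{eq:bounds2} is correct and more direct than the paper's route, which takes an optimal classical smoothing $p'$, uses $D^{1-\delta}_H(p'\|q)\le D_2(p'\|q)+2\log\frac1\delta$ and the triangle inequality, and then lifts through measurements.

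The gap is in~\eqref{eq:bounds3}. You correctly reduce it to showing $g(a)\coloneqq (a+\delta)^2/a\ge \delta/(1-\delta)$ for $0<a\le 1-\ve-\delta$. You then leave the case $\delta>3/4$ open, and the classical fallback you sketch does not close it:
\begin{itemize}
\item An estimate of the form $\gamma\ge\gamma'\delta/(1-\delta)$ gives no lower bound on $Q_2(p'\|q)$. For that quantity you only have the upper bound $\gamma(1-\ve)$.
\item If you carry the classical route out properly, it lands on the same problem. Apply Cauchy--Schwarz on $S'=\{p>\gamma' q\}$, which contains all the removed mass $p-p'$. This gives $Q_2(p'\|q)\ge(\delta+\gamma' q(S'))^2/q(S')=\gamma'\,g(\gamma' q(S'))$ with $\gamma' q(S')\le 1-\ve-\delta$, which is exactly your scalar inequality again.
\end{itemize}

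The missing step is short and uses the upper constraint on $a$ that you already wrote down.
\begin{itemize}
\item If $\delta\le 3/4$, then $g(a)\ge 4\delta\ge\delta/(1-\delta)$.
\item If $\delta>3/4$, then $a\le 1-\delta<\delta$. Since $g(a)=a+2\delta+\delta^2/a$ is decreasing on $(0,\delta)$, we get $g(a)\ge g(1-\delta)=\frac{1}{1-\delta}\ge\frac{\delta}{1-\delta}$.
\end{itemize}
With this, your argument proves~\eqref{eq:bounds3} directly for quantum states. It bypasses the classical $p'$, the triangle inequality, and~\cite[Corollary~14]{regula_2025}, all of which the paper uses; that corollary amounts here to the same optimisation of $(w-\delta)/w^2$ over $w=a+\delta$.

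Similarly, the layer-cake fallback you propose for the second inequality of~\eqref{eq:bounds1} is unnecessary, and the factor $\ve$ there is a gain rather than a loss. Take any $0\le W\le\id$ with $w\coloneqq\Tr W\rho>\ve$. The rescaled test $\ve W/w$ is feasible for $D^{1-\ve}_H(\rho\|\sigma)$, so $\Tr W\sigma\ge (w/\ve)\exp(-D^{1-\ve}_H(\rho\|\sigma))$. Hence $(w-\ve)/\Tr W\sigma\le \ve(1-\ve)\exp(D^{1-\ve}_H(\rho\|\sigma))$, which is the bound the paper takes from~\cite[Lemma~7]{regula_2025}.
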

\end{boxed}

One immediate consequence of the bounds is in establishing an exact second-order i.i.d.\ expansion of the measured smooth divergence $\Ds{2}{\ve}{M}$, which we will return to in Section~\ref{sec:secondorder}. We will also show the bounds to find use in the asymptotic analysis of strong converse exponents of privacy amplification in Section~\ref{sec:strong-converse}.

\begin{proof}[Proof of Lemma~\ref{lem:secondorder_equiv}]
From Theorem~\ref{thm:D2_smoothed_forms} we know that $\Ds{2}{\ve}{M}(\rho\|\sigma) = \infty$ iff $\Tr \Pi_\sigma^\perp \rho > \ve$, while it is straightforward to verify that $\Dmax{\ve}{M}(\rho\|\sigma) = \infty \iff D^{1-\ve}_{H}(\rho\|\sigma) = \infty \iff \Tr \Pi_\sigma^\perp \rho \geq \ve$ (see e.g.~\cite{regula_2025}). We can thus exclude such diverging cases as the bounds are then trivially true.

To see~\eqref{eq:bounds1}, consider first any classical distributions $p$ and $q$. 
By Lemma~\ref{lem:D2_classicalcase}, we know that the distribution
\begin{equation}\begin{aligned}
  p'(x) \coloneqq \begin{cases} p(x) & \text{if } p(x) \leq \gamma q(x)\\ \gamma q(x) & \text{if } p(x) > \gamma q(x), \end{cases}
\end{aligned}\end{equation}
with $\gamma$ chosen so that $\Tr(p - \gamma q)_+ = \ve$ is optimal for both $\Ds{2}{\ve}{T}$ and $\Dmax{\ve}{T}$, in the sense that
\begin{equation}\begin{aligned}
  \Ds{2}{\ve}{T}(p\|q) = D_2(p' \| q), \qquad \Dmax{\ve}{T}(p\|q) = D_{\max}(p'\|q) = \log \gamma.
\end{aligned}\end{equation}
Denoting $S = \{ x : p(x) \leq \gamma q(x) \}$ and $S^\perp = \{ x : p(x) > \gamma q(x) \}$, we thus have
\begin{align}
  Q^{\ve,T}_{2}(p\|q) &=\sum_{x \in S } p(x)^2\, q(x)^{-1} + \sum_{x \in S^\perp} \gamma^2 q(x) \nonumber\\
  &\leq \sum_{x \in S } p(x)\, \gamma q(x) \, q(x)^{-1} + \sum_{x \in S^\perp} \gamma^2 q (x) \nonumber\\
  &= \gamma \left( \sum_{x \in S } p(x) + \sum_{x \in S^\perp} \gamma q(x) \right)\\
  &= \gamma \left( \sum_x p(x) - \sum_{x \in S^\perp} \big( p(x) - \gamma q(x) \big) \right) \nonumber\\
  &= \gamma \, ( 1- \ve ) \nonumber
\end{align}
since $\Tr (p - \gamma q)_+ = \ve$. 
Applying this to the choice $p = \M(\rho)$ and $q = \M(\sigma)$ gives
\begin{equation}\begin{aligned}
  \Ds{2}{\ve}{T}(\M(\rho)\|\M(\sigma)) &\leq \Dmax{\ve}{T}(\M(\rho)\|\M(\sigma)) - \log \frac{1}{1-\ve}.
\end{aligned}\end{equation}
Taking the supremum of both sides over measurements gives the first inequality. The second inequality is then by~\cite[Lemma~7]{regula_2025}.

For~\eqref{eq:bounds2}, start again with classical distributions $p$ and $q$. Pick $p'$ such that $D^\ve_2(p\|q) = D_2(p'\|q)$. A standard argument based on the data processing of the R\'enyi divergences (see e.g.~\cite[Lemma~IV.7]{mosonyi_2015}) gives $D^{1-\delta}_H(p'\|q) \leq D_\alpha(p'\|q) + \frac{\alpha}{\alpha-1} \log \frac{1}{\delta}$ for all $\alpha >1$, and in particular for $\alpha=2$. Since $\|p - p'\|_+ \leq \ve$, an application of triangle inequality gives $D^{1-\delta-\ve}_H(p\|q) \leq D^{1-\delta}_H(p'\|q) \leq D_2(p'\|q) + 2 \log\frac{1}{\delta}$. The inequality for quantum states is then obtained by lifting the classical result through measurements as above.

The proof of~\eqref{eq:bounds3} is analogous. Pick $p'$ such that $D^\ve_2(p\|q) = D_2(p'\|q)$. Use the fact that $D^{\ve+\delta}_{\max}(p\|q) \leq D^{\delta}_{\max}(p'\|q) \leq D_\alpha(p'\|q) + \frac{1}{\alpha-1}\log\frac{1}{\delta} - \log\frac{1}{1-\delta}$, with the last inequality by~\cite[Corollary~14]{regula_2025}. The claimed result follows by optimising over measurements.
\end{proof}

We note that the above already strengthens the na\"ive upper bound $\Ds{2}{\ve}{M} \leq \Dmax{\ve}{M}$. 
Another strengthening of this relation can be obtained by employing R\'enyi divergences to allow for a type of extrapolation from the values of the two smooth quantities.

\begin{boxed}
\begin{lemma}\label{lem:D2_upper_exponent_bound}
For any $\ve \in (0,1)$ and any $\alpha \in (1,2]$, it holds that
\begin{equation}\begin{aligned}\label{eq:D2_upper_bounds}
  \Ds{2}{\ve}{M}(\rho\|\sigma) &\leq (\alpha-1)\, D_\alpha^{\MM}(\rho\|\sigma) + (2-\alpha)\, \Dmax{\ve}{M}(\rho\|\sigma)\\
  &\leq D_\alpha^{\MM} (\rho\|\sigma) + \frac{2-\alpha}{\alpha-1} \log \frac{1}{\ve} - (2-\alpha) \log \frac{1}{1-\ve}.
\end{aligned}\end{equation}
\end{lemma}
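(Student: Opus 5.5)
Following the paper's strategy, I would prove both inequalities first for classical distributions $p,q$ and then lift them through measurements. For the first inequality, fix a measurement $\M$ and set $p=\M(\rho)$, $q=\M(\sigma)$. By Lemma~\ref{lem:D2_classicalcase}, the optimal smoothed distribution is $p'=\min\{p,\gamma q\}$, with $\log\gamma=\Dmax{\ve}{T}(p\|q)$. The key pointwise estimate is that, for every $x$,
\begin{equation}\begin{aligned}
 \frac{p'(x)^2}{q(x)} = p'(x)^{\alpha} q(x)^{1-\alpha}\left(\frac{p'(x)}{q(x)}\right)^{2-\alpha} \leq \gamma^{2-\alpha}\, p(x)^{\alpha} q(x)^{1-\alpha}.
\end{aligned}\end{equation}
This holds because $p'\leq p$, $p'\leq \gamma q$, and $2-\alpha\geq 0$, $\alpha\geq 0$. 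Summing over $x$ gives $Q_2^{\ve,T}(p\|q)\leq \gamma^{2-\alpha} Q_\alpha(p\|q)$. Taking logarithms,
\begin{equation}\begin{aligned}
\Ds{2}{\ve}{T}(p\|q)\leq (\alpha-1) D_\alpha(p\|q)+(2-\alpha)\Dmax{\ve}{T}(p\|q).
\end{aligned}\end{equation}
Since $D_\alpha(\M(\rho)\|\M(\sigma))\leq D_\alpha^{\MM}(\rho\|\sigma)$ and $\Dmax{\ve}{T}(\M(\rho)\|\M(\sigma))\leq \Dmax{\ve}{M}(\rho\|\sigma)$, and both coefficients are nonnegative, the supremum over $\M$ yields the first line. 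Diverging cases are trivial: if the right-hand side is infinite there is nothing to prove.

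For the second line, it suffices to show
\begin{equation}\begin{aligned}
\Dmax{\ve}{M}(\rho\|\sigma)\leq D_\alpha^{\MM}(\rho\|\sigma)+\frac{1}{\alpha-1}\log\frac1\ve-\log\frac{1}{1-\ve}.
\end{aligned}\end{equation}
Multiplying this by $(2-\alpha)$ and adding $(\alpha-1)D_\alpha^{\MM}$ then reproduces the stated bound exactly. This is precisely the bound quoted in the proof of Lemma~\ref{lem:secondorder_equiv} from~\cite[Corollary~14]{regula_2025}, namely $\Dmax{\delta}{M}(p'\|q)\leq D_\alpha(p'\|q)+\frac{1}{\alpha-1}\log\frac1\delta-\log\frac1{1-\delta}$, applied here with $\delta=\ve$ and $p'=p$ classically, then lifted via measurements since $D^{\MM}_\alpha$ is a supremum over measurements.

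If I had to verify that bound directly, I would use the classical $\gamma$ with $\Tr(p-\gamma q)_+=\ve$ and the set $S^\perp=\{p>\gamma q\}$. On $S^\perp$ we have $p(x)^\alpha q(x)^{1-\alpha}\geq \gamma^{\alpha-1}p(x)$, so $Q_\alpha\geq \gamma^{\alpha-1}p(S^\perp)$. Writing $p(S^\perp)=\ve+\gamma q(S^\perp)$, a further optimization is needed to produce the $(1-\ve)$ factor, and this is the main obstacle. I would cite the corollary rather than redo it. The first inequality is entirely elementary.
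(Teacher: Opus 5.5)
Your proof is correct and follows the paper's argument. Your pointwise bound $p'(x)^2/q(x)\leq \gamma^{2-\alpha}p(x)^\alpha q(x)^{1-\alpha}$ is a compact form of the paper's split over $S=\{p\leq\gamma q\}$ and $S^\perp$, and it is followed by the same lifting through measurements; for the second line the paper likewise just invokes \cite[Corollary~14]{regula_2025}, so the unfinished direct verification you sketch is not needed.
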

\end{boxed}
Our derivation here adapts and extends a classical proof approach of Hayashi~\cite{hayashi_2013}. The first inequality in~\eqref{eq:D2_upper_bounds} in fact strictly improves on the `hybrid bound' of~\cite[Theorem~6]{watanabe_2013} even in the classical case, and the second inequality gives a generalisation and slight tightening of the classical findings in~\cite[Theorem~1]{hayashi_2013}. We will see in Section~\ref{sec:renyi_bounds} that this bound is a key ingredient for the derivation of exponential constraints in the large deviation regime.
\begin{proof}
Consider first classical probability distributions $p$ and $q$. As before, choosing $\gamma$ so that $\Tr(p - \gamma q)_+ = \ve$ and letting $S = \{ x : p(x) \leq \gamma q(x) \}$ and $S^\perp = \{ x : p(x) > \gamma q(x) \}$, we have
\begin{equation}\begin{aligned}
  Q^{\ve,T}_2(p\|q) &= \sum_{x \in S} p(x)^2 \,q(x)^{-1} + \sum_{x\in S^\perp} \gamma^{2}\, q(x)\\
  &=\sum_{x \in S} p(x)^\alpha\, p(x)^{2-\alpha}\, q(x)^{-1} + \sum_{x\in S^\perp} \gamma^{\alpha}\, q(x)^\alpha\, \gamma^{2-\alpha}\, q(x)^{1-\alpha}\\
  &\leq \sum_{x \in S} p(x)^\alpha \,\gamma^{2-\alpha}\, q(x)^{2-\alpha} \,q(x)^{-1} + \sum_{x\in S^\perp} p(x)^\alpha\, \gamma^{2-\alpha}\, q(x)^{1-\alpha}\\
  &= Q_\alpha(p\|q) \, \gamma^{2-\alpha}.
\end{aligned}\end{equation}
Taking logarithms, we can write this as
\begin{equation}\begin{aligned}
  \Ds{2}{\ve}{T}(p\|q) \leq (\alpha-1) \,D_\alpha(p\|q) + (2-\alpha)\, \Dmax{\ve}{T}(p\|q).
\end{aligned}\end{equation}
Choosing $p = \M(\rho)$ and $q = \M(\sigma)$ we have
\begin{equation}\begin{aligned}
  \Ds{2}{\ve}{T}(\M(\rho)\|\M(\sigma)) &\leq (\alpha-1)\, \sup_{\M \in \MM} D_\alpha(\M(\rho)\|\M(\sigma)) + (2-\alpha)\, \sup_{\M \in \MM} \Dmax{\ve}{T}(\M(\rho)\|\M(\sigma))\\
  &=(\alpha-1)\, D_\alpha^{\MM}(\rho\|\sigma) + (2-\alpha)\, \Dmax{\ve}{M}(\rho\|\sigma),
\end{aligned}\end{equation}
by definition of $D_{\alpha}^{\MM}$ and $\Dmax{\ve}{M}$. Maximising over $\M$ on the left-hand side and using the definition of $\Ds{2}{\ve}{M}$ lifts this inequality to $\Ds{2}{\ve}{M}(\rho\|\sigma)$. We conclude by recalling from~\cite[Corollary~14]{regula_2025} that
\begin{equation}\begin{aligned}
  \Dmax{\ve}{M}(\rho\|\sigma) &\leq D_\alpha^{\MM}(\rho\|\sigma) + \frac{1}{\alpha-1} \log \frac{1}{\ve} - \log\frac{1}{1-\ve}
\end{aligned}\end{equation}
for all $\alpha > 1$.
\end{proof}

Although the bound of Lemma~\ref{lem:D2_upper_exponent_bound} already features an optimal scaling in $\ve$ and would thus be sufficient for us in the study of the asymptotics of $\Ds{2}{\ve}{M}$ and of privacy amplification, one could pursue a further tightening of the constants involved in the bound. 
Shortly after the appearance of the preprint of this paper, the independent work~\cite{gour_2026} studied R\'enyi divergence inequalities and observed that a tighter inequality between the smooth collision divergence and R\'enyi divergences can be obtained. We give an alternative direct proof of this bound, which may be of independent interest.
\begin{boxed}
\begin{lemma}[{\cite[Theorem~1]{gour_2026}}]\label{lem:D2_bound_with_h2}
For all $\ve \in (0,1)$ and all $\alpha \in (1,2]$, it holds that
\begin{equation}\begin{aligned}
  \Ds{2}{\ve}{M}(\rho\|\sigma) &\leq D_\alpha^{\MM} (\rho\|\sigma) + \frac{2-\alpha}{\alpha-1} \log \frac1\ve - \frac{\alpha}{\alpha-1}\, h\!\left(\frac{2-\alpha}{\alpha}\right)\\
  &= D_\alpha^{\MM} (\rho\|\sigma) + \frac{2-\alpha}{\alpha-1} \log \frac1\ve + \frac{2-\alpha}{\alpha-1} \log \frac{2-\alpha}{\alpha} + 2 \log \frac{2 (\alpha-1)}{\alpha},
\end{aligned}\end{equation}
where $h(a) \coloneqq -a \log a - (1-a) \log (1-a)$ denotes the binary entropy.
\end{lemma}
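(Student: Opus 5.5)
The plan is to bound the dual form of Theorem~\ref{thm:D2_smoothed_forms} directly, first for classical distributions, and then lift the bound through measurements exactly as in the proofs of Lemmas~\ref{lem:secondorder_equiv} and~\ref{lem:D2_upper_exponent_bound}. By Theorem~\ref{thm:D2_smoothed_forms} together with Lemma~\ref{lem:D2_classicalcase}, we have $\Ds{2}{\ve}{M}(\rho\|\sigma)=\sup_{\M}\log\dual(\M(\rho),\M(\sigma),\ve)$. We also have $D_\alpha(\M(\rho)\|\M(\sigma))\le D_\alpha^{\MM}(\rho\|\sigma)$. It therefore suffices to show, for distributions $p,q$ and every $0\le W\le\id$, that
\[
\frac{\left(\sum_x W(x)p(x)-\ve\right)_+^2}{\sum_x W(x)^2 q(x)} \le Q_\alpha(p\|q)^{\frac{1}{\alpha-1}}\, g_\alpha(\ve),
\]
where $g_\alpha(\ve)$ is the exponential of the claimed additive correction. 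Diverging cases (where $\Tr\Pi_\sigma^\perp\rho>\ve$, so that $\supp p\not\subseteq\supp q$ after some measurement) are excluded at the start, as in the proof of Lemma~\ref{lem:secondorder_equiv}. In those cases the right-hand side is already infinite, since $D^{\MM}_\alpha=\infty$ whenever $\supp\rho\not\subseteq\supp\sigma$.

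The key step is a Hölder inequality relating $x\coloneqq\sum_x W p$ to $y\coloneqq\sum_x W^2 q$. Set $\beta\coloneqq 2(\alpha-1)/\alpha\in(0,1]$. Since $0\le W\le 1$, we have $W\le W^\beta$. Write $W^\beta p=(W^2q)^{\beta/2}\cdot p\,q^{-\beta/2}$ and apply Hölder with exponents $2/\beta$ and $1/(1-\beta/2)=\alpha$. This gives
\[
x\le\sum_x W^\beta p\le y^{\frac{\alpha-1}{\alpha}}\Big(\sum_x p^\alpha q^{1-\alpha}\Big)^{\frac1\alpha}=y^{\frac{\alpha-1}{\alpha}}Q_\alpha(p\|q)^{\frac1\alpha},
\]
and I have checked that the exponent of $q$ comes out to $-\tfrac{\beta}{2}\alpha=1-\alpha$. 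Inverting yields $y\ge x^{\alpha/(\alpha-1)}Q_\alpha^{-1/(\alpha-1)}$. Hence each dual objective value is at most $Q_\alpha^{1/(\alpha-1)}\,(x-\ve)_+^2\,x^{-\alpha/(\alpha-1)}$.

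It then remains to maximise the one-variable function $f(x)=(x-\ve)^2x^{-\alpha/(\alpha-1)}$ over $x>\ve$; we may relax the constraint $x\le1$, since this only weakens the bound. Setting the logarithmic derivative to zero gives $\frac{2}{x-\ve}=\frac{\alpha}{(\alpha-1)x}$, i.e.\ $x^\star=\frac{\alpha\ve}{2-\alpha}$. The function $f$ increases before $x^\star$ and decreases after it (for $\alpha<2$; at $\alpha=2$, $f\le 1$ trivially). Substituting, $x^\star-\ve=\frac{2(\alpha-1)}{2-\alpha}\ve$, and
\[
f(x^\star)=\ve^{2-\frac{\alpha}{\alpha-1}}\Big(\tfrac{2(\alpha-1)}{2-\alpha}\Big)^2\Big(\tfrac{2-\alpha}{\alpha}\Big)^{\frac{\alpha}{\alpha-1}}
=\ve^{-\frac{2-\alpha}{\alpha-1}}\Big(\tfrac{2(\alpha-1)}{\alpha}\Big)^{2}\Big(\tfrac{2-\alpha}{\alpha}\Big)^{\frac{2-\alpha}{\alpha-1}}.
\]
Taking logarithms, $\frac1{\alpha-1}\log Q_\alpha=D_\alpha$, and this reproduces exactly the second line of the statement. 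One then checks algebraically that it equals the first line written with the binary entropy $h\big(\tfrac{2-\alpha}{\alpha}\big)$, using $1-\tfrac{2-\alpha}{\alpha}=\tfrac{2(\alpha-1)}{\alpha}$.

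The step I expect to be the main obstacle is finding the right Hölder splitting, i.e.\ the choice of $\beta$ that converts $\sum W p$ into a product of a power of $\sum W^2q$ and $Q_\alpha$. Once this is in place, the rest is a one-dimensional optimisation followed by routine algebra matching the constant to the binary-entropy expression. A possible subtlety is that $x^\star$ may exceed $1$. This does not matter, because we only need an upper bound on the supremum.
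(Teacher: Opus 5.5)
Your proof is correct and follows essentially the same route as the paper: reduce to classical $p,q$ via the dual form of Theorem~\ref{thm:D2_smoothed_forms}, use H\"older with exponents $\alpha$ and $\frac{\alpha}{\alpha-1}$ together with $0\le W\le 1$ to get $\sum_x W(x)^2q(x)\ge(\Tr Wp)^{\frac{\alpha}{\alpha-1}}Q_\alpha(p\|q)^{-\frac{1}{\alpha-1}}$, maximise a one-variable function of $\Tr Wp$ over $(\ve,\infty)$, and lift through measurements. The differences are only cosmetic: you apply the relaxation $W\le W^{\beta}$ before H\"older, whereas the paper uses $W^{\alpha/(\alpha-1)}\le W^2$ after it; and you locate the maximiser $x^\star=\frac{\alpha\ve}{2-\alpha}$ by calculus, whereas the paper substitutes $u=\ve/\Tr Wp$ and uses nonnegativity of the binary relative entropy.
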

\end{boxed}
\begin{proof}
As before, we will show the inequality for classical distributions $p$ and $q$, from which the quantum extension will follow by lifting both sides by measurements. Recall the variational form of smooth collision divergence from Theorem~\ref{thm:D2_smoothed_forms}, which we will write here as
\begin{equation}\begin{aligned}\label{eq:class_variational}
  Q_2^\ve (p\|q) &= \sup \lset \frac{\left(\Tr Wp - \ve\right)^2}{\Tr W^2 q} \bar 0 \leq W \leq \id,\; \Tr W p > \ve \rset
\end{aligned}\end{equation}
with $W$ here being classical (diagonal) operators. 
Using H\"older's inequality for the conjugate pair of coefficients $\alpha$ and $\frac{\alpha}{\alpha-1}$, it h\"olds that
\begin{equation}\begin{aligned}
  \Tr Wp &= \sum_x \big(p(x)^\alpha \big)^{1/\alpha} W(x)\\
  &= \sum_x \left(p(x)^\alpha q(x)^{1-\alpha} \right)^{1/\alpha} W(x) \, q(x)^{\frac{\alpha-1}{\alpha}}\\
  &\leq \left( \sum_x p(x)^\alpha q(x)^{1-\alpha} \right)^{1/\alpha} \left( \sum_x W(x)^{\frac\alpha{\alpha-1}} \, q(x) \right)^{\frac{\alpha-1}{\alpha}}\\
  &\leq Q_\alpha(p\|q)^{1/\alpha} \, \left( \sum_x W(x)^{2} \, q(x) \right)^{\frac{\alpha-1}{\alpha}}
\end{aligned}\end{equation}
where in the last line we used that $W(x) \in [0,1]$ and $\frac{\alpha}{\alpha-1} \in [2, \infty)$. Rewriting this as $\sum_x W(x)^{2} \, q(x) \geq \big(\!\Tr Wp \big)^{\frac\alpha{\alpha-1}}\, Q_\alpha(p\|q)^{- \frac{1}{\alpha-1}}$
and plugging into Eq.~\eqref{eq:class_variational} gives
\begin{equation}\begin{aligned}
  Q_2^\ve (p\|q) &\leq \sup_{W : \Tr Wp \in (\ve,\infty)} \frac{ \left(\Tr Wp - \ve\right)^2 }{\big(\!\Tr Wp \big)^{\frac\alpha{\alpha-1}}} \, Q_\alpha(p\|q)^{\frac{1}{\alpha-1}}.
\end{aligned}\end{equation}
Introduce now the variable $u \coloneqq \frac{\ve}{\Tr Wp} \in (0,1)$ to write the above as
\begin{equation}\begin{aligned}
   Q_2^\ve (p\|q) &\leq \sup_{u \in (0,1)} \left( \frac1u - 1 \right)^2 \ve^2 \, u^{\frac\alpha{\alpha-1}}\, \ve^{-\frac\alpha{\alpha-1}} \, Q_\alpha(p\|q)^{\frac{1}{\alpha-1}}\\
   &= \sup_{u \in (0,1)}\,  (1-u)^2 \, u^{\frac{2-\alpha}{\alpha-1}} \, \ve^{- \frac{2-\alpha}{\alpha-1}} \, Q_\alpha(p\|q)^{\frac{1}{\alpha-1}}\\
   &= Q_\alpha(p\|q)^{\frac{1}{\alpha-1}} \, \ve^{- \frac{2-\alpha}{\alpha-1}} \,  \sup_{u \in (0,1)} \exp \left[ \frac{\alpha}{\alpha-1} \left( - h\!\left(\frac{2-\alpha}{\alpha}\right) - D\!\left(\frac{2-\alpha}{\alpha} \middle\| u \right) \right) \right]\\
   &= Q_\alpha(p\|q)^{\frac{1}{\alpha-1}} \, \ve^{- \frac{2-\alpha}{\alpha-1}} \, \exp \left[ - \frac{\alpha}{\alpha-1} \,h\!\left(\frac{2-\alpha}{\alpha}\right) \right],
\end{aligned}\end{equation}
where we used $D$ to denote the binary relative entropy $D(a\|b) = D((a,1\!-\!a)\|(b,1\!-\!b))$ and observed that, by the non-negativity of $D$, the supremum is clearly achieved at $u = \frac{2-\alpha}{\alpha}$.
\end{proof}

\section{Achievability for privacy amplification}\label{sec:achievability}

We use the term \deff{\univ-universal} to refer to any family of hash functions $f: \X \to \Z$ such that
\begin{equation}\begin{aligned}\label{eq:2univ}
  \operatorname*{Pr}\limits_{f} \big[ f(x) = f(x') \big] = \frac{1}{|\Z|} \;\, \forall x \neq x'.
\end{aligned}\end{equation}
Such functions are sometimes simply called \emph{2-universal} in the literature. Our choice of terminology is made to distinguish this definition from the more standard  definition of 2-universal hash functions that only imposes that $\operatorname*{Pr}_f\! \big[ f(x) = f(x') \big] \leq \frac{1}{|\Z|}$~\cite{carter_1979}.
The equality condition in~\eqref{eq:2univ} will be important when dealing with non-positive operators. 
Many commonly used classes of 2-universal hash functions, such as linear Toeplitz hashing, are indeed \univ-universal~\cite{carter_1979,krawczyk_1994}. 
We note also that the definition of \univ-universality is less demanding than so-called \emph{strong 2-universality}~\cite{wegman_1981,stinson_1994}, which would require pairwise independence of the random variables $f(x)$ and $f(x')$.

Conventional formulations of the leftover hash lemma tell us that, for any 2-universal family of hash functions, it holds that~\cite{renner_2005,tomamichel_2011}
\begin{equation}\begin{aligned}
  \EE_f \norm{ \rho_{ZE}^f - \frac{\id_Z}{|\Z|} \otimes \rho_E }{+} \leq \frac12 \sqrt{ \big(|\Z|-1\big) \, \wt{Q}_2(\rho_{XE} \| \id_X \otimes \sigma_E) },
\end{aligned}\end{equation}
recalling that $\wt{Q}_2 = \exp \wt{D}_2$. 
It is not difficult to see from the standard proofs that, with the assumption of \univ-universality, this can be applied not only to states $\rho_{XE}$ but also to Hermitian operators $R_{XE}$ --- a fact that was used e.g.\ in~\cite{dupuis_2023}. This could already suggest that employing Hermitian smoothing may lead to improvements in standard approaches. 
However, a na\"ive application of the previous methods would result in a quantity of the form $\wt{D}_2(R_{XE} \| \id_X \otimes \sigma_E)$ based on standard sandwiched R\'enyi divergence, which does not appear to improve over prior bounds, and it is unclear how to connect it with tighter smooth divergences like $\Ds{2}{\ve}{M}$ or~$\Dmax{\ve}{M}$.

Through a somewhat different approach, the work \cite{shen_2024} observed that an application of spectral pinching together with a careful analysis of the interplay between the trace distance and pinched projectors gives an achievability result under \univ-universal hash functions that exhibits improved scaling at the level of second-order asymptotics. However, at the one-shot level the result incurs a number of penalty terms stemming from the pinching inequality and the use of the suboptimal information spectrum divergence $D^\ve_s$; this also prevents its applicability for large-deviation bounds. 

To improve on the prior findings, we derive a new variant of the leftover hash lemma.

\subsection{Leftover hashing with measured smooth collision divergence}\label{sec:leftover_hashing}

\begin{boxed}[color]
\begin{lemma}[Tightened leftover hash lemma]\label{lem:leftover}
For any classical--quantum Hermitian operator $R_{XE}$, any state $\sigma_{E}$ such that $\supp(R_{XE}) \subseteq \supp(\id_X \otimes \sigma_E)$, and any \univ-universal family of hash functions $f : \X \to \Z$, it holds that
\begin{equation}\begin{aligned}
  \EE_f \norm{ R_{ZE}^f - \frac{\id_Z}{|\Z|} \otimes R_E }{+} \leq \frac12 \sqrt{|\Z|-1}\, \norm{R_{XE}}{\Bures\id_X \otimes \sigma_E},
\end{aligned}\end{equation}
where we recall that $\norm{R}{\Bures\sigma} = \sqrt{\Tr R \J_\sigma^{-1}(R)}$, with $\J_\sigma(X) = \frac12 (\sigma X + X \sigma)$.
\end{lemma}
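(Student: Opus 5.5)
The plan is to pass from the trace norm to the Bures seminorm with a single Cauchy--Schwarz step, and then to evaluate the expected squared Bures norm over the hash family exactly. Two pieces of notation: write $R_{XE} = \sum_x \proj{x} \otimes R_x$ with Hermitian $R_x$, and set $Y_f \coloneqq R^f_{ZE} - \frac{\id_Z}{|\Z|} \otimes R_E$. The first observation is that $\Tr Y_f = 0$, so $\norm{Y_f}{+} = \frac12 \norm{Y_f}{1}$. Also, $\supp(R_{XE}) \subseteq \supp(\id_X\otimes\sigma_E)$ gives $\supp(R_x)\subseteq\supp(\sigma_E)$ for every $x$, so all quantities below lie in the range of the relevant $\J$-maps and are finite. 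For simplicity one may first take $\sigma_E>0$ and conclude by the limit $\sigma_E+\ve\id$.

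\emph{Step 1 (trace norm versus Bures norm).} Let $\tau \coloneqq \id_Z \otimes \sigma_E$. For any Hermitian $Y \in \ran(\J_\tau)$ and any $-\id \le U \le \id$, I would write
\[
\Tr UY = \Tr \J_\tau(U)\,\J_\tau^{-1}(Y) = \< \J_\tau(U), Y\>_\tau .
\]
Cauchy--Schwarz for the Bures inner product then gives $|\Tr UY| \le \norm{\J_\tau(U)}{\Bures\tau}\norm{Y}{\Bures\tau}$. Here $\norm{\J_\tau(U)}{\Bures\tau}^2 = \Tr U \J_\tau(U) = \Tr U^2 \tau \le \Tr\tau = |\Z|$. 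Maximising over $U$ yields
\[
\norm{Y_f}{+} \le \tfrac12 \sqrt{|\Z|}\,\norm{Y_f}{\Bures \id_Z\otimes\sigma_E}.
\]
By Jensen's inequality, $\EE_f \norm{Y_f}{+} \le \frac12\sqrt{|\Z|}\,\big(\EE_f \norm{Y_f}{\Bures\tau}^2\big)^{1/2}$.

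\emph{Step 2 (block structure).} Since $\tau$ is block diagonal with identical blocks $\sigma_E$, the map $\J_\tau^{-1}$ acts on CQ operators as $\idc_Z \otimes \J_{\sigma_E}^{-1}$ blockwise. Consequently $\norm{\sum_z \proj{z}\otimes A_z}{\Bures\tau}^2 = \sum_z \<A_z,A_z\>_{\sigma_E}$. I would apply this with $A_z = \sum_{x:f(x)=z} R_x - R_E/|\Z|$. Expanding the square, and using $\sum_z \sum_{x: f(x)=z} R_x = R_E$, gives
\[
\norm{Y_f}{\Bures\tau}^2 = \sum_z \Big\| \sum_{x:f(x)=z} R_x \Big\|_{\Bures\sigma_E}^2 - \frac1{|\Z|}\norm{R_E}{\Bures\sigma_E}^2 .
\]

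\emph{Step 3 (averaging over hash functions).} The first term equals
\[
\sum_x \<R_x,R_x\>_{\sigma_E} + \sum_{x\ne x'} \mathbbm{1}[f(x)=f(x')]\,\<R_x,R_{x'}\>_{\sigma_E}.
\]
Taking $\EE_f$ and using the \emph{equality} in~\eqref{eq:2univ}, the cross terms become $\frac1{|\Z|}\big(\norm{R_E}{\Bures\sigma_E}^2 - \sum_x \norm{R_x}{\Bures\sigma_E}^2\big)$. This is the main point where care is needed: the $R_x$ are not positive, so the $\<R_x,R_{x'}\>_{\sigma_E}$ can be negative, and a mere upper bound on the collision probability would not suffice. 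That is precisely why \univ-universality is assumed. Combining the pieces, the $\norm{R_E}{}$ terms cancel and
\[
\EE_f \norm{Y_f}{\Bures\tau}^2 = \Big(1-\tfrac1{|\Z|}\Big)\sum_x \norm{R_x}{\Bures\sigma_E}^2 = \Big(1-\tfrac1{|\Z|}\Big)\norm{R_{XE}}{\Bures\id_X\otimes\sigma_E}^2 ,
\]
where the last equality is Step 2 applied again on $X$.

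\emph{Conclusion.} Substituting into Step 1 gives $\EE_f\norm{Y_f}{+} \le \frac12\sqrt{|\Z|}\sqrt{(|\Z|-1)/|\Z|}\,\norm{R_{XE}}{\Bures\id_X\otimes\sigma_E}$, which is the claim. I expect the main technical obstacle to be Step 1: choosing the Bures inner product so that the dual object $\J_\tau(U)$ has norm controlled by $\Tr U^2\tau \le \Tr \tau$, together with checking that the range conditions are preserved under hashing.
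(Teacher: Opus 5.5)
Your proposal is correct and follows essentially the same route as the paper. Step~1 is exactly the paper's Cauchy--Schwarz proof of Lemma~\ref{lem:variance} (the bound $\norm{H}{1} \le \norm{H}{\Bures\sigma}\sqrt{\Tr\sigma}$) written inline; after that you use the same blockwise identity $\J^{-1}_{\id_Z\otimes\sigma_E} = \idc_Z\otimes\J^{-1}_{\sigma_E}$, Jensen's inequality, and the same exact \univ-universality computation, where the equality condition is needed because the cross terms $\<R_x,R_{x'}\>_{\sigma_E}$ can be negative.
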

\end{boxed}
The proof relies on the following fact, which extends~\cite[Lemma~5]{temme_2010}. 

\begin{lemma}[\cite{temme_2010}]\label{lem:variance}
For any operator $\sigma \geq 0$ and any Hermitian operator $H$ such that $\supp(H)\! \subseteq~\!\supp(\sigma)$,
\begin{equation}\begin{aligned}
  \norm{H}{1} \leq \norm{H}{\Bures\sigma} \!\sqrt{\vphantom{\norm{H}{B}}\Tr \sigma}.
\end{aligned}\end{equation}
\end{lemma}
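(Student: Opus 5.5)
The plan is to prove Lemma~\ref{lem:variance} by duality, writing $\norm{H}{1}$ as a supremum over unitaries and then applying Cauchy--Schwarz in the Bures geometry. Since $H$ is Hermitian, $\norm{H}{1} = \Tr S H$, where $S = \Pi_+ - \Pi_-$ is the sign operator of $H$. This $S$ is Hermitian with $S^2 \leq \id$.

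First I will reduce to $\sigma > 0$. As $\supp(H) \subseteq \supp(\sigma)$, I restrict everything to $\supp(\sigma)$. On that subspace $\sigma$ is invertible, $H \in \ran(\J_\sigma)$, and $\norm{H}{\Bures\sigma}$ coincides with the restricted quantity. The trace of $\sigma$ and the trace norm of $H$ are unchanged by the restriction. Equivalently, one could use the limit $\sigma + \ve\id$ together with the limit formula for the seminorm.

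Next, with $\sigma > 0$, I write $H = \J_\sigma(Z)$ for $Z = \J_\sigma^{-1}(H)$. Then
\begin{equation}
  \Tr S H = \Tr S \J_\sigma(Z) = \Tr \J_\sigma(S) Z = \< \J_\sigma(S), H \>_\sigma ,
\end{equation}
using the self-adjointness of $\J_\sigma$ with respect to the Hilbert--Schmidt inner product. Cauchy--Schwarz for the Bures inner product then gives
\begin{equation}
  \norm{H}{1} \leq \norm{\J_\sigma(S)}{\Bures\sigma}\, \norm{H}{\Bures\sigma}.
\end{equation}
The remaining factor is $\norm{\J_\sigma(S)}{\Bures\sigma}^2 = \Tr \J_\sigma(S)\, S = \Tr S\sigma S$, again by cyclicity of the trace. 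Since $S^2 \leq \id$, this is at most $\Tr \sigma$, which gives the claim.

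There is no real obstacle. The only care needed is in the support reduction: when $\sigma$ is singular, $\J_\sigma$ is not invertible on the whole space, so I must make sure the sign operator is taken on $\supp(\sigma)$. This is automatic because $H$ lives there. The Cauchy--Schwarz step could alternatively be viewed as the dual characterisation of the Bures norm via the $\J_\sigma$-norm, as in the proof of Lemma~\ref{lem:D2_primaldual}.
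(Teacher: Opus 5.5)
Your proof is correct and is essentially the paper's argument: Cauchy--Schwarz for the Bures inner product applied to $\Tr HK = \langle H, \J_\sigma(K)\rangle_\sigma$, followed by $\Tr \J_\sigma(K)\,K = \Tr K^2\sigma \leq \Tr\sigma$. The only differences are minor: you plug in the optimal contraction $K = S$ (the sign operator of $H$) directly rather than maximising over all $\norm{K}{\infty}\leq 1$, and you handle the support condition by restricting to $\supp(\sigma)$, which is equivalent to the paper's remark that the maximisation over $K$ can be confined to that subspace.
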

\begin{proof}
For any Hermitian $H$ and $K$ in the range of $\J_\sigma$, the Cauchy--Schwarz inequality for the inner product $\<X,Y\>_\sigma = \Tr X \J_\sigma^{-1}(Y)$ gives
\begin{equation}\begin{aligned}
 \Tr HK = \< H, \J_\sigma(K) \>_\sigma \leq\sqrt{\Tr H \J_\sigma^{-1}(H)}  \sqrt{\vphantom{\J_\sigma^{-1}} \Tr \J_\sigma(K) K } =  \norm{H}{\Bures\sigma} \sqrt{\Tr K^2 \sigma}.
\end{aligned}\end{equation}
If $\norm{K}{\infty} \leq 1$, then $0 \leq K^2 \leq \id$, and hence $\Tr K^2 \sigma \leq \Tr \sigma$. The result then follows by using that $\norm{H}{1} = \max \lset \Tr HK \bar \norm{K}{\infty} \leq 1 \rset$; the assumption that $H$ is supported on $\supp(\sigma)$ means that it suffices to maximise over $K$ in that subspace, ensuring that all such operators are indeed in~$\ran(\J_\sigma)$.
\end{proof}

\begin{remark}
Although above we opted for a direct proof, another way to see Lemma~\ref{lem:variance} is to observe that the classical case of it follows simply as $\norm{h}{1} = \sum_x |h(x)| \leq \left( \sum_x h(x)^2 q(x)^{-1} \right)\!{}^{1/2} \left( \sum_x q(x) \right)\!{}^{1/2} = \norm{h}{q} \sqrt{\Tr q}$, and then both sides of the inequality can be lifted to quantum states by choosing $h = \M(H)$, $q = \M(\sigma)$, and maximising over measurements. This also shows intuitively that the variant that we consider here is the tightest possible lifting of the classical inequality, since the Bures seminorm is the minimal extension of the corresponding classical seminorm.
\end{remark}

With this in place, the rest of the proof proceeds in a way analogous to standard leftover hashing.

\begin{proof}[{Proof of Lemma~\ref{lem:leftover}}]
Let $R_{E,x}$ denote the operators such that $R_{XE} = \sum_x \proj{x} \otimes R_{E,x}$. Write
\begin{equation}\begin{aligned}
  R_{ZE}^f = \sum_z \proj{z} \otimes R_{E,z}^{f} \coloneqq \sum_z \proj{z} \otimes \sum_{x : f(x) = z} R_{E,x}.
\end{aligned}\end{equation}
Applying Lemma~\ref{lem:variance} with the choice $\sigma = \id_Z \otimes \sigma_E$ tells us that 
\begin{equation}\begin{aligned}
  \norm{ R_{ZE}^f - \frac{\id_Z}{|\Z|} \otimes R_E }{+} &= \frac12 \norm{ R_{ZE}^f - \frac{\id_Z}{|\Z|} \otimes R_E }{1}\\
  &\leq\frac12 \sqrt{|\Z| \norm{ R_{ZE}^f -  \frac{\id_Z}{|\Z|} \otimes R_E }{\Bures\id_Z \otimes \sigma_E}^2}\\
  &= \frac12 \sqrt{ |\Z| \sum_z \norm{ R_{E,z}^f -  \frac{1}{|\Z|} R_E }{\Bures \sigma_E}^2 }\,,
\end{aligned}\end{equation}
where the last line is seen as follows. Since the superoperator $\J_{\sigma}$ satisfies that $\J_{\id_Z \otimes \sigma_E} = \idc_Z \otimes \J_{\sigma_E}$, with $\idc$ standing for the identity channel, this means that also $\J_{\id_Z \otimes \sigma_E}^{-1} = \idc_Z \otimes \J_{\sigma_E}^{-1}$ on $\supp(\sigma_E)$. This makes the inner product $\< X, Y \>_\sigma = \Tr X \J_{\sigma}^{-1}(Y)$ and the induced squared Bures norm $\|\cdot\|_\sigma^2$ inherit the property that
\begin{equation}\begin{aligned}
\< Q_{ZE}, Q_{ZE} \>_{\id_Z \otimes \sigma_E} = \Tr \sum_z \proj{z} \otimes Q_{E,z} \J_{\sigma_E}^{-1}(Q_{E,z}) = \sum_z \< Q_{E,z}, Q_{E,z} \>_{\sigma_E}
\end{aligned}\end{equation}
for any CQ operator $Q_{ZE} = \sum_z \proj{z} \otimes Q_{E,z}$. 
Now, taking expectation and using Jensen's inequality for the square root gives
\begin{equation}\begin{aligned}\label{eq:after_jensen}
  \EE_f \norm{ R_{ZE}^f - \frac{\id_Z}{|\Z|} \otimes R_E }{+} &\leq  \frac12 \sqrt{ \left|\Z\right| \, \EE_f  \sum_z \norm{ R_{E,z}^f -  \frac{1}{|\Z|}R_E }{\Bures\sigma_E}^2 }.
\end{aligned}\end{equation}
Noting that $\sum_z R_{E,z}^f = R_E$, we can rewrite the term inside the square root as
\begin{equation}\begin{aligned}
  \EE_f \sum_z \norm{ R_{E,z}^f - \frac{1}{|\Z|} R_E }{\Bures \sigma_E}^2  &= \EE_f \sum_z \< R_{E,z}^f -  \frac{1}{|\Z|}R_E , R_{E,z}^f -  \frac{1}{|\Z|}R_E \>_{\sigma_E}\\
  &= \EE_f \sum_z \norm{ R_{E,z}^f }{\Bures\sigma_E}^2 - \frac{1}{|\Z|} \norm{R_E}{\Bures\sigma_E}^2.
\end{aligned}\end{equation}
The \univ-universality of $f$ then ensures that
\begin{equation}\begin{aligned}
\EE_f \sum_z \norm{ R_{E,z}^f - \frac{1}{|\Z|} R_E }{\Bures \sigma_E}^2  &= \EE_f \sum_z \< \sum_{x : f(x) = z} R_{E,x}\,,\, \sum_{x' : f(x') = z} R_{E,x'} \>_{\sigma_E} - \frac{1}{|\Z|} \norm{R_E}{\Bures\sigma_E}^2\\
  &= \EE_f \sum_{x,x'} \delta_{f(x), f(x')} \< R_{E,x}, R_{E,x'} \>_{\sigma_E} - \frac{1}{|\Z|} \norm{R_E}{\Bures\sigma_E}^2\\
  &= \sum_{x} \norm{R_{E,x}}{\Bures\sigma_E}^2 + \EE_f \sum_{x\neq x'} \delta_{f(x), f(x')} \< R_{E,x}, R_{E,x'} \>_{\sigma_E} - \frac{1}{|\Z|} \norm{R_E}{\Bures\sigma_E}^2\\
  &= \sum_{x} \norm{R_{E,x}}{\Bures\sigma_E}^2 + \frac{1}{|\Z|} \sum_{x\neq x'} \< R_{E,x}, R_{E,x'} \>_{\sigma_E} - \frac{1}{|\Z|} \norm{R_E}{\Bures\sigma_E}^2\\
  &= \sum_{x} \norm{R_{E,x}}{\Bures\sigma_E}^2 + \frac{1}{|\Z|} \left( \norm{ \sum_x R_{E,x} }{\Bures\sigma_E}^2 - \sum_x \norm{R_{E,x}}{\Bures\sigma_E}^2  \right) - \frac{1}{|\Z|} \norm{R_E}{\Bures\sigma_E}^2\\
  &= \left(1 - \frac{1}{|\Z|} \right) \sum_{x} \norm{R_{E,x}}{\Bures\sigma_E}^2.
\end{aligned}\end{equation}
Plugging this back into~\eqref{eq:after_jensen}, we thus have
\begin{equation}\begin{aligned}
   \EE_f \norm{ R_{ZE}^f - \frac{\id_Z}{|\Z|} \otimes R_E }{+} &\leq \frac12 \sqrt{ \big(|\Z|-1\big) \, \sum_{x} \norm{R_{E,x}}{\Bures\sigma_E}^2 },
\end{aligned}\end{equation}
and the proof is completed by recalling that $\sum_{x} \norm{R_{E,x}}{\Bures\sigma_E}^2  = \norm{R_{XE}}{\Bures\id_X \otimes \sigma_E}^2$.
\end{proof}

Combining this result with our characterisation of the measured smooth collision divergence, we can then give our main result in the form of a `smoothed' leftover hash lemma. It will in particular establish a connection with the measured smooth conditional entropies, which we define in full analogy with our previous definitions as
\begin{equation}\begin{aligned}
  \Hs{\alpha}{\ve}{M}[up] (X|E)_\rho = -\!\! \inf_{\substack{\sigma_E \geq 0\\\Tr \sigma_E = 1}} \Ds{\alpha}{\ve}{M} (\rho_{XE}\|\id_X \otimes \sigma_E).
\end{aligned}\end{equation}

\begin{boxed}[filled]
\begin{theorem}\label{thm:privacy_amp}
For any classical--quantum state $\rho_{XE}$, any state $\sigma_{E}$ such that $\supp(\rho_{XE}) \subseteq \supp(\id_X \otimes \sigma_E)$, any $\ve \in [0,1)$, and any \univ-universal family of hash functions  $f : \X \to \Z$, it holds that
\begin{equation}\begin{aligned}\label{eq:achiev_optimised_R}
  \EE_f \frac12 \norm{ \rho_{ZE}^f - \frac{\id_Z}{|\Z|} \otimes \rho_E }{1} &\leq \ve + \frac12 \sqrt{\big(|\Z|-1\big) \, \Qs{2}{\ve}{M}(\rho_{XE} \| \id_X \otimes \sigma_E)}\\
  &\leq \ve + \frac12 \sqrt{\big(|\Z|-1\big)\, (1-\ve) \, \Qs{\max}{\ve}{M}(\rho_{XE} \| \id_X \otimes \sigma_E)}.
\end{aligned}\end{equation}

As a result, for any $\ve \in (0,1)$ and any $\mu \in (0,\ve)$ there exists a randomness extraction protocol such that
\begin{equation}\begin{aligned}\label{eq:achiev_Hmin}
  \ell_\ve(\rho_{XE}) &\geq \Hs{2}{\ve-\mu}{M}[up] (X|E)_\rho - \log\frac{1}{4\mu^2}\\
  &\geq \Hmin{\ve-\mu}{M}[up] (X | E)_\rho - \log\frac{1-\ve + \mu}{4\mu^2}. 
\end{aligned}\end{equation}
\end{theorem}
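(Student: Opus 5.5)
The plan is to combine the Hermitian-smoothing characterisation of Theorem~\ref{thm:D2_smoothed_forms} with the tightened leftover hash lemma (Lemma~\ref{lem:leftover}) through a triangle inequality. Fix $\sigma_E$ with $\supp(\rho_{XE}) \subseteq \supp(\id_X\otimes\sigma_E)$, so that $\Tr \Pi^\perp_{\id\otimes\sigma}\rho = 0 \le \ve$. By Theorem~\ref{thm:D2_smoothed_forms}, the infimum in~\eqref{eq:D2_variational_forms_primal} is then attained by a Hermitian $R_{XE} \le \rho_{XE}$ with $\Tr(\rho_{XE}-R_{XE}) \le \ve$ and $\norm{R_{XE}}{\Bures \id_X\otimes\sigma_E}^2 = \Qs{2}{\ve}{M}(\rho_{XE}\|\id_X\otimes\sigma_E)$.

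First I need $R$ to be classical--quantum, since Lemma~\ref{lem:leftover} requires this. Let $\mathcal{P}$ be the pinching (completely dephasing map) on $X$. Then $\mathcal{P}(R) \le \mathcal{P}(\rho) = \rho$ and the trace is unchanged. Because $\J_{\id_X\otimes\sigma_E}$ commutes with $\mathcal{P}\otimes\idc_E$, and $\mathcal{P}$ is a self-adjoint projection in the Bures inner product, $\norm{\mathcal{P}(R)}{\Bures}\le\norm{R}{\Bures}$. So we may assume $R$ is CQ. The support condition of Lemma~\ref{lem:leftover} holds because a finite Bures norm means $R$ lies in $\ran(\J_{\id\otimes\sigma})$; for a CQ operator this forces the support to lie inside $\supp(\id_X\otimes\sigma_E)$.

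Next, set $\Delta \coloneqq \rho_{XE}-R_{XE}\ge 0$, which is CQ with $\Tr\Delta\le\ve$. Hashing is linear, so by the triangle inequality
\[
\norm{\rho^f_{ZE}-\tfrac{\id_Z}{|\Z|}\otimes\rho_E}{+} \le \norm{R^f_{ZE}-\tfrac{\id_Z}{|\Z|}\otimes R_E}{+} + \norm{\Delta^f_{ZE}-\tfrac{\id_Z}{|\Z|}\otimes\Delta_E}{+}.
\]
The second operator has trace zero, so its $\|\cdot\|_+$ equals $\Tr(\Delta^f_{ZE}-\tfrac{\id}{|\Z|}\otimes\Delta_E)_+ \le \Tr\Delta^f_{ZE} = \Tr\Delta \le \ve$, using that $\Delta^f\ge0$ and the subtracted term is positive. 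The first term is also a trace-zero operator, so its $\|\cdot\|_+$ is half its trace norm; it is bounded after taking $\EE_f$ by Lemma~\ref{lem:leftover}, which gives $\frac12\sqrt{(|\Z|-1)\Qs{2}{\ve}{M}}$. The left-hand side of the theorem is likewise half a trace norm of a trace-zero operator. This proves the first line of~\eqref{eq:achiev_optimised_R}. The second line follows directly from~\eqref{eq:bounds1} of Lemma~\ref{lem:secondorder_equiv}, namely $\Qs{2}{\ve}{M}\le(1-\ve)\Qs{\max}{\ve}{M}$. That lemma is stated for states $\sigma$, and the case $\ve=0$ is handled by the classical computation in its proof, which holds verbatim.

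For~\eqref{eq:achiev_Hmin}, apply the bound with smoothing parameter $\ve-\mu$ and with $\sigma_E$ approaching the infimum in the definition of $\Hs{2}{\ve-\mu}{M}[up]$. Restricting to full-rank $\sigma_E$ loses nothing, since the quantity is continuous under mixing with $\id$. Choose $|\Z|$ so that $\frac12\sqrt{|\Z|\,\Qs{2}{\ve-\mu}{M}}\le\mu$, i.e.\ $\log|\Z| \le -\log\Qs{2}{\ve-\mu}{M} - \log\frac{1}{4\mu^2}$. A 2*-universal family such as Toeplitz hashing then achieves total error $\le\ve$.

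The main obstacle is integrality of $\log|\Z|$. I would handle it as is customary, either by reading $\ell_\ve$ with the bound understood up to rounding, or by noting that $|\Z|-1<|\Z|$ gives the slack. The final inequality of~\eqref{eq:achiev_Hmin} then follows from the $\Qs{\max}{}{}$ bound with $1-(\ve-\mu)$ in place of $1-\ve$.
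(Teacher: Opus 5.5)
Your proof has one step that fails as argued: the support hypothesis of Lemma~\ref{lem:leftover}. Finiteness of $\norm{R_{XE}}{\Bures\id_X\otimes\sigma_E}$ only gives $\Pi^\perp R_{XE}\Pi^\perp=0$, where $\Pi^\perp\coloneqq\id_X\otimes\Pi^\perp_{\sigma_E}$. For a CQ operator this says $\Pi^\perp_{\sigma_E}R_{E,x}\Pi^\perp_{\sigma_E}=0$ for each $x$. That still allows blocks connecting $\supp(\sigma_E)$ with $\ker(\sigma_E)$.

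For example, take $\sigma_E=\proj{0}$ on a qubit and $R_{E,x}=\ketbra{0}{1}+\ketbra{1}{0}$. This gives a CQ operator in $\ran(\J_{\id_X\otimes\sigma_E})$ whose support is the whole space. The CQ structure only concerns the $X$ register and is irrelevant to the support on $E$.

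What does force the support condition is $R_{XE}\le\rho_{XE}$ together with $\supp(\rho_{XE})\subseteq\supp(\id_X\otimes\sigma_E)$. The operator $\rho_{XE}-R_{XE}$ is positive semidefinite and satisfies $\Pi^\perp(\rho_{XE}-R_{XE})\Pi^\perp=0$. Hence its off-diagonal blocks vanish as well, so $\Pi R_{XE}\Pi^\perp=\Pi\rho_{XE}\Pi^\perp=0$ with $\Pi\coloneqq\id-\Pi^\perp$, i.e.\ $R_{XE}=\Pi R_{XE}\Pi$. This is the paper's argument. It is available to you because your pinched optimiser still satisfies $R\le\rho$.

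Once this is repaired, your proof is essentially the paper's:
\begin{itemize}
\item take an optimal Hermitian $R$ from Theorem~\ref{thm:D2_smoothed_forms};
\item dephase it on $X$;
\item plug it into Lemma~\ref{lem:leftover};
\item obtain the second line from Lemma~\ref{lem:secondorder_equiv}.
\end{itemize}

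Your two variations are sound:
\begin{itemize}
\item Bounding the contribution of $\Delta=\rho_{XE}-R_{XE}\ge0$ directly by $\Tr\Delta$ replaces the paper's use of $R_E\le\rho_E$ plus data processing.
\item Noting that the pinching is an orthogonal projection for the Bures inner product replaces the appeal to Kadison's inequality.
\end{itemize}

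Your remarks on restricting to full-rank $\sigma_E$ and on integrality address points the paper glosses over. The $|\Z|-1$ slack does settle integrality. Choosing $|\Z|=\lceil 4\mu^2/\Qs{2}{\ve-\mu}{M}(\rho_{XE}\|\id_X\otimes\sigma_E)\rceil$ yields $(|\Z|-1)\,\Qs{2}{\ve-\mu}{M}(\rho_{XE}\|\id_X\otimes\sigma_E)<4\mu^2$. The uniform distribution over all functions $\X\to\Z$ is exactly \univ-universal for every $|\Z|$, so such a family always exists.
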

\end{boxed}
\begin{proof}
Recall from the variational form established in Theorem~\ref{thm:D2_smoothed_forms} that
\begin{equation}\begin{aligned}\label{eq:Q2_variational}
   \Qs{2}{\ve}{M} (\rho_{XE} \| \id_X \otimes \sigma_E) = \min \lset \norm{R_{XE}}{\Bures\id_X \otimes \sigma_E}^2 \bar R_{XE} = R^\dagger_{XE},\; \norm{\rho_{XE} - R_{XE}}{+} \leq \ve,\; R_{XE} \leq \rho_{XE} \rset,
\end{aligned}\end{equation}
where we used the assumption $\supp(\rho_{XE}) \subseteq \supp(\id_X \otimes \sigma_E)$ to observe that the optimal value is finite and hence the optimisation reduces to a minimum. The basic idea of our proof will be to take an optimal $R_{XE}$ in this optimisation and use it, instead of $\rho_{XE}$, in the leftover hash result of Lemma~\ref{lem:leftover}. Let us carefully consider what properties of $R_{XE}$ will be needed for this.

For any Hermitian $R_{XE}$ satisfying $R_{E} \leq \rho_{E}$ and any hash function $f : \X \to \Z$, one has
\begin{equation}\begin{aligned}
  \frac12 \norm{ \rho_{ZE}^f - \frac{\id_Z}{|\Z|} \otimes \rho_E }{1} &= \Tr \left( \rho_{ZE}^f - \frac{\id_Z}{|\Z|} \otimes \rho_E \right)_+\\
  &\leq \Tr \left( \rho_{ZE}^f - \frac{\id_Z}{|\Z|} \otimes R_E \right)_+\\
  &\leq \norm{ \rho_{ZE}^f - R_{ZE}^f }{+} + \norm{R_{ZE}^f - \frac{\id_Z}{|\Z|} \otimes R_E}{+}\\
  &\leq \norm{ \rho_{XE} - R_{XE} }{+} + \norm{R_{ZE}^f - \frac{\id_Z}{|\Z|} \otimes R_E}{+},
\end{aligned}\end{equation}
with the last line by data processing of the generalised trace distance. 
Assuming additionally that $R_{XE}$ is a classical--quantum operator satisfying $\supp(R_{XE}) \subseteq \supp(\id_X \otimes \sigma_E)$, we could invoke Lemma~\ref{lem:leftover} to give
\begin{equation}\begin{aligned}\label{eq:achiev_any_R}
  \EE_f \frac12 \norm{ \rho_{ZE}^f - \frac{\id_Z}{|\Z|} \otimes \rho_E }{1} \leq \| \rho_{XE} - R_{XE} \|_+ + \frac12 \sqrt{|\Z|-1}\, \norm{R_{XE}}{\Bures\id_X \otimes \sigma_E}.
\end{aligned}\end{equation}
If $R_{XE}$ were chosen as an optimiser of~\eqref{eq:Q2_variational}, this would conclude the proof.

We thus only need to argue that we can always choose an optimal $R_{XE}$ in~\eqref{eq:Q2_variational} that satisfies the conditions for the applicability of Lemma~\ref{lem:leftover}, namely, that it has a classical--quantum structure and that it obeys the support condition $\supp(R_{XE}) \subseteq \supp(\id_X \otimes \sigma_E)$.

To ensure the CQ structure of $R_{XE}$, it suffices to observe the data-processing inequality $\norm{\E(R)}{\Bures\sigma} \leq \norm{R}{\Bures\sigma}$ for any channel $\E$ such that $\E(\sigma)=\sigma$. This allows us to apply the dephasing channel $\Delta_X (\cdot) = \sum_{x} \proj{x} \cdot \proj{x}$ to any feasible $R_{XE}$ without sacrificing the feasible optimal value, since both $\rho_{XE}$ and $\id_X \otimes \sigma_E$ are invariant under such dephasing. One way to see that this monotonicity holds also for Hermitian $R$ is to use the fact that the inner product $\<\cdot,\cdot\>_{\sigma}$ in the definition of $\norm{\cdot}{\Bures\sigma}$ defines a monotone operator metric~\cite{petz_1996,lesniewski_1999}. More explicitly, following Lemma~\ref{lem:D2_primaldual} we obtain the variational form
\begin{equation}\begin{aligned}
  \norm{R}{\Bures\sigma}^2 = \sup_{X=X^\dagger} \frac{\Tr(XR)^2}{\Tr X^2 \sigma}
\end{aligned}\end{equation}
valid for all Hermitian $R$, and an application of Kadison's inequality as in Lemma~\ref{lem:D2_dataproc} shows that monotonicity in fact holds under any positive, trace non-increasing  $\E$ that satisfies $\E(\sigma) \leq \sigma$.

To understand the support constraints of $R_{XE}$, let us consider the general case of computing $\Qs{2}{\ve}{M}(\rho\|\sigma)$ for some $\sigma\geq0$. Decompose the underlying Hilbert space $\mathcal{H}$ into $\mathcal{H} = \supp(\sigma) \oplus \operatorname{ker}(\sigma)$. The additional assumption that $\supp(\rho)\subseteq\supp(\sigma)$ lets us write $\rho = \pig(\begin{smallmatrix}\rho_{11} & 0 \\ 0 & 0\end{smallmatrix}\pig)$ in this decomposition, implying also that $\Tr \Pi_\sigma^\perp \rho = 0$ and hence $\Qs{2}{\ve}{M}(\rho\|\sigma) < \infty$ for all $\ve \in [0,1)$. This ensures that an optimal $R$ for the variational program of Theorem~\ref{thm:D2_smoothed_forms} must satisfy $R \in \ran(\J_{\sigma})$, imposing in turn that $R = \pig(\begin{smallmatrix}R_{11} & R_{12} \\ R_{21} & 0\end{smallmatrix}\pig)$. But then the constraint $R \leq \rho$ implies that $\pig(\begin{smallmatrix}\rho_{11} - R_{11} & \,- R_{12} \\ -R_{21} & 0 \end{smallmatrix}\pig) \geq 0$, which is only possible if $R_{12} = R_{21} = 0$, implying that $R$ must be fully supported on $\supp(\sigma)$.


Altogether, we can then choose an optimal $R_{XE}$ for the variational form of Eq.~\eqref{eq:Q2_variational} and apply Eq.~\eqref{eq:achiev_any_R} to give the first inequality in~\eqref{eq:achiev_optimised_R}; the second then follows by Lemma~\ref{lem:secondorder_equiv}. To see~\eqref{eq:achiev_Hmin}, observe that, for any $\sigma_E$, choosing $|\Z|$ so that $\ve - \mu + \frac12 \sqrt{|\Z|\, \Qs{2}{\ve-\mu}{M} (\rho_{XE}\|\id_X \otimes \sigma_E)} = \ve$ ensures that $|\Z|$ is an achievable size of extracted randomness; optimising over $\sigma_E$ yields the claim. 
\end{proof}

The result of Theorem~\ref{thm:privacy_amp} implies all prior iterations of the leftover hash lemma that employed conventionally smoothed divergences, typically stated in terms of $\Hmin{\ve}{P}(X|E)_\rho$~\cite{tomamichel_2011} or an improved variant known as partially smoothed min-entropy~\cite{anshu_2020}. This follows since
\begin{equation}\begin{aligned}\label{eq:relax_D2}
  \Ds{2}{\ve}{M}(\rho\|\sigma) &\leq \widetilde{D}_2\s{\ve}{T} (\rho \| \sigma) \leq \widetilde{D}_2\s{\ve}{P} (\rho \| \sigma),\\
  \Dmax{\ve}{M}(\rho\|\sigma) &\leq \Dmax{\ve}{T} (\rho \| \sigma) \leq \Dmax{\ve}{P} (\rho \| \sigma).
\end{aligned}\end{equation}
Here, the leftmost inequalities stem from the fact that any (sub-normalised) state $\rho'$ feasible for the conventionally smoothed $\wt{D}^{\ve,T}_{\alpha}$ gives a feasible solution for the measured variant. Specifically, $\norm{\M(\rho) - \M(\rho')}{+} \leq \norm{\rho - \rho'}{+} \leq \ve$ for any measurement channel $\M$, and hence $\Ds{\alpha}{\ve}{T}(\M(\rho)\|\M(\sigma)) \leq D_\alpha(\M(\rho')\|\M(\sigma)) \leq \wt{D}_\alpha(\rho'\|\sigma)$ which gives the claim upon maximising over $\M$. 
The rightmost inequalities then follow by one of the Fuchs--van de Graaf inequalities~\cite{fuchs_1999,tomamichel_2016}.

This tighter achievability result suggests that it is our measured smooth approach that defines the most appropriate notion of smoothing to be used in the analysis of privacy amplification. In particular, the natural role of smooth min-entropy is taken here by $\Hmin{\ve}{M}(X|E)_\rho$. Classically, this reduces to the standard smooth min-entropy $\Hmin{\ve}{T}(X|Y)_p$, but in the quantum case it constitutes a disparate type of smoothing that in general differs from previously considered definitions.

We discuss in Appendix~\ref{app:guess_prob} how the quantity $\Hmin{\ve}{M}$ can also be considered to be a generalisation of the notion of guessing probability --- an important operational role of the (unsmoothed) min-entropy $H_{\min}^{\smash\uparrow}$~\cite{konig_2009} --- thus further strengthening the interpretation of $\Hmin{\ve}{M}$ as an operational quantifier of randomness. 

A property of the measured smooth collision entropy that found use in the proof of Theorem~\ref{thm:privacy_amp} is the fact that the smoothing is performed over operators $R_{XE} \leq \rho_{XE}$, which in particular satisfy that $R_E \leq \rho_E$. Such a property is not always satisfied in conventional smoothing variants, which motivated the introduction of the concept of partial smoothing~\cite{hayashi_2016-2,anshu_2020} that proposed explicitly enforcing this assumption on the marginal system in addition to standard smoothing constraints. We see that this is not needed in our approach to smoothing, mirroring how partial smoothing is superfluous in the classical case~\cite{yang_2019,abdelhadi_2020}.

We note also that both $\Hmin{\ve}{M}(X|E)_\rho$ and the tighter bound in terms of $\Hs{2}{\ve}{M}(X|E)_\rho$ can be computed as semidefinite programs; we discuss this in more detail in Appendix~\ref{app:sdp} for completeness.

To show an even stronger improvement over previous results, we can refine the simple relaxation of~\eqref{eq:relax_D2} to give a tighter bound that connects the achievability result of Theorem~\ref{thm:privacy_amp} with purified distance smooth min-entropy.
\begin{boxed}
\begin{corollary}\label{cor:achiev_purified}
For all $\ve \in (0,1)$ and any \univ-universal family of hash functions, it holds that
\begin{equation}\begin{aligned}
   \EE_f \frac12 \norm{ \rho_{ZE}^f - \frac{\id_Z}{|\Z|} \otimes \rho_E }{1} \leq \ve + \left[ \frac{|\Z|-1}{4} \, \exp\left(-\Hmin{\sqrt\ve}{P} (X|E)_\rho \right) \right]^{1/3}.
\end{aligned}\end{equation}
Accordingly, for all $\ve \in (0,1)$ and all $\mu \in (0,\ve)$, there exists a randomness extraction protocol such that
\begin{equation}\begin{aligned}\label{eq:achiev_HminP}
  \ell_\ve(\rho_{XE}) &\geq \Hmin{\sqrt{\ve - \mu}}{P} (X | E)_\rho - \log\frac{1}{4\mu^3}. 
  \end{aligned}\end{equation}
\end{corollary}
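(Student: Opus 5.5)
The plan is to deduce the corollary from the first inequality of Theorem~\ref{thm:privacy_amp}, by proving a one-shot comparison between $\Qs{2}{\cdot}{M}$ and purified-distance smoothing:
\begin{equation*}
 \Qs{2}{\ve+\mu}{M}(\rho\|\sigma) \;\lesssim\; \frac{\lambda}{\mu} \qquad \text{whenever } \exists\, \rho' \geq 0,\ \Tr\rho'\leq 1,\ P(\rho,\rho')\leq\sqrt\ve,\ \rho'\leq\lambda\sigma .
\end{equation*}
Here $\sigma = \id_X\otimes\sigma_E$, and $\rho'$ and $\lambda = \exp \Dmax{\sqrt\ve}{P}$ come from the purified smooth min-entropy. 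The $\lambda/\mu$ scaling is what is needed. Plugging such a bound into Theorem~\ref{thm:privacy_amp} (with $\ve+\mu$ in place of $\ve$) gives a total error of at most $\ve + \mu + \frac12\sqrt{(|\Z|-1)\lambda/\mu}$. Choosing $\mu^3 = \frac{|\Z|-1}{4}\lambda$ makes both terms equal to $\big[\frac{|\Z|-1}{4}\lambda\big]^{1/3}$, up to the exact constant, which I would fix at the end. Minimising $\lambda$ over $\sigma_E$ turns $\lambda$ into $\exp(-\Hmin{\sqrt\ve}{P}(X|E)_\rho)$. The bound on $\ell_\ve$ then follows exactly as in Theorem~\ref{thm:privacy_amp}: replace $\ve$ by $\ve-\mu$ and pick $|\Z|$ so that the right-hand side equals $\ve$, which produces the $\log\frac{1}{4\mu^3}$ term.

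To prove the comparison I would use the defining property of measured smooth divergences: it suffices to prove it for classical $p=\M(\rho)$, $q=\M(\sigma)$, $p'=\M(\rho')$ and then take the supremum over $\M$. Both hypotheses survive measurement, since fidelity is monotone, so $F(p,p')\geq 1-\ve$, and $p'\leq\lambda q$ is preserved by any positive map. For the classical step I would use the dual form \eqref{eq:D2_variational_forms_dual}: for every $0\leq W\leq\id$ I need to bound $\big(\Tr Wp-\ve-\mu\big)_+^2 / \Tr W^2 q$.

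The bound on $\Tr Wp$ comes from the triangle inequality in $\ell_2$ applied to $\sqrt{W}\sqrt{p} = \sqrt{W}\sqrt{p'} + \sqrt{W}(\sqrt p-\sqrt{p'})$. Together with $\sum_x(\sqrt{p(x)}-\sqrt{p'(x)})^2 \leq 2(1-\sqrt{F})\lesssim \ve$, this gives
\begin{equation*}
 \Tr Wp \leq \Tr Wp' + 2\sqrt{\ve\,\Tr Wp'} + \ve .
\end{equation*}
The mixed term is then absorbed into $\mu$ by maximising $u + 2\sqrt{\ve u} - \mu$ against the constraint linking $u = \Tr Wp'$ to $\Tr W^2 q$. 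For that constraint, $p'\leq\lambda q$ gives $\Tr W^2 q\geq \lambda^{-1}\Tr W^2 p'$.

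The main obstacle is this last step, and specifically getting the scaling $\lambda/\mu$ rather than $\lambda/\mu^2$. If I use Cauchy--Schwarz in the form $\Tr W^2 p'\geq(\Tr Wp')^2$, the optimisation only yields $\lambda(1+\ve/\mu)^2\sim\lambda/\mu^2$, which would give exponent $1/4$. To reach $1/3$ I would instead exploit $W\leq\id$, i.e.\ $W^2\leq W$ in the right direction. Alternatively I would build an explicit Hermitian feasible $R$ in the primal \eqref{eq:D2_variational_forms_primal} of the form $R = p - (p-\gamma q)_+$ with $\gamma\approx\lambda/\mu$. I would then bound its collision norm through Lemma~\ref{lem:secondorder_equiv} as $\Qs{2}{\ve'}{M}\leq(1-\ve')\gamma$, and the discarded mass through the fidelity, by a sharper estimate of $\Tr(p-\gamma q)_+$ on the set $S=\{p>\gamma q\}$ that uses $p'(S)\leq\lambda q(S)<\lambda p(S)/\gamma$ linearly rather than through a square root. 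I expect the constant-chasing at this point to determine the factor $\frac14$ inside the cube root.
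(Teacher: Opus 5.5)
Your skeleton matches the paper's: apply Theorem~\ref{thm:privacy_amp} at smoothing level $\ve+\delta$, compare measured smoothing with purified smoothing at a price of order $1/\delta$, and optimise $\delta$. But that comparison is the whole content of the corollary, and you have not established it. The paper imports it from \cite[Lemma~10]{regula_2025}, $D_{\max}^{\ve+\delta,\MM}(\rho\|\sigma)\le D_{\max}^{\sqrt\ve,P}(\rho\|\sigma)+\log\frac{(\ve+\delta)(1-\ve-\delta)}{\delta}$, and combines it with $Q_2^{\ve',\MM}\le(1-\ve')\,Q_{\max}^{\ve',\MM}$ from Lemma~\ref{lem:secondorder_equiv}.

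Your route (a) cannot give this. First, $W^2\le W$ bounds $\Tr W^2q$ from above, not below. Second, the $\ell_2$ triangle inequality only yields an additive $2(1-\sqrt{1-\ve})>\ve$, so even the leading $\ve$ is lost.

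Route (b) is the right route, but the decisive step is to keep the subtracted term rather than bound $p(S)$. Since $\rho'\le\lambda\sigma$ gives $\gamma\sigma\ge c\rho'$ with $c=\gamma/\lambda$, one has $\Tr(\rho-\gamma\sigma)_+\le\Tr(\rho-c\rho')_+=a-cb$. Here $a,b$ are the weights of $\rho,\rho'$ on the positive part of $\rho-c\rho'$. Monotonicity of the fidelity under this binary measurement gives $\sqrt{ab}+\sqrt{(1-a)(1-b)}\ge\sqrt{1-\ve}$. Under this constraint (set $a=\sin^2\alpha$, $b=\sin^2\beta$, $|\alpha-\beta|\le\theta$, $\sin^2\theta=\ve$), the maximum of $a-cb$ is at most $\frac12\big[\sqrt{(c-1)^2+4c\ve}-(c-1)\big]$. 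This equals $\ve+\delta$ exactly at $c=(\ve+\delta)(1-\ve-\delta)/\delta$.

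If instead you bound $\Tr(p-\gamma q)_+\le p(S)$ and use $p'(S)<\lambda p(S)/\gamma$ only to control $p(S)$, as your sketch reads, the argument fails. The fidelity constraint then allows $p(S)-\ve\approx 2\ve\sqrt{(1-\ve)\lambda/\gamma}$. That forces $\gamma$ of order $\lambda\ve^2/\delta^2$, which is the bad scaling of route (a) again.

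The final optimisation is also off. The correct comparison gives $Q_2^{\ve+\delta,\MM}(\rho_{XE}\|\id_X\otimes\sigma_E)\le\lambda(\ve+\delta)(1-\ve-\delta)^2/\delta$. This scales like $\lambda\ve/\delta$ rather than $\lambda/\delta$, and the stated constant depends on it.

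Write $k\coloneqq\frac{|\Z|-1}{4}\lambda$. A bare bound $\lambda/\mu$ with your choice $\mu=k^{1/3}$ gives $\ve+2k^{1/3}$. Even the optimal $\mu$ only gives $\ve+3\cdot 2^{-2/3}k^{1/3}\approx\ve+1.89\,k^{1/3}$. The paper instead takes $\delta=\ve k^{1/3}/(1-k^{1/3})$, of order $\ve k^{1/3}$ rather than $k^{1/3}$. For this choice, $\ve+\delta+k^{1/3}(1-\ve-\delta)=\ve+k^{1/3}$ exactly.

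The factor $\frac14$ is just the square of the $\frac12$ in the hash lemma, not something that comes out of constant-chasing. Your passage from the first inequality to the bound on $\ell_\ve$ is fine.
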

\end{boxed}
\begin{proof}
Lemma~10 of~\cite{regula_2025} tells us that, for all $\delta \in (0,1-\ve)$,
\begin{equation}\begin{aligned}
  \Dmax{\ve+\delta}{M} (\rho \| \sigma) \leq \Dmax{\sqrt\ve}{P}(\rho\|\sigma) + \log \frac{(\ve+\delta)(1-\ve-\delta)}{\delta}.
\end{aligned}\end{equation}
Using this, we obtain from Theorem~\ref{thm:privacy_amp} that
\begin{equation}\begin{aligned}
  \EE_f \frac12 \norm{ \rho_{ZE}^f - \frac{\id_Z}{|\Z|} \otimes \rho_E }{1} \leq  \ve + \delta + \sqrt{\frac{|\Z|-1}{4} \, \frac{(\ve+\delta)(1-\ve-\delta)^2}{\delta} \,\Qs{\max}{\sqrt\ve}{P}(\rho_{XE} \| \id_X \otimes \sigma_E)}
\end{aligned}\end{equation}
holds for any such $\ve$ and $\delta$. 
Denoting $k \coloneqq \frac{|\Z|-1}{4} \Qs{\max}{\sqrt\ve}{P}(\rho_{XE} \| \id_X \otimes \sigma_E)$, the choice of $\delta = \ve \frac{k^{1/3}}{1- k^{1/3}}$ gives
\begin{equation}\begin{aligned}
  \EE_f \frac12 \norm{ \rho_{ZE}^f - \frac{\id_Z}{|\Z|} \otimes \rho_E }{1} \leq \ve + \left[ \frac{|\Z|-1}{4} \, \,\Qs{\max}{\sqrt\ve}{P}(\rho_{XE} \| \id_X \otimes \sigma_E) \right]^{1/3},
\end{aligned}\end{equation}
which yields the statement upon minimisation over $\sigma_E$.
\end{proof}

We will see in Section~\ref{sec:converse} that a closely matching converse bound to our achievability results can be obtained, and in particular that the scaling of the bounds obtained here is asymptotically optimal.

\begin{remark}
In the context of security proofs in quantum key distribution, the leftover hash lemma is often applied to subnormalised states $\rho_{XE}$, i.e.\ ones satisfying $\Tr \rho_{XE} \leq 1$, due to the probabilistic nature of the involved protocols (see e.g.~\cite{tomamichel_2017-1,tupkary_2025} for discussions). All statements of this section, and in particular the achievability results in Theorem~\ref{thm:privacy_amp} and Corollary~\ref{cor:achiev_purified}, remain valid in such cases. This is particularly easy to see in the case of the smoothing based on trace distance that was used in Theorem~\ref{thm:privacy_amp}, as our definitions and derivations there do not depend on the normalisation of the involved states whatsoever. In the proof of Corollary~\ref{cor:achiev_purified}, we additionally relied on Lemma~10 of~\cite{regula_2025}, which is stated there only for normalised states; it can however be verified that the proof applies in the same way to subnormalised operators.%
\footnote{Explicitly, the generalised fidelity and the purified distance between subnormalised states $\rho$ and $\rho'$ are defined as the corresponding measures evaluated between the normalised extensions $\hat\rho \coloneqq \rho \oplus [1\!-\!\Tr\rho]$ and $\hat\rho' \coloneqq \rho' \oplus [1\!-\!\Tr \rho']$~\cite{tomamichel_2016}. The steps of the proof of~\cite[Lemma~10]{regula_2025} then establish that $F(\hat\rho,\hat\rho')\geq 1-\ve$ implies that $\Tr\!\smash{\big(}\hat\rho - \frac{(\ve+\delta)(1-\ve-\delta)}{\delta} \hat\rho'\smash{\big)}_+ \leq \ve + \delta$. Due to the block-diagonal structure of the operators, $\Tr (\hat\rho - k \hat\rho')_+ = \Tr(\rho - k \rho')_+ + \left(1 - \Tr \rho - k (1 - \Tr \rho')\right)_+ \geq \Tr(\rho - k \rho')_+$, and hence also $\Tr\!\smash{\big(}\rho - \frac{(\ve+\delta)(1-\ve-\delta)}{\delta} \rho'\smash{\big)}_+ \leq \ve + \delta$. The rest of the proof then applies verbatim.}%
\end{remark}

\subsection{From smoothing to R\'enyi divergences and error exponents}\label{sec:renyi_bounds}

The analysis of the many-copy and asymptotic properties of quantum privacy amplification often relies on a reduction to the i.i.d.\ case~\cite{renner_2005,dupuis_2020,portmann_2022,arqand_2025}, where one is provided a large supply of copies of a given quantum state, $\rho_{XE}^{\otimes n}$, and studies the rate at which randomness can be extracted, i.e.\ the best $R$ such that $\ell_{\ve_n}(\rho_{XE}^{\otimes n}) \sim \exp(nR)$. The highest such rate that can be achieved with arbitarily small error $\ve_n$ equals the conditional entropy $H(X|E)_\rho$~\cite{renner_2005,renner_2005-1}, which, however, is only achievable when the error $\ve_n$ is constant or decays sufficiently slowly. 
A more detailed understanding of asymptotic privacy amplification is based on the study of the trade-offs between achievable rates and the error exponents, i.e.\ the rates of exponential decrease of the error $\ve_n \sim \exp(-n E)$. As is typical in large deviation analysis, R\'enyi entropies play a fundamental role here.

In the classical case, the tightest known general bounds on the error exponent of privacy amplification were obtained by Hayashi~\cite{hayashi_2013,hayashi_2016-2}. This was done through an approach based on smooth entropies, showing their usefulness also in this regime. However, despite some results in this direction~\cite{hayashi_2014,hayashi_2015}, previous works were not able to extend these techniques to give matching results for quantum side information. Dupuis~\cite{dupuis_2023} later used a different approach based on interpolation of Schatten norms to give the improved achievability result
\begin{equation}\begin{aligned}\label{eq:fred}
    \EE_f \frac12 \norm{ \rho_{ZE}^f - \frac{\id_Z}{|\Z|} \otimes \rho_E }{1} \leq 4^{- \frac{\alpha-1}{\alpha}} \exp\left( - \frac{\alpha-1}{\alpha} \left( \wt{H}_{\alpha}^\uparrow (X|E)_\rho - \log |\Z|\right) \right)
\end{aligned}\end{equation}
for all $\alpha \in (1,2]$. 
By the additivity of sandwiched R\'enyi conditional entropies~\cite{beigi_2013}, the bound constrains in particular the asymptotic error exponent, while its one-shot character means that it can be readily applied also in the finite-copy regime. The bound is the tightest known such result for both classical and quantum states.

We will show that this result is implied by our tightened leftover hash lemma, and in fact the corresponding one-shot bound can be expressed in terms of measured R\'enyi entropies rather than sandwiched ones.

\begin{boxed}
\begin{theorem}\label{thm:renyi_achiev}
For all $\alpha \in (1,2]$, under any \univ-universal family of hash functions  $f : \X \to \Z$ it holds that
\begin{equation}\begin{aligned}\label{eq:renyi_achiev_oneshot}
    \EE_f \frac12 \norm{ \rho_{ZE}^f - \frac{\id_Z}{|\Z|} \otimes \rho_E }{1} &\leq 4^{- \frac{\alpha-1}{\alpha}}\, \exp\left(- 
    \frac{\alpha-1}{\alpha} \left( H_\alpha^{\MM,\uparrow} (X|E)_\rho - \log |\Z| \right) \right)\\
    &\leq \exp\left(- 
    \frac{\alpha-1}{\alpha} \left( H_\alpha^{\MM,\uparrow} (X|E)_\rho - \log |\Z| \right) \right).
\end{aligned}\end{equation}

Consequently, for any given rate $R < H(X|E)_\rho$, there exists a sequence of randomness extraction protocols such that $\ell_{\ve_n}(\rho_{XE}^{\otimes n}) = \exp(nR)$ for all $n$ and
\begin{equation}\begin{aligned}\label{eq:renyi_achiev_asymp}
  \liminf_{n\to\infty} - \frac1n \log(\ve_n) \geq \sup_{\alpha \in (1,2]} \frac{\alpha-1}{\alpha} \left( \wt{H}_{\alpha}^\uparrow (X|E)_\rho - R \right).
\end{aligned}\end{equation}
\end{theorem}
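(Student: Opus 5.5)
The plan is to combine Theorem~\ref{thm:privacy_amp} with the upper bound on $\Ds{2}{\ve}{M}$ in terms of measured R\'enyi divergences (Lemma~\ref{lem:D2_bound_with_h2}), and then optimise over the smoothing parameter $\ve$. Fix $\alpha\in(1,2]$ and any state $\sigma_E$, and write $Q \coloneqq Q_\alpha^{\MM}(\rho_{XE}\|\id_X\otimes\sigma_E)$ and $\beta\coloneqq\frac{2-\alpha}{\alpha-1}$. For every $\ve\in(0,1)$, the first inequality of Theorem~\ref{thm:privacy_amp} bounds the error by $\ve + \frac12\sqrt{|\Z|\,\Qs{2}{\ve}{M}}$. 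Lemma~\ref{lem:D2_bound_with_h2} gives
\begin{equation*}
\Qs{2}{\ve}{M}\leq Q^{\frac1{\alpha-1}}\,\ve^{-\beta}\,c_\alpha,
\end{equation*}
where $c_\alpha=\exp[-\frac{\alpha}{\alpha-1}h(\frac{2-\alpha}{\alpha})]$. The error is therefore at most $\ve + \frac12\sqrt{|\Z| c_\alpha}\,Q^{\frac1{2(\alpha-1)}}\ve^{-\beta/2}$.

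Next we minimise $g(\ve)=\ve + K\ve^{-\beta/2}$ over $\ve$, with $K=\frac12\sqrt{|\Z|c_\alpha}\,Q^{\frac{1}{2(\alpha-1)}}$. Since $\beta/2+1=\frac{\alpha}{2(\alpha-1)}$, the stationary point gives $\ve^{*}\propto K^{\frac{2(\alpha-1)}{\alpha}}$. The resulting value is a constant times $K^{\frac{2(\alpha-1)}{\alpha}} = (\tfrac14|\Z|c_\alpha)^{\frac{\alpha-1}{\alpha}}\,Q^{\frac1\alpha}$.

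Then we check the constants. $Q^{1/\alpha}|\Z|^{\frac{\alpha-1}{\alpha}} = \exp\!\big(-\frac{\alpha-1}{\alpha}(-D_\alpha^{\MM}-\log|\Z|)\big)$, which gives the desired exponent after optimising over $\sigma_E$ (the map is monotone in $D_\alpha^{\MM}$, so taking the infimum over $\sigma_E$ yields $H_\alpha^{\MM,\uparrow}$). For the prefactor, the value of $\min_\ve g$ is $(1+\frac\beta2)(\frac{\beta}{2})^{-\frac{\beta}{\beta+2}}K^{\frac{2}{\beta+2}}$. This combines with $c_\alpha^{\frac{\alpha-1}{\alpha}}=\exp[-h(\frac{2-\alpha}{\alpha})]=(\frac{2-\alpha}{\alpha})^{\frac{2-\alpha}{\alpha}}(\frac{2\alpha-2}{\alpha})^{\frac{2\alpha-2}{\alpha}}$. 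I expect these factors to cancel exactly, leaving $4^{-\frac{\alpha-1}{\alpha}}$; this bookkeeping is the main obstacle, though it is only algebra. Indeed $1+\beta/2=\frac{\alpha}{2(\alpha-1)}$ and $\beta/2=\frac{2-\alpha}{2(\alpha-1)}$, so the powers of $(2-\alpha)$, $\alpha$ and $(\alpha-1)$ should cancel.

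Two degenerate cases need separate handling. If the optimal $\ve^*\geq1$, then the right-hand side exceeds $1$ and the bound is trivial, since the trace distance is at most $1$. For $\alpha=2$ we have $\beta=0$, and we take $\ve\to0$ directly. The second inequality in~\eqref{eq:renyi_achiev_oneshot} is immediate because $4^{-\frac{\alpha-1}{\alpha}}\leq1$.

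For the asymptotic statement, apply the one-shot bound to $\rho^{\otimes n}$ with $|\Z|=\exp(nR)$. The additivity of $H_\alpha^{\MM,\uparrow}$ is not available, so instead we use the known regularisation $\lim_n\frac1n D_\alpha^{\MM}(\rho^{\otimes n}\|\sigma^{\otimes n})=\wt D_\alpha(\rho\|\sigma)$ (via pinching). We restrict $\sigma_{E^n}$ to the universal symmetric state or to $\sigma^{\otimes n}$ with $\sigma$ the sandwiched optimiser. This gives $\liminf_n\frac1n H_\alpha^{\MM,\uparrow}(X^n|E^n)\geq\wt H_\alpha^\uparrow(X|E)$, and hence the exponent $\frac{\alpha-1}{\alpha}(\wt H_\alpha^\uparrow-R)$ for each $\alpha$. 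Taking the supremum over $\alpha$ (the bound holds for every fixed $\alpha$) gives~\eqref{eq:renyi_achiev_asymp}.
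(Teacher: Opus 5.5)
Your proposal is correct and follows the paper's proof. You combine Theorem~\ref{thm:privacy_amp} with Lemma~\ref{lem:D2_bound_with_h2} and minimise over $\ve$. Your stationary point $\ve^*$ coincides exactly with the paper's choice $\ve = \nu\frac{2-\alpha}{\alpha}\exp[\frac{\alpha-1}{\alpha}(\log|\Z| + D_\alpha^{\MM})]$ at $\nu = 4^{-\frac{\alpha-1}{\alpha}}$, and the constants do cancel to $4^{-\frac{\alpha-1}{\alpha}}$ as you anticipate.

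For the asymptotic part the paper likewise invokes the regularisation of measured to sandwiched R\'enyi conditional entropies. The direction you actually need already follows from $D_\alpha^{\MM}\leq\wt D_\alpha$ and additivity of $\wt D_\alpha$ with the product choice $\sigma^{\otimes n}$, so pinching is not required.
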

\end{boxed}
Compared to the result of Dupuis~\cite{dupuis_2023} recalled in Eq.~\eqref{eq:fred}, our one-shot bound in~\eqref{eq:renyi_achiev_oneshot} is tighter at the level of the entropic quantities used, since $D_\alpha^{\MM}(\rho\|\sigma) \leq \wt{D}_\alpha(\rho\|\sigma)$ with equality for $\alpha \in (1,2]$ if and only if $\rho$ and $\sigma$ commute~\cite{berta_2017-1}. Our proof here avoids complex interpolation techniques and instead directly follows from our smooth entropy approach.

\begin{proof}[Proof of Theorem~\ref{thm:renyi_achiev}]
By the smooth leftover hash lemma of Theorem~\ref{thm:privacy_amp}, we have
\begin{equation}\begin{aligned}
  \EE_f \frac12 \norm{ \rho_{ZE}^f - \frac{\id_Z}{|\Z|} \otimes \rho_E }{1} \leq \ve + \frac12 \sqrt{|\Z| \, \Qs{2}{\ve}{M}(\rho_{XE} \| \id_X \otimes \sigma_E)}
\end{aligned}\end{equation}
for any choice of $\sigma_E$ with $\supp(\rho_{XE}) \subseteq \supp(\id_X \otimes \sigma_E)$ and any $\ve \in [0,1]$. 
Applying Lemma~\ref{lem:D2_bound_with_h2} then leads to
\begin{align}
    &\EE_f \frac12 \norm{ \rho_{ZE}^f - \frac{\id_Z}{|\Z|} \otimes \rho_E }{1} \nonumber\\
    &\leq \ve + \frac{\alpha-1}{\alpha}  \sqrt{ |\Z| \, Q_\alpha^{\MM} (\rho_{XE} \| \id_X \otimes \sigma_E) \, \ve^{- \frac{2-\alpha}{\alpha-1}} \, \left(\tfrac{2-\alpha}{\alpha}\right)^{\frac{2-\alpha}{\alpha-1}}  }\\
    &= \ve + \frac12 \left(1 - \frac{2-\alpha}{\alpha}\right) \exp\left[ \frac12 \log |\Z| + \frac12 D_\alpha^{\MM} (\rho_{XE} \| \id_X \otimes \sigma_E) + \frac{2-\alpha}{2(\alpha-1)} \left( \log \frac{1}{\ve} + \log \frac{2-\alpha}{\alpha}\right) \right].\nonumber
\end{align}
Notice now that, for any $\nu$, the choice of
\begin{equation}\begin{aligned}
  \ve = \nu\, \frac{2-\alpha}{\alpha} \exp\left[ \frac{\alpha-1}{\alpha} \left( \log |\Z| + D_\alpha^{\MM}(\rho_{XE} \| \id_X \otimes \sigma_E) \right) \right]
\end{aligned}\end{equation}
(or the trivial choice $\ve = 1$ in case the above value exceeds 1) 
leads to
\begin{equation}\begin{aligned}\label{eq:renyi_bound_nu}
  &\EE_f \frac12 \norm{ \rho_{ZE}^f - \frac{\id_Z}{|\Z|} \otimes \rho_E }{1} \\
  &\leq \left( \nu\, \frac{2-\alpha}{\alpha} + \left(1 - \frac{2-\alpha}{\alpha}\right) \nu^{- \frac{2-\alpha}{2(\alpha-1)}} \, 2^{-1} \right)  \,\exp\left[ \frac{\alpha-1}{\alpha} \left( \log |\Z| + D_\alpha^{\MM}(\rho_{XE} \| \id_X \otimes \sigma_E) \right) \right].
\end{aligned}\end{equation}
Pick then
\begin{equation}\begin{aligned}
  \nu = 2^{\frac{2-\alpha}\alpha - 1} = 2^{-2 \frac{\alpha-1}\alpha}
\end{aligned}\end{equation}
to minimise the prefactor in~\eqref{eq:renyi_bound_nu}. 
Now optimise the above over the choice of $\sigma_E$, and enjoy.

The asymptotic statement in Eq.~\eqref{eq:renyi_achiev_asymp} follows by choosing $|\Z| = \exp(nR)$ and using the fact that the measured R\'enyi conditional entropies regularise to the sandwiched R\'enyi ones~\cite{hayashi_2016-1}, that is,
\begin{equation}\begin{aligned}
  \lim_{n\to\infty} \frac1n H_\alpha^{\MM,\uparrow} (X^n|E^n)_{\rho^{\otimes n}} = \wt{H}^\uparrow_{\alpha}(X|E)_\rho
\end{aligned}\end{equation}
for all $\alpha \in [1/2,\infty]$.
\end{proof}

\begin{remark}
In the proof of Theorem~\ref{thm:renyi_achiev}, if we were to apply the bound $ \Ds{2}{\ve}{M}(\rho\|\sigma) \leq D_\alpha^{\MM} (\rho\|\sigma) + \frac{2-\alpha}{\alpha-1} \log \frac{1}{\ve}$ from Lemma~\ref{lem:D2_upper_exponent_bound} instead of the tighter Lemma~\ref{lem:D2_bound_with_h2}, the resulting asymptotic bound would be the same, the only difference being that the prefactor $4^{- \frac{\alpha-1}{\alpha}} \leq 1$ would be replaced by $\frac32$. In the classical case, this would recover exactly the bound shown by Hayashi~\cite{hayashi_2013}.
\end{remark}

\subsection{Fully quantum extension to decoupling}\label{sec:decoupling}

The mechanism behind privacy amplification can be generalised to the case where, instead of a classical--quantum state, one aims to coherently extract randomness from a bipartite quantum state $\rho_{AE}$. The corresponding task is known as \emph{decoupling}, the goal of which is to find a bipartition $A=A_1 A_2$ such that the subsystem $A_1$ is close to maximally mixed and independent of the reference system $E$. Decoupling forms an important primitive underlying problems such as quantum state merging and coding theorems in quantum Shannon theory~\cite{horodecki_2005-1,abeyesinghe_2009,dupuis_2010}, and its one-shot formulations have formed the foundation of the finite-blocklength study of many tasks in quantum information~\cite{berta_2009,berta_2011,dupuis_2010,dupuis_2014}.

Whereas the leftover hash lemma for privacy amplification randomises a classical register using a 2-universal hash family, decoupling randomises a quantum register using a random ensemble of unitaries whose second moment match those of the Haar measure, i.e.\ a unitary $2$-design. 
Formally, we say that a probability distribution $\nu$ on the set of all unitary operators of a system $A$ is a \emph{unitary $2$-design} if, for every operator $M$ on $A^{\otimes 2}$,
\begin{equation}
  \EE_{U_A\sim \nu}\!\left[ U_A^{\otimes 2} M (U_A^\dagger)^{\otimes 2} \right]
  \;=\;
  \EE_{U_A\sim \mu_{\mathrm{Haar}}}\!\left[ U_A^{\otimes 2} M (U_A^\dagger)^{\otimes 2} \right],
\end{equation}
where $\mu_{\mathrm{Haar}}$ denotes the Haar measure on $A$, and we will henceforth omit $\nu$ from the notation. 
The analyses of randomness extraction and decoupling both rely on the same second-moment principle that turns trace distance bounds into collision-type quantities. Our contribution here will be to generalise the approach we used to improve on the leftover hash lemma (Lemma~\ref{lem:leftover} and Theorem~\ref{thm:privacy_amp}) and show that, by working with the Bures norm $\norm{\cdot}{\Bures\sigma}$ rather than standard quantum collision entropies, we will obtain a result in terms of the measured R\'enyi divergence of order~2. Applying our smooth entropy machinery will then give a strict improvement over prior formulations of one-shot decoupling with smooth entropies~\cite{dupuis_2014,shen_2023}, recovering both the error exponent shown in~\cite{dupuis_2023} and the second-order achievability result of~\cite{shen_2023}.

Below we will use $\U_A$ to denote the unitary channel corresponding to a given unitary $U_A$, acting as $\U_A(X_{AE}) = (U_A \otimes \id_E) X_{AE} (U^\dagger_A \otimes \id_E)$.

\begin{boxed}[filled]
\begin{theorem}[Tightened one-shot decoupling]\label{thm:decoupling_smollision}
Let $A = A_1 A_2$ be a bipartite quantum system of dimension $|A| = |A_1| |A_2|$.
For any Hermitian operator $R_{AE}$, any state $\sigma_E$ such that $\supp(R_{AE}) \subseteq \supp(\id_A \otimes \sigma_E)$, and any unitary 2-design on $A$, it holds that
\begin{equation}\label{eq:decoupling_bound_Bures}
  \EE_{U_A}\;
  \norm{\Tr_{A_2}\!\big[\U_A(R_{AE})\big]
  \;-\; \frac{\id_{A_1}}{|A_1|} \otimes R_E}{+}  \leq
  \frac12\,
  \sqrt{\frac{|A|\,\big(|A_1|^2-1\big)}{|A|^2-1}}\;
  \norm{R_{AE} - \frac{\id_{A}}{|A|}\otimes R_E}{\Bures\id_A\otimes \sigma_E}.
\end{equation}

As a consequence, for any state $\rho_{AE}$ with $\supp(\rho_{AE}) \subseteq \supp(\id_A \otimes \sigma_E)$ and any $\ve \in [0,1)$, we have
\begin{equation}\begin{aligned}\label{eq:decoupling_smoothed}
    \EE_{U_A}\; \frac12 \norm{\Tr_{A_2}\!\big[\U_A(\rho_{AE})\big] - \frac{\id_{A_1}}{|A_1|} \otimes \rho_E}{1} 
&\leq  \ve + \frac12 \sqrt{ \frac{|A_1|}{|A_2|} \, \Qs{2}{\ve}{M} (\rho_{AE} \| \id_A \otimes \sigma_E)  }\\
&\leq  \ve + \frac12 \sqrt{ \frac{|A_1|}{|A_2|} \, (1-\ve) \, \Qmax{\ve}{M} (\rho_{AE} \| \id_A \otimes \sigma_E) }.
\end{aligned}\end{equation}
\end{theorem}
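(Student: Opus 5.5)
We follow the structure of Lemma~\ref{lem:leftover} and Theorem~\ref{thm:privacy_amp}, replacing 2-universal hashing by a unitary 2-design.

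\textbf{Step 1: reduce to a Bures norm.} Write $\Delta_{AE} \coloneqq R_{AE} - \frac{\id_A}{|A|}\otimes R_E$. Since the partial trace is linear and $\Tr_{A_2}\U_A(\id_A\otimes R_E) = |A_2|\, \id_{A_1}\otimes R_E$, the operator inside the norm on the left-hand side of~\eqref{eq:decoupling_bound_Bures} equals $\Tr_{A_2}\U_A(\Delta_{AE})$. This operator is traceless, so its $\|\cdot\|_+$-norm is half its trace norm. Lemma~\ref{lem:variance} with $\sigma = \id_{A_1}\otimes\sigma_E$, of trace $|A_1|$, then gives
\[
\norm{\Tr_{A_2}\U_A(\Delta)}{+} \leq \tfrac12\sqrt{|A_1|}\,\norm{\Tr_{A_2}\U_A(\Delta)}{\Bures\id_{A_1}\otimes\sigma_E}.
\]
The support condition holds because $\supp(R_{AE})\subseteq\supp(\id_A\otimes\sigma_E)$ forces $\supp(R_E)\subseteq\supp(\sigma_E)$. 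Jensen's inequality moves $\EE_U$ inside the square root.

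\textbf{Step 2: second-moment computation.} Since $\J_{\id\otimes\sigma_E}^{-1} = \idc\otimes\J_{\sigma_E}^{-1}$, we can write $\|X_{A_1E}\|^2_{\Bures\id\otimes\sigma_E} = \Tr[(X\otimes X')\,(F_{A_1}\otimes \Omega_{EE'})]$. Here $X'$ is a copy of $X$, $F$ is the swap operator, and $\Omega_{EE'}$ is the operator representing the bilinear form $(Y,Y')\mapsto\Tr Y\J_{\sigma_E}^{-1}(Y')$; it is obtained from the partial transpose of the Choi-type operator of $\J_{\sigma_E}^{-1}$.

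With this representation, $\EE_U\|\Tr_{A_2}\U_A(\Delta)\|^2$ becomes $\Tr[(\Delta\otimes\Delta')\,\EE_U (U^{\dagger})^{\otimes 2}(F_{A_1}\otimes\id_{A_2A_2'})U^{\otimes2}\otimes\Omega]$. The 2-design twirl maps $F_{A_1}\otimes\id_{A_2}$ to $a\,\id_{AA'}+b\,F_A$, with the standard Weingarten coefficients
\[
a = \frac{|A|\,|A_2|\,(|A_1|^2-1)}{|A|^2-1}\cdot\frac{1}{|A|\,|A_2|}\cdot\ldots,
\]
that is, $a = \frac{|A_2|(1-|A_1|^{-2})}{|A|^2-1}\cdot|A_1|\ldots$. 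Rather than fix these constants in advance, I will determine them from the two trace conditions
\[
\Tr[(F_{A_1}\otimes\id)\,\id] = a|A|^2+b|A|, \qquad \Tr[(F_{A_1}\otimes\id)\,F_A] = a|A|+b|A|^2.
\]
The $\id_{AA'}$ term contributes $a\,\Tr[(\Tr_A\Delta\otimes\Tr_{A'}\Delta')\,\Omega]=0$, because $\Tr_A\Delta = R_E - R_E = 0$. Only the $b$-term survives, and it yields $b\,\|\Delta\|^2_{\Bures\id_A\otimes\sigma_E}$. Solving gives $b = \frac{|A|(|A_1|^2-1)}{|A_1|(|A|^2-1)}$. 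Multiplying by $|A_1|$ from Step 1 produces exactly the prefactor in~\eqref{eq:decoupling_bound_Bures}. I expect this step to be the main obstacle: the twirl must be applied correctly to the non-standard bilinear form through $\Omega$, which is not positive. Linearity is all that is needed, since the design identity holds for every operator $M$.

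\textbf{Step 3: smoothing.} Take an optimal Hermitian $R_{AE}\le\rho_{AE}$ from Theorem~\ref{thm:D2_smoothed_forms}. As in the proof of Theorem~\ref{thm:privacy_amp}, it satisfies $\|\rho-R\|_+\le\ve$, is supported on $\supp(\id_A\otimes\sigma_E)$, and has $R_E\le\rho_E$.

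The same triangle-inequality chain as in Theorem~\ref{thm:privacy_amp} applies, using data processing of $\|\cdot\|_+$ under $\Tr_{A_2}\circ\,\U_A$, and gives the error $\ve$ plus the Bures bound for $R$. Next, $\|\Delta\|^2_{\Bures\id_A\otimes\sigma_E} = \|R\|^2 - \frac{1}{|A|}\|R_E\|^2_{\Bures\sigma_E} \le \|R\|^2$, because the cross term equals $\frac{2}{|A|}\Tr R_E\J_{\sigma_E}^{-1}(R_E)$. Finally, bounding $\frac{|A|(|A_1|^2-1)}{|A|^2-1}\le\frac{|A_1|}{|A_2|}$ gives the first inequality of~\eqref{eq:decoupling_smoothed}, and Lemma~\ref{lem:secondorder_equiv} gives the second.
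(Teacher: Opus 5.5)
Your proposal is correct and follows the paper's proof essentially step by step. The shared steps are: Lemma~\ref{lem:variance} with Jensen's inequality; the swap-trick representation of the Bures norm, where your $\Omega_{EE'}$ is the paper's $(\idc_E\otimes\J^{-1}_{\sigma_E})(F_{EE'})$; the unitary twirl, in which the $\id_{AA'}$ term vanishes because $\Tr_A\Delta=0$ and the $F_{AA'}$ coefficient is $b=\frac{|A_2|(|A_1|^2-1)}{|A|^2-1}$ (your expression reduces to exactly this); and the same smoothing chain using $\|\Delta\|^2=\|R\|^2-\frac{1}{|A|}\|R_E\|^2$.

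The only blemish is the garbled display for $a$ with ellipses, which you should delete. The correct value is $a=\frac{|A_1|(|A_2|^2-1)}{|A|^2-1}$, but it is irrelevant since that term drops out, and solving the two trace conditions as you do is just the derivation of the coefficients quoted in Lemma~\ref{lem:twirl}.
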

\end{boxed}
Optimising over all $\sigma_E$ then gives a bound in terms of $\Hmin{\ve}{M}(A|E)_\rho$. 
Put another way, the decoupling distance is guaranteed to be at most $\ve$ whenever
\begin{equation}\begin{aligned}
  \log |A_1| - \log |A_2| &\leq \Hmin{\ve-\mu}{M}(A|E)_\rho - \log \frac{1-\ve+\mu}{4\mu^2}
\end{aligned}\end{equation}
or when
\begin{equation}\begin{aligned}
  \log |A_1| - \log |A_2| &\leq \Hmin{\sqrt{\ve-\mu}}{P}(A|E)_\rho - \log \frac{1}{4\mu^3}
\end{aligned}\end{equation}
as in Corollary~\ref{cor:achiev_purified}.

The proof will use two standard technical ingredients of decoupling arguments, for which we refer e.g.\ to~\cite[Lemmas~3.4 and~3.5]{dupuis_2014}.

\begin{lemma}[Swap trick] \label{lem:swaptrick}
Denote by $F_{AA'}$ the swap operator on $A \otimes A'$, i.e.\ $F_{AA'}\!\ket{ij} =\ket{ji}$.
Then for any Hermitian operators $M,N$ on $A$, $\Tr(MN)=\Tr\!\left((M\otimes N)\,F_{AA'}\right)$.
\end{lemma}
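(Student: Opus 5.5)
We prove the identity by a direct computation in a product basis; Hermiticity of $M$ and $N$ is not needed, and the argument works verbatim for arbitrary operators on $A$. Fix an orthonormal basis $\{\ket{i}\}_i$ of $A$, and use the same basis for the isomorphic copy $A'$. In this basis the swap operator has the explicit form
\begin{equation}
  F_{AA'} = \sum_{i,j} \ketbra{ij}{ji}
  = \sum_{i,j} \ketbra{i}{j} \otimes \ketbra{j}{i}.
\end{equation}
This follows immediately from the defining action $F_{AA'}\ket{ij} = \ket{ji}$ applied to the basis vectors.

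Next we expand the trace on $A\otimes A'$ in the product basis $\{\ket{kl}\}_{k,l}$ and insert this expression for $F_{AA'}$:
\begin{equation}
  \Tr\!\big((M\otimes N)F_{AA'}\big)
  = \sum_{k,l}\bra{kl}(M\otimes N)F_{AA'}\ket{kl}
  = \sum_{k,l}\bra{kl}(M\otimes N)\ket{lk}
  = \sum_{k,l}\braket{k|M|l}\braket{l|N|k}.
\end{equation}
Summing over $l$ and using the completeness relation $\sum_l \ketbra{l}{l} = \id$ collapses this to $\sum_k \braket{k|MN|k} = \Tr(MN)$, which is the claim.

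There is no genuine obstacle here. The only point requiring care is the bookkeeping of which tensor factor each index belongs to when $F_{AA'}$ is applied to $\ket{kl}$. Writing $F_{AA'}$ in the form $\sum_{i,j}\ketbra{i}{j}\otimes\ketbra{j}{i}$ makes this transparent, since then
\begin{equation}
  (M\otimes N)F_{AA'} = \sum_{i,j} M\ketbra{i}{j}\otimes N\ketbra{j}{i},
\end{equation}
and taking the trace factor by factor yields $\sum_{i,j}\braket{j|M|i}\braket{i|N|j} = \Tr(MN)$.
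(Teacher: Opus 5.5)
Your computation is correct: the expansion $F_{AA'}=\sum_{i,j}\ketbra{i}{j}\otimes\ketbra{j}{i}$ followed by the index contraction gives exactly $\Tr(MN)$, and you are right that Hermiticity of $M$ and $N$ is not needed. The paper gives no proof of its own and cites this as a standard ingredient (Lemma~3.4 of~\cite{dupuis_2014}), where it is established by this same direct basis computation.
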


\begin{lemma}[Unitary twirl] \label{lem:twirl}
Consider any Hermitian operator $K_{AA'EE'}$ with $A' \cong A$, $E' \cong E$ and let $F_{AA'}$ be the swap operator on $AA'$.
For the Haar measure on $A$, and hence for any unitary $2$-design, one has
\begin{equation}\label{eq:twirl_id_swap}
  \EE_{U_A}\!\left((U_A^\dagger)^{\otimes 2}\,K\,U_A^{\otimes 2}\right)
  =
  \id_{AA'}\otimes \alpha_{EE'} \;+\; F_{AA'}\otimes \beta_{EE'},
\end{equation}
where
\begin{equation}\label{eq:alpha_beta_op}
  \alpha_{EE'} = \frac{ \Tr_{AA'}[K] - \frac{1}{|A|}\Tr_{AA'}[F_{AA'}K]}{|A|^2-1},
  \qquad
  \beta_{EE'}  = \frac{ |A|\Tr_{AA'}[F_{AA'}K] - \Tr_{AA'}[K]}{|A|\big(|A|^2-1\big)}.
\end{equation}
\end{lemma}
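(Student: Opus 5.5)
The plan is to first reduce to the Haar measure and then use Schur--Weyl duality to fix the form of the twirled operator. Only the second moment $U_A^{\otimes 2}(\cdot)(U_A^\dagger)^{\otimes 2}$ enters. This expression, with $U_A$ replaced by $U_A^\dagger$, has the same expectation under a unitary $2$-design as under $\mu_{\mathrm{Haar}}$: the Haar measure is invariant under $U \mapsto U^\dagger$, and the design property applied to the corresponding operator gives the same identity. So it suffices to prove the claim for the Haar measure.

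Next, I would expand $K_{AA'EE'} = \sum_k X_k \otimes Y_k$, with $X_k$ acting on $AA'$ and $Y_k$ acting on $EE'$. The twirl $\mathcal{T}(X) \coloneqq \EE_{U_A}\big[(U_A^\dagger)^{\otimes 2} X U_A^{\otimes 2}\big]$ acts only on the first factor. By left/right invariance of the Haar measure, $\mathcal{T}(X)$ commutes with $V^{\otimes 2}$ for every unitary $V$ on $A$. The key step is to identify this commutant as $\operatorname{span}\{\id_{AA'}, F_{AA'}\}$.

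For that, I would decompose $A\otimes A' = \mathrm{Sym}\oplus\mathrm{Anti}$. These two subspaces carry irreducible and mutually inequivalent representations of $V \mapsto V^{\otimes 2}$ (for $|A|\geq 2$; if $|A|=1$ the claim is trivial, since then $F=\id$ and the formulas degenerate consistently). Schur's lemma then forces any operator in the commutant to be a combination of the projectors $\frac12(\id \pm F)$, i.e.\ of $\id$ and $F$. This is the step I expect to be the main obstacle if it is to be proved from scratch; I would simply invoke Schur--Weyl duality as standard. It follows that $\mathcal{T}(K)=\id_{AA'}\otimes\alpha_{EE'} + F_{AA'}\otimes\beta_{EE'}$ for some operators $\alpha,\beta$ on $EE'$ that depend linearly on $K$. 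Hermiticity of $\alpha$ and $\beta$ follows from the explicit formulas obtained next.

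To find $\alpha$ and $\beta$, I would use two quantities that the twirl preserves. Since $F_{AA'}$ commutes with $U_A^{\otimes 2}$, cyclicity of the partial trace over $AA'$ gives
\[
\Tr_{AA'}\mathcal{T}(K)=\Tr_{AA'}K, \qquad \Tr_{AA'}[F\,\mathcal{T}(K)]=\Tr_{AA'}[FK].
\]
With $d=|A|$, we have $\Tr\id_{AA'}=d^2$, $\Tr F=d$, and $F^2=\id$. Inserting the ansatz then yields the linear system
\[
\Tr_{AA'}K = d^2\alpha + d\beta, \qquad \Tr_{AA'}[FK] = d\alpha + d^2\beta.
\]
Its determinant is $d^2(d^2-1)\neq 0$. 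Solving gives $\alpha=\frac{\Tr_{AA'}K-\frac1d\Tr_{AA'}[FK]}{d^2-1}$ and $\beta=\frac{d\Tr_{AA'}[FK]-\Tr_{AA'}K}{d(d^2-1)}$, which are exactly the expressions in~\eqref{eq:alpha_beta_op}.
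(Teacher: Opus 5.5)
Your proof is correct, and it is the standard Schur--Weyl argument behind this result; the paper does not prove the lemma itself but cites it from~\cite{dupuis_2014}. In that argument the twirl lies in the commutant $\operatorname{span}\{\id_{AA'},F_{AA'}\}$, and the operator coefficients are fixed by the twirl-invariant quantities $\Tr_{AA'}[\,\cdot\,]$ and $\Tr_{AA'}[F_{AA'}\,\cdot\,]$.

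Two minor points. First, the passage from a $2$-design to Haar is best justified by Hilbert--Schmidt duality, $\Tr\big[N\,\EE_{U}\big((U^\dagger)^{\otimes 2}MU^{\otimes 2}\big)\big]=\Tr\big[\EE_{U}\big(U^{\otimes 2}N(U^\dagger)^{\otimes 2}\big)M\big]$, since the design property then applies directly to the right-hand side for every test operator $N$; inversion-invariance of the Haar measure alone says nothing about the design distribution. Second, for $|A|=1$ the formulas read $0/0$ rather than degenerating consistently, so the lemma tacitly assumes $|A|\geq 2$.
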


\begin{proof}[Proof of Theorem~\ref{thm:decoupling_smollision}]
Denote
\begin{equation}
   \pi_A \coloneqq \frac{\id_A}{|A|}, \qquad Z_{AE} \coloneqq R_{AE} - \pi_A \otimes R_E.
\end{equation}
Since $\Tr_{A_2}[(U_A\otimes\id_E)(\pi_A\otimes R_E)(U_A^\dagger\otimes\id_E)]
= \pi_{A_1}\otimes R_E$ for all unitaries $U_A$, 
we can write
\begin{equation}\begin{aligned}
  \Tr_{A_2}\!\big[\U_A(R_{AE}))\big] - \pi_{A_1} \otimes R_E &= \Tr_{A_2}\!\big[(U_A\otimes\id_E)\,Z_{AE}\,(U_A^\dagger\otimes\id_E)\big]\\
  &\eqqcolon X_{A_1E}^U.
\end{aligned}\end{equation}
Noting that $\big\|X_{A_1E}^U\big\|_+ = \frac12 \big\|X_{A_1E}^U\big\|_1$ as $X_{A_1E}^U$ has trace zero, an application of Lemma~\ref{lem:variance} with $\sigma = \id_{A_1}\otimes\sigma_E$ gives
\begin{equation}\label{eq:decoup_temme}
  \norm{X_{A_1E}^U}{+}
  \le
  \frac12
  \norm{X_{A_1E}^U}{\Bures\id_{A_1}\otimes\sigma_E}
  \sqrt{\Tr(\id_{A_1}\otimes\sigma_E)}
  =
  \frac12 \sqrt{|A_1|}
  \norm{X_{A_1E}^U}{\Bures\id_{A_1}\otimes\sigma_E}.
\end{equation}
Taking expectation over $U_A$ and using Jensen's inequality, we get
\begin{equation}\label{eq:decoup_jensen}
  \EE_{U_A} \norm{X_{A_1E}^U}{+}
  \le
  \frac12 \sqrt{ |A_1| \,
  \EE_{U_A} \norm{X_{A_1E}^U}{\Bures\id_{A_1}\otimes\sigma_E}^2 } .
\end{equation}

We then move on to computing the quadratic term. By definition of the Bures seminorm,
\begin{equation}\label{eq:decoup_quad_def}
  \norm{X_{A_1E}^U}{\Bures\id_{A_1}\otimes\sigma_E}^2
  =
  \Tr\!\left(
    X_{A_1E}^U\,
    \J^{-1}_{\id_{A_1}\otimes\sigma_E}(X_{A_1E}^U)
  \right).
\end{equation}
We will use two properties of the superoperator $\J^{-1}_{\id_{A_1}\otimes\sigma_E}$: first, as already used in our previous derivations, since $\J_{\id_{A_1}\otimes\sigma_E} = \idc_{A_1}\otimes \J_{\sigma_E}$, we have $\J^{-1}_{\id_{A_1}\otimes\sigma_E}=\idc_{A_1}\otimes \J^{-1}_{\sigma_E}$ on $\supp(\sigma_E)$; second,  $\J_{\sigma_E}$ is self-adjoint with respect to the Hilbert--Schmidt inner product, and hence so is $\J^{-1}_{\sigma_E}$.
Using the swap trick (Lemma~\ref{lem:swaptrick}) we can then rewrite~\eqref{eq:decoup_quad_def} as
\begin{equation}\begin{aligned}\label{eq:decoup_swap}
    \norm{X_{A_1E}^U}{\Bures\id_{A_1}\otimes\sigma_E}^2
  &=
  \Tr\!\left(
    X_{A_1E}^U\,
    (\idc_{A_1}\otimes \J^{-1}_{\sigma_E})(X_{A_1E}^U)
  \right)\\
  &=
  \Tr\!\left(
    \big(X_{A_1E}^U \otimes (\idc_{A_1'}\otimes \J^{-1}_{\sigma_E})(X_{A_1'E'}^U)\big)\,
    (F_{A_1A_1'}\otimes F_{EE'})
  \right)\\
  &=
  \Tr\!\left(
    (X_{A_1E}^U \otimes X_{A_1'E'}^U)\,
    (F_{A_1A_1'}\otimes (\idc_E\otimes \J^{-1}_{\sigma_E})(F_{EE'}))
  \right)\\
  &= \Tr\!\left((Z_{AE}^U\otimes Z_{A'E'}^U)\,K_{AA'EE'}\right) ,
\end{aligned}\end{equation}
where we defined $Z_{AE}^U \coloneqq (U_A\otimes\id_E)\,Z_{AE}\,(U_A^\dagger\otimes\id_E)$ so that $X_{A_1E}^U=\Tr_{A_2}[Z_{AE}^U]$ and introduced 
\begin{equation}\label{eq:def_K}
  K_{AA'EE'} \coloneqq (F_{A_1A_1'}\otimes \id_{A_2A_2'}) \otimes J_{EE'}, \qquad J_{EE'} \coloneqq (\idc_E \otimes \J^{-1}_{\sigma_E})(F_{EE'}).
\end{equation}

Taking expectation over $U$ and using cyclicity of the trace gives
\begin{equation}\label{eq:twirl_target}
  \EE_{U_A}\, \norm{X_{A_1E}^U}{\Bures\id_{A_1}\otimes\sigma_E}^2
  =
  \Tr\!\left(
    (Z_{AE}\otimes Z_{A'E'})\,
    \EE_{U_A}\!\left((U_A^\dagger)^{\otimes 2}\,K_{AA'EE'}\,U_A^{\otimes 2}\right)
  \right).
\end{equation}
Lemma~\ref{lem:twirl} then tells us that $\EE_{U_A}\!\left((U_A^\dagger)^{\otimes 2}\,K_{AA'EE'}\,U_A^{\otimes 2}\right) = \id_{AA'}\otimes \alpha_{EE'} + F_{AA'}\otimes \beta_{EE'}$ with coefficients as in~\eqref{eq:alpha_beta_op}. 
For the first term, observe that $\Tr\!\left((Z_{AE}\otimes Z_{A'E'})(\id_{AA'}\otimes \alpha_{EE'})\right)
=
\Tr\!\left(( Z_E \otimes Z_{E'})\,\alpha_{EE'}\right)$, and since $Z_E=0$, the whole term vanishes.
We now compute $\beta_{EE'}$ for the specific $K_{AA'EE'}$ in~\eqref{eq:def_K}.
Using $F_{AA'} = F_{A_1A_1'}\otimes F_{A_2A_2'}$ and $\Tr F_{A_iA_i'}=|A_i|$, we obtain
\begin{align}
  \Tr_{AA'}[K]
  &= \Tr(F_{A_1A_1'})\,\Tr(\id_{A_2A_2'})\, J_{EE'}
  = |A_1|\, |A_2|^2 \, J_{EE'}, \\
  \Tr_{AA'}[F_{AA'}K]
  &= \Tr(\id_{A_1A_1'})\,\Tr(F_{A_2A_2'})\, J_{EE'}
  = |A_1|^2\, |A_2| \, J_{EE'}.
\end{align}
Substituting into~\eqref{eq:alpha_beta_op} gives
\begin{equation}\label{eq:beta_explicit}
  \beta_{EE'}
  =
  \frac{ |A|\Tr_{AA'}[F_{AA'}K] - \Tr_{AA'}[K]}{|A|\big(|A|^2-1\big)}
  =
  \frac{|A_2|\big(|A_1|^2-1\big)}{|A|^2-1}\;J_{EE'}.
\end{equation}
Hence
\begin{equation}\begin{aligned}\label{eq:decoup_expect_Bures2}
  \EE_{U_A}\, \norm{X_{A_1E}^U}{\Bures\id_{A_1}\otimes\sigma_E}^2
  &=
  \frac{|A_2|\big(|A_1|^2-1\big)}{|A|^2-1}\;
  \Tr\!\big((Z_{AE}\otimes Z_{A'E'})(F_{AA'}\otimes J_{EE'})\big)\\
  &= \frac{|A_2|\big(|A_1|^2-1\big)}{|A|^2-1}\; \norm{Z_{AE}}{\Bures\id_A\otimes\sigma_E}^2
\end{aligned}\end{equation}
where the second line uses the same rewriting as in~\eqref{eq:decoup_swap} but for $Z_{AE}$ (with $A$ instead of $A_1$). 
Combining this with~\eqref{eq:decoup_jensen} gives
\begin{equation}\begin{aligned}
  \EE_{U_A} \norm{X_{A_1E}^U}{+} &\leq \frac12 \sqrt{ |A_1| \,
  \EE_{U_A} \norm{X_{A_1E}^U}{\Bures\id_{A_1}\otimes\sigma_E}^2 }
  &\leq
  \frac12
  \sqrt{\frac{|A|\big(|A_1|^2-1\big)}{|A|^2-1}}\;
  \norm{Z_{AE}}{\Bures\id_A\otimes\sigma_E}
\end{aligned}\end{equation}
completing the proof of Eq.~\eqref{eq:decoupling_bound_Bures}.

The adaptation of this to the smoothed statement in~\eqref{eq:decoupling_smoothed} proceeds in analogy to the proof of Theorem~\ref{thm:privacy_amp}. Namely, pick an optimal $R_{AE}$ for the variational form of $\Qs{2}{\ve}{M}(\rho_{AE}\|\id_A \otimes \rho_E)$ in Theorem~\ref{thm:D2_smoothed_forms} and consider that
\begin{align}
  \EE_{U_A}\, \norm{\vphantom{\big|}\Tr_{A_2}[\U_A(\rho_{AE})] - \pi_{A_1}\otimes \rho_E}{+}
  &\textleq{(i)}
  \EE_{U_A}\, \norm{\vphantom{\big|}\Tr_{A_2}[\U_A(\rho_{AE})] - \pi_{A_1}\otimes R_E}{+} \nonumber\\
  &\leq
  \EE_{U_A}\, \norm{\vphantom{\big|}\Tr_{A_2}[\U_A(\rho_{AE}-R_{AE})]}{+}
  + \EE_{U_A}\, \norm{\vphantom{\big|}\Tr_{A_2}[\U_A(R_{AE})] - \pi_{A_1}\otimes R_E}{+} \nonumber\\
  &\textleq{(ii)}
  \norm{\rho_{AE}-R_{AE}}{+}
  + \frac12 \sqrt{\frac{|A|\big(|A_1|^2-1\big)}{|A|^2-1}}\;
  \norm{R_{AE} - \pi_A \otimes R_E}{\Bures\id_A\otimes\sigma_E} \nonumber\\
  &\textleq{(iii)}
  \norm{\rho_{AE}-R_{AE}}{+}
  + \frac12 \sqrt{\frac{|A|\big(|A_1|^2-1\big)}{|A|^2-1}}\;
  \norm{R_{AE}}{\Bures\id_A\otimes\sigma_E}\\
  &\textleq{(iv)}
  \ve + \frac12 \sqrt{\frac{|A_1|}{|A_2|} \, \Qs{2}{\ve}{M}(\rho_{AE} \| \id_A\otimes \sigma_E)}. \nonumber
\end{align}
where: in (i) we used that $R_{AE} \leq \rho_{AE}$; (ii) is a consequence of data processing for trace distance and the decoupling result in~\eqref{eq:decoupling_bound_Bures} that we just proved; in (iii) we observed that 
\begin{equation}\begin{aligned}
  \norm{R_{AE} - \pi_A \otimes R_E}{\Bures\id_A\otimes\sigma_E}^2 =  \norm{R_{AE}}{\Bures\id_A\otimes\sigma_E}^2 - \frac{1}{|A|} \norm{R_E}{\Bures\sigma_E}^2 \leq \norm{R_{AE}}{\Bures\id_A\otimes\sigma_E}^2
\end{aligned}\end{equation}
as easily seen by expanding the Bures inner product; 
and finally in (iv) we used the inequality $\frac{x-1}{y-1} \leq \frac xy$ valid for all $y \geq x > 1$. 
The second inequality in~\eqref{eq:decoupling_smoothed} then follows by Lemma~\ref{lem:secondorder_equiv}.
\end{proof}

In complete analogy with our result for privacy amplification in Theorem~\ref{thm:renyi_achiev}, an immediate consequence of Theorem~\ref{thm:decoupling_smollision} is that we obtain a one-shot achievability bound in terms of R\'enyi conditional entropies, taking a form analogous to the prior finding of~\cite{dupuis_2023} but with measured R\'enyi divergences instead of sandwiched ones.
\begin{boxed}
\begin{corollary}
For all $\alpha \in (1,2]$ and any unitary 2-design on $A = A_1 A_2$, it holds that
\begin{equation}\begin{aligned}
     \EE_{U_A}\; \frac12 \norm{\Tr_{A_2}\!\big[\U_A(\rho_{AE})\big] - \frac{\id_{A_1}}{|A_1|} \otimes \rho_E}{1}  \leq 4^{- \frac{\alpha-1}{\alpha}}\, \exp\left(- \frac{\alpha-1}{\alpha} \left( H_\alpha^{\MM,\uparrow} (A|E)_\rho + \log |A_2| - \log |A_1| \right) \right).
\end{aligned}\end{equation}
\end{corollary}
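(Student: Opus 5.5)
The plan is to mirror the proof of Theorem~\ref{thm:renyi_achiev} exactly, replacing the smoothed leftover hash bound of Theorem~\ref{thm:privacy_amp} by the smoothed decoupling bound~\eqref{eq:decoupling_smoothed} of Theorem~\ref{thm:decoupling_smollision}. The only change is that the factor $|\Z|$ becomes $|A_1|/|A_2|$, which matches the claimed exponent $\log|A_1| - \log|A_2|$.

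\emph{Step 1: apply the decoupling bound.} Fix any state $\sigma_E$ with $\supp(\rho_{AE}) \subseteq \supp(\id_A\otimes\sigma_E)$ and any $\ve\in[0,1)$. The first inequality of~\eqref{eq:decoupling_smoothed} gives
\[
\text{(decoupling error)} \leq \ve + \tfrac12\sqrt{\tfrac{|A_1|}{|A_2|}\, \Qs{2}{\ve}{M}(\rho_{AE}\|\id_A\otimes\sigma_E)}.
\]

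\emph{Step 2: pass to R\'enyi quantities.} Apply Lemma~\ref{lem:D2_bound_with_h2} with $\rho = \rho_{AE}$ and $\sigma = \id_A\otimes\sigma_E$. That lemma is stated for states, but its classical proof via H\"older and the lifting through measurements never use the normalisation of $\sigma$, so it applies here as well. This replaces $\Qs{2}{\ve}{M}$ by
\[
Q_\alpha^{\MM}(\rho_{AE}\|\id_A\otimes\sigma_E)^{\frac{1}{\alpha-1}}\, \ve^{-\frac{2-\alpha}{\alpha-1}} \left(\tfrac{2-\alpha}{\alpha}\right)^{\frac{2-\alpha}{\alpha-1}} \left(\tfrac{2(\alpha-1)}{\alpha}\right)^{2}.
\]
The resulting bound has the same form as in the proof of Theorem~\ref{thm:renyi_achiev}, with $\log|\Z|$ replaced by $\log|A_1|-\log|A_2|$.

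\emph{Step 3: choose $\ve$ and optimise.} Set
\[
\ve = \nu\,\tfrac{2-\alpha}{\alpha}\,\exp\!\Big[\tfrac{\alpha-1}{\alpha}\big(\log|A_1|-\log|A_2| + D_\alpha^{\MM}(\rho_{AE}\|\id_A\otimes\sigma_E)\big)\Big],
\]
with $\nu = 2^{-2(\alpha-1)/\alpha}$. This yields the prefactor $4^{-\frac{\alpha-1}{\alpha}}$ by the identical one-variable computation as in the proof of Theorem~\ref{thm:renyi_achiev}. Finally take the infimum over $\sigma_E$, which turns $-D_\alpha^{\MM}$ into $H_\alpha^{\MM,\uparrow}(A|E)_\rho$.

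\emph{Edge cases.} If the chosen $\ve$ exceeds $1$, the right-hand side of the claim is at least $1$ and the statement is trivial, since the trace-distance error is at most $1$. Likewise, any $\sigma_E$ violating the support condition gives $D_\alpha^{\MM}=\infty$, so it contributes nothing to the infimum.

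The main obstacle I expect is only bookkeeping: checking that Lemma~\ref{lem:D2_bound_with_h2} holds for the unnormalised second argument $\id_A\otimes\sigma_E$, and that the constant optimisation in $\nu$ transfers unchanged. Both follow from the proofs already given, since nothing in them depends on the value of the exponent shift.
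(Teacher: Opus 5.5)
Your proposal is correct and follows exactly the route the paper intends. The paper gives this corollary as the direct analogue of Theorem~\ref{thm:renyi_achiev}: feed the first inequality of~\eqref{eq:decoupling_smoothed} into Lemma~\ref{lem:D2_bound_with_h2}, choose $\ve$ and $\nu = 2^{-2(\alpha-1)/\alpha}$ as in that proof with $|\Z|$ replaced by $|A_1|/|A_2|$, and take the infimum over $\sigma_E$. Your checks on the unnormalised second argument $\id_A\otimes\sigma_E$ and on the trivial case where the chosen $\ve$ exceeds $1$ (there the right-hand side is indeed at least $1$) cover the bookkeeping the paper leaves implicit.
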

\end{boxed}

\section{Converse bounds for privacy amplification}\label{sec:converse}

\subsection{One-shot converse}\label{sec:converse-oneshot}

To certify the optimality of any achievability result, one needs to prove that no other randomness extraction protocol can perform better. 
This is a surprisingly tricky problem in quantum privacy amplification, as several bounds that one would intuitively expect to apply are not valid converse restrictions, and care needs to be taken to use the right entropic quantities. 

For classical privacy amplification --- that is, when both $X$ and $E = Y$ are classical systems --- a natural converse is given by the smooth min-entropy~\cite{renner_2005-2}. This can be expressed in our notation~as
\begin{equation}\begin{aligned}\label{eq:converse_classical}
  \ell_\ve(p_{XY}) &\leq \Hmin{\ve}{M}[down](X|Y)_p = \Hmin{\ve}{T}[down](X|Y)_p.
\end{aligned}\end{equation}
This converse has been rephrased or rediscovered in several different formulations, e.g.\ using slightly different definitions of smoothing~\cite{watanabe_2013}, expressed in terms of the hockey-stick divergence $E_\gamma$~\cite{yang_2019}, or recast as an optimisation of the hypothesis testing relative entropy~\cite{renes_2018}; it can be shown however that the statements are all equivalent~\cite{renes_2018}. The converse implies several other commonly used restrictions, including ones based on information spectrum~\cite{hayashi_2016-3,yang_2019}. 

It is then rather surprising that the converse in Eq.~\eqref{eq:converse_classical} simply does not apply in general when the classical distribution $p_{XY}$ is replaced with a quantum state $\rho_{XE}$. An explicit counterexample was given by Renes in~\cite[\sect 3.1]{renes_2018}. Such problems have complicated the establishment of converse bounds in quantum privacy amplification with trace distance.
To circumvent this, some papers employed the idea of `ensemble converses'~\cite{gallager_1973} --- converse bounds that do not apply to all hash functions, but instead make some assumptions about the allowed families of hashes, commonly assuming their strong 2-universality or even more restrictive constraints~\cite{hayashi_2013,hayashi_2016-3, shen_2022,shen_2024}. 
That is, instead of upper bounding the quantity $\ell_\ve(\rho_{XE})$ defined as in Eq.~\eqref{eq:ell_def}, such results would instead upper bound
\begin{equation}\begin{aligned}
    \ell_\ve^{\mathfrak{F}}(\rho_{XE}) \coloneqq \max \lset \log \left|\Z\right| \bar \F \in \mathfrak{F},\; \EE_{f \sim \mu_\F}\, \norm{ \rho^f_{ZE} - \frac{\id_Z}{\left|\Z\right|} \otimes \rho_E }{+} \leq \ve \rset,
\end{aligned}\end{equation}
with the optimisation being over all distributions $\mu_\F$ over families $\F$ of hash functions that belong to some chosen class $\mathfrak{F}$.
From a practical perspective, a priori restricting the hash functions in this way may not be operationally justified unless the optimality of such restricted families can be argued; 
demanding properties such as strong 2-universality could also make the hash functions much more costly to implement due to larger required seed size~\cite{stinson_1994}.

Interestingly, this issue does not affect the modified security criterion where trace distance is replaced with purified distance, for which a converse bound analogous to~\eqref{eq:converse_classical} does hold for quantum states~\cite{tomamichel_2013}. We will use this fact, together with recently improved one-shot divergence inequalities, to show that the one-shot converse bound of~\eqref{eq:converse_classical} is still `morally' valid for trace distance in the quantum case --- it applies in an approximate sense, enough to give restrictions on second-order asymptotics and error exponents that exactly match those obtained from Eq.~\eqref{eq:converse_classical} classically.

\begin{boxed}
\begin{proposition}\label{prop:converse}
For any CQ state $\rho_{XE}$, any $\ve \in (0,1)$, and any $\delta \in (0, 1-\ve)$, it holds that
\begin{equation}\begin{aligned}
  \ell_\ve(\rho_{XE}) &\leq \Hmin{\sqrt\ve}{P}[down] (X|E)_{\rho} + \log\frac{1}{1-\ve}\\
  & \leq \Hmin{\ve+\delta}{M}[down] (X|E)_{\rho} +  \log\frac{\ve+\delta}{\delta}.
\end{aligned}\end{equation}

For the choice $\delta = (k-1) \ve$ with any $k \in \left(1, 1/\ve\right)$ this gives in particular
\begin{equation}\begin{aligned}
  \ell_\ve(\rho_{XE}) \leq  \Hmin{k \ve}{M}[down] (X|E)_{\rho} + \log \frac{k}{k-1}.
\end{aligned}\end{equation}

\end{proposition}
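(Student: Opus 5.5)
The plan has two parts: first the purified-distance converse, then a conversion from purified-distance smoothing to measured smoothing.

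\textbf{First inequality.} Fix an optimal extraction, i.e.\ a function $f:\X\to\Z$ with $\log|\Z| = \ell_\ve(\rho_{XE})$ and $\frac12\|\rho^f_{ZE} - \frac{\id_Z}{|\Z|}\otimes\rho_E\|_1 \leq \ve$. By Fuchs--van de Graaf,
\[
P\Big(\rho^f_{ZE}, \tfrac{\id_Z}{|\Z|}\otimes\rho_E\Big) \leq \sqrt{1-(1-\ve)^2} \leq \sqrt{2\ve}.
\]
This is not quite $\sqrt\ve$. I will therefore work directly with fidelity:
\[
F \geq \Big(1-\tfrac12\|\cdot\|_1\Big)^2 \geq (1-\ve)^2 .
\]
Instead I will use the purified-distance converse of~\cite{tomamichel_2013} in the form where the smoothing candidate is built from the target state. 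Concretely, I will construct a subnormalised $\tilde\rho_{XE}$ close to $\rho_{XE}$ with $D_{\max}(\tilde\rho_{XE}\|\id_X\otimes\rho_E) \leq -\log|\Z| + \log\frac1{1-\ve}$.

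The construction mirrors the classical argument. Take the pretty-good-type operator
\[
\tilde\rho_{XE} = \sum_x \proj{x}\otimes \rho_{E,x}^{1/2}\,\Gamma_{f(x)}\,\rho_{E,x}^{1/2},
\]
or more cleanly use the adjoint of $f$. The target $\frac{\id_Z}{|\Z|}\otimes\rho_E$ is mapped back under the Petz recovery of the hashing map into an operator $\leq \frac{1}{|\Z|}\id_X\otimes\rho_E$. Monotonicity of fidelity under this recovery then gives fidelity at least $(1-\ve)$ with $\rho_{XE}$ in the generalised sense, so the purified distance is at most $\sqrt\ve$. Rescaling to match normalisation produces the $\log\frac1{1-\ve}$ term.

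The main obstacle is getting exactly $\sqrt\ve$ rather than $\sqrt{2\ve}$. For this I will use the generalised fidelity with the subnormalised candidate and the bound $\sqrt F \geq 1-\ve$, i.e.\ $F\geq 1-\ve$ after choosing the candidate as the "overlap" part rather than the full state, as is done in the hockey-stick/measured converses.

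\textbf{Second inequality.} This is a direct application of Lemma~10 of~\cite{regula_2025}, already quoted in the proof of Corollary~\ref{cor:achiev_purified}:
\[
\Dmax{\ve+\delta}{M}(\rho\|\sigma) \leq \Dmax{\sqrt\ve}{P}(\rho\|\sigma) + \log\frac{(\ve+\delta)(1-\ve-\delta)}{\delta},
\]
applied with $\sigma = \id_X\otimes\rho_E$. Negating gives
\[
\Hmin{\sqrt\ve}{P}[down] \leq \Hmin{\ve+\delta}{M}[down] + \log\frac{(\ve+\delta)(1-\ve-\delta)}{\delta}.
\]
Adding $\log\frac1{1-\ve}$ and using $1-\ve-\delta\leq 1-\ve$ yields $\log\frac{\ve+\delta}{\delta}$. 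This step works equally for the non-normalised second argument, since that lemma only uses hockey-stick manipulations.

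\textbf{Final statement.} Substituting $\delta=(k-1)\ve$ gives $\frac{\ve+\delta}{\delta} = \frac{k}{k-1}$, and the condition $\delta<1-\ve$ becomes $k<1/\ve$.
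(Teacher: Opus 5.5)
The second inequality and the substitution $\delta=(k-1)\ve$ are correct. They match the paper's final step: Lemma~10 of \cite{regula_2025} with $\sigma=\id_X\otimes\rho_E$, followed by $1-\ve-\delta\le 1-\ve$.

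The first inequality has a genuine gap: the passage from the hashed system $ZE$ back to $XE$. Your Petz-recovery construction needs $\mathcal{P}\big(\frac{\id_Z}{|\Z|}\otimes\rho_E\big)\le\frac{1}{|\Z|}\,\id_X\otimes\rho_E$. However, the $x$-block of the left-hand side is $\frac{1}{|\Z|}\rho_{E,x}^{1/2}\rho_{E,f(x)}^{-1/2}\,\rho_E\,\rho_{E,f(x)}^{-1/2}\rho_{E,x}^{1/2}$, and the contraction $\rho_{E,x}^{1/2}\rho_{E,f(x)}^{-1/2}$ need not commute with $\rho_E$. Already when $x$ is the only preimage of $f(x)$, the block is $\frac{1}{|\Z|}\Pi_x\rho_E\Pi_x$, with $\Pi_x$ the support projector of $\rho_{E,x}$. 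For $\Pi_x=\proj{\psi}$ and full-rank $\rho_E$, the requirement $\Pi_x\rho_E\Pi_x\le\rho_E$ is equivalent to $\bra{\psi}\rho_E\ket{\psi}\bra{\psi}\rho_E^{-1}\ket{\psi}\le 1$. This fails by Cauchy--Schwarz unless $\ket{\psi}$ is an eigenvector of $\rho_E$. It is exactly the non-commutativity that defeats the naive trace-distance converse~\cite{renes_2018}, so renormalising cannot repair it. Separately, monotonicity of fidelity under $\mathcal{P}$ only gives $F\ge(1-\ve)^2$, i.e.\ $P\le\sqrt{2\ve-\ve^2}$. The step ``$\sqrt F\ge 1-\ve$, i.e.\ $F\ge 1-\ve$'' is not a valid implication, so you also do not obtain $\sqrt\ve$.

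The missing idea is to do the entire trace-to-purified conversion on $ZE$ first. Only then do you move to $XE$, using a monotonicity result for purified-distance smoothing that does not rely on an explicit recovery map. The paper's chain is as follows. First, $\norm{\rho^f_{ZE}-\frac{\id_Z}{|\Z|}\otimes\rho_E}{+}\le\ve$ gives $D_{\max}^{\ve,T}(\rho^f_{ZE}\|\id_Z\otimes\rho_E)\le-\log|\Z|$. Next, $D_{\max}^{\ve,\MM}\le D_{\max}^{\ve,T}$ and $D_{\max}^{\sqrt\ve,P}\le D_{\max}^{\ve,\MM}+\log\frac{1}{1-\ve}$ (Lemma~\ref{lem:thelemma}) are applied on $ZE$. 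This is where both the $\sqrt\ve$ and the $\log\frac{1}{1-\ve}$ penalty come from. Finally, $D_{\max}^{\sqrt\ve,P}(\rho_{XE}\|\id_X\otimes\rho_E)\le D_{\max}^{\sqrt\ve,P}(\rho^f_{ZE}\|\id_Z\otimes\rho_E)$ holds by Lemma~\ref{lem:monotonicity} (Theorem~8 of~\cite{tomamichel_2013}). Its proof rests on Uhlmann's theorem and is specific to purified-distance smoothing. Without this last step, or a genuine substitute for it, your argument does not establish the bound for $\rho_{XE}$.
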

\end{boxed}

The proof will use the following.

\begin{lemma}[{\cite[Theorem~8]{tomamichel_2013}}]\label{lem:monotonicity}
For any CQ state $\rho_{XE}$ and any function $f : \mathcal{X} \to \mathcal{Z}$, it holds that
\begin{equation}\begin{aligned}
  \Dmax{\ve}{P} (\rho_{XE} \| \id_X \otimes \rho_E ) \leq \Dmax{\ve}{P} (\rho^f_{ZE} \| \id_Z \otimes \rho_E ).
\end{aligned}\end{equation}
\end{lemma}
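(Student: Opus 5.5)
The plan is to prove the inequality directly, by lifting an optimal smoothing of the hashed state back to the unhashed one. Write $\rho_{XE} = \sum_x \proj{x}\otimes\omega_x$ with $\omega_x \coloneqq p_X(x)\rho_{E,x}$, and $\rho^f_{ZE} = \sum_z \proj{z}\otimes \rho_z$ with $\rho_z \coloneqq \sum_{x: f(x)=z}\omega_x$. Let $\rho'_{ZE}$ be feasible for the right-hand side, with $P(\rho^f_{ZE},\rho'_{ZE})\le\ve$ and $\rho'_{ZE}\le\lambda\,\id_Z\otimes\rho_E$. Since dephasing $Z$ leaves both $\rho^f_{ZE}$ and $\id_Z\otimes\rho_E$ invariant, it preserves these constraints, so we may take $\rho'_{ZE}=\sum_z\proj{z}\otimes\rho'_z$ classical--quantum with $\rho'_z\le\lambda\rho_E$. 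The goal is an operator $\rho''_{XE}$ with $P(\rho_{XE},\rho''_{XE})\le\ve$, $\Tr\rho''_{XE}\le 1$, and $\rho''_{XE}\le\lambda\,\id_X\otimes\rho_E$. Taking the infimum over $\lambda$ then gives the claim.

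For the construction, fix for each $z$ a unitary $U_z$ with $\Tr\big(\rho_z^{1/2}\rho_z'^{1/2}U_z\big)=\norm{\rho_z^{1/2}\rho_z'^{1/2}}{1}$, from the polar decomposition. Set $A_z\coloneqq \rho_z'^{1/2}U_z\rho_z^{-1/2}$, with the inverse taken on $\supp(\rho_z)$, and define
\begin{equation*}
\rho''_{XE}\coloneqq\sum_x\proj{x}\otimes A_{f(x)}\,\omega_x\,A_{f(x)}^\dagger .
\end{equation*}
Each block is positive, and since $\supp(\omega_x)\subseteq\supp(\rho_{f(x)})$ we have
\begin{equation*}
\sum_{x\in f^{-1}(z)}A_z\omega_xA_z^\dagger=A_z\rho_zA_z^\dagger=\rho_z'^{1/2}U_z\Pi_{\rho_z}U_z^\dagger\rho_z'^{1/2}\le\rho'_z .
\end{equation*}
This yields two things at once. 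First, $\Tr\rho''\le\Tr\rho'\le1$. Second, each single block satisfies $A_z\omega_xA_z^\dagger\le\rho'_z\le\lambda\rho_E$, because it is one positive term of a sum bounded by $\rho'_z$. This gives the max-relative entropy constraint.

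The step I expect to be the main obstacle is the fidelity bound: a naive Petz-type recovery channel gives $P(\rho,\rho'')\le\ve$ but breaks the domination by $\rho_E$, and my construction reverses that trade-off. Since $\rho_{XE}$ is normalised, the generalised fidelity has no correction term and equals $\big(\sum_x\norm{\omega_x^{1/2}(A\omega_xA^\dagger)^{1/2}}{1}\big)^2$. Writing $(A\omega_xA^\dagger)^{1/2}=A\omega_x^{1/2}W$ for a partial isometry $W$ (polar decomposition of $A\omega_x^{1/2}$), I will argue that each term equals $\norm{\omega_x^{1/2}A\omega_x^{1/2}}{1}\ge\Re\Tr(\omega_xA_{f(x)})$. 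Summing over $x\in f^{-1}(z)$ gives
\begin{equation*}
\Re\Tr(\rho_zA_z)=\Tr\big(\rho_z^{1/2}\rho_z'^{1/2}U_z\big)=\norm{\rho_z^{1/2}\rho_z'^{1/2}}{1},
\end{equation*}
and summing over $z$ gives $F(\rho_{XE},\rho''_{XE})\ge F(\rho^f_{ZE},\rho'_{ZE})$. Hence $P(\rho_{XE},\rho''_{XE})\le\ve$. The remaining care is in the support and partial-isometry bookkeeping, and in checking that the subnormalisation of $\rho''$ does not enter the fidelity formula.
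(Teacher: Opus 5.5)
Your proof is correct, and it takes a different route from the one the paper relies on. The paper gives no proof of its own. It cites \cite[Theorem~8]{tomamichel_2013} and notes that the argument rests on Uhlmann's theorem: there, the smoothed operator for $XE$ is obtained non-constructively, from an Uhlmann extension (at the same purified distance) of the smoothed operator for $ZE$ to an operator that also carries the register $X$, and the $D_{\max}$ constraint is then read off blockwise.

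You instead write the smoothed operator down explicitly as the blockwise congruence $\omega_x \mapsto A_{f(x)}\omega_x A_{f(x)}^\dagger$. Beyond additivity of the root fidelity over blocks and its monotonicity under dephasing, you use only two facts: $\norm{\omega^{1/2}(A\omega A^\dagger)^{1/2}}{1} = \norm{\omega^{1/2}A\omega^{1/2}}{1} \geq \Re\Tr(\omega A)$, and a choice of $U_z$ attaining $\norm{\rho_z^{1/2}\rho_z'^{1/2}}{1}$. The equality follows in one line, without the partial-isometry bookkeeping, from $\omega^{1/2}A\omega A^\dagger\omega^{1/2} = CC^\dagger$ with $C = \omega^{1/2}A\omega^{1/2}$.

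Your $A_z$ is in effect an explicit Uhlmann operator. When $\rho_z$ is invertible it equals the geometric mean $\rho_z^{-1}\#\rho'_z = \rho_z^{-1/2}(\rho_z^{1/2}\rho'_z\rho_z^{1/2})^{1/2}\rho_z^{-1/2}$, so $A_z\rho_zA_z = \rho'_z$ exactly.

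What your route buys:
\begin{itemize}
\item It is self-contained and elementary, never leaves classical--quantum operators, and needs no purifications.
\item It gives a concrete formula for the smoothed state.
\item The second argument enters only through its invariance under dephasing of $Z$, so the argument holds with any $\sigma_E$ in place of $\rho_E$. This gives the $\uparrow$ and $\downarrow$ versions at once, whereas the paper has to remark separately that the cited proof carries over.
\end{itemize}
The Uhlmann route is shorter once that theorem is taken for granted. Both arguments depend essentially on the fidelity, in line with the paper's remark that the idea does not transfer to other smoothing notions.
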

This result relies on Uhlmann's theorem for the fidelity, which means that the proof idea does not immediately extend to other types of smoothing for quantum states.

Strictly speaking, the original result in~\cite{tomamichel_2013} was shown for the quantity $\Hmin{\ve}{P}[up]$, but the same proof is easily noticed to apply to $\Dmax{\ve}{P} (\rho_{XE} \| \id_X \otimes \rho_E ) = -\Hmin{\ve}{P}[down] (X|E)_\rho$, as was explicitly done in~\cite[Proposition~4]{li_2023}.

We will also need inequalities that allow us to tightly bound the measured smooth max-relative entropy with its conventionally smoothed cousins.

\begin{lemma}[\cite{regula_2025}]\label{lem:thelemma} The following inequalities hold for all $\ve \in (0,1)$ and all $\delta \in (0,1-\ve)$:
\begin{enumerate}[label=(\roman*)]
\item \label{Dsmax-DmaxT} $\displaystyle \Dmax{\ve}{M} (\rho \| \sigma) \leq \Dmax{\ve}{T}(\rho\|\sigma)$,
\item \label{DmaxP-Dsmax} $\displaystyle \Dmax{\sqrt\ve}{P} (\rho \| \sigma) \leq \Dmax{\ve}{M} (\rho \| \sigma) + \log \frac{1}{1-\ve}$,
\item \label{Dsmax-DmaxP} $\displaystyle \Dmax{\ve+\delta}{M} (\rho \| \sigma) \leq \Dmax{\sqrt\ve}{P} + \log \frac{(\ve+\delta)(1-\ve-\delta)}{\delta}$.
\end{enumerate}
\end{lemma}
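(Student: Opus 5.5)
The three inequalities are of different difficulty. I would handle \ref{Dsmax-DmaxT} and \ref{Dsmax-DmaxP} by reducing to classical statements through measurements. Inequality \ref{DmaxP-Dsmax} needs an explicit quantum construction.

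\emph{Proof of \ref{Dsmax-DmaxT}.} This is the argument already sketched after~\eqref{eq:relax_D2}. Let $\rho'\geq0$ be any subnormalised state with $\norm{\rho-\rho'}{+}\leq\ve$ and $\rho'\leq\lambda\sigma$, and let $\M\in\MM$ be any measurement. Data processing of $\norm{\cdot}{+}$ makes $\M(\rho')$ feasible for $\Dmax{\ve}{T}(\M(\rho)\|\M(\sigma))$. Positivity of $\M$ gives $\M(\rho')\leq\lambda\M(\sigma)$. Taking the supremum over $\M$ and the infimum over $\rho'$ gives the claim.

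\emph{Proof of \ref{Dsmax-DmaxP}.} Let $\rho'$ be subnormalised with $P(\rho,\rho')\leq\sqrt\ve$ and $\rho'\leq\lambda\sigma$, and set $k\coloneqq\frac{(\ve+\delta)(1-\ve-\delta)}{\delta}$. By the hockey-stick form of $\Dmax{\ve+\delta}{M}$, it suffices to show
\begin{equation*}
\Tr(\rho-k\lambda\sigma)_+\leq\ve+\delta .
\end{equation*}
Since $k\lambda\sigma\geq k\rho'$, we have $\Tr(\rho-k\lambda\sigma)_+\leq\Tr(\rho-k\rho')_+$. The quantity $\Tr(\rho-k\rho')_+$ equals the supremum over two-outcome measurements $\M$ of the classical hockey-stick divergence between $p=\M(\rho)$ and $p'=\M(\rho')$.

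The (generalised) fidelity satisfies data processing. Hence every such pair obeys $F(p,p')\geq 1-\ve$, computed on the normalised extensions $p\oplus[0]$ and $p'\oplus[1-\Tr p']$. The claim therefore reduces to a classical extremal problem on at most three points: maximise $\sum_x(p(x)-kp'(x))_+$ subject to the Bhattacharyya coefficient between the two extensions being at least $\sqrt{1-\ve}$.

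I would solve this with a Lagrangian or Cauchy--Schwarz argument. On the set where $p>kp'$, write $a=\sum p$ and $b=\sum p'$ over that set. Then $\sqrt{F}\leq\sqrt{ab}+\sqrt{(1-a)(1-b)}$ with $b<a/k$. Maximising $a-kb$ under this constraint should yield exactly $\ve+\delta$ for the stated $k$. This is an elementary but somewhat fiddly single-variable optimisation.

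\emph{Proof of \ref{DmaxP-Dsmax}, the main obstacle.} Take $\gamma$ with $\Tr(\rho-\gamma\sigma)_+\leq\ve$, write $\Delta\coloneqq(\rho-\gamma\sigma)_+$, so that $\rho\leq\gamma\sigma+\Delta$. Define
\begin{equation*}
G\coloneqq(\gamma\sigma)^{1/2}(\gamma\sigma+\Delta)^{-1/2},\qquad \rho'\coloneqq G\rho G^\dagger\leq G(\gamma\sigma+\Delta)G^\dagger\leq\gamma\sigma .
\end{equation*}
This gives $D_{\max}(\rho'\|\sigma)\leq\log\gamma$.

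It then remains to lower-bound the fidelity, via $\sqrt{F(\rho,\rho')}\geq\Re\Tr\rho G$ together with the subnormalisation term. The goal is to show that $\rho'$, or a mild rescaling such as $\rho'/(1-\ve)$, which costs exactly the $\log\frac1{1-\ve}$ term, has generalised fidelity at least $1-\ve$ with $\rho$. That gives purified distance at most $\sqrt\ve$.

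I expect this fidelity estimate to be the hard step, because a naive Hayashi--Nagaoka-type bound only gives $\sqrt F\geq1-O(\ve)$, which loses a factor of $2$ in $\ve$. The first thing I would try is to exploit the generalised fidelity's $\sqrt{(1-\Tr\rho)(1-\Tr\rho')}$ term together with an operator concavity inequality for $x\mapsto x^{1/2}$ applied to $G$. If that fails, I would instead pass through $D_H^{1-\ve}$ and a pretty-good-measurement construction, using~\cite[Theorem~4]{regula_2025}.
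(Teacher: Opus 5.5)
\textbf{Parts (i) and (iii).} These are fine. In (iii) the two-point problem does close with the stated constant. Write $a=\sin^2\alpha$ and $b=\sin^2\beta$; the constraint $\sqrt{ab}+\sqrt{(1-a)(1-b)}\ge\sqrt{1-\ve}$ becomes $|\alpha-\beta|\le\theta$ with $\sin^2\theta=\ve$. Then $\max(a-kb)\le\frac{1-k}{2}+\frac12\sqrt{(k-1)^2+4k\ve}$, and this equals $\ve+\delta$ exactly when $k\delta=(\ve+\delta)(1-\ve-\delta)$. This is the same implication the paper's footnote extracts from~\cite{regula_2025}: fidelity at least $1-\ve$ forces the hockey-stick at this $k$ to be at most $\ve+\delta$. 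The paper itself only imports all three items from there.

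\textbf{Part (ii) has a genuine gap.} The fidelity estimate is the whole content of (ii), and your setup does not deliver it. For the non-Hermitian Datta--Renner operator $G=(\gamma\sigma)^{1/2}(\gamma\sigma+\Delta)^{-1/2}$, the available bound is $\sqrt{F(\rho,G\rho G^\dagger)}=\norm{\sqrt\rho\, G\sqrt\rho}{1}\ge\Re\Tr\rho G\ge 1-\Tr\Delta$. That gives only $F\ge(1-\ve)^2$, i.e.\ purified distance of order $\sqrt{2\ve}$.

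The rescaling $\rho'/(1-\ve)$ is not admissible. Write $\rho=\gamma\sigma+\Delta-N$ with $N=(\rho-\gamma\sigma)_-$. Then $\Tr G\rho G^\dagger=\Tr\gamma\sigma-\Tr N G^\dagger G\ge 1-\Tr\Delta$, and this can strictly exceed $1-\ve$. Classically $\min(p,\gamma q)$ has trace exactly $1-\ve$, but here $\rho'/(1-\ve)$ may be supernormalised.

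Normalising by $\Tr\rho'$ instead would require $(\Re\Tr\rho G)^2\ge(1-\ve)\Tr\rho G^\dagger G$. Classically this follows from $G^2\le G$, but for the non-normal $G$ one has $G^\dagger G\not\le\Re G$ in general, so the step does not transfer. Your fallback through $D_H^{1-\ve}$ costs additive logarithmic terms and a shift of the smoothing parameter, so it cannot reproduce the stated constant $\log\frac{1}{1-\ve}$ at the same $\ve$.

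\textbf{The missing idea: use a Hermitian contraction built from the matrix geometric mean.} Let $B=\gamma\sigma+\Delta$ and, on $\supp B$, define $T=B^{-1}\#(\gamma\sigma)=B^{-1/2}\big(B^{1/2}\gamma\sigma B^{1/2}\big)^{1/2}B^{-1/2}$.
\begin{itemize}
\item $TBT=\gamma\sigma$ by construction.
\item $0\le T\le\id$, from $\gamma\sigma\le B$ and operator monotonicity of the square root.
\end{itemize}
Take $\rho''=T\rho T/\Tr\rho T$. Then:
\begin{itemize}
\item $\rho''\le \gamma\sigma/\Tr\rho T$, since $\rho\le B$ and $TBT=\gamma\sigma$.
\item $\Tr\rho''=\Tr\rho T^2/\Tr\rho T\le 1$, since $T^2\le T$.
\item $F(\rho,T\rho T)=(\Tr\rho T)^2$, because $\sqrt\rho\, T\sqrt\rho\ge0$; hence $F(\rho,\rho'')=\Tr\rho T$.
\item $\Tr\rho T=\Tr BT-\Tr NT\ge\norm{(\gamma\sigma)^{1/2}B^{1/2}}{1}-\Tr N\ge\Tr\gamma\sigma-\Tr N=1-\Tr\Delta\ge 1-\ve$. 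The middle inequality uses monotonicity of the fidelity under $B\ge\gamma\sigma$.
\end{itemize}
Thus $P(\rho,\rho'')\le\sqrt\ve$ and $D_{\max}(\rho''\|\sigma)\le\log\gamma+\log\frac{1}{1-\ve}$. Taking the infimum over $\gamma$ with $\Tr(\rho-\gamma\sigma)_+\le\ve$ gives (ii).
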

All are shown in~\cite{regula_2025}: \ref{Dsmax-DmaxT} is Lemma~9 (cf.~\eqref{eq:relax_D2}), \ref{DmaxP-Dsmax} is Corollary~8,  \ref{Dsmax-DmaxP} is Lemma~10 therein.

\begin{proof}[Proof of Proposition~\ref{prop:converse}]
The starting point is a standard converse argument: by definition, a randomness extraction protocol must ensure that $\norm{ \rho_{ZE}^f - \frac{1}{|\Z|}\id_Z \otimes \rho_E}{+} \leq \ve$, which means that $\Dmax{\ve}{T}\big(\rho_{ZE}^f \big\| \frac{1}{|\Z|} \id_Z \otimes \rho_E\big) \leq D_{\max}\big(\frac{1}{|\Z|} \id_Z \otimes \rho_E \big\| \frac{1}{|\Z|} \id_Z \otimes \rho_E\big) = 0$, i.e.\ $\Dmax{\ve}{T}\big(\rho_{ZE}^f \big\| \id_Z \otimes \rho_E\big) \leq \log \frac{1}{|\Z|}$.

Then
\begin{equation}\begin{aligned}\label{eq:tight}
 -\log |\Z| &\geq \Dmax{\ve}{T}(\rho_{ZE}^f\| \id_Z \otimes \rho_E)\\
 &\textgeq{(i)} \Dmax{\ve}{M}(\rho_{ZE}^f\| \id_Z \otimes \rho_E) \\
 &\textgeq{(ii)} \Dmax{\sqrt\ve}{P} (\rho_{ZE}^f\| \id_Z \otimes \rho_E) - \log\frac{1}{1-\ve} \\
 &\textgeq{(iii)} \Dmax{\sqrt\ve}{P} (\rho_{XE} \| \id_X \otimes \rho_E ) - \log\frac{1}{1-\ve} \\
 &\textgeq{(iv)} \Dmax{\ve+\delta}{M}(\rho_{XE} \| \id_X \otimes \rho_E ) - \log\frac{(1-\ve-\delta)(\ve+\delta)}{(1-\ve)\delta}.
\end{aligned}\end{equation}
with the inequalities following by (i) Lemma~\ref{lem:thelemma}\ref{Dsmax-DmaxT}, (ii) Lemma~\ref{lem:thelemma}\ref{DmaxP-Dsmax}, (iii) Lemma~\ref{lem:monotonicity}, and (iv) Lemma~\ref{lem:thelemma}\ref{DmaxP-Dsmax} again. 
Noting that $1-\ve-\delta \leq 1-\ve$ gives the stated result.
\end{proof}

\begin{remark}
Our result can be compared with a converse bound of~\cite{shen_2024}, which is an ensemble converse that applies only to the expected error under strongly 2-universal hash functions and takes the form
\begin{equation}\begin{aligned}\label{eq:haochung_converse}
  \ell^{\text{strong 2-univ}}_\ve(\rho_{XE}) &\leq - D_s^{1-\ve-\delta} (\rho_{XE} \| \id_X \otimes \rho_E) + \log \frac{(\ve+c)(1 + c)}{c(\delta - c)},
\end{aligned}\end{equation}
where $c \in (0,\delta)$ and $D_s^\ve$ stands for the information spectrum relative entropy~\cite{tomamichel_2013}. Our bound not only applies to all hash functions, but it is in fact a strict quantitative improvement. To see this, adjusting the notation to match~\cite{shen_2024}, we can relax our bound in Proposition~\ref{prop:converse} to
\begin{equation}\begin{aligned}
  \ell_\ve(\rho_{XE}) &\leq - \Dmax{\ve+c}{M} (\rho_{XE} \| \id_X \otimes \rho_E) + \log\frac{\ve+c}{c}\\
  &\leq - D_s^{1-\ve-\delta} (\rho_{XE} \| \id_X \otimes \rho_E) + \log\frac{(\ve+c)(\ve+\delta)}{c(\delta-c)}
\end{aligned}\end{equation}
where the second line follows by~\cite[Proposition~17]{regula_2025}. Since obviously $\ve+\delta \leq 1 \leq 1+c$, this implies~\eqref{eq:haochung_converse}.
\end{remark}

\subsection{Second-order expansion}\label{sec:secondorder}

Small deviation analysis is another way to refine the characterisation of many-copy i.i.d.\ performance of a task. 
In contrast to the setting of large deviations, which assumes a fixed gap from the optimal first-order rate of a protocol and asks for the error to vanish exponentially fast, we now aim to capture both the optimal first-order asymptotic rate (i.e.\ the $O(n)$ term) as well as the second-order term $O(\sqrt{n})$ when the error is a fixed constant.

Using the second-order analysis of quantum hypothesis testing~\cite{tomamichel_2013,li_2014}, we can then obtain an exact second-order expansion of privacy amplification with quantum side information under trace distance. The achievability of this expansion was previously shown in~\cite{shen_2024}, although its optimality was only argued as an ensemble converse for strongly 2-universal hash functions.

\begin{boxed}
\begin{corollary}\label{cor:second_order_tight}
For any $\ve \in (0,1)$, the second-order expansion of extractable randomness under trace distance is given by
\begin{equation}\begin{aligned}
  \ell_\ve(\rho_{XE}^{\otimes n}) = n \,H(X|E)_\rho + \sqrt{n \, V(X|E)_\rho} \,\, \Phi^{-1}(\ve) + O(\log n)
\end{aligned}\end{equation}
with \univ-universal hashes being optimal among all choices of hash functions.
Here, $V(X|E)_\rho \coloneqq V\!\left(\rho_{XE} \| \id_X \otimes \rho_E\right)$ with
$V(\rho\|\sigma) \coloneqq \Tr \left( \rho ( \log \rho - \log \sigma)^2 \right) - D(\rho\|\sigma)^2$ denoting the quantum relative entropy variance, and $\Phi^{-1}(\ve)$ is the inverse of the cumulative normal distribution function $\Phi$.
\end{corollary}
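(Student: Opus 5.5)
The plan is to sandwich $\ell_\ve(\rho_{XE}^{\otimes n})$ between the achievability bound of Theorem~\ref{thm:privacy_amp} and the converse of Proposition~\ref{prop:converse}. We then reduce both sides, via the one-shot inequalities relating $\Dmax{\ve}{M}$ to the hypothesis testing relative entropy, to the known second-order expansion
\begin{equation}\begin{aligned}
D^{\ve}_H(\rho^{\otimes n}\|\sigma^{\otimes n}) = nD(\rho\|\sigma) + \sqrt{nV(\rho\|\sigma)}\,\Phi^{-1}(\ve) + O(\log n)
\end{aligned}\end{equation}
of~\cite{tomamichel_2013,li_2014}. One point needs care: $\Phi^{-1}(\ve)$ is the correct sign. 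The quantity $-\Dmax{\ve}{M}$ behaves like $-D^{1-\ve}_H$, and $-\Phi^{-1}(1-\ve)=\Phi^{-1}(\ve)$.

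\textbf{Achievability.} Apply Theorem~\ref{thm:privacy_amp} with $\mu = 1/\sqrt n$ and the fixed choice $\sigma_{E^n} = \rho_E^{\otimes n}$. This gives
\begin{equation}\begin{aligned}
\ell_\ve(\rho^{\otimes n}) \geq -\Ds{2}{\ve-\mu}{M}(\rho_{XE}^{\otimes n}\|(\id_X\otimes\rho_E)^{\otimes n}) - O(\log n).
\end{aligned}\end{equation}
Next, use the first chain~\eqref{eq:bounds1} of Lemma~\ref{lem:secondorder_equiv}, $\Ds{2}{\ve'}{M} \leq D^{1-\ve'}_H - \log\frac{1}{\ve'(1-\ve')^2}$ with $\ve'=\ve-\mu$. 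The additive penalty is $O(1)$ for fixed $\ve$. The second-order expansion of $D_H^{1-\ve+1/\sqrt n}$ then yields
\begin{equation}\begin{aligned}
nH(X|E) + \sqrt{nV}\,\Phi^{-1}(\ve - 1/\sqrt n) - O(\log n).
\end{aligned}\end{equation}
A Taylor expansion of $\Phi^{-1}$ absorbs the shift into an $O(1)$ term. Note that $D(\rho_{XE}\|\id_X\otimes\rho_E)=-H(X|E)$ and that $V$ matches the definition in the statement. Since Theorem~\ref{thm:privacy_amp} holds for any \univ-universal family, this also proves the claim that \univ-universal hashes are optimal, once the converse is established.

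\textbf{Converse.} Apply Proposition~\ref{prop:converse} with $\delta = 1/\sqrt n$, which gives
\begin{equation}\begin{aligned}
\ell_\ve \leq -\Dmax{\ve+\delta}{M}(\rho^{\otimes n}\|\id\otimes\rho_E^{\otimes n}) + O(\log n).
\end{aligned}\end{equation}
To lower-bound $\Dmax{\ve+\delta}{M}$ by $D_H^{1-\ve-2\delta}$ up to $O(\log n)$, combine~\eqref{eq:bounds1} with~\eqref{eq:bounds2} as follows:
\begin{equation}\begin{aligned}
\Dmax{\ve'}{M} \geq \Ds{2}{\ve'}{M} + \log\tfrac{1}{1-\ve'} \geq D_H^{1-\ve'-\delta} - 2\log\tfrac1\delta.
\end{aligned}\end{equation}
Alternatively, cite the corresponding inequality of~\cite{regula_2025} directly. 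The converse is valid for all hash functions, because Proposition~\ref{prop:converse} bounds $\ell_\ve$ as defined in~\eqref{eq:ell_def}. The hypothesis testing expansion then gives
\begin{equation}\begin{aligned}
nH(X|E) + \sqrt{nV}\,\Phi^{-1}(\ve+2/\sqrt n) + O(\log n),
\end{aligned}\end{equation}
which again matches $\Phi^{-1}(\ve)$ up to $O(1)$.

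\textbf{Main obstacle.} The hardest part is bookkeeping rather than any conceptual difficulty. The error slacks must scale as $1/\sqrt n$ so that the $\log(1/\mu)$ and $\log(1/\delta)$ penalties remain $O(\log n)$. The second-order expansion of $D_H$ must also be uniform enough to tolerate an $n$-dependent error parameter $\ve\pm O(1/\sqrt n)$. I would handle this by invoking the Berry--Esseen-based bounds of~\cite{tomamichel_2013,li_2014}, which are stated with explicit $O(\log n)$ remainders and continuity in $\ve$. Alternatively, I would first fix a slack $\eta$, derive bounds in $\Phi^{-1}(\ve\pm\eta)$, and then choose $\eta=n^{-1/2}$ using the smoothness of $\Phi^{-1}$ on compact subsets of $(0,1)$. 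One further point needs checking: $V(X|E)$ is defined with $\rho_E$ rather than an optimised $\sigma_E$. This is consistent because the converse uses exactly $\rho_E$ via the $\downarrow$-entropy, and the achievability direction may simply fix $\sigma_E=\rho_E$.
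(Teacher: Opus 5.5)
Your proposal is correct and follows essentially the same route as the paper. The paper sandwiches $\ell_\ve(\rho_{XE}^{\otimes n})$ between Theorem~\ref{thm:privacy_amp}, with $\sigma_{E^n}=\rho_E^{\otimes n}$ and reduced to $D_H^{1-\ve+\mu}$ via Lemma~\ref{lem:secondorder_equiv}, and Proposition~\ref{prop:converse}, reduced to $D_H^{1-\ve-2\delta}$; it then sets $\mu=\delta=1/\sqrt n$ and invokes the second-order expansion of $D_H$ from~\cite{tomamichel_2013}.

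The only difference is cosmetic: you get the converse-side bound $D_{\max}^{\ve+\delta,\MM}\geq D_H^{1-\ve-2\delta}-2\log\frac1\delta$ by chaining \eqref{eq:bounds1} and \eqref{eq:bounds2}, whereas the paper cites~\cite[Lemma~7]{regula_2025} directly for the slightly sharper $D_H^{1-\ve-2\delta}-\log\frac1\delta$, which is immaterial at order $O(\log n)$.
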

\end{boxed}
\begin{proof}
The converse in Proposition~\ref{prop:converse} gives
\begin{equation}\begin{aligned}
  \ell_\ve(\rho_{XE}^{\otimes n}) &\leq \Hmin{\ve+\delta}{M}[down] (X^n|E^n)_{\rho^{\otimes n}} +  \log\frac{1}{\delta}\\
  &\leq - D_H^{1-\ve-2\delta}\!\left(\rho_{XE}^{\otimes n} \middle\| \id_X^{\otimes n} \otimes \rho_E^{\otimes n}\right) + \log\frac{1}{\delta^2}
\end{aligned}\end{equation}
with the second line by~\cite[Lemma~7]{regula_2025}.

On the other hand, from our achievability result in Theorem~\ref{thm:privacy_amp} we have that
\begin{equation}\begin{aligned}
  \ell_\ve(\rho_{XE}^{\otimes n}) &\geq \Hs{2}{\epsilon-\mu}{M}[up] (X^n|E^n)_{\rho^{\otimes n}} - \log\frac{1}{4\mu^2}\\
  &\geq - D_H^{1-\ve+\mu} \!\left(\rho_{XE}^{\otimes n} \middle\| \id_X^{\otimes n} \otimes \rho_E^{\otimes n}\right) - \log\frac{1}{4\mu^2}
\end{aligned}\end{equation}
using Lemma~\ref{lem:secondorder_equiv}.

The standard argument of~\cite[Sec.~VI]{tomamichel_2013} then tells us that
\begin{equation}\begin{aligned}
    D^{1-\ve \pm \vphantom{\mu}\smash{\frac{c}{\sqrt{n}}}}_{H} \!\left(\rho_{XE}^{\otimes n} \middle\| \id_X^{\otimes n} \otimes \rho_E^{\otimes n}\right) = n D\big(\rho \big\| \id_X \otimes \rho_E \big) + \sqrt{n\, V\big(\rho_{XE}\big\|\id_X \otimes \rho_E\big)} \, \Phi^{-1}(1-\ve) + O(\log n)
\end{aligned}\end{equation}
for any constant $c$, and so picking $\delta = \mu = \frac{1}{\sqrt{n}}$ and using that $\Phi^{-1}(1-\ve) = -\Phi^{-1}(\ve)$ concludes the proof.
\end{proof}

Following the above arguments, from Lemma~\ref{lem:secondorder_equiv} we see generally that the second-order expansion of $\Ds{2}{\ve}{M}$ is equivalent to that of the hypothesis testing relative entropy $D^{1-\ve}_H$ and of $\Dmax\ve M$, in the sense that
\begin{equation}\begin{aligned}
  \Ds{2}{\ve}{M}(\rho^{\otimes n}\|\sigma^{\otimes n}) = n D(\rho\|\sigma) - \sqrt{n \, V(\rho\|\sigma)} \, \Phi^{-1}(\ve) + O(\log n).
\end{aligned}\end{equation}
It is worth contrasting this with other smooth divergences used in quantum information that employ different smoothing. The second-order asymptotics of trace-distance--smoothed divergences such as max-relative entropy are simply not known --- it is in general not possible to give a uniform second-order expansion for quantum states~\cite{regula_2025}. 
Purified distance smoothing does lead to known second-order expansions of divergences~\cite{tomamichel_2013}, scaling as $\Phi^{-1}(\ve^2)$ rather than $\Phi^{-1}(\ve)$ as above. However, attempting to obtain second-order asymptotics of privacy amplification with a composable security criterion based on purified distance once again encounters issues, and there is no uniform second-order expansion valid across all quantum states~\cite{anshu_2020,abdelhadi_2020}. 

\subsection{Converse for error exponent}

Even classically, the asymptotic optimality of the bounds on the error exponent (such as in Theorem~\ref{thm:renyi_achiev}) does not appear to be precisely understood. 
Hayashi and Watanabe~\cite{hayashi_2013,hayashi_2016-3} derived asymptotic ensemble converse bounds that match the achievable ones only under an assumption of not merely pairwise, but complete independence of the considered hash functions. Such hash functions are known as strongly universal$_\omega$ in the terminology of Wegman and Carter~\cite{wegman_1981}, and they can be understood as a random-binning--style hashing that corresponds to uniform sampling over \emph{all} functions $\X \to \Z$, making it extremely hard to implement in practice. Although it may be tempting to conjecture that the asymptotic exponent resulting from leftover hashing (as in Theorem~\ref{thm:renyi_achiev} or the earlier classical~\cite{hayashi_2013} and quantum~\cite{dupuis_2023} results) should be tight more generally, Hayashi~\cite{hayashi_2013} showed that tailored hash functions that depend on the source distribution can achieve even larger exponents. Due to this, it does not appear possible to obtain a matching converse that would prove Theorem~\ref{thm:renyi_achiev} to be unambiguously tight. It remains plausible, however, that the exponent of Theorem~\ref{thm:renyi_achiev} could be argued to be ensemble-tight under some mild assumptions of universality. Any such argument would give stronger evidence for the optimality of \univ-universal hashes for practical purposes, also at the level of asymptotic i.i.d.\ error exponents.

However, we are here especially interested in general converse bounds that do not require any assumptions about the considered hash functions. Such bounds can be understood as constraining the asymptotics of the best achievable error,
\begin{equation}\begin{aligned}
  \ve \left(\rho_{XE}, \log\left|\Z\right|\right) \coloneqq \min_{f : \X \to \Z} \frac12 \norm{ \rho^f_{ZE} - \frac{\id_Z}{\left|\Z\right|} \otimes \rho_E }{1}.
\end{aligned}\end{equation}
Converse bounds of this type are known classically~\cite{hayashi_2013,hayashi_2016-3}, although little is known about general converse bounds on the exponent in the case of quantum side information~\cite[Table~3]{cheng_2025}. From our Proposition~\ref{prop:converse} we can indeed obtain a quantum bound that mathes the general classical converse encountered in~\cite{hayashi_2013,hayashi_2016-3}.

\begin{boxed}
\begin{corollary}\label{cor:error_exp_general}
The error exponent of any randomness extraction protocol must satisfy
\begin{equation}\begin{aligned}
  \limsup_{n\to\infty} - \frac1n \log \ve(\rho_{XE}^{\otimes n}, nR ) \leq \sup_{\alpha > 1} \, (\alpha - 1) \left( \wt{H}^\downarrow_{\alpha} (X | E)_{\rho} - R \right).
\end{aligned}\end{equation}
\end{corollary}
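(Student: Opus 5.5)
Starting from the one-shot converse of Proposition~\ref{prop:converse}, applied to $\rho_{XE}^{\otimes n}$ with $\log|\Z| = nR$ and $\ve_n \coloneqq \ve(\rho_{XE}^{\otimes n}, nR)$, I would turn it into an upper bound on the exponent via a R\'enyi-type lower bound on the smooth max-relative entropy. With $k=2$ (assuming $\ve_n<1/2$, which holds eventually if the exponent is positive; otherwise there is nothing to prove), Proposition~\ref{prop:converse} gives
\begin{equation*}
  nR \leq -\Dmax{2\ve_n}{M}\big(\rho_{XE}^{\otimes n}\big\|\id_X^{\otimes n}\otimes\rho_E^{\otimes n}\big) + \log 2 .
\end{equation*}
It therefore remains to lower bound $\Dmax{2\ve_n}{M}$ in terms of R\'enyi divergences in a way that captures the regime of exponentially small smoothing.

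For this I would use the standard converse-type bound on the hypothesis testing relative entropy: for $\alpha>1$, $D^{1-\delta}_H(\rho\|\sigma) \geq$ something is the wrong direction. The correct tool is the inequality $\Dmax{\ve}{M}(\rho\|\sigma) \geq D_H^{1-\ve}(\rho\|\sigma) - \log\frac{1}{1-\ve}$-type relations (from~\cite{regula_2025}) together with the strong-converse-type lower bound on $D_H$, namely that any test with $\Tr M\rho \geq \ve$ satisfies, by data processing of the sandwiched R\'enyi divergence under the two-outcome measurement,
\begin{equation*}
  \log \Tr M\sigma \geq \frac{\alpha}{\alpha-1}\log \Tr M\rho - \wt{D}_\alpha(\rho\|\sigma),
\end{equation*}
i.e.\ $D_H^{1-\ve}(\rho\|\sigma) \leq \wt{D}_\alpha(\rho\|\sigma) + \frac{\alpha}{\alpha-1}\log\frac1\ve$. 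Alternatively, I would work directly with the dual form of $\Dmax{\ve}{M}$ in the Remark after Theorem~\ref{thm:D2_smoothed_forms}, which lower bounds it by $\log\frac{\Tr W\rho-\ve}{\Tr W\sigma}$ for a projector $W$. Either way, the target is the bound
\begin{equation*}
  -\Dmax{2\ve_n}{M}(\rho^{\otimes n}\|\sigma^{\otimes n}) \leq -\,\wt{D}_\alpha\text{-type quantity} + \frac{1}{\alpha-1}\log\frac{1}{\ve_n} + O(1).
\end{equation*}
Here the direction needs care. We need an \emph{upper} bound on $-\Dmax{}{}$, hence a \emph{lower} bound on $\Dmax{2\ve_n}{M}$ in terms of $\wt{D}_\alpha$ with the $\log\frac1{\ve_n}$ penalty having coefficient $\frac{1}{\alpha-1}$ (after multiplying through by $\alpha-1$, this yields exactly the claimed $(\alpha-1)(\wt{H}_\alpha^\downarrow - R)$).

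The natural source of such a bound is the sandwiched R\'enyi inequality for smoothing below the divergence: for $\alpha>1$ one has the i.i.d.\ lower bound $\Dmax{\ve}{P}(\rho\|\sigma)\geq \wt{D}_\alpha(\rho\|\sigma) - \frac{1}{\alpha-1}\log\frac{1}{\ve'}$-style estimates, which hold only in the reverse direction (these are upper bounds). So the main obstacle is obtaining a lower bound on smooth $D_{\max}$ when the smoothing parameter is exponentially small. My plan is to use the first line of Proposition~\ref{prop:converse} instead, $nR \leq \Hmin{\sqrt{\ve_n}}{P}[down] + O(1)$, and then apply a known large-deviation lower bound on purified-distance smooth $D_{\max}$ for $1-F = \ve_n$ small. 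Equivalently, using $\Dmax{\ve}{M}\geq$ the information-spectrum bound with threshold $\gamma$ and $\Tr(\rho-\gamma\sigma)_+ > \ve$, I would choose $\gamma = \exp(-nR')$ and show by a Chernoff/R\'enyi argument (via pinching, whose cost is polynomial in $n$) that $\Tr(\rho^{\otimes n}-\gamma\sigma^{\otimes n})_+ \geq \exp\!\big(-n\sup_{\alpha>1}(\alpha-1)(R'-\wt{D}_\alpha\text{-conditional})\big)$ up to subexponential factors. This is the reverse, strong-converse-regime tail bound, and it is the hardest step. I would try Petz/Mosonyi--Ogawa's strong converse exponent for hypothesis testing, which is expressed via sandwiched R\'enyi divergences with $\alpha>1$, to conclude that if $\ve_n$ decays faster than this exponent, then $\Dmax{2\ve_n}{M}$ exceeds $-nR$ eventually, contradicting the converse. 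Taking $\limsup$ then gives the claim.
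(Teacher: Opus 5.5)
The argument in your last paragraph is the paper's proof. You apply Proposition~\ref{prop:converse} with a fixed $k$ (your $k=2$). You then use $\Dmax{\ve}{M}(\rho\|\sigma)=\inf\{\log\gamma : \Tr(\rho-\gamma\sigma)_+\le\ve\}$ to obtain $\Tr\big(\rho_{XE}^{\otimes n}-2\exp(-nR)\,(\id_X\otimes\rho_E)^{\otimes n}\big)_+\le 2\ve_n$. Finally you invoke the exact large-deviation exponent of this hockey-stick quantity. The paper normalises the second argument by writing $\id_X\otimes\rho_E=|\X|\,\tau_{XE}$ with $\tau_{XE}=\frac{\id_X}{|\X|}\otimes\rho_E$, so the threshold becomes $\exp(n(\log|\X|-R))$. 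It then takes the exponent $\sup_{\alpha>1}(\alpha-1)\big(a-\wt{D}_\alpha(\rho_{XE}\|\tau_{XE})\big)$ at $a=\log|\X|-R$ from~\cite[Theorem~IV.4]{mosonyi_2015}. This equals $\sup_{\alpha>1}(\alpha-1)\big(\wt{H}^\downarrow_\alpha(X|E)_\rho-R\big)$, and the constant factor $2$ (or $\frac{k}{k-1}$) does not change the exponent.

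Several points in the proposal should be removed or fixed.

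\textbf{The displayed target is backwards.} Combining $-\Dmax{2\ve_n}{M}\le -\wt{D}_\alpha+\frac{1}{\alpha-1}\log\frac{1}{\ve_n}+O(1)$ with $nR\le -\Dmax{2\ve_n}{M}+\log 2$ gives $-\frac1n\log\ve_n\ge(\alpha-1)\big(R-\wt{H}^\downarrow_\alpha(X|E)_\rho\big)-O(\frac1n)$. That is a lower bound on the exponent, vacuous for $R<H(X|E)_\rho$, not the claimed upper bound. Your remark that it ``yields exactly the claimed'' expression is incorrect.

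\textbf{What is actually needed.} You need a matching \emph{lower} bound on $\Dmax{\ve_n}{M}$ with $+\frac{1}{\alpha-1}\log\frac{1}{\ve_n}$ at the optimising $\alpha$. Asymptotically this reads $\frac1n\Dmax{\exp(-nE)}{M}(\rho^{\otimes n}\|\sigma^{\otimes n})\to\inf_{\alpha>1}\big[\wt{D}_\alpha(\rho\|\sigma)+\frac{E}{\alpha-1}\big]$. The optimising $\alpha$ depends on $E$, so this cannot come from a single one-shot inequality of the form you wrote. It is equivalent to the hockey-stick lower bound above.

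\textbf{It cannot come from a Chernoff argument.} Chernoff/Markov only bounds such tails from above, which is the direction giving $\Dmax{\ve}{M}\le D^{\MM}_\alpha+\frac{1}{\alpha-1}\log\frac1\ve$. The lower bound is the Cram\'er/G\"artner--Ellis half of Mosonyi--Ogawa and should simply be cited.

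\textbf{Prefactor if you start from hypothesis testing.} If you begin from the hypothesis-testing strong converse exponent $\sup_{\alpha>1}\frac{\alpha-1}{\alpha}(r-\wt{D}_\alpha)$, for instance via the dual form $\log\frac{\Tr W\rho-\ve}{\Tr W\sigma}$ with optimal strong-converse tests, you must also optimise over the type-II threshold $r$. That extra step is what recovers the $(\alpha-1)$ prefactor.

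\textbf{Minor.} $\ve_n<\frac12$ is only guaranteed along a subsequence realising the $\limsup$, not eventually. That subsequence is all you need.
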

\end{boxed}
\begin{proof}
Assume that there exists a sequence of randomness extraction protocols such that  $\ell_{\ve_n} (\rho_{XE}^{\otimes n}) = n R$.
Proposition~\ref{prop:converse} then gives
\begin{equation}\begin{aligned}
  - n R &\geq \Dmax{k \ve_n}{M} \left( \rho_{XE}^{\otimes n} \middle\| \id_{X^n} \otimes \rho_E^{\otimes n} \right) - \log \frac{k}{k-1}\\
  &= \Dmax{k \ve_n}{M} \left( \rho_{XE}^{\otimes n} \middle\| \left(\frac{\id_{X}}{|\X|}\right)^{\otimes n} \otimes \rho_E^{\otimes n} \right) - n \log |\X| - \log \frac{k}{k-1}
\end{aligned}\end{equation}
for some constant $k > 1$. 
Introducing the notation $\tau_{XE} \coloneqq \frac{\id_{X}}{|\X|} \otimes \rho_E$ and recalling that $\Dmax{\ve}{M} (\rho\|\sigma) = \min \lset \lambda \bar \Tr\!\left( \rho - \exp[\lambda]\, \sigma\right)_+ \leq \ve \rset$, we can equivalently write this as
\begin{equation}\begin{aligned}
  \Tr\!\left(  \rho_{XE}^{\otimes n} - \exp\!\left[ n \log\left|\X\right| - nR + \log\frac{k}{k-1} \right] \tau_{XE}^{\otimes n} \right)_+  \leq k \ve_n.
\end{aligned}\end{equation}
Then
\begin{equation}\begin{aligned}
  \limsup_{n\to\infty} - \frac1n \log \ve_n &= \limsup_{n\to\infty} - \frac1n \log k \ve_n\\
  &\leq \limsup_{n\to\infty} -\frac1n \log  \Tr\!\left(  \rho_{XE}^{\otimes n} - \frac{k}{k-1} \exp\!\left[ n \left(\log\left|\X\right| - R\right) \right] \tau_{XE}^{\otimes n} \right)_+\\
  &= \limsup_{n\to\infty} -\frac1n \log  \Tr\!\Big(  \rho_{XE}^{\otimes n} - \exp\!\left[ n \left(\log\left|\X\right| - R\right) \right] \tau_{XE}^{\otimes n} \Big)_+\\
  &\texteq{(i)} \sup_{\alpha > 1} \,\, (\alpha - 1) \left( \log |\X| - R - \wt{D}_\alpha \!\left( \rho_{XE} \middle\| \tau_{XE} \right) \right)\\
  &= \sup_{\alpha > 1} \,\, (\alpha - 1) \left( \wt{H}^\downarrow_{\alpha} (X | E)_{\rho} - R \right)
\end{aligned}\end{equation}
where (i) is exactly the result of~\cite[Theorem~IV.4]{mosonyi_2015}.
\end{proof}

\subsection{Strong converse exponent}\label{sec:strong-converse}

When attempting to extract randomness at a rate higher than the conditional entropy $H(X|E)_\rho$, the error of the protocol must inevitably converge to one~\cite{salzmann_2022}. The study of the so-called strong converse exponents is concerned with understanding the exponent of this convergence. 
Several bounds on the strong converse exponent can be obtained from our approach.

\begin{boxed}
\begin{proposition}\label{prop:sc_bounds}
The strong converse exponent of quantum privacy amplification must satisfy 
\begin{equation}\begin{aligned}\label{eq:sc_bound_petz}
  \liminf_{n\to\infty} -\frac1n \log \!\left(1 - \ve(\rho_{XE}^{\otimes n}, nR )\right) &\geq\, \sup_{\alpha \in (0,1)} \frac{\alpha-1}{2-\alpha} \left( \overline{H}^\downarrow_\alpha(X|E)_\rho - R\right)
\end{aligned}\end{equation}
and
\begin{equation}\begin{aligned}\label{eq:sc_bound_sandwiched}
  \liminf_{n\to\infty} -\frac1n \log \!\left(1 - \ve(\rho_{XE}^{\otimes n}, nR )\right) &\geq\, \sup_{\alpha \in (0,1)} \,(\alpha-1) \left( \wt{H}^\downarrow_\alpha(X|E)_\rho - R\right).
\end{aligned}\end{equation}

Furthermore, there exists a sequence of randomness extraction protocols that gives an achievability bound of
\begin{equation}\begin{aligned}\label{eq:sc_bound_achiev}
  \limsup_{n\to\infty} -\frac1n \log \!\left(1 - \ve(\rho_{XE}^{\otimes n}, nR )\right) &\leq\, \sup_{\alpha \in (0,1)} \frac{\alpha-1}{\alpha} \left( \overline{H}^\uparrow_\alpha(X|E)_\rho - R\right).
\end{aligned}\end{equation}
\end{proposition}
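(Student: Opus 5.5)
The three bounds are proved separately. The two converse bounds come from the one-shot converse of Proposition~\ref{prop:converse} together with known strong converse exponents for hypothesis testing and for the hockey-stick divergence. The achievability bound comes from Theorem~\ref{thm:privacy_amp} combined with a lower bound on $\Ds{2}{\ve}{M}$ in terms of Petz quantities.

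\textbf{Converse bounds.} Suppose protocols achieve $\log|\Z| = nR$ with error $\ve_n \to 1$. I would work with the first line of the chain~\eqref{eq:tight}, since its smoothing parameter $\ve_n$ stays inside $(0,1)$:
\[
-nR \geq \Dmax{\ve_n}{M}\big(\rho_{XE}^{\otimes n} \big\| \id_{X^n}\otimes\rho_E^{\otimes n}\big).
\]
By the hockey-stick characterisation, this says
\[
\Tr\big(\rho^{\otimes n} - e^{-nR}\,\sigma^{\otimes n}\big)_+ \leq \ve_n, \qquad \sigma = \id_X\otimes\rho_E .
\]
Equivalently, $1-\ve_n$ is at least the complementary quantity
\[
\min_{0\le M\le\id}\big(\Tr (\id - M)\rho^{\otimes n} + e^{-nR}\,\Tr M\sigma^{\otimes n}\big),
\]
which is a Bayesian-type testing error. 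The sandwiched bound~\eqref{eq:sc_bound_sandwiched} should then follow from the exponent of $\Tr(\rho^{\otimes n} - e^{nr}\sigma^{\otimes n})_+$ in the regime where this quantity tends to $1$. That exponent is given by sandwiched R\'enyi divergences of order $\alpha<1$ via the Mosonyi--Ogawa / Hiai results, i.e.\ the analogue of~\cite[Theorem~IV.4]{mosonyi_2015} used in Corollary~\ref{cor:error_exp_general}. A cruder Petz bound suffices for a first pass: for any test $M$ and $\alpha\in(0,1)$, Audenaert's inequality gives
\[
\Tr(\id-M)\rho + \lambda\Tr M\sigma \geq \lambda^{1-\alpha}\,\overline{Q}_\alpha(\rho\|\sigma),
\]
up to constants. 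For~\eqref{eq:sc_bound_petz}, with its factor $\frac{1}{2-\alpha}$, I would instead go through the hypothesis-testing relative entropy. Lemma~\ref{lem:secondorder_equiv} relates $\Dmax{\ve}{M}$ to $D_H^{1-\ve}$, and the standard one-shot strong converse bound (Petz data processing applied to the test outcome) gives
\[
D^{1-\ve}_H(\rho\|\sigma) \leq \overline{D}_\alpha(\rho\|\sigma) + \frac{\alpha}{1-\alpha}\log\frac{1}{1-\ve}.
\]
Optimising the smoothing split between $\Dmax{}{}$ and $D_H$, as in the proof of Lemma~\ref{lem:secondorder_equiv}, should produce the $\frac{\alpha-1}{2-\alpha}$ prefactor after a reparametrisation of $\alpha$. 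Additivity of $\overline{D}_\alpha$ and $\wt{D}_\alpha$ under tensor powers then yields the exponents.

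\textbf{Achievability.} Take $|\Z| = e^{nR}$. By Theorem~\ref{thm:privacy_amp} with $\ve$ close to $1$,
\[
\ve(\rho^{\otimes n},nR) \leq 1-\eta_n + \tfrac12\sqrt{e^{nR}\,\Qs{2}{1-\eta_n}{M}(\rho^{\otimes n}\|\id\otimes\sigma_E^{\otimes n})} .
\]
The achievability bound then needs two steps.
\begin{enumerate}[label=(\roman*)]
\item A one-shot upper bound on $\Qs{2}{1-\eta}{M}$ for $\eta$ exponentially small. I would use the primal form in Theorem~\ref{thm:D2_smoothed_forms} with a good sub-operator $R$, or the classical bound $Q_2(p'\|q)\le\gamma(1-\ve)$ from Lemma~\ref{lem:secondorder_equiv}. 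This gives $Q_2^{1-\eta}\lesssim\eta\,\gamma$, where $\gamma$ is set by
\[
\Tr(\rho-\gamma\sigma)_+ = 1-\eta .
\]
\item A Petz bound: $\gamma \le \big(\eta^{-1}\,\overline{Q}_\alpha\big)^{1/(1-\alpha)}$-type estimates on $\Dmax{1-\eta}{M}$ via $D_H^{\eta}$, from Lemma~\ref{lem:secondorder_equiv}~\eqref{eq:bounds1}.
\end{enumerate}
Choosing $\eta_n = e^{-nr}$ to balance the two terms should give success probability $1-\ve\gtrsim e^{-nr}$ with $r$ matching $\frac{\alpha-1}{\alpha}(\overline{H}^\uparrow_\alpha - R)$ after optimising over $\sigma_E$.

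\textbf{Main obstacle.} I expect the hardest part to be obtaining exactly the $\frac{\alpha-1}{\alpha}$ and $\frac{\alpha-1}{2-\alpha}$ prefactors. This requires choosing the right intermediate quantity ($\Dmax{}{M}$ versus $D_H$) and the right smoothing balance. I would first try to obtain these prefactors through explicit Chernoff-type optimisation on the classical pinched statistics, lifting the result by measurements.
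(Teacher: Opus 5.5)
\textbf{The converse part fails at its first step.} The inequality
\[
-nR\ \geq\ \Dmax{\ve_n}{M}\big(\rho_{XE}^{\otimes n}\big\|\id_{X^n}\otimes\rho_E^{\otimes n}\big)
\]
is not the first line of \eqref{eq:tight}. That line concerns the hashed output $\rho^f_{ZE}$, not $\rho_{XE}$. For the input state with the same smoothing $\ve_n$, your inequality is exactly the classical converse $\ell_\ve(\rho_{XE})\le\Hmin{\ve}{M}[down](X|E)_\rho$. The paper points out that this is ruled out for quantum side information~\cite{renes_2018}. Indeed, if it held, Audenaert's inequality alone would give a Petz bound with prefactor $(\alpha-1)$, stronger than both \eqref{eq:sc_bound_petz} and \eqref{eq:sc_bound_sandwiched}.

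Transferring the constraint from $Z$ back to $X$ is the whole difficulty, and the paper does it in two different ways.
\begin{itemize}
\item For \eqref{eq:sc_bound_petz} it uses Proposition~\ref{prop:converse}. There the purified-distance monotonicity of Lemma~\ref{lem:monotonicity} costs extra smoothing $\delta<1-\ve$ and a penalty $\log\frac1\delta$. Audenaert's inequality (\cite[Corollary~15]{regula_2025}) gives $\Dmax{\ve+\delta}{M}\ge\overline{D}_\alpha-\frac{1}{1-\alpha}\log\frac{1}{1-\ve-\delta}$. Choosing $\delta=k(1-\ve)$ makes the total penalty $\frac{2-\alpha}{1-\alpha}\log\frac{1}{1-\ve}$. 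This is where the factor $\frac{1}{2-\alpha}$ comes from, not from balancing $D_{\max}$ against $D_H$.
\item For \eqref{eq:sc_bound_sandwiched} it applies Audenaert's inequality to the output, where $\Dmax{\ve}{T}(\rho^f_{ZE}\|\id_Z\otimes\rho_E)\le-\log|\Z|$ genuinely holds. It then passes to $\wt{D}_\alpha$ via Araki--Lieb--Thirring. Finally it invokes the coarse-graining monotonicity $\wt{D}_\alpha(\rho^f_{ZE}\|\id_Z\otimes\rho_E)\ge\wt{D}_\alpha(\rho_{XE}\|\id_X\otimes\rho_E)$ for $\alpha\in(0,1)$~\cite{rubboli_2026}. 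This monotonicity is the key ingredient missing from your plan, and it is exactly what fails for Petz divergences.
\end{itemize}

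\textbf{Several steps run the wrong way, and the achievability misses its asymptotic input.} From $\Tr(\rho^{\otimes n}-e^{-nR}\sigma^{\otimes n})_+\le\ve_n$ the correct chain is $1-\ve_n\le\min_M(\cdots)\le\lambda^{1-\alpha}\overline{Q}_\alpha$, the second step being Audenaert's inequality. You state both inequalities reversed. In this regime the exponent of $1-\Tr(\rho^{\otimes n}-e^{nr}\sigma^{\otimes n})_+$ is the Hoeffding-type exponent governed by Petz divergences with $\alpha\in(0,1)$, not sandwiched ones. \cite[Theorem~IV.4]{mosonyi_2015} concerns the opposite regime, with $\alpha>1$. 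Likewise, an upper bound on $D_H^{1-\ve}$ fed into \eqref{eq:bounds1} bounds $\Dmax{\ve}{M}$ from above, which cannot give a converse.

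Your achievability route is the paper's: Theorem~\ref{thm:privacy_amp} plus \eqref{eq:bounds1}, reducing to $D_H^{\eta_n}$ with $\eta_n=e^{-nT_n}$. But step (ii) needs an \emph{upper} bound on $D_H^{e^{-nT}}$. That is the converse half of the quantum Hoeffding bound (Nussbaum--Szko\l{}a type), an asymptotic statement that Audenaert's inequality does not provide, since Audenaert only bounds $\Dmax{\ve}{M}$ from below.
\begin{itemize}
\item The paper chooses $T_n$ so that $\inf_{\sigma_{E^n}}D_H^{e^{-nT_n}}(\rho_{XE}^{\otimes n}\|\pi_X^{\otimes n}\otimes\sigma_{E^n})\le n(\log|\X|-R)$. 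This makes the second term at most $\frac12e^{-nT_n}$.
\item It then reads off $\lim T_n$ from \cite[Corollary~13]{hayashi_2016-1}, which allows arbitrary non-i.i.d.\ $\sigma_{E^n}$.
\item With your restriction to $\sigma_E^{\otimes n}$ you would obtain an $\inf_{\sigma_E}\sup_\alpha$ expression. You would then need a minimax step to reach the stated form $\sup_\alpha\frac{\alpha-1}{\alpha}(\overline{H}^\uparrow_\alpha-R)$.
\end{itemize}
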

\end{boxed}

The first of the bounds improves on a bound of~\cite{salzmann_2022}. The second one features an even better prefactor $(\alpha-1)$ but, because of the change from Petz to sandwiched R\'enyi divergences, may not be comparable in general. Although both of the lower bounds here appear weaker than a result found in~\cite{shen_2022}, the latter is only an ensemble converse --- a caveat that we will discuss shortly.

The achievability result given in~\eqref{eq:sc_bound_achiev} appears to be new even classically.

\begin{proof}
From the one-shot converse in Proposition~\ref{prop:converse} we get
\begin{equation}\begin{aligned}
  \ell_\ve(\rho_{XE}) \leq \Hmin{\ve+\delta}{M}[down](X|E)_{\rho}  + \log \frac{1}{\delta}
  \leq \overline{H}^\downarrow_\alpha(X|E)_\rho + \frac{1}{1-\alpha} \log \frac{1}{1-\ve - \delta} + \log \frac{1}{\delta},
\end{aligned}\end{equation}
where the second inequality, valid for all $\alpha \in (0,1)$, follows by an application Audenaert's inequality~\cite{audenaert_2007} (see~\cite[Corollary~15]{regula_2025}). With the choice of $\delta = k (1-\ve)$ for some $k \in (0,1)$, this translates to
\begin{equation}\begin{aligned}
  \ell_\ve(\rho_{XE}) &\leq \overline{H}^\downarrow_\alpha(X|E)_\rho + \frac{2-\alpha}{1-\alpha} \log\frac{1}{1-\ve} + \log\frac{1}{k(1-k)^{1/(1-\alpha)}}.
\end{aligned}\end{equation}
Applying this to $\rho_{XE}^{\otimes n}$, setting $|\Z| = \exp(nR)$, and using the additivity of the Petz--R\'enyi divergences gives
\begin{equation}\begin{aligned}
  \liminf_{n\to\infty} -\frac1n \log \!\left(1 - \ve(\rho_{XE}^{\otimes n}, nR )\right) &\geq \sup_{\alpha \in (0,1)}\, \frac{\alpha-1}{2-\alpha} \left( \overline{H}^\downarrow_\alpha(X|E)_\rho - R\right).
\end{aligned}\end{equation}

For the second bound in~\eqref{eq:sc_bound_sandwiched}, starting as in the proof of Proposition~\ref{prop:converse}, we proceed as
\begin{equation}\begin{aligned}\label{eq:converse_petz_derivation}
  -\log |\Z| &\geq \Dmax{\ve}{T}(\rho_{ZE}^f\| \id_Z \otimes \rho_E) \nonumber\\
 &\textgeq{(i)} \ol{D}_\alpha (\rho_{ZE}^f\| \id_Z \otimes \rho_E) - \frac{1}{1-\alpha} \log\frac{1}{1-\ve} \\
 &\textgeq{(ii)} \wt{D}_\alpha (\rho_{ZE}^f\| \id_Z \otimes \rho_E) - \frac{1}{1-\alpha} \log\frac{1}{1-\ve} \\
 &\textgeq{(iii)} \wt{D}_\alpha (\rho_{XE} \| \id_X \otimes \rho_E ) - \frac{1}{1-\alpha} \log\frac{1}{1-\ve},
\end{aligned}\end{equation}
where: (i) is again a consequence of Audenaert's inequality, shown in~\cite[Corollary~15]{regula_2025};  (ii) holds for all $\alpha \in (0,1)$ by an application of the Araki--Lieb--Thirring inequality~\cite[Lemma~3]{datta_2014}; (iii) follows by the monotonicity of the sandwiched R\'enyi divergences under coarse-graining, first shown in~\cite[Lemma~5.14]{tomamichel_2016} for $\alpha \geq \frac12$ and clarified to hold for all $\alpha \in (0,1)$ in~\cite[Lemma~3]{rubboli_2026}.
Rearranging, this gives the stated bound.

The achievability in~\eqref{eq:sc_bound_achiev} is seen as follows. Recall from Theorem~\ref{thm:privacy_amp} that, for a choice of state $\sigma_E$ and $\ve \in (0,1)$, any family of \univ-universal hash functions yields a randomness extraction protocol satisfying
\begin{equation}\begin{aligned}\label{eq:privacy_amp_strong_conv_achiev}
  \EE_f \frac12 \norm{ \rho_{ZE}^f - \frac{\id_Z}{|\Z|} \otimes \rho_E }{1} &\leq \ve + \frac12 \sqrt{|\Z| \, \Qs{2}{\ve}{M}(\rho_{XE} \| \id_X \otimes \sigma_E)}\\
  &\leq \ve + \frac12 \sqrt{|\Z| \, (1-\ve) \, \Qmax{\ve}{M}(\rho_{XE} \| \id_X \otimes \sigma_E)}\\
  &\leq \ve + \frac12 \sqrt{|\Z| \, (1-\ve)^2 \, \exp\!\left(D^{1-\ve}_{H}(\rho_{XE} \| \id_X \otimes \sigma_E)\right)},
\end{aligned}\end{equation}
with the last two lines using the inequalities in Lemma~\ref{lem:secondorder_equiv}. 
Taking $\ve = 1 -\exp(-n T_n)$ for some choice of exponent $T_n$ and setting $|\Z| = \exp(nR)$ , this gives
\begin{equation}\begin{aligned}
  1 - \ve(\rho_{XE}^{\otimes n}, nR ) &\geq \exp(-n T_n) - \frac12 \exp\left(\frac12 n R - n T_n + \frac12 \,\inf_{\sigma_{E^n}}\, D^{\exp(-n T_n)}_{H}\left(\rho_{XE}^{\otimes n} \middle\| \id^{\otimes n}_X \otimes \sigma_{E^n}\right)\right),
\end{aligned}\end{equation}
where $\sigma_{E^n}$ can be general, non-i.i.d.\ states. 
Setting $\pi_X \coloneqq \frac{1}{|\X|} \id_X$, notice that if we choose $T_n$ so that
\begin{equation}\begin{aligned}\label{eq:product_testing}
\inf_{\sigma_{E^n}}\, D^{\exp(-nT_n)}_{H}\!\left(\rho_{XE}^{\otimes n} \middle\| \pi^{\otimes n}_X \otimes \sigma_{E^n}\right) \leq n \log \left| \X \right| - n R,
\end{aligned}\end{equation}
then $\inf_{\sigma_{E^n}}\! D^{\exp(-nT_n)}_{H}\!\left(\rho_{XE}^{\otimes n} \middle\| \id^{\otimes n}_X \otimes \sigma_{E^n}\right) 
= \inf_{\sigma_{E^n}}\! D^{\exp(-nT_n)}_{H}\!\left(\rho_{XE}^{\otimes n} \middle\| \pi^{\otimes n}_X \otimes \sigma_{E^n}\right) - n \log \left| \X \right| \leq - n R$, and hence
\begin{equation}\begin{aligned}\label{eq:sc_bound_with_hyptest}
  1 - \ve(\rho_{XE}^{\otimes n}, nR ) &\geq \exp(-n T_n) - \frac12 \exp(-nT_n) = \frac12 \exp( -n T_n).
\end{aligned}\end{equation}
The problem of hypothesis testing between an i.i.d.\ state and non-i.i.d.\ product states as in~\eqref{eq:product_testing} is precisely the setting of the work~\cite{hayashi_2016-1}, whose Corollary~13 shows that the best choice of such $T_n$ satisfies
\begin{equation}\begin{aligned}
 \lim_{n\to\infty} T_n &= \sup_{\alpha \in (0,1)} \frac{\alpha-1}{\alpha} \left( \log\left|\X\right| - R - \inf_{\sigma_E} \ol{D}_\alpha(\rho_{XE}\|\pi_X \otimes \sigma_E)\right)\\
 &= \sup_{\alpha \in (0,1)} \frac{\alpha-1}{\alpha} \left( \overline{H}^\uparrow_\alpha(X|E)_\rho - R\right).
\end{aligned}\end{equation}
Combining this with~\eqref{eq:sc_bound_with_hyptest} gives the claimed result.
\end{proof}

There are several points to remark here. First, if we had the one-shot converse bound $\ell_\ve(\rho_{XE}) \leq \Hmin{\ve}{M}[down](X|E)_{\rho}$, which is true classically, then we could improve the converse bound of Eq.~\eqref{eq:sc_bound_sandwiched} from the sandwiched R\'enyi conditional entropies $\smash{\wt{H}^{\downarrow}_\alpha}$ to the Petz--R\'enyi conditional entropies $\smash{\ol{H}^{\downarrow}_\alpha}$. However, as we noted earlier, this one-shot converse bound was ruled out for general quantum states in~\cite{renes_2018}; our approximate converse in Proposition~\ref{prop:converse}, although sufficient for error exponents, is weaker in the strong converse regime. Alternatively, the same improved strong converse bound would follow if the inequality $\wt{D}_\alpha (\rho_{ZE}^{\smash{f}}\| \id_Z \otimes \rho_E) \geq \wt{D}_\alpha (\rho_{XE} \| \id_X \otimes \rho_E )$ that we used in Eq.~\eqref{eq:converse_petz_derivation} could be shown for the Petz--R\'enyi divergences. However, a numerical investigation shows that this inequality is not true in general, and indeed such coarse-graining properties appear closely connected to sandwiched R\'enyi entropies~\cite{tomamichel_2016,rubboli_2026}.

These observations suggest that a strong converse bound of the form~\eqref{eq:sc_bound_sandwiched} but with Petz--R\'enyi entropies should not be possible. This very bound is, however, exactly the result of~\cite{shen_2022}. This is no contradiction, as~\cite{shen_2022} only claimed this result as an ensemble converse bound for strongly 2-universal (pairwise independent) hash functions --- the result of~\cite{shen_2022} does not show a strong converse exponent, or even a strong converse property, if more general hash functions are allowed. 
This points to a conjecture that the strong converse exponent under general hash functions could be strictly smaller than the ensemble strong converse exponent under (strongly) 2-universal hashes, connecting also to our earlier discussion of similar phenomena in the study of error exponents. This motivates a closer investigation of the monotonicity properties of different types of R\'enyi and smooth divergences, especially in the context of their interplay with universal hash functions. 

In relation to the non-matching upper and lower bounds that we obtained in Eqs.~\eqref{eq:sc_bound_sandwiched}--\eqref{eq:sc_bound_achiev}, as well as the very similar bounds for the error exponent in Theorem~\ref{thm:renyi_achiev} and Corollary~\ref{cor:error_exp_general}, we also point to the recent developments in~\cite{li_2025-1,rubboli_2026} that studied strong converse exponents under purified distance and showed the need to consider a class of conditional R\'enyi entropies that go beyond the two extremal variants $\uparrow$ and $\downarrow$ that we studied here. It would be interesting to understand whether this can find use in the study of trace distance.

\section{Discussion}

We introduced the family of measured smooth R\'enyi divergences and showed how it enables a tight one-shot analysis of quantum privacy amplification with trace distance, providing also the first unified derivation of asymptotic constraints in the form of an optimal second-order expansion and the tightest known bounds on the asymptotic exponents. 
A fundamental role here was played by the measured smooth collision entropy $H_{2}\s{\ve}{M}$ and the measured smooth min-entropy $H_{\min}\s{\ve}{M}$, identifying them as the most appropriate generalisations of the classical smooth conditional entropies in the context of quantum privacy amplification, and yielding improvements over all prior quantum formulations of the leftover hash lemma and its fully quantum variant of decoupling.

The results indicate the need to reconsider how smooth entropies and divergences should be defined in the study of quantum information processing, as conventional choices~\cite{renner_2005,tomamichel_2011,tomamichel_2016} would lead to suboptimal results. 
Although we certainly wish we could claim that this indisputably establishes our definitions as the `right' approach to smoothing, there are caveats to such a claim. 
For example, in the derivation of the converse bounds in Section~\ref{sec:converse}, a key step necessitated the use of the purified distance to argue a suitable monotonicity under all hash functions. This caused a somewhat surprising, direct appearance of the purified-distance--smoothed min-entropy in the analysis of privacy amplification under trace distance. Whether this can be avoided and a more direct argument for the trace distance can be shown is a very interesting open question. 
Additionally, we discussed how the measured smoothing can be equivalently understood as smoothing over non-positive Hermitian operators, the latter being a natural non-commutative analogue of similar classical notions. This did not lead to any issues in our analysis of privacy amplification, as the smoothing operator $R$ there plays a purely mathematical role in bounding the amount of extractable randomness. However, in some other quantum information processing tasks, smoothing has a more direct interpretation as a process of identifying a suitable approximation of $\rho$ that one aims to prepare or use in practice through a physical protocol. The need to smooth over Hermitian operators could then be an obstruction to the application of measured smooth divergences in such contexts, perhaps hinting at the impossibility of generalising the tightest classical results to quantum states. 
Nevertheless, due to the fundamental role of privacy amplification and decoupling in the study of many other operational problems in quantum information, we are sure that the approach of this work will lead to improvements in the analysis of a broad range of tasks.

One conceptual point to note about the conditional entropies used in our work is that they are defined using global measurements: $\Hmin{\ve}{M}(X|E)_\rho \!=\! -\log \inf_{\sigma_E} \sup_{\M\in\MM} D^{\ve,\smash{T}}_{\max}(\M(\rho_{XE})\|\M(\id_X \otimes \sigma_E))$ where $\M$ are measurement channels on the whole system $XE$. 
As we discuss in Appendix~\ref{app:guess_prob}, the global character of the measurements in our definitions can be interpreted in the context of the guessing probability of the value of $X$ by the adversary $E$.
Alternatively, the choice of the set of all measurements $\MM$ can be understood as coming from the constraints placed on the distinguisher --- the external reference whose task is to distinguish the output of the randomness extraction protocol from uniform randomness, commonly used to motivate security criteria~\cite{portmann_2022,ferradini_2025}. This is because the trace distance itself is a measured divergence, in the sense that $\norm{X}{1} = \sup_{\M\in\MM}\norm{\M(X)}{1}$~\cite{helstrom_1969,holevo_1973}. 
This immediately gives an idea of how our approach could be extended: if one constrains the set of measurements $\MM$ that the distinguisher may implement, this would replace the trace distance with a suitable distinguishability norm~\cite{matthews_2009}, naturally suggesting a connection with a restricted-measurement notion of measured smooth entropies, that is, one where $\MM$ is not the set of all measurements but some appropriate subset thereof. A recently studied application in this direction is the restriction to computationally-bounded distinguishers~\cite{chen_2017,avidan_2026}, leading to notions of `pseudo-randomness' that mirror classical studies of computational entropies~\cite{hastad_1999,barak_2003}.
The fact that our family of divergences is defined directly through measurements makes them perfectly suited for applications to settings of restricted measurements, and an extension of our techniques in this way would be an interesting follow-up development. Yet another variant of conditional entropies based on measured R\'enyi divergences was defined in~\cite{hanson_2022} in the context of quantum guesswork, although it is not clear if those definitions could be used in the applications studied in this work.

We also point out that one often encounters relaxed statements of the leftover hash lemma in terms of smooth min-entropy, which in our case would correspond to an upper bound of the form $\ve + \frac12 \big[ |\Z| \, \exp\!\smash{\big(}\!- \Hmin{\ve}{M}(X|E)_\rho \smash{\big)} \smash{\big]}^{1/2}$. Although already tighter than standard quantum formulations~\cite{renner_2005,tomamichel_2011,anshu_2020}, this expression would be insufficient to yield several of our results. For instance, the derivation of the achievability of the error exponent in Theorem~\ref{thm:renyi_achiev}, the asymptotic achievability of the strong converse exponent in Proposition~\ref{prop:sc_bounds}, and one-shot bounds such as Corollary~\ref{cor:achiev_purified} relied on tighter bounds that follow from the analysis of the measured smooth collision entropy $H_2\s{\ve}{M}$\!. This shows the importance of smoothing the R\'enyi divergence of order 2, reinforcing a point first made by Hayashi~\cite{hayashi_2013,hayashi_2016-2} in the classical~case.

\let\oldaddcontentsline\addcontentsline
\renewcommand{\addcontentsline}[3]{}

\enlargethispage{\baselineskip}
\begin{acknowledgments}

We thank Hao-Chung Cheng, Ian George, Roberto Rubboli, and Ernest Y.-Z. Tan for helpful discussions. We acknowledge Red Bull~GmbH, without whom this work would not have been possible.
   
This project is supported by the Japan Science and Technology Agency (JST) PRESTO grant no.\ JPMJPR25FB and by the Singapore NRF Investigatorship award (NRF-NRFI10-2024-0006).

\end{acknowledgments}
\clearpage

 \bibliographystyle{apsca}
 \bibliography{main}

\newcommand{\etalchar}[1]{$^{#1}$}
\begin{thebibliography}{HKDW22}
\makeatletter
\providecommand \@ifxundefined [1]{%
 \@ifx{#1\undefined}
}%
\providecommand \@ifnum [1]{%
 \ifnum #1\expandafter \@firstoftwo
 \else \expandafter \@secondoftwo
 \fi
}%
\providecommand \@ifx [1]{%
 \ifx #1\expandafter \@firstoftwo
 \else \expandafter \@secondoftwo
 \fi
}%
\providecommand \natexlab [1]{#1}%
\providecommand \emph  [1]{``#1''}%
\providecommand \bibnamefont  [1]{#1}%
\providecommand \bibfnamefont [1]{#1}%
\providecommand \citenamefont [1]{#1}%
\providecommand \href@noop [0]{\@secondoftwo}%
\providecommand \href [0]{\begingroup \@sanitize@url \@href}%
\providecommand \@href[1]{\@@startlink{#1}\@@href}%
\providecommand \@@href[1]{\endgroup#1\@@endlink}%
\providecommand \@sanitize@url [0]{\catcode `\\12\catcode `\$12\catcode `\&12\catcode `\#12\catcode `\^12\catcode `\_12\catcode `\%12\relax}%
\providecommand \@@startlink[1]{}%
\providecommand \@@endlink[0]{}%
\providecommand \url  [0]{\begingroup\@sanitize@url \@url }%
\providecommand \@url [1]{\endgroup\@href {#1}{\urlprefix }}%
\providecommand \urlprefix  [0]{URL }%
\providecommand \Eprint [0]{\href }%
\providecommand \doibase [0]{http://dx.doi.org/}%
\providecommand \selectlanguage [0]{\@gobble}%
\providecommand \bibinfo  [0]{\@secondoftwo}%
\providecommand \bibfield  [0]{\@secondoftwo}%
\providecommand \translation [1]{[#1]}%
\providecommand \BibitemOpen [0]{}%
\providecommand \bibitemStop [0]{}%
\providecommand \bibitemNoStop [0]{.\EOS\space}%
\providecommand \EOS [0]{\spacefactor3000\relax}%
\providecommand \BibitemShut  [1]{\csname bibitem#1\endcsname}%
\let\auto@bib@innerbib\@empty
\bibitem[AA26]{avidan_2026}
\bibfield  {author} {\bibinfo {author} {\bibfnamefont {N.}~\bibnamefont {Avidan}}\ and\ \bibinfo {author} {\bibfnamefont {R.}~\bibnamefont {Arnon}},\ }\bibfield  {title} {\emph {\bibinfo {title} {Quantum {{Computational Unpredictability Entropy}} and {{Quantum Leakage Resilience}}},}\ }\href {http://dx.doi.org/10.1109/TIT.2026.3658830} {\bibfield  {journal} {\bibinfo  {journal} {IEEE Trans. Inf. Theory}\ }\textbf {\bibinfo {volume} {72}},\ \bibinfo {pages} {3129} (\bibinfo {year} {2026})}\BibitemShut {NoStop}%
\bibitem[ABJT20]{anshu_2020}
\bibfield  {author} {\bibinfo {author} {\bibfnamefont {A.}~\bibnamefont {Anshu}}, \bibinfo {author} {\bibfnamefont {M.}~\bibnamefont {Berta}}, \bibinfo {author} {\bibfnamefont {R.}~\bibnamefont {Jain}}, \ and\ \bibinfo {author} {\bibfnamefont {M.}~\bibnamefont {Tomamichel}},\ }\bibfield  {title} {\emph {\bibinfo {title} {Partially {{Smoothed Information Measures}}},}\ }\href {http://dx.doi.org/10.1109/TIT.2020.2981573} {\bibfield  {journal} {\bibinfo  {journal} {IEEE Trans. Inf. Theory}\ }\textbf {\bibinfo {volume} {66}},\ \bibinfo {pages} {5022} (\bibinfo {year} {2020})}\BibitemShut {NoStop}%
\bibitem[ACM{\etalchar{+}}07]{audenaert_2007}
\bibfield  {author} {\bibinfo {author} {\bibfnamefont {K.~M.~R.}\ \bibnamefont {Audenaert}}, \bibinfo {author} {\bibfnamefont {J.}~\bibnamefont {Calsamiglia}}, \bibinfo {author} {\bibfnamefont {R.}~\bibnamefont {{Mu{\~n}oz-Tapia}}}, \bibinfo {author} {\bibfnamefont {E.}~\bibnamefont {Bagan}}, \bibinfo {author} {\bibfnamefont {{\relax Ll}.}~\bibnamefont {Masanes}}, \bibinfo {author} {\bibfnamefont {A.}~\bibnamefont {Acin}}, \ and\ \bibinfo {author} {\bibfnamefont {F.}~\bibnamefont {Verstraete}},\ }\bibfield  {title} {\emph {\bibinfo {title} {Discriminating {{States}}: {{The Quantum Chernoff Bound}}},}\ }\href {http://dx.doi.org/10.1103/PhysRevLett.98.160501} {\bibfield  {journal} {\bibinfo  {journal} {Phys. Rev. Lett.}\ }\textbf {\bibinfo {volume} {98}},\ \bibinfo {pages} {160501} (\bibinfo {year} {2007})}\BibitemShut {NoStop}%
\bibitem[ADHW09]{abeyesinghe_2009}
\bibfield  {author} {\bibinfo {author} {\bibfnamefont {A.}~\bibnamefont {Abeyesinghe}}, \bibinfo {author} {\bibfnamefont {I.}~\bibnamefont {Devetak}}, \bibinfo {author} {\bibfnamefont {P.}~\bibnamefont {Hayden}}, \ and\ \bibinfo {author} {\bibfnamefont {A.}~\bibnamefont {Winter}},\ }\bibfield  {title} {\emph {\bibinfo {title} {The mother of all protocols: Restructuring quantum information's family tree},}\ }\href {http://dx.doi.org/10.1098/rspa.2009.0202} {\bibfield  {journal} {\bibinfo  {journal} {Proc. A}\ }\textbf {\bibinfo {volume} {465}},\ \bibinfo {pages} {2537} (\bibinfo {year} {2009})}\BibitemShut {NoStop}%
\bibitem[AHT25]{arqand_2025}
\bibfield  {author} {\bibinfo {author} {\bibfnamefont {A.}~\bibnamefont {Arqand}}, \bibinfo {author} {\bibfnamefont {T.~A.}\ \bibnamefont {Hahn}}, \ and\ \bibinfo {author} {\bibfnamefont {E.~Y.-Z.}\ \bibnamefont {Tan}},\ }\bibfield  {title} {\emph {\bibinfo {title} {Generalized {{R}}{\'enyi} {{Entropy Accumulation Theorem}} and {{Generalized Quantum Probability Estimation}}},}\ }\href {http://dx.doi.org/10.1103/pgrn-mz9j} {\bibfield  {journal} {\bibinfo  {journal} {Phys. Rev. X}\ }\textbf {\bibinfo {volume} {15}},\ \bibinfo {pages} {041013} (\bibinfo {year} {2025})}\BibitemShut {NoStop}%
\bibitem[ANSV08]{audenaert_2008}
\bibfield  {author} {\bibinfo {author} {\bibfnamefont {K.~M.~R.}\ \bibnamefont {Audenaert}}, \bibinfo {author} {\bibfnamefont {M.}~\bibnamefont {Nussbaum}}, \bibinfo {author} {\bibfnamefont {A.}~\bibnamefont {Szko{\l}a}}, \ and\ \bibinfo {author} {\bibfnamefont {F.}~\bibnamefont {Verstraete}},\ }\bibfield  {title} {\emph {\bibinfo {title} {Asymptotic {{Error Rates}} in {{Quantum Hypothesis Testing}}},}\ }\href {http://dx.doi.org/10.1007/s00220-008-0417-5} {\bibfield  {journal} {\bibinfo  {journal} {Commun. Math. Phys.}\ }\textbf {\bibinfo {volume} {279}},\ \bibinfo {pages} {251} (\bibinfo {year} {2008})}\BibitemShut {NoStop}%
\bibitem[AR20]{abdelhadi_2020}
\bibfield  {author} {\bibinfo {author} {\bibfnamefont {D.}~\bibnamefont {Abdelhadi}}\ and\ \bibinfo {author} {\bibfnamefont {J.~M.}\ \bibnamefont {Renes}},\ }\bibfield  {title} {\emph {\bibinfo {title} {On the {{Second-Order Asymptotics}} of the {{Partially Smoothed Conditional Min-Entropy}} \& {{Application}} to {{Quantum Compression}}},}\ }\href {http://dx.doi.org/10.1109/JSAIT.2020.3016899} {\bibfield  {journal} {\bibinfo  {journal} {IEEE J. Sel. Areas Inf. Theory}\ }\textbf {\bibinfo {volume} {1}},\ \bibinfo {pages} {416} (\bibinfo {year} {2020})}\BibitemShut {NoStop}%
\bibitem[BBCM95]{bennett_1995}
\bibfield  {author} {\bibinfo {author} {\bibfnamefont {C.}~\bibnamefont {Bennett}}, \bibinfo {author} {\bibfnamefont {G.}~\bibnamefont {Brassard}}, \bibinfo {author} {\bibfnamefont {C.}~\bibnamefont {Crepeau}}, \ and\ \bibinfo {author} {\bibfnamefont {U.}~\bibnamefont {Maurer}},\ }\bibfield  {title} {\emph {\bibinfo {title} {Generalized privacy amplification},}\ }\href {http://dx.doi.org/10.1109/18.476316} {\bibfield  {journal} {\bibinfo  {journal} {IEEE Trans. Inf. Theory}\ }\textbf {\bibinfo {volume} {41}},\ \bibinfo {pages} {1915} (\bibinfo {year} {1995})}\BibitemShut {NoStop}%
\bibitem[BBR88]{bennett_1988}
\bibfield  {author} {\bibinfo {author} {\bibfnamefont {C.~H.}\ \bibnamefont {Bennett}}, \bibinfo {author} {\bibfnamefont {G.}~\bibnamefont {Brassard}}, \ and\ \bibinfo {author} {\bibfnamefont {J.-M.}\ \bibnamefont {Robert}},\ }\bibfield  {title} {\emph {\bibinfo {title} {Privacy {{Amplification}} by {{Public Discussion}}},}\ }\href {http://dx.doi.org/10.1137/0217014} {\bibfield  {journal} {\bibinfo  {journal} {SIAM J. Comput.}\ }\textbf {\bibinfo {volume} {17}},\ \bibinfo {pages} {210} (\bibinfo {year} {1988})}\BibitemShut {NoStop}%
\bibitem[BC94]{braunstein_1994}
\bibfield  {author} {\bibinfo {author} {\bibfnamefont {S.~L.}\ \bibnamefont {Braunstein}}\ and\ \bibinfo {author} {\bibfnamefont {C.~M.}\ \bibnamefont {Caves}},\ }\bibfield  {title} {\emph {\bibinfo {title} {Statistical distance and the geometry of quantum states},}\ }\href {http://dx.doi.org/10.1103/PhysRevLett.72.3439} {\bibfield  {journal} {\bibinfo  {journal} {Phys. Rev. Lett.}\ }\textbf {\bibinfo {volume} {72}},\ \bibinfo {pages} {3439} (\bibinfo {year} {1994})}\BibitemShut {NoStop}%
\bibitem[BCR11]{berta_2011}
\bibfield  {author} {\bibinfo {author} {\bibfnamefont {M.}~\bibnamefont {Berta}}, \bibinfo {author} {\bibfnamefont {M.}~\bibnamefont {Christandl}}, \ and\ \bibinfo {author} {\bibfnamefont {R.}~\bibnamefont {Renner}},\ }\bibfield  {title} {\emph {\bibinfo {title} {The {{Quantum Reverse Shannon Theorem Based}} on {{One-Shot Information Theory}}},}\ }\href {http://dx.doi.org/10.1007/s00220-011-1309-7} {\bibfield  {journal} {\bibinfo  {journal} {Commun. Math. Phys.}\ }\textbf {\bibinfo {volume} {306}},\ \bibinfo {pages} {579} (\bibinfo {year} {2011})}\BibitemShut {NoStop}%
\bibitem[BD10]{buscemi_2010}
\bibfield  {author} {\bibinfo {author} {\bibfnamefont {F.}~\bibnamefont {Buscemi}}\ and\ \bibinfo {author} {\bibfnamefont {N.}~\bibnamefont {Datta}},\ }\bibfield  {title} {\emph {\bibinfo {title} {The {{Quantum Capacity}} of {{Channels With Arbitrarily Correlated Noise}}},}\ }\href {http://dx.doi.org/10.1109/TIT.2009.2039166} {\bibfield  {journal} {\bibinfo  {journal} {IEEE Trans. Inf. Theory}\ }\textbf {\bibinfo {volume} {56}},\ \bibinfo {pages} {1447} (\bibinfo {year} {2010})}\BibitemShut {NoStop}%
\bibitem[Bei13]{beigi_2013}
\bibfield  {author} {\bibinfo {author} {\bibfnamefont {S.}~\bibnamefont {Beigi}},\ }\bibfield  {title} {\emph {\bibinfo {title} {Sandwiched {{R{\'e}nyi}} divergence satisfies data processing inequality},}\ }\href {http://dx.doi.org/10.1063/1.4838855} {\bibfield  {journal} {\bibinfo  {journal} {J. Math. Phys.}\ }\textbf {\bibinfo {volume} {54}},\ \bibinfo {pages} {122202} (\bibinfo {year} {2013})}\BibitemShut {NoStop}%
\bibitem[Ber09]{berta_2009}
\bibfield  {author} {\bibinfo {author} {\bibfnamefont {M.}~\bibnamefont {Berta}},\ }\emph {\bibinfo {title} {Single-Shot {{Quantum State Merging}}}},\ \href@noop {} {\bibinfo {type} {diploma thesis, {ETH Z}urich}} (\bibinfo {year} {2009}),\ \Eprint {http://arxiv.org/abs/0912.4495} {arXiv:0912.4495} \BibitemShut {NoStop}%
\bibitem[BFT17]{berta_2017-1}
\bibfield  {author} {\bibinfo {author} {\bibfnamefont {M.}~\bibnamefont {Berta}}, \bibinfo {author} {\bibfnamefont {O.}~\bibnamefont {Fawzi}}, \ and\ \bibinfo {author} {\bibfnamefont {M.}~\bibnamefont {Tomamichel}},\ }\bibfield  {title} {\emph {\bibinfo {title} {On variational expressions for quantum relative entropies},}\ }\href {http://dx.doi.org/10.1007/s11005-017-0990-7} {\bibfield  {journal} {\bibinfo  {journal} {Lett. Math. Phys.}\ }\textbf {\bibinfo {volume} {107}},\ \bibinfo {pages} {2239} (\bibinfo {year} {2017})}\BibitemShut {NoStop}%
\bibitem[Bha07]{bhatia_2007}
\bibfield  {author} {\bibinfo {author} {\bibfnamefont {R.}~\bibnamefont {Bhatia}},\ }\href@noop {} {\emph {\bibinfo {title} {Positive {{Definite Matrices}}}}}\ (\bibinfo  {publisher} {Princeton University Press},\ \bibinfo {year} {2007})\BibitemShut {NoStop}%
\bibitem[BK15]{bae_2015}
\bibfield  {author} {\bibinfo {author} {\bibfnamefont {J.}~\bibnamefont {Bae}}\ and\ \bibinfo {author} {\bibfnamefont {L.-C.}\ \bibnamefont {Kwek}},\ }\bibfield  {title} {\emph {\bibinfo {title} {Quantum state discrimination and its applications},}\ }\href {http://dx.doi.org/10.1088/1751-8113/48/8/083001} {\bibfield  {journal} {\bibinfo  {journal} {J. Phys. A: Math. Theor.}\ }\textbf {\bibinfo {volume} {48}},\ \bibinfo {pages} {083001} (\bibinfo {year} {2015})}\BibitemShut {NoStop}%
\bibitem[BSW03]{barak_2003}
\bibfield  {author} {\bibinfo {author} {\bibfnamefont {B.}~\bibnamefont {Barak}}, \bibinfo {author} {\bibfnamefont {R.}~\bibnamefont {Shaltiel}}, \ and\ \bibinfo {author} {\bibfnamefont {A.}~\bibnamefont {Wigderson}},\ }\bibfield  {title} {\emph {\bibinfo {title} {Computational {{Analogues}} of {{Entropy}}},}\ }in\ \href {http://dx.doi.org/10.1007/978-3-540-45198-3_18} {\emph {\bibinfo {booktitle} {Approximation, {{Randomization}}, and {{Combinatorial Optimization}}. {{Algorithms}} and {{Techniques}}}}},\ \bibinfo {editor} {edited by\ \bibinfo {editor} {\bibfnamefont {S.}~\bibnamefont {Arora}}, \bibinfo {editor} {\bibfnamefont {K.}~\bibnamefont {Jansen}}, \bibinfo {editor} {\bibfnamefont {J.~D.~P.}\ \bibnamefont {Rolim}}, \ and\ \bibinfo {editor} {\bibfnamefont {A.}~\bibnamefont {Sahai}}}\ (\bibinfo  {publisher} {Springer},\ \bibinfo {address} {Berlin},\ \bibinfo {year} {2003})\ pp.\ \bibinfo {pages} {200--215}\BibitemShut {NoStop}%
\bibitem[BV04]{boyd_2004}
\bibfield  {author} {\bibinfo {author} {\bibfnamefont {S.}~\bibnamefont {Boyd}}\ and\ \bibinfo {author} {\bibfnamefont {L.}~\bibnamefont {Vandenberghe}},\ }\href@noop {} {\emph {\bibinfo {title} {Convex {{Optimization}}}}}\ (\bibinfo  {publisher} {Cambridge University Press},\ \bibinfo {address} {New York},\ \bibinfo {year} {2004})\BibitemShut {NoStop}%
\bibitem[Can01]{canetti_2001}
\bibfield  {author} {\bibinfo {author} {\bibfnamefont {R.}~\bibnamefont {Canetti}},\ }\bibfield  {title} {\emph {\bibinfo {title} {Universally composable security: A new paradigm for cryptographic protocols},}\ }in\ \href {http://dx.doi.org/10.1109/SFCS.2001.959888} {\emph {\bibinfo {booktitle} {Proceedings 42nd {{IEEE Symposium}} on {{Foundations}} of {{Computer Science}}}}}\ (\bibinfo {year} {2001})\ pp.\ \bibinfo {pages} {136--145}\BibitemShut {NoStop}%
\bibitem[CCL{\etalchar{+}}17]{chen_2017}
\bibfield  {author} {\bibinfo {author} {\bibfnamefont {Y.-H.}\ \bibnamefont {Chen}}, \bibinfo {author} {\bibfnamefont {K.-M.}\ \bibnamefont {Chung}}, \bibinfo {author} {\bibfnamefont {C.-Y.}\ \bibnamefont {Lai}}, \bibinfo {author} {\bibfnamefont {S.~P.}\ \bibnamefont {Vadhan}}, \ and\ \bibinfo {author} {\bibfnamefont {X.}~\bibnamefont {Wu}},\ }\bibfield  {title} {\emph {\bibinfo {title} {Computational {{Notions}} of {{Quantum Min-Entropy}}},}\ }\href@noop {} {\Eprint {http://arxiv.org/abs/1704.07309} {arXiv:1704.07309}  (\bibinfo {year} {2017})}\BibitemShut {NoStop}%
\bibitem[Che23]{cheng_2023-1}
\bibfield  {author} {\bibinfo {author} {\bibfnamefont {H.-C.}\ \bibnamefont {Cheng}},\ }\bibfield  {title} {\emph {\bibinfo {title} {Simple and {{Tighter Derivation}} of {{Achievability}} for {{Classical Communication Over Quantum Channels}}},}\ }\href {http://dx.doi.org/10.1103/PRXQuantum.4.040330} {\bibfield  {journal} {\bibinfo  {journal} {PRX Quantum}\ }\textbf {\bibinfo {volume} {4}},\ \bibinfo {pages} {040330} (\bibinfo {year} {2023})}\BibitemShut {NoStop}%
\bibitem[CL25]{cheng_2025}
\bibfield  {author} {\bibinfo {author} {\bibfnamefont {H.-C.}\ \bibnamefont {Cheng}}\ and\ \bibinfo {author} {\bibfnamefont {P.-C.}\ \bibnamefont {Liu}},\ }\bibfield  {title} {\emph {\bibinfo {title} {Error {{Exponents}} for {{Quantum Packing Problems}} via {{An Operator Layer Cake Theorem}}},}\ }\href@noop {} {\Eprint {http://arxiv.org/abs/2507.06232} {arXiv:2507.06232}  (\bibinfo {year} {2025})}\BibitemShut {NoStop}%
\bibitem[CW79]{carter_1979}
\bibfield  {author} {\bibinfo {author} {\bibfnamefont {J.~L.}\ \bibnamefont {Carter}}\ and\ \bibinfo {author} {\bibfnamefont {M.~N.}\ \bibnamefont {Wegman}},\ }\bibfield  {title} {\emph {\bibinfo {title} {Universal classes of hash functions},}\ }\href {http://dx.doi.org/10.1016/0022-0000(79)90044-8} {\bibfield  {journal} {\bibinfo  {journal} {J. Comput. Syst. Sci.}\ }\textbf {\bibinfo {volume} {18}},\ \bibinfo {pages} {143} (\bibinfo {year} {1979})}\BibitemShut {NoStop}%
\bibitem[Dat09]{datta_2009}
\bibfield  {author} {\bibinfo {author} {\bibfnamefont {N.}~\bibnamefont {Datta}},\ }\bibfield  {title} {\emph {\bibinfo {title} {Min- and {{Max-Relative Entropies}} and a {{New Entanglement Monotone}}},}\ }\href {http://dx.doi.org/10.1109/TIT.2009.2018325} {\bibfield  {journal} {\bibinfo  {journal} {IEEE Trans. Inf. Theory}\ }\textbf {\bibinfo {volume} {55}},\ \bibinfo {pages} {2816} (\bibinfo {year} {2009})}\BibitemShut {NoStop}%
\bibitem[DBWR14]{dupuis_2014}
\bibfield  {author} {\bibinfo {author} {\bibfnamefont {F.}~\bibnamefont {Dupuis}}, \bibinfo {author} {\bibfnamefont {M.}~\bibnamefont {Berta}}, \bibinfo {author} {\bibfnamefont {J.}~\bibnamefont {Wullschleger}}, \ and\ \bibinfo {author} {\bibfnamefont {R.}~\bibnamefont {Renner}},\ }\bibfield  {title} {\emph {\bibinfo {title} {One-{{Shot Decoupling}}},}\ }\href {http://dx.doi.org/10.1007/s00220-014-1990-4} {\bibfield  {journal} {\bibinfo  {journal} {Commun. Math. Phys.}\ }\textbf {\bibinfo {volume} {328}},\ \bibinfo {pages} {251} (\bibinfo {year} {2014})}\BibitemShut {NoStop}%
\bibitem[DFR20]{dupuis_2020}
\bibfield  {author} {\bibinfo {author} {\bibfnamefont {F.}~\bibnamefont {Dupuis}}, \bibinfo {author} {\bibfnamefont {O.}~\bibnamefont {Fawzi}}, \ and\ \bibinfo {author} {\bibfnamefont {R.}~\bibnamefont {Renner}},\ }\bibfield  {title} {\emph {\bibinfo {title} {Entropy {{Accumulation}}},}\ }\href {http://dx.doi.org/10.1007/s00220-020-03839-5} {\bibfield  {journal} {\bibinfo  {journal} {Commun. Math. Phys.}\ }\textbf {\bibinfo {volume} {379}},\ \bibinfo {pages} {867} (\bibinfo {year} {2020})}\BibitemShut {NoStop}%
\bibitem[DL14]{datta_2014}
\bibfield  {author} {\bibinfo {author} {\bibfnamefont {N.}~\bibnamefont {Datta}}\ and\ \bibinfo {author} {\bibfnamefont {F.}~\bibnamefont {Leditzky}},\ }\bibfield  {title} {\emph {\bibinfo {title} {A limit of the quantum {{R{\'e}nyi}} divergence},}\ }\href {http://dx.doi.org/10.1088/1751-8113/47/4/045304} {\bibfield  {journal} {\bibinfo  {journal} {J. Phys. A: Math. Theor.}\ }\textbf {\bibinfo {volume} {47}},\ \bibinfo {pages} {045304} (\bibinfo {year} {2014})}\BibitemShut {NoStop}%
\bibitem[DL15]{datta_2015}
\bibfield  {author} {\bibinfo {author} {\bibfnamefont {N.}~\bibnamefont {Datta}}\ and\ \bibinfo {author} {\bibfnamefont {F.}~\bibnamefont {Leditzky}},\ }\bibfield  {title} {\emph {\bibinfo {title} {Second-{{Order Asymptotics}} for {{Source Coding}}, {{Dense Coding}}, and {{Pure-State Entanglement Conversions}}},}\ }\href {http://dx.doi.org/10.1109/TIT.2014.2366994} {\bibfield  {journal} {\bibinfo  {journal} {IEEE Trans. Inf. Theory}\ }\textbf {\bibinfo {volume} {61}},\ \bibinfo {pages} {582} (\bibinfo {year} {2015})}\BibitemShut {NoStop}%
\bibitem[Don86]{donald_1986}
\bibfield  {author} {\bibinfo {author} {\bibfnamefont {M.~J.}\ \bibnamefont {Donald}},\ }\bibfield  {title} {\emph {\bibinfo {title} {On the relative entropy},}\ }\href {http://dx.doi.org/10.1007/BF01212339} {\bibfield  {journal} {\bibinfo  {journal} {Commun. Math. Phys.}\ }\textbf {\bibinfo {volume} {105}},\ \bibinfo {pages} {13} (\bibinfo {year} {1986})}\BibitemShut {NoStop}%
\bibitem[DR09]{datta_2009-1}
\bibfield  {author} {\bibinfo {author} {\bibfnamefont {N.}~\bibnamefont {Datta}}\ and\ \bibinfo {author} {\bibfnamefont {R.}~\bibnamefont {Renner}},\ }\bibfield  {title} {\emph {\bibinfo {title} {Smooth {{Entropies}} and the {{Quantum Information Spectrum}}},}\ }\href {http://dx.doi.org/10.1109/TIT.2009.2018340} {\bibfield  {journal} {\bibinfo  {journal} {IEEE Trans. Inf. Theory}\ }\textbf {\bibinfo {volume} {55}},\ \bibinfo {pages} {2807} (\bibinfo {year} {2009})}\BibitemShut {NoStop}%
\bibitem[Dup09]{dupuis_2010}
\bibfield  {author} {\bibinfo {author} {\bibfnamefont {F.}~\bibnamefont {Dupuis}},\ }\emph {\bibinfo {title} {The Decoupling Approach to Quantum Information Theory}},\ \href@noop {} {Ph.D. thesis},\ \bibinfo  {school} {Universit\'e de Montr\'eal} (\bibinfo {year} {2009}),\ \Eprint {http://arxiv.org/abs/1004.1641} {arXiv:1004.1641} \BibitemShut {NoStop}%
\bibitem[Dup23]{dupuis_2023}
\bibfield  {author} {\bibinfo {author} {\bibfnamefont {F.}~\bibnamefont {Dupuis}},\ }\bibfield  {title} {\emph {\bibinfo {title} {Privacy {{Amplification}} and {{Decoupling Without Smoothing}}},}\ }\href {http://dx.doi.org/10.1109/TIT.2023.3301812} {\bibfield  {journal} {\bibinfo  {journal} {IEEE Trans. Inf. Theory}\ }\textbf {\bibinfo {volume} {69}},\ \bibinfo {pages} {7784} (\bibinfo {year} {2023})}\BibitemShut {NoStop}%
\bibitem[FSWR25]{ferradini_2025}
\bibfield  {author} {\bibinfo {author} {\bibfnamefont {C.}~\bibnamefont {Ferradini}}, \bibinfo {author} {\bibfnamefont {M.}~\bibnamefont {Sandfuchs}}, \bibinfo {author} {\bibfnamefont {R.}~\bibnamefont {Wolf}}, \ and\ \bibinfo {author} {\bibfnamefont {R.}~\bibnamefont {Renner}},\ }\bibfield  {title} {\emph {\bibinfo {title} {Defining {{Security}} in {{Quantum Key Distribution}}},}\ }\href@noop {} {\Eprint {http://arxiv.org/abs/2509.13405} {arXiv:2509.13405}  (\bibinfo {year} {2025})}\BibitemShut {NoStop}%
\bibitem[Fuc96]{fuchs_1996}
\bibfield  {author} {\bibinfo {author} {\bibfnamefont {C.~A.}\ \bibnamefont {Fuchs}},\ }\emph {\bibinfo {title} {Distinguishability and {{Accessible Information}} in {{Quantum Theory}}}},\ \href@noop {} {Ph.D. thesis},\ \bibinfo  {school} {University of New Mexico} (\bibinfo {year} {1996}),\ \Eprint {http://arxiv.org/abs/quant-ph/9601020} {arXiv:quant-ph/9601020} \BibitemShut {NoStop}%
\bibitem[Fv99]{fuchs_1999}
\bibfield  {author} {\bibinfo {author} {\bibfnamefont {C.}~\bibnamefont {Fuchs}}\ and\ \bibinfo {author} {\bibfnamefont {J.}~\bibnamefont {{van de Graaf}}},\ }\bibfield  {title} {\emph {\bibinfo {title} {Cryptographic distinguishability measures for quantum-mechanical states},}\ }\href {http://dx.doi.org/10.1109/18.761271} {\bibfield  {journal} {\bibinfo  {journal} {IEEE Trans. Inf. Theory}\ }\textbf {\bibinfo {volume} {45}},\ \bibinfo {pages} {1216} (\bibinfo {year} {1999})}\BibitemShut {NoStop}%
\bibitem[Gal73]{gallager_1973}
\bibfield  {author} {\bibinfo {author} {\bibfnamefont {R.}~\bibnamefont {Gallager}},\ }\bibfield  {title} {\emph {\bibinfo {title} {The random coding bound is tight for the average code},}\ }\href {http://dx.doi.org/10.1109/TIT.1973.1054971} {\bibfield  {journal} {\bibinfo  {journal} {IEEE Trans. Inf. Theory}\ }\textbf {\bibinfo {volume} {19}},\ \bibinfo {pages} {244} (\bibinfo {year} {1973})}\BibitemShut {NoStop}%
\bibitem[GHVK11]{grigoryan_2011}
\bibfield  {author} {\bibinfo {author} {\bibfnamefont {N.}~\bibnamefont {Grigoryan}}, \bibinfo {author} {\bibfnamefont {A.}~\bibnamefont {Harutyunyan}}, \bibinfo {author} {\bibfnamefont {S.}~\bibnamefont {Voloshynovskiy}}, \ and\ \bibinfo {author} {\bibfnamefont {O.}~\bibnamefont {Koval}},\ }\bibfield  {title} {\emph {\bibinfo {title} {On multiple hypothesis testing with rejection option},}\ }in\ \href {http://dx.doi.org/10.1109/ITW.2011.6089531} {\emph {\bibinfo {booktitle} {2011 {{IEEE Information Theory Workshop}}}}}\ (\bibinfo {year} {2011})\ pp.\ \bibinfo {pages} {75--79}\BibitemShut {NoStop}%
\bibitem[Gou26]{gour_2026}
\bibfield  {author} {\bibinfo {author} {\bibfnamefont {G.}~\bibnamefont {Gour}},\ }\bibfield  {title} {\emph {\bibinfo {title} {Optimal {{Universal Bounds}} for {{Quantum Divergences}}},}\ }\href@noop {} {\Eprint {http://arxiv.org/abs/2603.09885} {arXiv:2603.09885}  (\bibinfo {year} {2026})}\BibitemShut {NoStop}%
\bibitem[Gut89]{gutman_1989}
\bibfield  {author} {\bibinfo {author} {\bibfnamefont {M.}~\bibnamefont {Gutman}},\ }\bibfield  {title} {\emph {\bibinfo {title} {Asymptotically optimal classification for multiple tests with empirically observed statistics},}\ }\href {http://dx.doi.org/10.1109/18.32134} {\bibfield  {journal} {\bibinfo  {journal} {IEEE Trans. Inf. Theory}\ }\textbf {\bibinfo {volume} {35}},\ \bibinfo {pages} {401} (\bibinfo {year} {1989})}\BibitemShut {NoStop}%
\bibitem[Hay07]{hayashi_2007}
\bibfield  {author} {\bibinfo {author} {\bibfnamefont {M.}~\bibnamefont {Hayashi}},\ }\bibfield  {title} {\emph {\bibinfo {title} {Error exponent in asymmetric quantum hypothesis testing and its application to classical-quantum channel coding},}\ }\href {http://dx.doi.org/10.1103/PhysRevA.76.062301} {\bibfield  {journal} {\bibinfo  {journal} {Phys. Rev. A}\ }\textbf {\bibinfo {volume} {76}},\ \bibinfo {pages} {062301} (\bibinfo {year} {2007})}\BibitemShut {NoStop}%
\bibitem[Hay13]{hayashi_2013}
\bibfield  {author} {\bibinfo {author} {\bibfnamefont {M.}~\bibnamefont {Hayashi}},\ }\bibfield  {title} {\emph {\bibinfo {title} {Tight exponential analysis of universally composable privacy amplification and its applications},}\ }\href {http://dx.doi.org/10.1109/TIT.2013.2278971} {\bibfield  {journal} {\bibinfo  {journal} {IEEE Trans. Inf. Theory}\ }\textbf {\bibinfo {volume} {59}},\ \bibinfo {pages} {7728} (\bibinfo {year} {2013})}\BibitemShut {NoStop}%
\bibitem[Hay14]{hayashi_2014}
\bibfield  {author} {\bibinfo {author} {\bibfnamefont {M.}~\bibnamefont {Hayashi}},\ }\bibfield  {title} {\emph {\bibinfo {title} {Large {{Deviation Analysis}} for {{Quantum Security}} via {{Smoothing}} of {{R{\'e}nyi Entropy}} of {{Order}} 2},}\ }\href {http://dx.doi.org/10.1109/TIT.2014.2337884} {\bibfield  {journal} {\bibinfo  {journal} {IEEE Trans. Inf. Theory}\ }\textbf {\bibinfo {volume} {60}},\ \bibinfo {pages} {6702} (\bibinfo {year} {2014})}\BibitemShut {NoStop}%
\bibitem[Hay15]{hayashi_2015}
\bibfield  {author} {\bibinfo {author} {\bibfnamefont {M.}~\bibnamefont {Hayashi}},\ }\bibfield  {title} {\emph {\bibinfo {title} {Precise {{Evaluation}} of {{Leaked Information}} with {{Secure Randomness Extraction}} in the {{Presence}} of {{Quantum Attacker}}},}\ }\href {http://dx.doi.org/10.1007/s00220-014-2174-y} {\bibfield  {journal} {\bibinfo  {journal} {Commun. Math. Phys.}\ }\textbf {\bibinfo {volume} {333}},\ \bibinfo {pages} {335} (\bibinfo {year} {2015})}\BibitemShut {NoStop}%
\bibitem[Hay16]{hayashi_2016-2}
\bibfield  {author} {\bibinfo {author} {\bibfnamefont {M.}~\bibnamefont {Hayashi}},\ }\bibfield  {title} {\emph {\bibinfo {title} {Security analysis of {${\varepsilon}$}-almost dual universal{${_2}$} hash functions: Smoothing of min entropy versus smoothing of {R}ényi entropy of order 2},}\ }\href {http://dx.doi.org/10.1109/TIT.2016.2535174} {\bibfield  {journal} {\bibinfo  {journal} {IEEE Trans. Inf. Theory}\ }\textbf {\bibinfo {volume} {62}},\ \bibinfo {pages} {3451} (\bibinfo {year} {2016})}\BibitemShut {NoStop}%
\bibitem[Hay17]{hayashi_2016}
\bibfield  {author} {\bibinfo {author} {\bibfnamefont {M.}~\bibnamefont {Hayashi}},\ }\href@noop {} {\emph {\bibinfo {title} {Quantum {{Information Theory}}: {{Mathematical Foundation}}}}}\ (\bibinfo  {publisher} {Springer},\ \bibinfo {year} {2017})\BibitemShut {NoStop}%
\bibitem[Hel69]{helstrom_1969}
\bibfield  {author} {\bibinfo {author} {\bibfnamefont {C.~W.}\ \bibnamefont {Helstrom}},\ }\bibfield  {title} {\emph {\bibinfo {title} {Quantum detection and estimation theory},}\ }\href {http://dx.doi.org/10.1007/BF01007479} {\bibfield  {journal} {\bibinfo  {journal} {J. Stat. Phys.}\ }\textbf {\bibinfo {volume} {1}},\ \bibinfo {pages} {231} (\bibinfo {year} {1969})}\BibitemShut {NoStop}%
\bibitem[HILL99]{hastad_1999}
\bibfield  {author} {\bibinfo {author} {\bibfnamefont {J.}~\bibnamefont {H{\aa}stad}}, \bibinfo {author} {\bibfnamefont {R.}~\bibnamefont {Impagliazzo}}, \bibinfo {author} {\bibfnamefont {L.~A.}\ \bibnamefont {Levin}}, \ and\ \bibinfo {author} {\bibfnamefont {M.}~\bibnamefont {Luby}},\ }\bibfield  {title} {\emph {\bibinfo {title} {A {{Pseudorandom Generator}} from any {{One-way Function}}},}\ }\href {http://dx.doi.org/10.1137/S0097539793244708} {\bibfield  {journal} {\bibinfo  {journal} {SIAM J. Comput.}\ }\textbf {\bibinfo {volume} {28}},\ \bibinfo {pages} {1364} (\bibinfo {year} {1999})}\BibitemShut {NoStop}%
\bibitem[HKDW22]{hanson_2022}
\bibfield  {author} {\bibinfo {author} {\bibfnamefont {E.~P.}\ \bibnamefont {Hanson}}, \bibinfo {author} {\bibfnamefont {V.}~\bibnamefont {Katariya}}, \bibinfo {author} {\bibfnamefont {N.}~\bibnamefont {Datta}}, \ and\ \bibinfo {author} {\bibfnamefont {M.~M.}\ \bibnamefont {Wilde}},\ }\bibfield  {title} {\emph {\bibinfo {title} {Guesswork {{With Quantum Side Information}}},}\ }\href {http://dx.doi.org/10.1109/TIT.2021.3118878} {\bibfield  {journal} {\bibinfo  {journal} {IEEE Trans. Inf. Theory}\ }\textbf {\bibinfo {volume} {68}},\ \bibinfo {pages} {322} (\bibinfo {year} {2022})}\BibitemShut {NoStop}%
\bibitem[HM17]{hiai_2017}
\bibfield  {author} {\bibinfo {author} {\bibfnamefont {F.}~\bibnamefont {Hiai}}\ and\ \bibinfo {author} {\bibfnamefont {M.}~\bibnamefont {Mosonyi}},\ }\bibfield  {title} {\emph {\bibinfo {title} {Different quantum f-divergences and the reversibility of quantum operations},}\ }\href {http://dx.doi.org/10.1142/S0129055X17500234} {\bibfield  {journal} {\bibinfo  {journal} {Rev. Math. Phys.}\ }\textbf {\bibinfo {volume} {29}},\ \bibinfo {pages} {1750023} (\bibinfo {year} {2017})}\BibitemShut {NoStop}%
\bibitem[HMPB11]{hiai_2011}
\bibfield  {author} {\bibinfo {author} {\bibfnamefont {F.}~\bibnamefont {Hiai}}, \bibinfo {author} {\bibfnamefont {M.}~\bibnamefont {Mosonyi}}, \bibinfo {author} {\bibfnamefont {D.}~\bibnamefont {Petz}}, \ and\ \bibinfo {author} {\bibfnamefont {C.}~\bibnamefont {Beny}},\ }\bibfield  {title} {\emph {\bibinfo {title} {Quantum f-divergences and error correction},}\ }\href {http://dx.doi.org/10.1142/S0129055X11004412} {\bibfield  {journal} {\bibinfo  {journal} {Rev. Math. Phys.}\ }\textbf {\bibinfo {volume} {23}},\ \bibinfo {pages} {691} (\bibinfo {year} {2011})}\BibitemShut {NoStop}%
\bibitem[Hol73]{holevo_1973}
\bibfield  {author} {\bibinfo {author} {\bibfnamefont {A.~S.}\ \bibnamefont {Holevo}},\ }\bibfield  {title} {\emph {\bibinfo {title} {Statistical decision theory for quantum systems},}\ }\href {http://dx.doi.org/10.1016/0047-259X(73)90028-6} {\bibfield  {journal} {\bibinfo  {journal} {J. Multivar. Anal.}\ }\textbf {\bibinfo {volume} {3}},\ \bibinfo {pages} {337} (\bibinfo {year} {1973})}\BibitemShut {NoStop}%
\bibitem[HOW05]{horodecki_2005-1}
\bibfield  {author} {\bibinfo {author} {\bibfnamefont {M.}~\bibnamefont {Horodecki}}, \bibinfo {author} {\bibfnamefont {J.}~\bibnamefont {Oppenheim}}, \ and\ \bibinfo {author} {\bibfnamefont {A.}~\bibnamefont {Winter}},\ }\bibfield  {title} {\emph {\bibinfo {title} {Partial quantum information},}\ }\href {http://dx.doi.org/10.1038/nature03909} {\bibfield  {journal} {\bibinfo  {journal} {Nature}\ }\textbf {\bibinfo {volume} {436}},\ \bibinfo {pages} {673} (\bibinfo {year} {2005})}\BibitemShut {NoStop}%
\bibitem[HP91]{hiai_1991}
\bibfield  {author} {\bibinfo {author} {\bibfnamefont {F.}~\bibnamefont {Hiai}}\ and\ \bibinfo {author} {\bibfnamefont {D.}~\bibnamefont {Petz}},\ }\bibfield  {title} {\emph {\bibinfo {title} {The proper formula for relative entropy and its asymptotics in quantum probability},}\ }\href {http://dx.doi.org/10.1007/BF02100287} {\bibfield  {journal} {\bibinfo  {journal} {Commun. Math. Phys.}\ }\textbf {\bibinfo {volume} {143}},\ \bibinfo {pages} {99} (\bibinfo {year} {1991})}\BibitemShut {NoStop}%
\bibitem[HP12]{hiai_2012}
\bibfield  {author} {\bibinfo {author} {\bibfnamefont {F.}~\bibnamefont {Hiai}}\ and\ \bibinfo {author} {\bibfnamefont {D.}~\bibnamefont {Petz}},\ }\bibfield  {title} {\emph {\bibinfo {title} {From quasi-entropy to various quantum information quantities},}\ }\href {http://dx.doi.org/10.2977/prims/79} {\bibfield  {journal} {\bibinfo  {journal} {Publ. Res. Inst. Math. Sci.}\ }\textbf {\bibinfo {volume} {48}},\ \bibinfo {pages} {525} (\bibinfo {year} {2012})}\BibitemShut {NoStop}%
\bibitem[HRF23]{hirche_2023}
\bibfield  {author} {\bibinfo {author} {\bibfnamefont {C.}~\bibnamefont {Hirche}}, \bibinfo {author} {\bibfnamefont {C.}~\bibnamefont {Rouz{\'e}}}, \ and\ \bibinfo {author} {\bibfnamefont {D.~S.}\ \bibnamefont {Fran{\ccc c}a}},\ }\bibfield  {title} {\emph {\bibinfo {title} {Quantum {{Differential Privacy}}: {{An Information Theory Perspective}}},}\ }\href {http://dx.doi.org/10.1109/TIT.2023.3272904} {\bibfield  {journal} {\bibinfo  {journal} {IEEE Trans. Inf. Theory}\ }\textbf {\bibinfo {volume} {69}},\ \bibinfo {pages} {5771} (\bibinfo {year} {2023})}\BibitemShut {NoStop}%
\bibitem[HT16]{hayashi_2016-1}
\bibfield  {author} {\bibinfo {author} {\bibfnamefont {M.}~\bibnamefont {Hayashi}}\ and\ \bibinfo {author} {\bibfnamefont {M.}~\bibnamefont {Tomamichel}},\ }\bibfield  {title} {\emph {\bibinfo {title} {Correlation detection and an operational interpretation of the {{R{\'e}nyi}} mutual information},}\ }\href {http://dx.doi.org/10.1063/1.4964755} {\bibfield  {journal} {\bibinfo  {journal} {J. Math. Phys.}\ }\textbf {\bibinfo {volume} {57}},\ \bibinfo {pages} {102201} (\bibinfo {year} {2016})}\BibitemShut {NoStop}%
\bibitem[HW16]{hayashi_2016-3}
\bibfield  {author} {\bibinfo {author} {\bibfnamefont {M.}~\bibnamefont {Hayashi}}\ and\ \bibinfo {author} {\bibfnamefont {S.}~\bibnamefont {Watanabe}},\ }\bibfield  {title} {\emph {\bibinfo {title} {Uniform {{Random Number Generation From Markov Chains}}: {{Non-Asymptotic}} and {{Asymptotic Analyses}}},}\ }\href {http://dx.doi.org/10.1109/TIT.2016.2530084} {\bibfield  {journal} {\bibinfo  {journal} {IEEE Trans. Inf. Theory}\ }\textbf {\bibinfo {volume} {62}},\ \bibinfo {pages} {1795} (\bibinfo {year} {2016})}\BibitemShut {NoStop}%
\bibitem[HW25]{huang_2025}
\bibfield  {author} {\bibinfo {author} {\bibfnamefont {Z.}~\bibnamefont {Huang}}\ and\ \bibinfo {author} {\bibfnamefont {M.~M.}\ \bibnamefont {Wilde}},\ }\bibfield  {title} {\emph {\bibinfo {title} {Accelerated optimization of measured relative entropies},}\ }\href@noop {} {\Eprint {http://arxiv.org/abs/2511.17976} {arXiv:2511.17976}  (\bibinfo {year} {2025})}\BibitemShut {NoStop}%
\bibitem[ILL89]{impagliazzo_1989}
\bibfield  {author} {\bibinfo {author} {\bibfnamefont {R.}~\bibnamefont {Impagliazzo}}, \bibinfo {author} {\bibfnamefont {L.~A.}\ \bibnamefont {Levin}}, \ and\ \bibinfo {author} {\bibfnamefont {M.}~\bibnamefont {Luby}},\ }\bibfield  {title} {\emph {\bibinfo {title} {Pseudo-random generation from one-way functions},}\ }in\ \href {http://dx.doi.org/10.1145/73007.73009} {\emph {\bibinfo {booktitle} {Proceedings of the Twenty-First Annual {{ACM}} Symposium on {{Theory}} of Computing}}}\ (\bibinfo {year} {1989})\ pp.\ \bibinfo {pages} {12--24}\BibitemShut {NoStop}%
\bibitem[Iva87]{ivanovic_1987}
\bibfield  {author} {\bibinfo {author} {\bibfnamefont {I.~D.}\ \bibnamefont {Ivanovic}},\ }\bibfield  {title} {\emph {\bibinfo {title} {How to differentiate between non-orthogonal states},}\ }\href {http://dx.doi.org/10.1016/0375-9601(87)90222-2} {\bibfield  {journal} {\bibinfo  {journal} {Phys. Lett. A}\ }\textbf {\bibinfo {volume} {123}},\ \bibinfo {pages} {257} (\bibinfo {year} {1987})}\BibitemShut {NoStop}%
\bibitem[JR25]{ji_2025}
\bibfield  {author} {\bibinfo {author} {\bibfnamefont {K.}~\bibnamefont {Ji}}\ and\ \bibinfo {author} {\bibfnamefont {B.}~\bibnamefont {Regula}},\ }\bibfield  {title} {\emph {\bibinfo {title} {Beyond {{Hoeffding}} and {{Chernoff}}: {{Trading}} conclusiveness for advantages in quantum hypothesis testing},}\ }\href@noop {} {\Eprint {http://arxiv.org/abs/2510.07601} {arXiv:2510.07601}  (\bibinfo {year} {2025})}\BibitemShut {NoStop}%
\bibitem[Kra94]{krawczyk_1994}
\bibfield  {author} {\bibinfo {author} {\bibfnamefont {H.}~\bibnamefont {Krawczyk}},\ }\bibfield  {title} {\emph {\bibinfo {title} {{{LFSR-based Hashing}} and {{Authentication}}},}\ }in\ \href {http://dx.doi.org/10.1007/3-540-48658-5_15} {\emph {\bibinfo {booktitle} {Advances in {{Cryptology}} --- {{CRYPTO}} '94}}},\ \bibinfo {editor} {edited by\ \bibinfo {editor} {\bibfnamefont {Y.~G.}\ \bibnamefont {Desmedt}}}\ (\bibinfo  {publisher} {Springer},\ \bibinfo {address} {Berlin},\ \bibinfo {year} {1994})\ pp.\ \bibinfo {pages} {129--139}\BibitemShut {NoStop}%
\bibitem[KRBM07]{konig_2007}
\bibfield  {author} {\bibinfo {author} {\bibfnamefont {R.}~\bibnamefont {K{\"o}nig}}, \bibinfo {author} {\bibfnamefont {R.}~\bibnamefont {Renner}}, \bibinfo {author} {\bibfnamefont {A.}~\bibnamefont {Bariska}}, \ and\ \bibinfo {author} {\bibfnamefont {U.}~\bibnamefont {Maurer}},\ }\bibfield  {title} {\emph {\bibinfo {title} {Small {{Accessible Quantum Information Does Not Imply Security}}},}\ }\href {http://dx.doi.org/10.1103/PhysRevLett.98.140502} {\bibfield  {journal} {\bibinfo  {journal} {Phys. Rev. Lett.}\ }\textbf {\bibinfo {volume} {98}},\ \bibinfo {pages} {140502} (\bibinfo {year} {2007})}\BibitemShut {NoStop}%
\bibitem[KRS09]{konig_2009}
\bibfield  {author} {\bibinfo {author} {\bibfnamefont {R.}~\bibnamefont {Konig}}, \bibinfo {author} {\bibfnamefont {R.}~\bibnamefont {Renner}}, \ and\ \bibinfo {author} {\bibfnamefont {C.}~\bibnamefont {Schaffner}},\ }\bibfield  {title} {\emph {\bibinfo {title} {The {{Operational Meaning}} of {{Min-}} and {{Max-Entropy}}},}\ }\href {http://dx.doi.org/10.1109/TIT.2009.2025545} {\bibfield  {journal} {\bibinfo  {journal} {IEEE Trans. Inf. Theory}\ }\textbf {\bibinfo {volume} {55}},\ \bibinfo {pages} {4337} (\bibinfo {year} {2009})}\BibitemShut {NoStop}%
\bibitem[Li14]{li_2014}
\bibfield  {author} {\bibinfo {author} {\bibfnamefont {K.}~\bibnamefont {Li}},\ }\bibfield  {title} {\emph {\bibinfo {title} {Second-order asymptotics for quantum hypothesis testing},}\ }\href {http://dx.doi.org/10.1214/13-AOS1185} {\bibfield  {journal} {\bibinfo  {journal} {Ann. Stat.}\ }\textbf {\bibinfo {volume} {42}},\ \bibinfo {pages} {171} (\bibinfo {year} {2014})}\BibitemShut {NoStop}%
\bibitem[LLY25]{li_2025-1}
\bibfield  {author} {\bibinfo {author} {\bibfnamefont {S.-B.}\ \bibnamefont {Li}}, \bibinfo {author} {\bibfnamefont {K.}~\bibnamefont {Li}}, \ and\ \bibinfo {author} {\bibfnamefont {L.}~\bibnamefont {Yu}},\ }\bibfield  {title} {\emph {\bibinfo {title} {Two-{{Parameter R\'enyi Information Quantities}} with {{Applications}} to {{Privacy Amplification}} and {{Soft Covering}}},}\ }\href@noop {} {\Eprint {http://arxiv.org/abs/2511.02297} {arXiv:2511.02297}  (\bibinfo {year} {2025})}\BibitemShut {NoStop}%
\bibitem[LR99]{lesniewski_1999}
\bibfield  {author} {\bibinfo {author} {\bibfnamefont {A.}~\bibnamefont {Lesniewski}}\ and\ \bibinfo {author} {\bibfnamefont {M.~B.}\ \bibnamefont {Ruskai}},\ }\bibfield  {title} {\emph {\bibinfo {title} {Monotone {{Riemannian}} metrics and relative entropy on noncommutative probability spaces},}\ }\href {http://dx.doi.org/10.1063/1.533053} {\bibfield  {journal} {\bibinfo  {journal} {J. Math. Phys.}\ }\textbf {\bibinfo {volume} {40}},\ \bibinfo {pages} {5702} (\bibinfo {year} {1999})}\BibitemShut {NoStop}%
\bibitem[LYH23]{li_2023}
\bibfield  {author} {\bibinfo {author} {\bibfnamefont {K.}~\bibnamefont {Li}}, \bibinfo {author} {\bibfnamefont {Y.}~\bibnamefont {Yao}}, \ and\ \bibinfo {author} {\bibfnamefont {M.}~\bibnamefont {Hayashi}},\ }\bibfield  {title} {\emph {\bibinfo {title} {Tight {{Exponential Analysis}} for {{Smoothing}} the {{Max-Relative Entropy}} and for {{Quantum Privacy Amplification}}},}\ }\href {http://dx.doi.org/10.1109/TIT.2022.3217671} {\bibfield  {journal} {\bibinfo  {journal} {IEEE Trans. Inf. Theory}\ }\textbf {\bibinfo {volume} {69}},\ \bibinfo {pages} {1680} (\bibinfo {year} {2023})}\BibitemShut {NoStop}%
\bibitem[Mat14]{matsumoto_2014}
\bibfield  {author} {\bibinfo {author} {\bibfnamefont {K.}~\bibnamefont {Matsumoto}},\ }\bibfield  {title} {\emph {\bibinfo {title} {On maximization of measured {$f$}-divergence between a given pair of quantum states},}\ }\href@noop {} {\Eprint {http://arxiv.org/abs/1412.3676} {arXiv:1412.3676}  (\bibinfo {year} {2014})}\BibitemShut {NoStop}%
\bibitem[MDS{\etalchar{+}}13]{muller-lennert_2013}
\bibfield  {author} {\bibinfo {author} {\bibfnamefont {M.}~\bibnamefont {{M{\"u}ller-Lennert}}}, \bibinfo {author} {\bibfnamefont {F.}~\bibnamefont {Dupuis}}, \bibinfo {author} {\bibfnamefont {O.}~\bibnamefont {Szehr}}, \bibinfo {author} {\bibfnamefont {S.}~\bibnamefont {Fehr}}, \ and\ \bibinfo {author} {\bibfnamefont {M.}~\bibnamefont {Tomamichel}},\ }\bibfield  {title} {\emph {\bibinfo {title} {On quantum {{R{\'e}nyi}} entropies: {{A}} new generalization and some properties},}\ }\href {http://dx.doi.org/10.1063/1.4838856} {\bibfield  {journal} {\bibinfo  {journal} {J. Math. Phys.}\ }\textbf {\bibinfo {volume} {54}},\ \bibinfo {pages} {122203} (\bibinfo {year} {2013})}\BibitemShut {NoStop}%
\bibitem[MFSR22]{metger_2022}
\bibfield  {author} {\bibinfo {author} {\bibfnamefont {T.}~\bibnamefont {Metger}}, \bibinfo {author} {\bibfnamefont {O.}~\bibnamefont {Fawzi}}, \bibinfo {author} {\bibfnamefont {D.}~\bibnamefont {Sutter}}, \ and\ \bibinfo {author} {\bibfnamefont {R.}~\bibnamefont {Renner}},\ }\bibfield  {title} {\emph {\bibinfo {title} {Generalised entropy accumulation},}\ }in\ \href {http://dx.doi.org/10.1109/FOCS54457.2022.00085} {\emph {\bibinfo {booktitle} {2022 {{IEEE}} 63rd {{Annual Symposium}} on {{Foundations}} of {{Computer Science}} ({{FOCS}})}}}\ (\bibinfo {year} {2022})\ pp.\ \bibinfo {pages} {844--850}\BibitemShut {NoStop}%
\bibitem[MG99]{moreland_1999}
\bibfield  {author} {\bibinfo {author} {\bibfnamefont {T.}~\bibnamefont {Moreland}}\ and\ \bibinfo {author} {\bibfnamefont {S.}~\bibnamefont {Gudder}},\ }\bibfield  {title} {\emph {\bibinfo {title} {Infima of {{Hilbert}} space effects},}\ }\href {http://dx.doi.org/10.1016/S0024-3795(98)10119-2} {\bibfield  {journal} {\bibinfo  {journal} {Linear Algebra Its Appl.}\ }\textbf {\bibinfo {volume} {286}},\ \bibinfo {pages} {1} (\bibinfo {year} {1999})}\BibitemShut {NoStop}%
\bibitem[MO15]{mosonyi_2015}
\bibfield  {author} {\bibinfo {author} {\bibfnamefont {M.}~\bibnamefont {Mosonyi}}\ and\ \bibinfo {author} {\bibfnamefont {T.}~\bibnamefont {Ogawa}},\ }\bibfield  {title} {\emph {\bibinfo {title} {Quantum {{Hypothesis Testing}} and the {{Operational Interpretation}} of the {{Quantum R{\'e}nyi Relative Entropies}}},}\ }\href {http://dx.doi.org/10.1007/s00220-014-2248-x} {\bibfield  {journal} {\bibinfo  {journal} {Commun. Math. Phys.}\ }\textbf {\bibinfo {volume} {334}},\ \bibinfo {pages} {1617} (\bibinfo {year} {2015})}\BibitemShut {NoStop}%
\bibitem[MWW09]{matthews_2009}
\bibfield  {author} {\bibinfo {author} {\bibfnamefont {W.}~\bibnamefont {Matthews}}, \bibinfo {author} {\bibfnamefont {S.}~\bibnamefont {Wehner}}, \ and\ \bibinfo {author} {\bibfnamefont {A.}~\bibnamefont {Winter}},\ }\bibfield  {title} {\emph {\bibinfo {title} {Distinguishability of {{Quantum States Under Restricted Families}} of {{Measurements}} with an {{Application}} to {{Quantum Data Hiding}}},}\ }\href {http://dx.doi.org/10.1007/s00220-009-0890-5} {\bibfield  {journal} {\bibinfo  {journal} {Commun. Math. Phys.}\ }\textbf {\bibinfo {volume} {291}},\ \bibinfo {pages} {813} (\bibinfo {year} {2009})}\BibitemShut {NoStop}%
\bibitem[NGW24]{nuradha_2024}
\bibfield  {author} {\bibinfo {author} {\bibfnamefont {T.}~\bibnamefont {Nuradha}}, \bibinfo {author} {\bibfnamefont {Z.}~\bibnamefont {Goldfeld}}, \ and\ \bibinfo {author} {\bibfnamefont {M.~M.}\ \bibnamefont {Wilde}},\ }\bibfield  {title} {\emph {\bibinfo {title} {Quantum {{Pufferfish Privacy}}: {{A Flexible Privacy Framework}} for {{Quantum Systems}}},}\ }\href {http://dx.doi.org/10.1109/TIT.2024.3404927} {\bibfield  {journal} {\bibinfo  {journal} {IEEE Trans. Inf. Theory}\ }\textbf {\bibinfo {volume} {70}},\ \bibinfo {pages} {5731} (\bibinfo {year} {2024})}\BibitemShut {NoStop}%
\bibitem[NH07]{nagaoka_2007}
\bibfield  {author} {\bibinfo {author} {\bibfnamefont {H.}~\bibnamefont {Nagaoka}}\ and\ \bibinfo {author} {\bibfnamefont {M.}~\bibnamefont {Hayashi}},\ }\bibfield  {title} {\emph {\bibinfo {title} {An {{Information-Spectrum Approach}} to {{Classical}} and {{Quantum Hypothesis Testing}} for {{Simple Hypotheses}}},}\ }\href {http://dx.doi.org/10.1109/TIT.2006.889463} {\bibfield  {journal} {\bibinfo  {journal} {IEEE Trans. Inf. Theory}\ }\textbf {\bibinfo {volume} {53}},\ \bibinfo {pages} {534} (\bibinfo {year} {2007})}\BibitemShut {NoStop}%
\bibitem[Nis96]{nisan_1996}
\bibfield  {author} {\bibinfo {author} {\bibfnamefont {N.}~\bibnamefont {Nisan}},\ }\bibfield  {title} {\emph {\bibinfo {title} {Extracting randomness: How and why. {{A}} survey},}\ }in\ \href {http://dx.doi.org/10.1109/CCC.1996.507667} {\emph {\bibinfo {booktitle} {Proceedings of {{Computational Complexity}} ({{Formerly Structure}} in {{Complexity Theory}})}}}\ (\bibinfo {year} {1996})\ pp.\ \bibinfo {pages} {44--58}\BibitemShut {NoStop}%
\bibitem[PAB{\etalchar{+}}20]{pirandola_2020}
\bibfield  {author} {\bibinfo {author} {\bibfnamefont {S.}~\bibnamefont {Pirandola}}, \bibinfo {author} {\bibfnamefont {U.~L.}\ \bibnamefont {Andersen}}, \bibinfo {author} {\bibfnamefont {L.}~\bibnamefont {Banchi}}, \bibinfo {author} {\bibfnamefont {M.}~\bibnamefont {Berta}}, \bibinfo {author} {\bibfnamefont {D.}~\bibnamefont {Bunandar}}, \bibinfo {author} {\bibfnamefont {R.}~\bibnamefont {Colbeck}}, \bibinfo {author} {\bibfnamefont {D.}~\bibnamefont {Englund}}, \bibinfo {author} {\bibfnamefont {T.}~\bibnamefont {Gehring}}, \bibinfo {author} {\bibfnamefont {C.}~\bibnamefont {Lupo}}, \bibinfo {author} {\bibfnamefont {C.}~\bibnamefont {Ottaviani}}, \bibinfo {author} {\bibfnamefont {J.~L.}\ \bibnamefont {Pereira}}, \bibinfo {author} {\bibfnamefont {M.}~\bibnamefont {Razavi}}, \bibinfo {author} {\bibfnamefont {J.~S.}\ \bibnamefont {Shaari}}, \bibinfo {author} {\bibfnamefont {M.}~\bibnamefont {Tomamichel}}, \bibinfo {author} {\bibfnamefont {V.~C.}\ \bibnamefont {Usenko}}, \bibinfo {author} {\bibfnamefont
  {G.}~\bibnamefont {Vallone}}, \bibinfo {author} {\bibfnamefont {P.}~\bibnamefont {Villoresi}}, \ and\ \bibinfo {author} {\bibfnamefont {P.}~\bibnamefont {Wallden}},\ }\bibfield  {title} {\emph {\bibinfo {title} {Advances in quantum cryptography},}\ }\href {http://dx.doi.org/10.1364/AOP.361502} {\bibfield  {journal} {\bibinfo  {journal} {Adv. Opt. Photon., AOP}\ }\textbf {\bibinfo {volume} {12}},\ \bibinfo {pages} {1012} (\bibinfo {year} {2020})}\BibitemShut {NoStop}%
\bibitem[Per88]{peres_1988}
\bibfield  {author} {\bibinfo {author} {\bibfnamefont {A.}~\bibnamefont {Peres}},\ }\bibfield  {title} {\emph {\bibinfo {title} {How to differentiate between non-orthogonal states},}\ }\href {http://dx.doi.org/10.1016/0375-9601(88)91034-1} {\bibfield  {journal} {\bibinfo  {journal} {Phys. Lett. A}\ }\textbf {\bibinfo {volume} {128}},\ \bibinfo {pages} {19} (\bibinfo {year} {1988})}\BibitemShut {NoStop}%
\bibitem[Pet86]{petz_1986}
\bibfield  {author} {\bibinfo {author} {\bibfnamefont {D.}~\bibnamefont {Petz}},\ }\bibfield  {title} {\emph {\bibinfo {title} {Quasi-entropies for finite quantum systems},}\ }\href {http://dx.doi.org/10.1016/0034-4877(86)90067-4} {\bibfield  {journal} {\bibinfo  {journal} {Rep. Math. Phys.}\ }\textbf {\bibinfo {volume} {23}},\ \bibinfo {pages} {57} (\bibinfo {year} {1986})}\BibitemShut {NoStop}%
\bibitem[Pet96]{petz_1996}
\bibfield  {author} {\bibinfo {author} {\bibfnamefont {D.}~\bibnamefont {Petz}},\ }\bibfield  {title} {\emph {\bibinfo {title} {Monotone metrics on matrix spaces},}\ }\href {http://dx.doi.org/10.1016/0024-3795(94)00211-8} {\bibfield  {journal} {\bibinfo  {journal} {Lin. Alg. Appl.}\ }\textbf {\bibinfo {volume} {244}},\ \bibinfo {pages} {81} (\bibinfo {year} {1996})}\BibitemShut {NoStop}%
\bibitem[PG11]{petz_2011}
\bibfield  {author} {\bibinfo {author} {\bibfnamefont {D.}~\bibnamefont {Petz}}\ and\ \bibinfo {author} {\bibfnamefont {C.}~\bibnamefont {Ghinea}},\ }\bibfield  {title} {\emph {\bibinfo {title} {Introduction to quantum {{Fisher}} information},}\ }in\ \href {https://www.worldscientific.com/doi/abs/10.1142/9789814338745_0015} {\emph {\bibinfo {booktitle} {Quantum Probability and Related Topics}}}\ (\bibinfo {year} {2011})\ pp.\ \bibinfo {pages} {261--281},\ \Eprint {http://arxiv.org/abs/1008.2417} {arXiv:1008.2417} \BibitemShut {NoStop}%
\bibitem[PR22]{portmann_2022}
\bibfield  {author} {\bibinfo {author} {\bibfnamefont {C.}~\bibnamefont {Portmann}}\ and\ \bibinfo {author} {\bibfnamefont {R.}~\bibnamefont {Renner}},\ }\bibfield  {title} {\emph {\bibinfo {title} {Security in quantum cryptography},}\ }\href {http://dx.doi.org/10.1103/RevModPhys.94.025008} {\bibfield  {journal} {\bibinfo  {journal} {Rev. Mod. Phys.}\ }\textbf {\bibinfo {volume} {94}},\ \bibinfo {pages} {025008} (\bibinfo {year} {2022})}\BibitemShut {NoStop}%
\bibitem[Ras06]{rastegin_2006}
\bibfield  {author} {\bibinfo {author} {\bibfnamefont {A.~E.}\ \bibnamefont {Rastegin}},\ }\bibfield  {title} {\emph {\bibinfo {title} {Sine distance for quantum states},}\ }\href@noop {} {\Eprint {http://arxiv.org/abs/quant-ph/0602112} {arXiv:quant-ph/0602112}  (\bibinfo {year} {2006})}\BibitemShut {NoStop}%
\bibitem[Ren05]{renner_2005}
\bibfield  {author} {\bibinfo {author} {\bibfnamefont {R.}~\bibnamefont {Renner}},\ }\emph {\bibinfo {title} {Security of {{Quantum Key Distribution}}}},\ \href@noop {} {Ph.D. thesis},\ \bibinfo  {school} {ETH Zurich} (\bibinfo {year} {2005}),\ \Eprint {http://arxiv.org/abs/quant-ph/0512258} {arXiv:quant-ph/0512258} \BibitemShut {NoStop}%
\bibitem[Ren18]{renes_2018}
\bibfield  {author} {\bibinfo {author} {\bibfnamefont {J.~M.}\ \bibnamefont {Renes}},\ }\bibfield  {title} {\emph {\bibinfo {title} {On {{Privacy Amplification}}, {{Lossy Compression}}, and {{Their Duality}} to {{Channel Coding}}},}\ }\href {http://dx.doi.org/10.1109/TIT.2018.2865386} {\bibfield  {journal} {\bibinfo  {journal} {IEEE Trans. Inf. Theory}\ }\textbf {\bibinfo {volume} {64}},\ \bibinfo {pages} {7792} (\bibinfo {year} {2018})}\BibitemShut {NoStop}%
\bibitem[RK05]{renner_2005-1}
\bibfield  {author} {\bibinfo {author} {\bibfnamefont {R.}~\bibnamefont {Renner}}\ and\ \bibinfo {author} {\bibfnamefont {R.}~\bibnamefont {K{\"o}nig}},\ }\bibfield  {title} {\emph {\bibinfo {title} {Universally {{Composable Privacy Amplification Against Quantum Adversaries}}},}\ }in\ \href {http://dx.doi.org/10.1007/978-3-540-30576-7_22} {\emph {\bibinfo {booktitle} {Theory of {{Cryptography}}}}},\ \bibinfo {editor} {edited by\ \bibinfo {editor} {\bibfnamefont {J.}~\bibnamefont {Kilian}}}\ (\bibinfo  {publisher} {Springer},\ \bibinfo {address} {Berlin},\ \bibinfo {year} {2005})\ pp.\ \bibinfo {pages} {407--425}\BibitemShut {NoStop}%
\bibitem[RLD26]{regula_2025}
\bibfield  {author} {\bibinfo {author} {\bibfnamefont {B.}~\bibnamefont {Regula}}, \bibinfo {author} {\bibfnamefont {L.}~\bibnamefont {Lami}}, \ and\ \bibinfo {author} {\bibfnamefont {N.}~\bibnamefont {Datta}},\ }\bibfield  {title} {\emph {\bibinfo {title} {Tight {{Relations}} and {{Equivalences Between Smooth Relative Entropies}}},}\ }\href {http://dx.doi.org/10.1109/TIT.2026.3661711} {\bibfield  {journal} {\bibinfo  {journal} {IEEE Trans. Inf. Theory}\ }\textbf {\bibinfo {volume} {72}},\ \bibinfo {pages} {3051} (\bibinfo {year} {2026})}\BibitemShut {NoStop}%
\bibitem[RT26]{rubboli_2026}
\bibfield  {author} {\bibinfo {author} {\bibfnamefont {R.}~\bibnamefont {Rubboli}}\ and\ \bibinfo {author} {\bibfnamefont {M.}~\bibnamefont {Tomamichel}},\ }\bibfield  {title} {\emph {\bibinfo {title} {The strong converse exponent of composable randomness extraction against quantum side information},}\ }\href@noop {} {\Eprint {http://arxiv.org/abs/2601.19182} {arXiv:2601.19182}  (\bibinfo {year} {2026})}\BibitemShut {NoStop}%
\bibitem[RW04]{renner_2004}
\bibfield  {author} {\bibinfo {author} {\bibfnamefont {R.}~\bibnamefont {Renner}}\ and\ \bibinfo {author} {\bibfnamefont {S.}~\bibnamefont {Wolf}},\ }\bibfield  {title} {\emph {\bibinfo {title} {Smooth {{Renyi}} entropy and applications},}\ }in\ \href {http://dx.doi.org/10.1109/ISIT.2004.1365269} {\emph {\bibinfo {booktitle} {2004 {{IEEE International Symposium}} on {{Information Theory}} ({{ISIT}})}}}\ (\bibinfo {year} {2004})\ p.\ \bibinfo {pages} {233}\BibitemShut {NoStop}%
\bibitem[RW05]{renner_2005-2}
\bibfield  {author} {\bibinfo {author} {\bibfnamefont {R.}~\bibnamefont {Renner}}\ and\ \bibinfo {author} {\bibfnamefont {S.}~\bibnamefont {Wolf}},\ }\bibfield  {title} {\emph {\bibinfo {title} {Simple and {{Tight Bounds}} for {{Information Reconciliation}} and {{Privacy Amplification}}},}\ }in\ \href {http://dx.doi.org/10.1007/11593447_11} {\emph {\bibinfo {booktitle} {Advances in {{Cryptology}} - {{ASIACRYPT}} 2005}}},\ \bibinfo {editor} {edited by\ \bibinfo {editor} {\bibfnamefont {B.}~\bibnamefont {Roy}}}\ (\bibinfo  {publisher} {Springer},\ \bibinfo {address} {Berlin},\ \bibinfo {year} {2005})\ pp.\ \bibinfo {pages} {199--216}\BibitemShut {NoStop}%
\bibitem[SD22]{salzmann_2022}
\bibfield  {author} {\bibinfo {author} {\bibfnamefont {R.}~\bibnamefont {Salzmann}}\ and\ \bibinfo {author} {\bibfnamefont {N.}~\bibnamefont {Datta}},\ }\bibfield  {title} {\emph {\bibinfo {title} {Total insecurity of communication via strong converse for quantum privacy amplification},}\ }\href {http://arxiv.org/abs/2202.11090} {\Eprint {http://arxiv.org/abs/2202.11090} {arXiv:2202.11090}  (\bibinfo {year} {2022})}\BibitemShut {NoStop}%
\bibitem[SGC22]{shen_2022}
\bibfield  {author} {\bibinfo {author} {\bibfnamefont {Y.-C.}\ \bibnamefont {Shen}}, \bibinfo {author} {\bibfnamefont {L.}~\bibnamefont {Gao}}, \ and\ \bibinfo {author} {\bibfnamefont {H.-C.}\ \bibnamefont {Cheng}},\ }\bibfield  {title} {\emph {\bibinfo {title} {Strong {{Converse}} for {{Privacy Amplification}} against {{Quantum Side Information}}},}\ }\href@noop {} {\Eprint {http://arxiv.org/abs/2202.10263} {arXiv:2202.10263}  (\bibinfo {year} {2022})}\BibitemShut {NoStop}%
\bibitem[SGC23]{shen_2023}
\bibfield  {author} {\bibinfo {author} {\bibfnamefont {Y.-C.}\ \bibnamefont {Shen}}, \bibinfo {author} {\bibfnamefont {L.}~\bibnamefont {Gao}}, \ and\ \bibinfo {author} {\bibfnamefont {H.-C.}\ \bibnamefont {Cheng}},\ }\bibfield  {title} {\emph {\bibinfo {title} {Optimal {{Second-Order Rates}} for {{Quantum Information Decoupling}}},}\ }in\ \href {http://dx.doi.org/10.1109/ISIT54713.2023.10206502} {\emph {\bibinfo {booktitle} {2023 {{IEEE International Symposium}} on {{Information Theory}} ({{ISIT}})}}}\ (\bibinfo {year} {2023})\ pp.\ \bibinfo {pages} {991--996}\BibitemShut {NoStop}%
\bibitem[SGC24]{shen_2024}
\bibfield  {author} {\bibinfo {author} {\bibfnamefont {Y.-C.}\ \bibnamefont {Shen}}, \bibinfo {author} {\bibfnamefont {L.}~\bibnamefont {Gao}}, \ and\ \bibinfo {author} {\bibfnamefont {H.-C.}\ \bibnamefont {Cheng}},\ }\bibfield  {title} {\emph {\bibinfo {title} {Optimal {{Second-Order Rates}} for {{Quantum Soft Covering}} and {{Privacy Amplification}}},}\ }\href {http://dx.doi.org/10.1109/TIT.2024.3351963} {\bibfield  {journal} {\bibinfo  {journal} {IEEE Trans. Inf. Theory}\ }\textbf {\bibinfo {volume} {70}},\ \bibinfo {pages} {5077} (\bibinfo {year} {2024})}\BibitemShut {NoStop}%
\bibitem[Sio58]{sion_1958}
\bibfield  {author} {\bibinfo {author} {\bibfnamefont {M.}~\bibnamefont {Sion}},\ }\bibfield  {title} {\emph {\bibinfo {title} {On general minimax theorems},}\ }\href {https://msp.org/pjm/1958/8-1/p14.xhtml} {\bibfield  {journal} {\bibinfo  {journal} {Pac. J. Math.}\ }\textbf {\bibinfo {volume} {8}},\ \bibinfo {pages} {171} (\bibinfo {year} {1958})}\BibitemShut {NoStop}%
\bibitem[SK20]{sidhu_2020}
\bibfield  {author} {\bibinfo {author} {\bibfnamefont {J.~S.}\ \bibnamefont {Sidhu}}\ and\ \bibinfo {author} {\bibfnamefont {P.}~\bibnamefont {Kok}},\ }\bibfield  {title} {\emph {\bibinfo {title} {Geometric perspective on quantum parameter estimation},}\ }\href {http://dx.doi.org/10.1116/1.5119961} {\bibfield  {journal} {\bibinfo  {journal} {AVS Quantum Sci.}\ }\textbf {\bibinfo {volume} {2}},\ \bibinfo {pages} {014701} (\bibinfo {year} {2020})}\BibitemShut {NoStop}%
\bibitem[Sti94]{stinson_1994}
\bibfield  {author} {\bibinfo {author} {\bibfnamefont {D.~R.}\ \bibnamefont {Stinson}},\ }\bibfield  {title} {\emph {\bibinfo {title} {Universal hashing and authentication codes},}\ }\href {http://dx.doi.org/10.1007/BF01388651} {\bibfield  {journal} {\bibinfo  {journal} {Des. Codes Crypt.}\ }\textbf {\bibinfo {volume} {4}},\ \bibinfo {pages} {369} (\bibinfo {year} {1994})}\BibitemShut {NoStop}%
\bibitem[SW13]{sharma_2013}
\bibfield  {author} {\bibinfo {author} {\bibfnamefont {N.}~\bibnamefont {Sharma}}\ and\ \bibinfo {author} {\bibfnamefont {N.~A.}\ \bibnamefont {Warsi}},\ }\bibfield  {title} {\emph {\bibinfo {title} {Fundamental {{Bound}} on the {{Reliability}} of {{Quantum Information Transmission}}},}\ }\href {http://dx.doi.org/10.1103/PhysRevLett.110.080501} {\bibfield  {journal} {\bibinfo  {journal} {Phys. Rev. Lett.}\ }\textbf {\bibinfo {volume} {110}},\ \bibinfo {pages} {080501} (\bibinfo {year} {2013})}\BibitemShut {NoStop}%
\bibitem[TCR10]{tomamichel_2010}
\bibfield  {author} {\bibinfo {author} {\bibfnamefont {M.}~\bibnamefont {Tomamichel}}, \bibinfo {author} {\bibfnamefont {R.}~\bibnamefont {Colbeck}}, \ and\ \bibinfo {author} {\bibfnamefont {R.}~\bibnamefont {Renner}},\ }\bibfield  {title} {\emph {\bibinfo {title} {Duality {{Between Smooth Min-}} and {{Max-Entropies}}},}\ }\href {http://dx.doi.org/10.1109/TIT.2010.2054130} {\bibfield  {journal} {\bibinfo  {journal} {IEEE Trans. Inf. Theory}\ }\textbf {\bibinfo {volume} {56}},\ \bibinfo {pages} {4674} (\bibinfo {year} {2010})}\BibitemShut {NoStop}%
\bibitem[TH13]{tomamichel_2013}
\bibfield  {author} {\bibinfo {author} {\bibfnamefont {M.}~\bibnamefont {Tomamichel}}\ and\ \bibinfo {author} {\bibfnamefont {M.}~\bibnamefont {Hayashi}},\ }\bibfield  {title} {\emph {\bibinfo {title} {A {{Hierarchy}} of {{Information Quantities}} for {{Finite Block Length Analysis}} of {{Quantum Tasks}}},}\ }\href {http://dx.doi.org/10.1109/TIT.2013.2276628} {\bibfield  {journal} {\bibinfo  {journal} {IEEE Trans. Inf. Theory}\ }\textbf {\bibinfo {volume} {59}},\ \bibinfo {pages} {7693} (\bibinfo {year} {2013})}\BibitemShut {NoStop}%
\bibitem[TKR{\etalchar{+}}10]{temme_2010}
\bibfield  {author} {\bibinfo {author} {\bibfnamefont {K.}~\bibnamefont {Temme}}, \bibinfo {author} {\bibfnamefont {M.~J.}\ \bibnamefont {Kastoryano}}, \bibinfo {author} {\bibfnamefont {M.~B.}\ \bibnamefont {Ruskai}}, \bibinfo {author} {\bibfnamefont {M.~M.}\ \bibnamefont {Wolf}}, \ and\ \bibinfo {author} {\bibfnamefont {F.}~\bibnamefont {Verstraete}},\ }\bibfield  {title} {\emph {\bibinfo {title} {The {$\chi^2$}-divergence and mixing times of quantum {M}arkov processes},}\ }\href {http://dx.doi.org/10.1063/1.3511335} {\bibfield  {journal} {\bibinfo  {journal} {J. Math. Phys.}\ }\textbf {\bibinfo {volume} {51}},\ \bibinfo {pages} {122201} (\bibinfo {year} {2010})}\BibitemShut {NoStop}%
\bibitem[TL17]{tomamichel_2017-1}
\bibfield  {author} {\bibinfo {author} {\bibfnamefont {M.}~\bibnamefont {Tomamichel}}\ and\ \bibinfo {author} {\bibfnamefont {A.}~\bibnamefont {Leverrier}},\ }\bibfield  {title} {\emph {\bibinfo {title} {A largely self-contained and complete security proof for quantum key distribution},}\ }\href {http://dx.doi.org/10.22331/q-2017-07-14-14} {\bibfield  {journal} {\bibinfo  {journal} {Quantum}\ }\textbf {\bibinfo {volume} {1}},\ \bibinfo {pages} {14} (\bibinfo {year} {2017})}\BibitemShut {NoStop}%
\bibitem[Tom16]{tomamichel_2016}
\bibfield  {author} {\bibinfo {author} {\bibfnamefont {M.}~\bibnamefont {Tomamichel}},\ }\href@noop {} {\emph {\bibinfo {title} {Quantum {{Information Processing}} with {{Finite Resources}}}}}\ (\bibinfo  {publisher} {Springer},\ \bibinfo {year} {2016})\BibitemShut {NoStop}%
\bibitem[TSSR11]{tomamichel_2011}
\bibfield  {author} {\bibinfo {author} {\bibfnamefont {M.}~\bibnamefont {Tomamichel}}, \bibinfo {author} {\bibfnamefont {C.}~\bibnamefont {Schaffner}}, \bibinfo {author} {\bibfnamefont {A.}~\bibnamefont {Smith}}, \ and\ \bibinfo {author} {\bibfnamefont {R.}~\bibnamefont {Renner}},\ }\bibfield  {title} {\emph {\bibinfo {title} {Leftover {{Hashing Against Quantum Side Information}}},}\ }\href {http://dx.doi.org/10.1109/TIT.2011.2158473} {\bibfield  {journal} {\bibinfo  {journal} {IEEE Trans. Inf. Theory}\ }\textbf {\bibinfo {volume} {57}},\ \bibinfo {pages} {5524} (\bibinfo {year} {2011})}\BibitemShut {NoStop}%
\bibitem[TTN{\etalchar{+}}25]{tupkary_2025}
\bibfield  {author} {\bibinfo {author} {\bibfnamefont {D.}~\bibnamefont {Tupkary}}, \bibinfo {author} {\bibfnamefont {E.~Y.-Z.}\ \bibnamefont {Tan}}, \bibinfo {author} {\bibfnamefont {S.}~\bibnamefont {Nahar}}, \bibinfo {author} {\bibfnamefont {L.}~\bibnamefont {Kamin}}, \ and\ \bibinfo {author} {\bibfnamefont {N.}~\bibnamefont {L{\"u}tkenhaus}},\ }\bibfield  {title} {\emph {\bibinfo {title} {{{QKD}} security proofs for decoy-state {{BB84}}: Protocol variations, proof techniques, gaps and limitations},}\ }\href@noop {} {\Eprint {http://arxiv.org/abs/2502.10340} {arXiv:2502.10340}  (\bibinfo {year} {2025})}\BibitemShut {NoStop}%
\bibitem[TV15]{temme_2015}
\bibfield  {author} {\bibinfo {author} {\bibfnamefont {K.}~\bibnamefont {Temme}}\ and\ \bibinfo {author} {\bibfnamefont {F.}~\bibnamefont {Verstraete}},\ }\bibfield  {title} {\emph {\bibinfo {title} {Quantum chi-squared and goodness of fit testing},}\ }\href {http://dx.doi.org/10.1063/1.4905843} {\bibfield  {journal} {\bibinfo  {journal} {J. Math. Phys.}\ }\textbf {\bibinfo {volume} {56}},\ \bibinfo {pages} {012202} (\bibinfo {year} {2015})}\BibitemShut {NoStop}%
\bibitem[WC81]{wegman_1981}
\bibfield  {author} {\bibinfo {author} {\bibfnamefont {M.~N.}\ \bibnamefont {Wegman}}\ and\ \bibinfo {author} {\bibfnamefont {J.~L.}\ \bibnamefont {Carter}},\ }\bibfield  {title} {\emph {\bibinfo {title} {New hash functions and their use in authentication and set equality},}\ }\href {http://dx.doi.org/10.1016/0022-0000(81)90033-7} {\bibfield  {journal} {\bibinfo  {journal} {J. Comput. Syst. Sci.}\ }\textbf {\bibinfo {volume} {22}},\ \bibinfo {pages} {265} (\bibinfo {year} {1981})}\BibitemShut {NoStop}%
\bibitem[WH13]{watanabe_2013}
\bibfield  {author} {\bibinfo {author} {\bibfnamefont {S.}~\bibnamefont {Watanabe}}\ and\ \bibinfo {author} {\bibfnamefont {M.}~\bibnamefont {Hayashi}},\ }\bibfield  {title} {\emph {\bibinfo {title} {Non-asymptotic analysis of privacy amplification via {{R{\'e}nyi}} entropy and inf-spectral entropy},}\ }in\ \href {http://dx.doi.org/10.1109/ISIT.2013.6620720} {\emph {\bibinfo {booktitle} {2013 {{IEEE International Symposium}} on {{Information Theory}}}}}\ (\bibinfo {year} {2013})\ pp.\ \bibinfo {pages} {2715--2719}\BibitemShut {NoStop}%
\bibitem[WR12]{wang_2012}
\bibfield  {author} {\bibinfo {author} {\bibfnamefont {L.}~\bibnamefont {Wang}}\ and\ \bibinfo {author} {\bibfnamefont {R.}~\bibnamefont {Renner}},\ }\bibfield  {title} {\emph {\bibinfo {title} {One-{{Shot Classical-Quantum Capacity}} and {{Hypothesis Testing}}},}\ }\href {http://dx.doi.org/10.1103/PhysRevLett.108.200501} {\bibfield  {journal} {\bibinfo  {journal} {Phys. Rev. Lett.}\ }\textbf {\bibinfo {volume} {108}},\ \bibinfo {pages} {200501} (\bibinfo {year} {2012})}\BibitemShut {NoStop}%
\bibitem[WWY14]{wilde_2014}
\bibfield  {author} {\bibinfo {author} {\bibfnamefont {M.~M.}\ \bibnamefont {Wilde}}, \bibinfo {author} {\bibfnamefont {A.}~\bibnamefont {Winter}}, \ and\ \bibinfo {author} {\bibfnamefont {D.}~\bibnamefont {Yang}},\ }\bibfield  {title} {\emph {\bibinfo {title} {Strong {{Converse}} for the {{Classical Capacity}} of {{Entanglement-Breaking}} and {{Hadamard Channels}} via a {{Sandwiched R{\'e}nyi Relative Entropy}}},}\ }\href {http://dx.doi.org/10.1007/s00220-014-2122-x} {\bibfield  {journal} {\bibinfo  {journal} {Commun. Math. Phys.}\ }\textbf {\bibinfo {volume} {331}},\ \bibinfo {pages} {593} (\bibinfo {year} {2014})}\BibitemShut {NoStop}%
\bibitem[YSP19]{yang_2019}
\bibfield  {author} {\bibinfo {author} {\bibfnamefont {W.}~\bibnamefont {Yang}}, \bibinfo {author} {\bibfnamefont {R.~F.}\ \bibnamefont {Schaefer}}, \ and\ \bibinfo {author} {\bibfnamefont {H.~V.}\ \bibnamefont {Poor}},\ }\bibfield  {title} {\emph {\bibinfo {title} {Wiretap {{Channels}}: {{Nonasymptotic Fundamental Limits}}},}\ }\href {http://dx.doi.org/10.1109/TIT.2019.2904500} {\bibfield  {journal} {\bibinfo  {journal} {IEEE Trans. Inf. Theory}\ }\textbf {\bibinfo {volume} {65}},\ \bibinfo {pages} {4069} (\bibinfo {year} {2019})}\BibitemShut {NoStop}%
\end{thebibliography}%


\clearpage

\let\addcontentsline\oldaddcontentsline

 \appendix
 \counterwithin{theorem}{section}

\section{Hypothesis testing relative entropy as measured smooth divergence}
\label{app:DH}

{ \renewcommand{\thetheorem}{\ref{lem:DH_as_D0}}
\begin{boxed}
\begin{lemma}\label{lem:DH_app_lemma}
For all quantum states $\rho$, all $\sigma \geq 0$, and all $\ve \in [0,1)$, the hypothesis testing relative entropy equals the measured smooth R\'enyi divergence of order 0:
\begin{equation}\begin{aligned}
  D^\ve_H (\rho \| \sigma) = \Ds{0}{\ve}{M} (\rho \| \sigma) = \sup_{\M\in\MM} \Ds{0}{\ve}{T}(\M(\rho)\|\M(\sigma)).
\end{aligned}\end{equation}
Here we recall that
\begin{equation}\begin{aligned}
  D_0(p\|q) = - \log \sum_{x: p(x) > 0} q(x).
\end{aligned}\end{equation}
\end{lemma}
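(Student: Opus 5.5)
The proof will go through the same ``classical first, then lift'' strategy used for Theorem~\ref{thm:D2_smoothed_forms}. We first establish the classical identity $\Ds{0}{\ve}{T}(p\|q) = D^\ve_H(p\|q)$, and then compare the measured supremum with the quantum hypothesis testing quantity.

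\textbf{Classical step.} Since $\alpha = 0 < 1$, the smoothing is a supremum. Thus
\[
\Ds{0}{\ve}{T}(p\|q) = -\log \inf \lset q(\supp p') \bar p'\geq 0,\ \Tr p'\leq 1,\ \|p-p'\|_+\leq \ve \rset .
\]
For any feasible $p'$, let $S = \supp p'$. Take the test $M = \mathbbm{1}_S$, the indicator of $S$. Then $\Tr(\id-M)p = \sum_{x\notin S} p(x)$. Because $p'$ vanishes off $S$, and $\|p-p'\|_+ \geq \Tr\,\mathbbm{1}_{S^c}(p-p') = p(S^c)$, this is at most $\ve$. Hence $D^\ve_H(p\|q) \geq -\log q(S)$.

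Conversely, classically the optimal test in $D^\ve_H$ can be taken as $M = \mathbbm{1}_S + c\,\mathbbm{1}_{x_0}$ for a Neyman--Pearson set $S$ and a single boundary point $x_0$ with $c \in [0,1]$. The $0$-divergence only sees supports, so we cannot directly represent a fractional test. To handle this, we do the lifting at the quantum level instead of insisting on exact classical equality. The inequality $\Ds{0}{\ve}{T}(p\|q)\leq D^\ve_H(p\|q)$ for every classical pair is enough for one direction, and the other direction is handled through a refined measurement.

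\textbf{Upper bound.} For any measurement $\M$, the classical inequality and data processing of $D^\ve_H$ (a test on $\M(\rho)$ pulls back via $\M^\dagger$ to a valid test on $\rho$) give
\[
\Ds{0}{\ve}{T}(\M(\rho)\|\M(\sigma)) \leq D^\ve_H(\M(\rho)\|\M(\sigma)) \leq D^\ve_H(\rho\|\sigma).
\]
Taking the supremum over $\M$ yields $\Ds{0}{\ve}{M} \leq D^\ve_H$.

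\textbf{Lower bound, the main obstacle.} Take an optimal test $0\leq M\leq\id$ for $D^\ve_H(\rho\|\sigma)$. The difficulty is that $M$ need not be a projector, while $D_0$ only registers supports. The fix is to exploit the freedom of taking fine-grained POVMs in $\MM$. Use the two-outcome POVM $\{M,\id-M\}$ and split the outcomes further by appending an auxiliary classical coin, which is a valid measurement refinement. Concretely, measure with POVM elements $M_1 = M$, $M_2 = \id - M$ and post-process nothing. On the two-point distributions $p = (\Tr M\rho, 1-\Tr M\rho)$ and $q = (\Tr M\sigma,\Tr(\id-M)\sigma)$, choose $p' = (p_1, 0)$. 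This $p'$ is feasible since $\|p-p'\|_+ = p_2 = \Tr(\id-M)\rho\leq\ve$, and it gives $D_0(p'\|q) = -\log \Tr M\sigma$.

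So no coin is needed after all: the fractional part is absorbed because in the quantum setting the measurement outcome probability $\Tr M\sigma$ already equals the test's type II error. The lower bound therefore follows directly, giving $\Ds{0}{\ve}{M}\geq D^\ve_H$. Degenerate cases (e.g.\ $\Tr M\sigma = 0$) are consistent, with both sides equal to $+\infty$.
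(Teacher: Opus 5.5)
Your proof is correct and is essentially the paper's. The inequality $\Ds{0}{\ve}{M}(\rho\|\sigma)\leq D^\ve_H(\rho\|\sigma)$ comes from the test $\M^\dagger(\mathbbm{1}_S)$ with $S=\supp p'$; you factor this through the classical bound plus data processing of $D^\ve_H$, which amounts to the same construction. The reverse inequality comes from the binary measurement $\{M,\id-M\}$ with $p'=(\Tr M\rho,0)$, exactly as in the paper.

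The detour about appending an auxiliary coin is superfluous and should be dropped. One point is worth stating explicitly: $\Tr M\rho\geq 1-\ve>0$, which ensures $\supp p'=\{0\}$ and hence $D_0(p'\|q)=-\log\Tr M\sigma$.
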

\end{boxed}
}
We remark here that, even for classical distributions, the optimisation over measurements is necessary to obtain this equivalence; that is, $\Ds{0}{\ve}{T}(p\|q) \neq D_H^\ve(p\|q)$ in general. This is because $\Ds{0}{\ve}{T}$ can be understood as the error in hypothesis testing using only \emph{deterministic} tests, while $D_H^\ve$ allows also for randomised strategies. Introducing an optimisation over measurements, as in $\Ds{0}{\ve\vphantom{t}}{M}(p\|q)$, is then equivalent to accounting for randomisation. 
\begin{proof}[Proof of Lemma~\ref{lem:DH_app_lemma}]
Fix any measurement channel $\M$ with output alphabet $\X$, let $p = \M(\rho)$ and $q = \M(\sigma)$ be the post-measurement probability distributions, and let $p' \geq 0$ be any distribution such that $\norm{p-p'}{+} \leq \ve$. Define the set $S \coloneqq \lset x \bar p'(x) > 0 \rset$ and use it to define the test operator $M \coloneqq \M^\dagger(\boldsymbol{1}_S)$, where $\boldsymbol{1}_S$ denotes the indicator function of the event $x \in S$ (projection onto $S$). Then 
\begin{equation}\begin{aligned}
  1 - \Tr M \rho &= 1 - \Tr \boldsymbol{1}_{S} \M(\rho)\\
  &= \sum_{x \notin S} p(x) \\
  &= \sum_{x \notin S} p(x) - p'(x)\\
  &\leq \!\!\!\!\! \sum_{x:\, p'(x)\leq p(x)} \!\!\!\!\! p(x) - p'(x) \\
  &= \sum_{x} \left( p(x) - p'(x) \right)_+ \\&= \norm{p - p'}{+} \\&\leq \ve,
\end{aligned}\end{equation}
and furthermore
\begin{equation}\begin{aligned}
  - \log \Tr M \sigma &= - \log \sum_{x \in S} q(x) = D_0(p' \| q).
\end{aligned}\end{equation}
This altogether means that $M$ is a feasible test in the definition of $D^\ve_H$ in~\eqref{eq:dh_def} with feasible value $D_0(p' \| q)$, yielding $D^\ve_H(\rho\|\sigma) \geq \Ds{0}{\ve}{T} (\M(\rho)\|\M(\sigma))$ upon maximising over $p'$. Taking a supremum over measurement channels $\M$ establishes one direction of the inequality.

For the other direction, let $M \in [0,\id]$ be any test feasible for $D^\ve_H$, i.e.\ one such that $\Tr M \rho \geq 1-\ve$. Define the measurement channel $\M$ with a binary output alphabet as
\begin{equation}\begin{aligned}
   \M(X) \coloneqq \Tr (MX) \proj{0} + \Tr\big([\id-M]\,X\big) \proj{1}.
\end{aligned}\end{equation} 
Let $p = \M(\rho)$, $q = \M(\sigma)$, and choose $p' \coloneqq \big(p(0), 0\big) \leq p$. Then clearly $\norm{p-p'}{+} = 1 - \Tr M\rho \leq \ve$, and so
\begin{equation}\begin{aligned}
  \Ds{0}{\ve}{T}\big(\M(\rho) \| \M(\sigma)\big) &\geq D_0(p'\| q)\\
  &= - \log q(0)\\
  &= - \log \Tr M \sigma.
\end{aligned}\end{equation}
Optimising over feasible tests $M$ gives $\Ds{0}{\ve}{T}(\M(\rho) \| \M(\sigma)) \geq D^\ve_H(\rho\|\sigma)$, entailing that the two quantities are in fact equal.
\end{proof}

\section{Semidefinite representation of measured smooth conditional entropies}
\label{app:sdp}

Here we discuss how the conditional entropies used in our study, and in particular the smooth variant of the measured collision entropy, can be expressed as semidefinite programs (SDP).

Recall the semidefinite representation of $\Ds{2}{\ve}{M}$ from Lemma~\ref{lem:D2_SDP},
\begin{align}
 &\Ds{2}{\ve}{M}(\rho\|\sigma) \nonumber\\
 &= \log \inf_{Z,T,R} \lset \Tr T \bar T \geq 0,\; Z \in \mathbb{C}^{d \times d},\; \Re(Z) = R,\; R \leq \rho,\; \Tr(\rho-R) \leq \ve, \,\begin{pmatrix}\sigma & Z \\ Z^\dagger & T \end{pmatrix} \geq 0 \rset\label{eq:D2_smooth_SDP_primal}\\
 &= \log \sup_{B,C,t} \lset 2 \Tr B \rho - 2 \ve t - \Tr C \sigma \bar B, C \geq 0,\; B \leq t \id,\; \begin{pmatrix}C & B \\ B & \id \end{pmatrix} \geq 0 \rset. \label{eq:D2_smooth_SDP_dual}
\end{align}
We recall also a semidefinite representation of $\Dmax{\ve}{M}$~\cite{nuradha_2024,regula_2025}
\begin{align}
 \Dmax{\ve}{M}(\rho\|\sigma)  &= \log \inf_{\lambda, R} \lset \lambda \bar R \leq \lambda\sigma,\; R \leq \rho,\; \Tr(\rho-R) \leq \ve \rset\label{eq:Dmax_smooth_SDP_primal}\\
 &= \log \sup_{B,t} \lset \Tr B \rho - \ve t \bar B \geq 0,\; B \leq t \id,\; \Tr B \sigma \leq 1 \rset. \label{eq:Dmax_smooth_SDP_dual}
\end{align}

The SDP expression for the quantities $\Hs{2}{\ve}{M}[down](X|E)_\rho$ and $\Hmin{\ve}{M}[down](X|E)_\rho$ is then immediate: simply pick $\rho = \rho_{XE}$ and $\sigma = \id_X \otimes \rho_E$ in any of the SDP formulations of~\eqref{eq:D2_smooth_SDP_primal}--\eqref{eq:Dmax_smooth_SDP_dual}. Let us momentarily leave aside the variational expressions for $\Hmin{\ve}{M}[up](X|E)_\rho$ as we will return to it in the next section, connecting it to guessing probability.

For $H^{\MM,\uparrow}_{2}$, one can follow one of two approaches. In the form of~\eqref{eq:D2_smooth_SDP_primal}, we observe that $\sigma$ does not appear in the objective function and the constraints are linear in $\sigma$; we can then simply introduce the optimisation over $\sigma_E$ into the SDP and write out full program as
\begin{equation}\begin{aligned}
  - \Hs{2}{\ve}{M}[up](X|E)_\rho = \log \inf_{Z,T,R,\sigma} &\lset \vphantom{\begin{pmatrix}\id_X \otimes \sigma_E & Z_{XE} \\ Z_{XE}^\dagger & T_{XE} \end{pmatrix}}\Tr T_{XE} \bar T_{XE} \geq 0,\; Z_{XE} \in \mathbb{C}^{d \times d},\; \Re(Z_{XE}) = R_{XE},\; R_{XE} \leq \rho_{XE},\right.\\
 &\;\; \left.  \Tr(\rho_{XE}-R_{XE}) \leq \ve,\; \begin{pmatrix}\id_X \otimes \sigma_E & Z_{XE} \\ Z_{XE}^\dagger & T_{XE} \end{pmatrix} \geq 0 ,\; \sigma_E \geq 0,\; \Tr \sigma_E = 1 \rset
\end{aligned}\end{equation}
where $d$ is the dimension of the space of the joint system $XE$.
Another formulation can be obtained from the dual form of~\eqref{eq:D2_smooth_SDP_dual}. Let us consider first the case $\ve = 0$:
\begin{align}
  - H^{\MM,\uparrow}_{2}(X|E)_\rho &= \log \inf_{\substack{\sigma_E \geq 0\\\Tr \sigma_E = 1}} \sup_{B,C} \lset 2 \Tr B_{XE} \rho_{XE} - \Tr \left[ C_{XE} \left(\id_X \otimes \sigma_E\right)\right] \bar B_{XE}, C_{XE} \geq 0,\; \begin{pmatrix}C_{XE} & B_{XE} \\ B_{XE} & \id_{XE} \end{pmatrix} \geq 0 \rset\nonumber\\
  &= \log \vphantom{{\sup_{\substack{\sigma_E > 0\\\Tr \sigma_E = 1}}}}\sup_{B,C} \lset 2 \Tr B_{XE} \rho_{XE} - \smash{\sup_{\substack{\sigma_E \geq 0\\\Tr \sigma_E = 1}}} \Tr C_E \sigma_E \bar B_{XE}, C_{XE} \geq 0,\; \begin{pmatrix}C_{XE} & B_{XE} \\ B_{XE} & \id_{XE} \end{pmatrix} \geq 0 \rset\\
  &= \log \sup_{B,C,k} \lset 2 \Tr B_{XE} \rho_{XE} - k \bar B_{XE}, C_{XE} \geq 0,\; C_E \leq k \id_E ,\; \begin{pmatrix}C_{XE} & B_{XE} \\ B_{XE} & \id_{XE} \end{pmatrix} \geq 0\rset,\nonumber
\end{align}
where the second line is by Sion's minimax theorem~\cite{sion_1958}, and the last uses the fact that $\sup_{\sigma_E \geq 0, \Tr \sigma_E = 1} X_E = \lambda_{\max}(X_E)$ for any Hermitian $X_E$. Extending this to all $\ve$ gives
\begin{align}
  - \Hs{2}{\ve}{M}[up](X|E)_\rho &= \log \sup_{B,C,t,k} \lset 2 \Tr B_{XE} \rho_{XE} - 2 \ve t - k \bar \vphantom{\begin{pmatrix}C_{XE} & B_{XE} \\ B_{XE} & \id_{XE} \end{pmatrix}} B_{XE}, C_{XE} \geq 0,\, B_{XE} \leq t \id_{XE},\, C_E \leq k \id_E,\, \right.\nonumber
  \\ & \hphantom{\log \sup_{B,C,t,k} \lset 2 \Tr B_{XE} \rho_{XE} - 2 \ve t - k \bar \right. \quad} \left. \begin{pmatrix}C_{XE} & B_{XE} \\ B_{XE} & \id_{XE} \end{pmatrix} \geq 0 \rset \\
 &= \log \sup_{B,C} \lset 2 \Tr B_{XE} \rho_{XE} - 2 \ve \norm{B_{XE}}{\infty} - \norm{C_E}{\infty} \bar B_{XE}, C_{XE} \geq 0,\, \begin{pmatrix}C_{XE} & B_{XE} \\ B_{XE} & \id_{XE} \end{pmatrix} \geq 0 \rset. \nonumber
\end{align}
For the purpose of implementing this SDP in practice, it is useful to notice that if $\rho_{XE} = \sum_x p_X(x) \proj{x} \otimes \rho_{E,x}$ is a CQ state, then both $\rho_{XE}$ and $\id_X \otimes \sigma_E$ are invariant under the application of the dephasing channel $\Delta_X (Z_{XE}) = \sum_{x} \proj{x} Z_{XE} \proj{x}$, and hence $B_{XE}$ and $C_{XE}$ can be without loss of generality assumed to likewise be invariant --- that is, classical--quantum. We then get the blockwise form
\begin{equation}\begin{aligned}
  - \Hs{2}{\ve}{M}[up](X|E)_\rho &= \log \smash{\sup_{\substack{\{B_{E,x},C_{E,x}\}_x,\\t,k}}} \lset 2 \sum_x B_{E,x} \rho_{E,x} - 2 \ve t - k \bar \vphantom{\begin{pmatrix}C_{XE} & B_{XE} \\ B_{XE} & \id_{XE} \end{pmatrix}} B_{E,x}, C_{E,x} \geq 0 \; \forall x,\right.
  \\ & \hphantom{\log \sup_{\substack{\{B_{E,x},C_{E,x}\}_x,\\t,k}} \lset 2 \sum_x B_{E,x} \rho_{E,x} - 2 t \ve - k \bar\right. \quad} B_{E,x} \leq t \id_{E} \; \forall x,\;  \sum_x C_{E,x} \leq k \id_E,\,\\
   &\hphantom{\log \sup_{\substack{\{B_{E,x},C_{E,x}\}_x,\\t,k}} \lset 2 \sum_x B_{E,x} \rho_{E,x} - 2 t \ve - k \bar\right. \quad} \left. \begin{pmatrix}C_{E,x} & B_{E,x} \\ B_{E,x} & \id_{E} \end{pmatrix} \geq 0 \; \forall x\rset,
\end{aligned}\end{equation}
the advantage of which is especially that the large Schur complement constraint is explicitly expressed as $|\X|$ less demanding constraints on the system $E$ only.

With regards to computability, we also make note of the recent work~\cite{huang_2025} which suggests that there may be numerical implementations of measured R\'enyi divergences that are even more efficient than SDP variational forms. It would be interesting to understand whether this can also provide speedups in the numerical evaluation of the smooth conditional entropies considered here.

\section{Connection with guessing probability}
\label{app:guess_prob}

A conceptually important interpretation of $H_{\min}(X|E)_{\rho}$ is as the guessing probability, that is, the average probability of guessing the random variable $X$ given $E$~\cite{konig_2009}:
\begin{equation}\begin{aligned}
  p_{\rm guess}(X|E)_\rho \coloneqq \sup \lset \sum_x p_X(x) \Tr M_{E,x} \rho_{E,x} \bar M_{E,x} \geq 0 \; \forall x,\; \sum_x M_{E,x} = \id_E \rset, 
\end{aligned}\end{equation}
where $\rho_{XE} = \sum_x p_X(x) \proj{x} \otimes \rho_{E,x}$.

Let us first rederive this known result in an attempt to understand how to generalise it. We have by definition that
\begin{equation}\begin{aligned}
  \exp\!\left(- H^{\uparrow}_{\min}(X|E)_{\rho}\right) = \inf_{\lambda,\sigma} \lset \lambda \bar \rho_{XE} \leq \id_X \otimes \lambda \sigma_E, \; \lambda \in \RR_+,\; \sigma_E \geq 0,\; \Tr \sigma_E = 1 \rset.
\end{aligned}\end{equation}
Notice that the constraint $\sigma_E \geq 0$ is superfluous: since $\rho_{XE} \geq 0$, any feasible $\sigma_E$ satisfying $\rho_{XE} \leq \id_X \otimes \lambda \sigma_E$ must be positive semidefinite anyway; we can thus remove this condition without loss of generality. Letting $S_E = \lambda \sigma_E$, we then have
\begin{equation}\begin{aligned}
 \exp\!\left( - H^{\uparrow}_{\min}(X|E)_{\rho}\right) &= \inf_{S} \lset \Tr S_E \bar \rho_{XE} \leq \id_X \otimes S_E \rset\\
  &= \sup_{W} \lset \Tr W_{XE} \rho_{XE} \bar 0 \leq W_{XE},\; W_E = \id_E \rset
\end{aligned}\end{equation}
by strong Lagrange duality. As $\rho_{XE}$ is a CQ state, we can also restrict ourselves to CQ operators $W_{XE}$, which corresponds to an optimisation over $W_{XE} = \sum_x \proj{x} \otimes W_{E,x}$ such that $W_{E,x} \geq 0$ and $\sum_x W_{E,x} = \id_E$. But this is precisely an optimisation over all POVMs on $E$, matching the definition of $p_{\rm guess}$.

Recall now that the smoothed variant $\Hmin{\ve}{M}(X|E)_\rho$ corresponds to the similar optimisation
\begin{equation}\begin{aligned}
 \exp\!\left( - \Hmin{\ve}{M}(X|E)_{\rho}\right) &= \inf_{\lambda, R, \sigma} \lset \lambda \bar R_{XE} \leq \lambda \left(\id_X \otimes \sigma_E\right),\; R_{XE} \leq \rho_{XE},\; \Tr(\rho_{XE}-R_{XE}) \leq \ve ,\right.\\
  &\hphantom{= \inf_{\lambda, R, \sigma} \big\{ \lambda \quad} \left. \sigma_E \geq 0,\; \Tr \sigma_E = 1 \rset.
\end{aligned}\end{equation}
When attempting to extend the derivation from the $\ve=0$ case, we immediately run into an issue, as it is not longer possible to remove the condition $\sigma_E \geq 0$ without affecting the optimal value of the program. To adapt to this, one can slightly modify the definition of the guessing probability to only require that $\sum_x M_{E,x} \leq \id_E$, that is, that the collection $\{M_{E,x}\}_{x}$ is a part of a POVM and not necessarily a complete measurement. This can be understood e.g.\ as allowing an additional measurement outcome that represents inconclusive discrimination, as is common in settings such as unambiguous state discrimination~\cite{ivanovic_1987,peres_1988,bae_2015} or hypothesis testing with rejection~\cite{gutman_1989,grigoryan_2011,ji_2025}.

To connect such a guessing probability notion with the measured smooth min-entropy, we begin by using Lagrange duality to write
\begin{equation}\begin{aligned}
 \exp\!\left( - \Hmin{\ve}{M}(X|E)_{\rho}\right) &= \inf \lset \Tr S_E \bar \rho_{XE} \leq \id_X \otimes S_E + Q_{XE}, \; S_E \geq 0,\;  \; Q_{XE} \geq 0, \Tr Q_{XE} \leq \ve \rset\\
  &= \sup \lset \Tr W_{XE} \rho_{XE} - \ve t \bar 0 \leq W_{XE},\; W_E \leq \id_E,\; W_{XE} \leq t \id_{XE} \rset\\
  &= \sup \lset \Tr W_{XE} \rho_{XE} - \ve \norm{W_{XE}}{\infty} \bar 0 \leq W_{XE},\; W_E \leq \id_E \rset.
\end{aligned}\end{equation}
Here again we may restrict without loss of generality to classical--quantum $W_{XE}$, where in particular $\norm{W_{XE}}{\infty} = \max_x \norm{W_{E,x}}{\infty}$. 
Define now a parametrised variant of guessing probability as
\begin{equation}\begin{aligned}
  p'_{\rm guess}(X|E, \redd{t})_\rho \coloneqq \sup \lset \sum_x p_X(x) \Tr M_{E,x} \rho_{E,x} \bar 0 \leq M_{E,x} \leq \redd{t} \id_E \; \forall x,\; \sum_x M_{E,x} \leq \id_E \rset.
\end{aligned}\end{equation}
The parameter $t$ here constrains how `peaked' or sharp the measurement can be, as no single measurement outcome can ever occur with probability larger than $t$. 
From the above expressions we then obtain the following interpretation.
\begin{proposition}
For any CQ state $\rho_{XE}$ and any $\ve \in [0,1)$, the measured smooth min-entropy can be expressed as
\begin{equation}\begin{aligned}
   \exp\!\left( - \Hmin{\ve}{M}(X|E)_{\rho}\right) = \sup_{t \in [0,1]} \, \left[ p'_{\rm guess}(X|E, t)_\rho - \ve t \right].
\end{aligned}\end{equation}
\end{proposition}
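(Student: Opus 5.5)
The starting point is the last dual expression derived just above,
\begin{equation*}
 \exp\!\left( - \Hmin{\ve}{M}(X|E)_{\rho}\right) = \sup \lset \Tr W_{XE} \rho_{XE} - \ve t \bar 0 \leq W_{XE},\; W_E \leq \id_E,\; W_{XE} \leq t \id_{XE} \rset,
\end{equation*}
with the supremum over $W_{XE}$ and $t \geq 0$. I will take this identity as given; it comes from the SDP form of $\Dmax{\ve}{M}$ via Lagrange duality, with the optimisation over $\sigma_E$ absorbed into $S_E$.

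The first step is the reduction to classical--quantum $W_{XE}$. Since $\rho_{XE}$ is CQ, the dephasing channel $\Delta_X$ preserves the objective: $\Tr \Delta_X(W)\rho = \Tr W \rho$. It also preserves each constraint. Positivity is kept because $\Delta_X$ is a channel. The marginal is unchanged, so $\Tr_X \Delta_X(W) = W_E$. Finally, $W \leq t\id$ implies $\Delta_X(W) \leq t\id$ because $\Delta_X$ is unital and positive. So we may write $W_{XE} = \sum_x \proj{x}\otimes W_{E,x}$. The constraints then become $0 \leq W_{E,x} \leq t\id_E$ for every $x$ and $\sum_x W_{E,x} \leq \id_E$. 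The objective becomes $\sum_x p_X(x)\Tr W_{E,x}\rho_{E,x} - \ve t$.

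The second step is to fix $t$ and split the supremum as $\sup_t \sup_W$. For fixed $t$, the inner supremum is by definition exactly $p'_{\rm guess}(X|E,t)_\rho$, with $M_{E,x} = W_{E,x}$. This gives
\begin{equation*}
\exp\!\left( - \Hmin{\ve}{M}(X|E)_{\rho}\right) = \sup_{t \geq 0} \left[ p'_{\rm guess}(X|E,t)_\rho - \ve t\right].
\end{equation*}

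The remaining point, and the only mildly nontrivial one, is restricting to $t \in [0,1]$. For $t \geq 1$, the constraint $M_{E,x} \leq t\id$ is implied by $M_{E,x} \leq \sum_{x'} M_{E,x'} \leq \id_E$, which holds because every term is positive. Hence $p'_{\rm guess}(X|E,t)_\rho = p'_{\rm guess}(X|E,1)_\rho$ for all $t \geq 1$, while $-\ve t$ is nonincreasing in $t$. So any $t>1$ is dominated by $t=1$, and the supremum may be taken over $[0,1]$.

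I would also double-check that the quoted duality has no gap. The inner optimisation is over a compact set, and a strictly feasible point exists: for example, $W = \frac{1}{2|\X|}\id$ with $t=1$. So Slater's condition applies, just as in Lemma~\ref{lem:D2_primaldual}, which makes this verification routine.
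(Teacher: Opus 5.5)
Your proposal is correct and follows the paper's own route: the Lagrange-dual form of the smoothed min-entropy program, then restriction to classical--quantum $W_{XE}$ via the dephasing $\Delta_X$, then identification of the inner supremum at fixed $t$ with $p'_{\rm guess}(X|E,t)_\rho$. You also spell out two points the paper leaves implicit: that any $t>1$ is dominated by $t=1$ because $M_{E,x}\le\sum_{x'}M_{E,x'}\le\id_E$, and the Slater check. For the latter, what matters is strict feasibility of $W=\frac{1}{2|\X|}\id$, $t=1$ together with finiteness of the optimal value, not compactness, since the feasible set is unbounded in $t$.
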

The smoothing parameter $\ve$ is now playing a role similar to a Lagrange multiplier, enforcing a trade-off between the different optimisation variables and penalising the sharpness of the chosen measurement. 
While in many cases the optimum may be achieved at $t=1$ (e.g.\ with a projective measurement), this is not always the case; consider for instance the trivial $\rho_{XE} \propto \id_{XE}$, where the choice of measurement does not matter, so one picks the smallest $t$ instead.

Yet another way to write the optimisation would be
\begin{equation}\begin{aligned}
  \exp\!\left( - \Hmin{\ve}{M}(X|E)_{\rho}\right) = \sup \lset \frac{\Tr W_{XE} \rho_{XE} - \ve }{ \norm{W_{E}}{\infty} } \bar 0 \leq W_{XE} \leq \id_{XE} \rset.
\end{aligned}\end{equation}
This does not constrain $\{W_{E,x}\}_x$ to be a valid (sub-)POVM, but merely that each $W_{E,x}$ is a POVM element; the normalisation factor $ \norm{W_{E}}{\infty}$ then accounts for that by renormalising $\{W_{E,x}\}_x$ into a sub-POVM.


\end{document}